\newtheorem{theorem}{Theorem}
\newtheorem{definition}{Definition}
\newtheorem{lemma}{Lemma}
\newtheorem{prop}{Proposition}
\newtheorem{fact}{Fact}
\newtheorem{corollary}{Corollary}
\newtheorem{program}{Linear Program}
\newenvironment{numberedtheorem}[1]{%
\begin{theorem}}{\end{theorem}\addtocounter{theorem}{-1}}
\newenvironment{numberedprop}[1]{%
\begin{prop}}{\end{prop}\addtocounter{prop}{-1}}
\newenvironment{numberedlemma}[1]{%
\begin{lemma}}{\end{lemma}\addtocounter{lemma}{-1}}
\newenvironment{numbereddefinition}[1]{%
\begin{definition}}{\end{definition}\addtocounter{definition}{-1}}
\newenvironment{numberedcorollary}[1]{

\begin{corollary}}{\end{corollary}\addtocounter{corollary}{-1}}
\newcommand{\vv}{{\phi}}
\newcommand{\agind}[1][i]{_{#1}}
\newcommand{\ironed}{\bar}
\newcommand{\constrained}{\hat}
\newcommand{\optconstrained}{\composed{\optimized}{\constrained}}
\newcommand{\optimized}[1]{#1\opt}
\newcommand{\differentiated}[1]{#1'}
\newcommand{\tagged}[2]{{#2}^{#1}}
\newcommand{\primedarg}[1]{#1\primed}
\newcommand{\noaccents}[1]{#1}
\newcommand{\composed}[3]{#1{#2{#3}}}
\newcommand{\newagentvar}[3][\noaccents]{%
\expandafter\newcommand\expandafter{\csname #2\endcsname}{#1{#3}}%
\expandafter\newcommand\expandafter{\csname #2s\endcsname}{#1{\boldsymbol{#3}}}%
\expandafter\newcommand\expandafter{\csname #2smi\endcsname}[1][i]{#1{\boldsymbol{#3}}_{-##1}}%
\expandafter\newcommand\expandafter{\csname #2i\endcsname}[1][i]{#1{#3}\agind[##1]}%
\expandafter\newcommand\expandafter{\csname #2ith\endcsname}[1][i]{#1{#3}_{(##1)}}%
}
\newcommand{\newitemvar}[3][\noaccents]{%
\expandafter\newcommand\expandafter{\csname #2\endcsname}{#1{#3}}%
\expandafter\newcommand\expandafter{\csname #2s\endcsname}{#1{\boldsymbol{#3}}}%
\expandafter\newcommand\expandafter{\csname #2smj\endcsname}[1][j]{#1{\boldsymbol{#3}}_{-##1}}%
\expandafter\newcommand\expandafter{\csname #2j\endcsname}[1][j]{#1{#3}_{##1}}%
\expandafter\newcommand\expandafter{\csname #2jth\endcsname}[1][j]{#1{#3}_{(##1)}}%
}
\newcommand{\maxval}{h}
\newcommand{\forrezs}[1]{{#1}^{\rezs}}
\newcommand{\monoq}{\quant_{m}}
\newcommand{\exposted}[1]{#1^{\text{\it EP}}}
\composed{\exposted}{\constrained}]{excalloc}{\qalloc}
\newcommand{\served}[1]{#1^1}
\newcommand{\nonserved}[1]{#1^0}
\newcommand{\alloced}[1]{#1^{\alloc}}
\newcommand{\allocedi}[1]{#1^{\alloci}}
\DeclareMathOperator{\OPT}{OPT}
\newcommand{\algo}{A}
\newcommand{\algop}{A}
\newcommand{\mecha}{M}
\newcommand{\mechap}{M}
\newcommand{\reals}{{\mathbb R}}
\newcommand{\opt}{^{\star}}
\newcommand{\primed}{^\dagger}
\DeclareMathOperator{\I}{i-\hspace{-0.1cm}}
\newcommand{\bott}{\overrightarrow}
\newcommand{\topt}{\overleftarrow}
\newcommand{\botht}{\overleftrightarrow}
\newcommand{\iid}{i.i.d.~}
\newcommand{\valf}{V}
\newcommand{\ratio}{r}
\DeclareMathOperator{\tri}{Trvd}
\DeclareMathOperator{\Cvd}{Cvd}
\newcommand{\valspace}{\mathcal{V}}
\newcommand{\expect}{\mathbf{E}}
\newcommand{\expecta}{\mathbf{E}}
\newcommand{\distp}{f}
\newcommand{\mechaspace}{\mathcal{M}}
\newcommand{\algspace}{\mathcal{A}}
\newcommand{\mes}{s}
\newcommand{\messpace}{\mathcal{S}}
\newcommand{\info}{\mathcal{I}}
\newcommand{\qo}{\bar{\quant}}
\newcommand{\pquant}{p}
\newcommand{\quantf}{Q}
\newcommand{\asimp}{0.806}
\newcommand{\rsimp}{2.447}
\newcommand{\qsimp}{0.093}
\newcommand{\qsimpmhr}{0.426}
\newcommand{\apxsimp}{1.907}
\newcommand{\apxsimpmhr}{1.398}
\newcommand{\optratio}{\ratio^*}
\newcommand{\piratio}{\alpha}
\newcommand{\scF}{\mathcal{F}}
\newcommand{\hazr}{\beta}
\newcommand{\hazf}{\lambda}
\DeclareMathOperator{\Exd}{Exd}
\DeclareMathOperator{\exd}{exd}
\DeclareMathOperator{\Sed}{Sed}
\DeclareMathOperator{\sed}{sed}
\DeclareMathOperator{\Qud}{Qud}
\DeclareMathOperator{\qud}{qud}
\DeclareMathOperator{\Sqd}{Sqd}
\DeclareMathOperator{\sqd}{sqd}
\DeclareMathOperator{\Pmd}{Pmd}
\DeclareMathOperator{\pmd}{pmd}
\DeclareMathOperator{\Eqrd}{Eqrd}
\DeclareMathOperator{\Eqrsd}{Eqrsd}
\DeclareMathOperator{\Ud}{Ud}
\DeclareMathOperator{\ud}{ud}
\DeclareMathOperator{\Upd}{F}
\DeclareMathOperator{\upd}{f}
\DeclareMathOperator{\Dod}{\I\Upd}
\DeclareMathOperator{\dod}{\I\upd}
\newcommand{\cbs}[1]{{(#1)}}
\newcommand{\klookers}{$\mathcal{LA}$}
\newcommand{\RSMINH}{8.56}
\newcommand{\EXBREV}{1.2777}
\newcommand{\EXBRS}{1.00623}
\newcommand{\EXBRSWN}{18}
\newcommand{\lowb}{L}
\DeclareMathOperator{\argmax}{argmax}
\newcommand{\given}{\,\mid\,}
\newcommand{\prob}[2][]{\text{\bf Pr}\ifthenelse{\not\equal{}{#1}}{_{#1}}{}\!\left[{\def\givenn{\middle|}#2}\right]}
\newcommand{\tparen}{\big}
\newcommand{\tprob}[2][]{\text{\bf Pr}\ifthenelse{\not\equal{}{#1}}{_{#1}}{}\tparen[{\def\given{\tparen|}#2}\tparen]}
\newcommand{\texpect}[2][]{\text{\bf E}\ifthenelse{\not\equal{}{#1}}{_{#1}}{}\tparen[{\def\given{\tparen|}#2}\tparen]}
\newcommand{\sprob}[2][]{\text{\bf Pr}\ifthenelse{\not\equal{}{#1}}{_{#1}}{}[#2]}
\newcommand{\sexpect}[2][]{\text{\bf E}\ifthenelse{\not\equal{}{#1}}{_{#1}}{}[#2]}
\newcommand{\dd}{{\mathrm d}}
\newif\iffocs
\newif\iffinalprint
\newif\ifbmapp
\newif\iftensapp
\newif\ifextsapp
\newif\ifarxiv
\begin{document}



\begin{titlepage}

\title{Lower Bounds for Prior Independent Algorithms}

\newcommand{\email}[1]{\href{mailto:#1}{#1}}

\author{Jason Hartline\thanks{Northwestern U., Evanston IL.  Work done in part while supported by NSF CCF 1618502.\newline Email: \email{hartline@northwestern.edu}} \and Aleck Johnsen\thanks{Northwestern U., Evanston IL.  Work done in part while supported by NSF CCF 1618502.\newline Email: \email{aleckjohnsen@u.northwestern.edu}}}


\maketitle
The prior independent framework for algorithm design considers how
well an algorithm that does not know the distribution of its inputs
approximates the expected performance of the optimal algorithm for
this distribution. This paper gives a method that is agnostic to
problem setting for proving lower bounds on the prior independent
approximation factor of any algorithm. The method constructs a
correlated distribution over inputs that can be generated both as a
distribution over i.i.d.\ good-for-algorithms distributions and as a
distribution over i.i.d.\ bad-for-algorithms distributions. Prior
independent algorithms are upper-bounded by the optimal algorithm for
the latter distribution even when the true distribution is the
former. Thus, the ratio of the expected performances of the Bayesian
optimal algorithms for these two decompositions is a lower bound on
the prior independent approximation ratio. The techniques of the paper
connect prior independent algorithm design, Yao’s Minimax Principle,
and information design. We apply this framework
to give new lower bounds on several canonical prior independent
mechanism design problems.

\end{titlepage}

\section{Introduction}
\label{s:intro}


This paper develops a novel method for establishing lower bounds on
prior independent approximation algorithms.

Stochastic models are enabling theoretical understanding of algorithms
beyond those provided by classical worst-case treatments
\citep[see][]{rou-19}.  These models are especially interesting for
algorithm design problems with information theoretic constraints such
as online algorithms, mechanism design, streaming algorithms, etc.
The Bayesian algorithm design problem can be viewed as a two stage
process.  In the first stage the input is the prior distribution and
an algorithm is constructed for the distribution. In the second stage
the constructed algorithm is run on the realized input.  The Bayesian
optimal algorithm is the one with the highest expected performance.

The prior independent framework evaluates algorithms, which are not
privy to the (first stage) prior distribution of inputs, against a
benchmark defined as the performance of the Bayesian optimal algorithm
that is constructed for this prior.  With no constraints on the prior
distribution, this problem is equivalent to classical worst-case
algorithm design.  Alternatively, prior independent analyses in
mechanism design \citep[e.g.,][]{DRY-15} and online learning
\citep[e.g.,][]{ACF-02} restrict the distributions to be
independent and identically distributed (i.i.d.), respectively
over values of agents in a mechanism and rounds of online inputs.

This paper develops a method for establishing lower bounds on the
performance of prior independent algorithms (for classes of
i.i.d.\ distributions).  The method is based on Yao's Minimax
Principle \citep{yao-77}.  The prior independent framework asks for the designer to
pick one algorithm that is good on an adversary's chosen worst-case
distribution.  Yao's minimax principle allows the order of moves of
the designer and adversary to be swapped.  Thus, the prior independent
optimal approximation ratio can be equivalently identified by an adversary
choosing a distribution over prior distributions and then the designer
choosing a best algorithm.  Note that the class of
i.i.d.\ distributions is not closed under convex combination, thus,
the adversary's distribution over distributions generally gives a
symmetric, correlated distribution over inputs.

The main object of study of this paper is {\em dual blends}, which are pairs of
distinct distributions over i.i.d.\ distributions of inputs that induce the
same correlated distribution.  To establish a prior independent lower
bound, we will be considering dual blends where one side of the dual
blend mixes over good-for-algorithms distributions and the other side
mixes over bad-for-algorithms distributions.  The adversary can choose
the mix over good-for-algorithms distributions in which case the
expectation over Bayesian optimal performances for this mix defines the
benchmark of the prior independent framework.  On the other hand, the
algorithm cannot tell the two blends apart and thus its expected
performance is upper bounded by the expectation over performances of the
Bayesian optimal algorithms for the bad-for-algorithms mix.

As a simple example, consider the mechanism design problem of posting
a price to a single agent with value on $[1,h]$.  (Here the
restriction to i.i.d.\ distributions is trivial as there is only one
agent.)  A class of good-for-algorithms distributions is given by
point masses.  Note that the Bayesian optimal pricing mechanism for a
point mass is to post identically the same price as the value (at which
the agent always buys).  A class of bad-for-algorithms distributions
is given by the equal revenue distribution with cumulative distribution
$F(x) = 1-\sfrac{1}{x}$ and a point mass of $\sfrac{1}{\maxval}$ at $\maxval$.  The equal revenue
distribution has the property that the expected revenue from any
posted price is 1 (the agent buys if her value is at least the price).
Now consider the dual blend where on the good-for-algorithms side we
have the equal revenue distribution over point masses and on the
bad-for-algorithms side we have a point mass on the equal revenue
distribution.  The expected revenue over Bayesian optimal algorithms (in response to point mass distributions)
from the good-for-algorithms side is the expected value of the equal
revenue distribution on $[1,\maxval]$, i.e., $1+ \ln \maxval$.  The expected
revenue from the bad-for-algorithms side is 1.  Thus, we have
established a lower bound of $1+ \ln \maxval$ on the approximation factor of
single-agent posted pricing.  
(In fact, this example analysis is tight due to a matching upper bound from \citealp{HR-14}.)

There are two challenges in establishing lower bounds for
prior independent algorithms via the blends method.  The first
challenge is in sufficiently understanding the Bayesian optimal
algorithm for the class of distributions under consideration.  In
several of the central studied areas of Bayesian algorithms, this
first challenge is solved in closed form.  Bayesian optimal mechanisms
are identified broadly by \citet{mye-81}.  For online learning with
payoffs that are i.i.d.\ across rounds, the Bayesian optimal algorithm
is trivial, it selects the action with the highest expected payoff
(which is the same in each round).  Of course, when closed forms are
not available, bounds on the Bayesian optimal performance can be
employed instead.  An important observation of the method of dual blends is
that not only are Bayesian optimal algorithms used to define the benchmark,
but they can also be used to get non-trivial bounds on any algorithm's prior independent approximation ratio.

The second challenge of the blends method is in identifying
dual blends where the expected Bayesian-optimal performances for
good-for-algorithms and bad-for-algorithms distributions are
significantly separated.  In pursuit of this challenge we give two
general approaches for constructing dual blends for inputs of size
two.  (Many of the challenge problems in prior independent mechanism
design are for inputs of size two, e.g., \citealp{HJL-20}.)  The first
approach is based on the observation that when the density function of
a correlated distribution on inputs of size two can be written as a
separable product of independent functions {\em per order statistic} of the inputs, then it can be
decomposed into two distinct distributions over i.i.d.\ distributions.
The second approach considers one side of the dual blend constructed from
any scaled class of distributions with the other side given by the inverse-distributions of these (for which, as a class, the roles of values and scales are reversed in comparison to the original class).


We apply the blends method to two canonical problems in mechanism
design.  Both are two-agent single-item environments.  One considers
the objective of revenue maximization under a standard regularity
assumption on the distribution.  The other considers the objective of
residual surplus maximization (i.e., maximizing the value of the
winner minus any payments made).  Under the restriction to scale
invariant mechanisms, \citet{HJL-20} identified the prior independent
optimal mechanism for revenue (and its approximation factor of about
$1.907$).  It is unknown whether the restriction to scale-invariant
mechanisms is with loss.  We use the blends method to establish an
unconditional lower bound of $\sfrac{23}{18} \approx \EXBREV$.  For the residual
surplus objective, an upper bound of $\sfrac{4}{3}$ exists as a corollary of \citet{HR-14}.  We establish a lower bound of $\EXBRS$ (no previous
lower bound was known).

There are a number of significant open questions pertaining to lower
bounds for prior independent algorithm design from the method of dual
blends.  First, determine whether there are non-trivial settings where
the method from dual blends is tight.  Second, develop methods for
optimizing the lower bound over classes of dual blends.  Third,
generalize the method beyond two-input models.  On this last point,
while there are important problems in mechanism design with inputs of
size two, other settings would benefit from generalization to larger
inputs, such
as online algorithms.

\paragraph{Related Work}
\label{s:related}


The prior independent model was introduced in mechanism design by
\citet{HR-08} and further refined by \citet{DRY-15}.  At the time it
was conjectured that the second-price auction was the
prior independent revenue-optimal mechanism for selling a single item to one
of two agents with i.i.d.\ values from a regular distribution (\citet{DRY-15} had shown that it guaranteed an upper bound of 2-approximation).
\citet{FILS-15} disproved this conjecture by identifying a mechanism
with an improved upper bound.  \citet{AB-18} -- with an additional restriction to
scale-invariant mechanisms -- proved a weaker version of the conjecture (restriction to monotone hazard rate distributions); and for regular distributions: improved the upper bound and gave the first non-trivial
lower bound for prior independent approximation (by establishing a gap for specific distributions).  \citet{HJL-20} proved the tight result for regular distributions
under the scale-invariance restriction.

\citet{HJL-20} connected the prior independent model from mechanism design with the standard
model for online learning.  Most relevantly in relation to our work on
prior independent lower bounds, they showed that the simple
follow-the-leader algorithm is optimal for expert learning
in prior independent settings (by direct analysis rather than by showing a matching lower bound).


\paragraph{Main Paper Outline}
\label{s:outline}
\Cref{s:setup} gives formal preliminaries of the prior independent
setting and proves lower bounds of the Blends Technique.
\Cref{s:exampleblackwell} gives an explicit example of dual blends and
applies it to two distinct settings within mechanism design to show
novel prior independent lower bounds.  \Cref{s:blendsfrominvdist}
identifies two large classes of blends solutions, each
distinctively motivated as a generalization of the example of
\Cref{s:exampleblackwell}.  \Cref{s:blendsisidd} connects blends to
information design and considers the structure of blends' information
as it relates to Blackwell ordering.  A secondary outline for the appendix sections is included at the beginning of \Cref{a:setup}.

\section{Prior Independent Setup and Lower Bound Technique}
\label{s:setup}
\label{s:mechprobs}

Let $\scF$ be a class of probability distributions with known fixed support $\valspace$ (e.g., $[0,\infty)$).  
In the {\em prior independent algorithm design setting} (PI), there is a distribution $F$ which is known to come from the class $\scF$ and $n$ inputs are drawn \iid from $\dist$ (thus input space is $\valspace^n$).  Critically, the algorithm designer does not know the specific $\dist\in\scF$.  The notation $\dist$ is overloaded to be the cumulative distribution function (CDF), and its probability density function (PDF) is $\distp$.

Fix an algorithm design problem that takes $n$ i.i.d.\ inputs.  Denote a class of feasible algorithms by $\algspace$ and an algorithm in this class by $\algo$ with expected performance $\algo(\vals)$ for inputs $\vals$.  When evaluating the performance in expectation over inputs drawn from a distribution $\dist$, we adopt the notation $\algo(\dist) = \expect_{\vals \sim  \dist}[{\algo(\vals)}]$.  
An algorithm's performance for an unknown distribution $\dist$ is measured against the performance of the optimal algorithm which knows $\dist$.  
With these abstractions, we formally define the Bayesian and prior independent (PI) optimization problems.

\begin{definition}
\label{def:bayesianmechdp}
  The {\em Bayesian optimal algorithm design problem} is given by a distribution $\dist$ and class of algorithms $\algspace$; and solves for the algorithm $\OPT_{\dist}$ with the maximum expected performance:
  \begin{align}
    \tag{$\OPT_{\dist}$}
    \OPT_{\dist} &= \argmax_{\algo \in \algspace} \algo(\dist).
  \end{align}
\end{definition}

\noindent Note that $\OPT_{\dist}$ is an algorithm.  Given a distribution $\dist$, the expected performance of the optimal algorithm is $\OPT_{\dist}(\dist)$ and is the {\em benchmark} that we use for prior independent algorithms:

\begin{definition}
\label{def:pidesign}
  The {\em prior independent algorithm design problem} is given by a class of algorithms $\algspace$ and a class of distributions $\scF$; and searches for the algorithm that minimizes its worst-case approximation:
\begin{align}
  \label{eq:pi}
  \tag{$\piratio^{\scF}$}
\piratio^{\scF} &= \min_{\algo \in \algspace} \left[\max_{\dist \in \scF}
\frac{\OPT_{\dist}(\dist)}{\algo(\dist)}\right]
\end{align}
where the value of the program $\piratio^{\dist}$ is the {\em optimal prior independent approximation factor} for class $\scF$ and class $\algspace$ (which we leave implicit).
\end{definition}

\subsection{Theoretical Lower Bounds from Minimax}
\label{s:yaoandid}
\label{s:pitechnique}
\label{s:adversary}

Yao's Minimax Principle (\Cref{thm:yaominmax}) illustrates the role of the adversary through a direct connection to a 2-player zero-sum game.   First we define additional terms for use in \Cref{thm:yaominmax} and throughout the paper.  Given a space $\Omega$, denote the set of all possible distributions by $\Delta(\Omega)$ -- i.e., the {\em probability simplex}.  Denote a distribution over elements $\omega\in\Omega$ by $\abstrd\in\Delta(\Omega)$.  Given a function $f:\Omega_1 \times \Omega_2 \rightarrow \reals$ where $\Omega_1$ and $\Omega_2$ have arbitrary dimensions, we denote the expectation of $f$ over arguments $\bomega_i\in\Omega_i$ according to $\abstrdi\in\Delta(\Omega_i)$ as $f(\abstrdi,\bomega_{j\neq i})=\mathbf{E}_{\bomega_i\sim\abstrdi}\left[ f(\bomega_i,\bomega_{j})\right]$, e.g., in \Cref{thm:yaominmax}.


\begin{theorem}[\citealp{yao-77}]
\label{thm:yaominmax}
{\em [Yao's Minimax Principle]}  Given a $2$-player zero-sum game $\mathcal{G}$ in which sequentially player $1$ chooses mixed action $\abstrdi[1]\in\Delta(\Omega_1)$, then player $2$ chooses action $\omega_2\in\Omega_2$.  The players are cost minimizers and the cost functions on pure actions are (any real-valued function) $C_1(\omega_1,\omega_2)\geq 0$ and $C_2=-C_1$.  Then the {\em value} of game $\mathcal{G}$ (the left-hand side) satisfies:
\begin{equation}
\label{eqn:yaochain}
    \inf_{\abstrdi[1]\in\Delta(\Omega_1)}~\sup_{\omega_2\in\Omega_2}C_1(\abstrdi[1],\omega_2)\geq \sup_{\abstrdi[2]\in\Delta(\Omega_2)}~\inf_{\omega_1\in\Omega_1}C_1(\omega_1,\abstrdi[2])
\end{equation}
\end{theorem}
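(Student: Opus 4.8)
The inequality in \eqref{eqn:yaochain} is the ``easy'' (max--min) direction of the minimax equality, so the plan is a short chain of elementary inequalities rather than any appeal to a fixed-point or separating-hyperplane argument. First I would extend the pure-action cost $C_1$ bilinearly to pairs of mixed actions by setting $C_1(\gamma_1,\gamma_2) = \mathbf{E}_{\omega_1\sim\gamma_1,\,\omega_2\sim\gamma_2}[C_1(\omega_1,\omega_2)]$. Because $C_1\geq 0$ this is well defined in $[0,+\infty]$ (never an indeterminate form), and by Tonelli it agrees with iterating the partial expectations already defined before the theorem, so that $C_1(\gamma_1,\gamma_2) = \mathbf{E}_{\omega_2\sim\gamma_2}[C_1(\gamma_1,\omega_2)] = \mathbf{E}_{\omega_1\sim\gamma_1}[C_1(\omega_1,\gamma_2)]$. (The hypothesis $C_2=-C_1$ only records that $\mathcal{G}$ is zero-sum and is not otherwise used.)

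Next I would invoke the one observation that does all the work: a weighted average of a quantity lies between its pointwise infimum and its pointwise supremum. Applied to each of the two identities above, this yields, for every $\gamma_1\in\Delta(\Omega_1)$ and every $\gamma_2\in\Delta(\Omega_2)$,
\begin{equation*}
\sup_{\omega_2\in\Omega_2} C_1(\gamma_1,\omega_2) \;\geq\; C_1(\gamma_1,\gamma_2) \;\geq\; \inf_{\omega_1\in\Omega_1} C_1(\omega_1,\gamma_2).
\end{equation*}
In particular $\sup_{\omega_2\in\Omega_2} C_1(\gamma_1,\omega_2) \geq \inf_{\omega_1\in\Omega_1} C_1(\omega_1,\gamma_2)$ holds simultaneously for all $\gamma_1$ and all $\gamma_2$.

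Finally I would peel off the quantifiers in the correct order, taking care not to push an $\inf$ past a $\sup$ (which is precisely what would illegitimately produce the reverse inequality). The right-hand side of the last inequality does not depend on $\gamma_1$, so taking the infimum over $\gamma_1\in\Delta(\Omega_1)$ on the left gives $\inf_{\gamma_1}\sup_{\omega_2} C_1(\gamma_1,\omega_2) \geq \inf_{\omega_1} C_1(\omega_1,\gamma_2)$ for every $\gamma_2$; the left-hand side now does not depend on $\gamma_2$, so taking the supremum over $\gamma_2\in\Delta(\Omega_2)$ yields exactly \eqref{eqn:yaochain}. I expect no real obstacle here: the only point requiring care is that the mixed extension of $C_1$ is a genuine expectation everywhere, which is exactly what the $C_1\geq 0$ assumption buys. (The matching reverse inequality --- equality of the two sides --- is the genuinely hard direction and would require convexity/compactness together with von Neumann's minimax theorem; it is not needed for the lower-bound applications in this paper, which only ever use the displayed direction.)
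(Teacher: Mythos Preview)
The paper does not provide its own proof of this theorem; it is stated with attribution to \citet{yao-77} and used as a known result. Your argument is the standard weak-duality proof of the inf--sup $\geq$ sup--inf direction and is correct, including your remark that the nonnegativity of $C_1$ ensures the mixed extension is well defined as an expectation in $[0,+\infty]$.
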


\subsection{A Technique for Prior Independent Lower Bounds: Blends}
\label{s:info_bounds}

There is a detailed explanation of the high-level technique of lower bounds from Yao's Minimax Principle in the textbook by \citet{BE-98}.  This section gives a minimax approach that is specific to prior independent design.  To outline, we: (a) fix a randomization over adversary strategies in advance; (b) prove an upper bound on the performance of the best-response algorithm {\em from an alternative description of the adversary's induced correlated distribution over inputs}; and (c) measure the gap between the adversary's expected optimal performance and the upper bound on the expected performance of any algorithm.  The key idea is the correlation in (b):
\begin{definition}
\label{def:blend}
A {\em blend} is a \underline{distribution-over-distributions} $\blend\in\Delta(\scF)$.  (Thus, $\blend(\dist)$ is the density at $\dist$.)  
A {\em blended distribution} $\blend^n\in\Delta(\valspace^n)$ is the induced \underline{density function of the correlated distribution} resulting from $n$ i.i.d.\ draws from a common distribution $\hat{\dist}$, with $\hat{\dist}$ drawn from $\blend$.

Two blends $\blendi[1],~\blendi[2]$ are called {\em dual blends} if there exists correlated density function $g$ such that:
\begin{align*}
\blendi[1]^n(\vals) &= g(\vals) = \blendi[2]^n(\vals)\quad\forall ~\vals
\end{align*}
 \noindent Each of $\blendi[1],~\blendi[2]$ are a {\em side} of the dual blend.  Finally, define $\text{{\em opt}}_{n,i}= \mathbf{E}_{\dist\sim\blendi}\left[\text{{\em OPT}}_{\dist}(\dist) \right]$ to be the expected performance of an optimal algorithm which knows $\dist$ over a blend $\blendi$.
\end{definition}
\noindent The point is: an arbitrary blend $\blend$ can be ``flattened" to describe a specific (symmetric) correlated distribution $\blend^n=g$ over input space $\valspace^n$.  Now suppose in fact two distinct blends $\blendi[1]$ and $\blendi[2]$ as choices of the PI adversary induce the same correlated distribution, i.e., they satisfy \Cref{def:blend}.  Because both induce the same description of input profiles, every algorithm is limited by the structure of either description.  The lower bound of the technique has the following intuition: the adversary chooses $\blendi[2]$ which fixes the benchmark of the current scenario to $\text{opt}_{n,2}=\expecta_{\dist\sim\delta_2}\left[\OPT_{\dist}(\dist)\right]$;\footnote{\label{foot:eorequalsroe} \Cref{lem:mixedbenchmark} in \Cref{a:blendsbound} shows that we can set the prior independent benchmark in this way.}  $\blendi[2]$ induces the correlated distribution $g$ and the algorithm best responds to $g$; however the fact that $\blendi[1]$ also induces $g$ means that every algorithm is upper bounded by $\text{opt}_{n,1}$; if this upper bound is strictly smaller than the benchmark, then a strict gap necessarily ensues.  The proof of \Cref{thm:blendsbound} appears in \Cref{a:blendsbound}.
\begin{theorem}
\label{thm:blendsbound}
Consider a prior independent setting with input space $\valspace^n$, class of algorithms $\algspace$, and class of distributions $\scF$.  Let $\scF^{\text{{\em all}}}$ be all distributions.  Assume there exist two distinct dual blends $\blendi[1]\in\Delta(\scF^{\text{\em all}})$ and $\blendi[2]\in\Delta(\scF)$ and correlated density function $g$ (of \Cref{def:blend}) such that:
\begin{align*}
\blendi[1]^n(\vals) &= g(\vals) = \blendi[2]^n(\vals)\quad\forall ~\vals
\end{align*}

\noindent Then the optimal prior independent approximation factor $\piratio^{\scF}$ is at least the ratio $\sfrac{\text{{\em opt}}_{n,2}}{\text{{\em opt}}_{n,1}}$:
\begin{equation}
    \label{eqn:blendsthmratios}
    \piratio^{\scF} = \min_{\algo\in\algspace} \max_{\dist\in \scF} \frac{\text{{\em OPT}}_{\dist}(\dist)}{\algop(\dist)} \geq \frac{\text{{\em opt}}_{n,2}}{\text{{\em opt}}_{n,1}}
\end{equation}
\end{theorem}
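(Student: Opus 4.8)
The plan is to combine Yao's Minimax Principle (Theorem~\ref{thm:yaominmax}) with the footnoted fact (Lemma~\ref{lem:mixedbenchmark}) that the prior independent benchmark can be replaced by an expectation of Bayesian-optimal performances over any blend supported in $\scF$. First I would set up the zero-sum game $\mathcal{G}$ in which player~1 (the algorithm designer) first picks a (possibly randomized) algorithm, and player~2 (the adversary) picks a distribution $\dist\in\scF$; the cost $C_1$ is the approximation ratio $\OPT_{\dist}(\dist)/\algo(\dist)$. By Yao's principle, the value of this game is at least $\sup_{\abstrd\in\Delta(\scF)}\inf_{\algo\in\algspace} C_1(\algo,\abstrd)$, where $C_1(\algo,\abstrd)=\expect_{\dist\sim\abstrd}[C_1(\algo,\dist)]$. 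Thus $\piratio^{\scF}$ is bounded below by the value obtained when the adversary commits to $\blendi[2]$ in advance and the designer best-responds.

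Next I would analyze $\inf_{\algo}C_1(\algo,\blendi[2])$. The subtlety is that $C_1(\algo,\blendi[2])$ is an expectation of a \emph{ratio}, not a ratio of expectations, so I would invoke Lemma~\ref{lem:mixedbenchmark} to argue that for the purpose of the prior independent lower bound we may take the numerator in aggregate: the relevant quantity is $\text{opt}_{n,2}=\expect_{\dist\sim\blendi[2]}[\OPT_{\dist}(\dist)]$ in the numerator, divided by $\algo$'s expected performance when inputs are drawn according to the blended distribution $\blendi[2]^n$. Concretely, because inputs $\vals$ are generated by first drawing $\hat\dist\sim\blendi[2]$ then drawing $\vals$ i.i.d.\ from $\hat\dist$, the algorithm's expected performance is $\expect_{\vals\sim\blendi[2]^n}[\algo(\vals)]$, which by definition depends on $\blendi[2]$ only through the correlated density $\blendi[2]^n=g$.

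The key step is the \emph{indistinguishability argument}: since $\blendi[1]^n(\vals)=g(\vals)=\blendi[2]^n(\vals)$ for all $\vals$, the algorithm's expected performance is identical whether inputs come from the $\blendi[1]$-process or the $\blendi[2]$-process — it equals $\expect_{\vals\sim g}[\algo(\vals)]$ in both cases. Now I would apply Lemma~\ref{lem:mixedbenchmark} again, this time on the $\blendi[1]$ side: viewing $\blendi[1]$ as a blend over $\scF^{\text{all}}$, the quantity $\expect_{\vals\sim g}[\algo(\vals)]=\expect_{\dist\sim\blendi[1]}\big[\expect_{\vals\sim\dist}[\algo(\vals)]\big]=\expect_{\dist\sim\blendi[1]}[\algo(\dist)]$, and since each $\algo(\dist)\le \OPT_{\dist}(\dist)$ (optimality of $\OPT_{\dist}$ for the Bayesian problem, Definition~\ref{def:bayesianmechdp}), we get $\expect_{\vals\sim g}[\algo(\vals)]\le \expect_{\dist\sim\blendi[1]}[\OPT_{\dist}(\dist)]=\text{opt}_{n,1}$. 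Chaining the two sides: for every $\algo\in\algspace$,
\[
\frac{\text{opt}_{n,2}}{\algo\text{'s performance on }g}\;\ge\;\frac{\text{opt}_{n,2}}{\text{opt}_{n,1}},
\]
and taking the infimum over $\algo$ (and noting the benchmark is fixed at $\text{opt}_{n,2}$ via Lemma~\ref{lem:mixedbenchmark}) yields $\piratio^{\scF}\ge \text{opt}_{n,2}/\text{opt}_{n,1}$, as claimed.

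The main obstacle I anticipate is justifying the two applications of Lemma~\ref{lem:mixedbenchmark} cleanly — in particular, making precise that ``setting the benchmark to $\text{opt}_{n,2}$'' is legitimate (the lemma tells us the min over algorithms of the \emph{ratio} is what we want, but we are lower-bounding it by a ratio of \emph{expectations}, which requires the benchmark-substitution direction of the lemma) and that the upper bound $\text{opt}_{n,1}$ on $\algo$'s performance holds \emph{uniformly} over all $\algo\in\algspace$, including randomized algorithms. A secondary technical point is ensuring the integral identity $\expect_{\vals\sim\blendi^n}[\algo(\vals)]=\expect_{\dist\sim\blendi}[\algo(\dist)]$ holds with the density conventions of Definition~\ref{def:blend} (this is essentially Fubini/Tonelli, using $\algo(\vals)\ge 0$), and that $\blendi[1]$ being supported on $\scF^{\text{all}}$ rather than $\scF$ causes no difficulty, since we only use $\blendi[1]$ to \emph{upper bound} the algorithm, never to define the benchmark.
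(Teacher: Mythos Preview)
Your proposal is correct and matches the paper's proof essentially step for step: apply Lemma~\ref{lem:mixedbenchmark} with $\bar{\blend}=\blendi[2]$ to pass to a ratio of expectations, use $\blendi[1]^n=g=\blendi[2]^n$ to rewrite the denominator as $\expect_{\dist\sim\blendi[1]}[\algo(\dist)]$, and then bound that by $\text{opt}_{n,1}$ via the pointwise inequality $\algo(\dist)\le\OPT_\dist(\dist)$. One small clarification: the second step you describe does not actually require Lemma~\ref{lem:mixedbenchmark} again---the paper isolates the pointwise bound $\expect_{\dist\sim\blendi[1]}[\algo(\dist)]\le\expect_{\dist\sim\blendi[1]}[\OPT_\dist(\dist)]$ as a standalone fact (Fact~\ref{fact:localoptisworseglobal}), and that, together with Fubini, is all you need on the $\blendi[1]$ side.
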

\begin{definition}
\label{def:thetheblendstechnique}
{\em The Blends Technique} is the proof technique for  approximation lower bounds which applies \Cref{thm:blendsbound} to a specified prior independent design problem.
\end{definition}

\noindent A detailed outline of the necessary computations to confirm that descriptions of $\blendi[1]$ and $\blendi[2]$ are dual blends is given in \Cref{s:example}, which also includes a first {\em non-trivial} $n=2$ example of a dual blend.  Construction of dual blends does not depend on problem domain -- e.g., mechanism design or online algorithms -- but which dual blend induces the largest lower bound does depend on domain.  Subsequently in this paper we will (a) give examples of dual blends and use them to prove lower bounds per 
\Cref{def:thetheblendstechnique}, and (b) give general methods for identifying dual blends.

\section{Results in Blends Analysis}
\label{ch:horizons}
\label{s:exampleblackwell}

The first goal of this section is to exhibit a concrete example of dual blends.  The example proceeds in two steps: (1) we describe a relaxed solution that allows {\em infinite weight} which is not directly usable for lower bounds but has simpler algebraic form; and (2), we show that this relaxed solution can be modified to become proper dual blends.

The second part of the section uses the dual blends example to state novel lower bounds for two distinct problems from mechanism design.  Our lengthy introduction to mechanism design and the proofs of these results are deferred to \Cref{a:mechanismdesignsetting}.  Interestingly, the distinct objectives of these two problems results in {\em the two sides of the dual blends playing opposite roles} (as choice of the adversary versus gap-inducing upper bound).   Later in \Cref{s:info_design}, we discuss the implications of this observation in terms of precluding {\em Blackwell ordering} between the two sides of the dual blend.

\subsection{A Concrete Dual Blends Example}
\label{s:tensorexampleoutline}
This section provides an explicit example of dual blends -- with motivation for the chosen distributions from themes in mechanism design.  First, we will describe a blends-type solution that has unbounded input support and infinite total weight (so it is not a probability distribution and it is not possible to re-normalize its weights to become one).\footnote{\label{foot:blendsweights} The elements of a blend $\blend$ are technically densities but we generally refer to them as {\em weights}, i.e., the weight corresponding to a distribution within the mixture over $\scF$ according to $\blend$.  We do this to accommodate a relaxed definition for {\em blend} which allows arbitrary total weight (including infinite).}  Second, we modify the infinite-weight solution to have finite weight in a finite input space (which can be normalized to 1 for any fixed weight).  We provide a solution outline with some confirming calculations deferred to \Cref{a:finiteblendsconfirm}.

For this running dual blends example, the $\blendi[1]$ side will be parameterized by a base class of upward-closed Quadratics (called ``equal revenue" in the mechanism design literature), with PDF given by $\qud_{z}(x) = \sfrac{z}{x^2}$ and CDF given by $\Qud_z(x) = 1-\sfrac{z}{x}$ on $[z,\infty)$.  The $\blendi[2]$ side will be a base class of downward-closed Uniforms, with PDF given by $\ud_{0,z}(x) = \sfrac{1}{z}$ and CDF given by $\Ud_{0,z}(x) = \sfrac{x}{z}$ on $[0,z]$.  (Generally, let $\Ud_{a,b}$ be the Uniform distribution on $[a,b]$.)
%
\paragraph{Infinite-weight Blends}  We start by describing the weights $o_{\dist}$ corresponding to $\blendi[1]$ and weights $\omega_{\dist}$ corresponding to $\blendi[2]$.  Because we first allow the total weight to be infinite, we only require the function $g$ (relaxed to be a ``correlated function" rather than a correlated distribution) to match up its output {\em mass} at every input (cf., density of a correlated distribution).

The weights of the upward-closed Quadratics blend ($\blendi[1]$) are as follows:
\begin{itemize}
\item weights $o_{Qz} = \frac{2}{z}dz$ on all upward-closed distributions $\Qud_{z}$ for $z\in\left(0,\infty\right)$.
\end{itemize}
The weights of the downward-closed Uniforms blend ($\blendi[2]$) are as follows:
\begin{itemize}
\item weights $\omega_{Uz} = \frac{2}{z}dz$ on all downward-closed distributions $\Ud_{0,z}$ for $z\in\left(0,\infty\right)$.
\end{itemize}
Using symmetry, we analyze mass in the cone $\vali[1]\geq\vali[2]\geq0$.  The calculations of total mass at any point $\vals\in(0,\infty)^2$ \underline{are confirmed to be equal} from either dual blends description of the common correlated function $g$.
\begin{align}
\label{eqn:qudversusudinfiniteo}
 \text{result of $\Qud_z$ blend}&=\int_0^{\vali[2]}o_{Qz}\cdot \qud_{z}(\vali[1])\cdot\qud_{z}(\vali[2])
    = \int_0^{\vali[2]} \frac{2}{z} \cdot \frac{z}{\vali[1]^2}\cdot\frac{z}{\vali[2]^2}~dz &&\hspace{-0.2cm}= \frac{1}{\vali[1]^2} = g(\vals)\\
\label{eqn:qudversusudinfiniteomega}
\text{result of $\Ud_{0,z}$ blend}&= \int_{\vali[1]}^{\infty}\omega_{Uz}\cdot \ud_{0,z}(\vali[1])\cdot\ud_{0,z}(\vali[2])
= \int_{\vali[1]}^{\infty} \frac{2}{z} \cdot \frac{1}{z}\cdot\frac{1}{z}~dz &&\hspace{-0.2cm}= \frac{1}{\vali[1]^2} = g(\vals)
\end{align}

\noindent The setup of these calculations is expanded in detail in \Cref{a:example}.  As desired, each side of the dual blends describes exactly the same function $g$ over $\valspace^2$.  The remaining issue to be addressed is that the total weight of all included distributions is divergent: $\int_0^{\infty} \frac{2}{z}dz = \infty$.

\label{page:finiteeqrversusuniform}  
\paragraph{Modification to Finite-weight Blends}  Next we show how to modify the infinite-weight solution above to a proper dual blends solution with approximately the same elements.  Consider input support $\valspace = \left[1,\maxval\right]$ for $1<\maxval<\infty$.  First we define the weights $o_{\dist}$ and $\omega_{\dist}$, largely informed by the infinite-weight solution.  We let the total weight in the system be any constant and can assume that it gets normalized to 1 later.  In fact the total weight will be: $1+\int_1^{\maxval} \frac{2}{z} dz = 1 + 2\ln \maxval$.

The Quadratics have the same general description as the infinite-weight case but are now top-truncated at $\maxval$, with truncated density moved to a point mass at $\maxval$.\footnote{\label{foot:quicktruncationnotation} 
We briefly explain notation of $\topt{\dist}^{\maxval'}$.  Let a left-over-arrow modify the domain-upper-bound of $\dist$ to be $\maxval$.  The accent in $\topt{\dist}^{\maxval'}$ indicates that density above $\maxval$ is {\em truncated} to $\maxval$ as a point mass, i.e., the original CDF jumps to 1 at $\maxval$.}  
Formally, Quadratics have PDF $\topt{\qud}_z^{\maxval'}(x) = \sfrac{z}{x^2}$ on $[1,\maxval)$ and point mass $\topt{\qud}_z^{\maxval'}(\maxval) = \sfrac{1}{\maxval}$, correspondingly CDF $\topt{\Qud}_z^{\maxval'}(x) = 1-\sfrac{z}{x}$ on $[1,\maxval)$ and $\topt{\Qud}_z^{\maxval'}(\maxval) = 1$.

The Uniforms have the same general description as the infinite-weight case but now have domain lower bound at 1 and {\em allow} top-truncation at $\maxval$.  Formally, Uniforms without truncation have PDF $\ud_{1,z}(x) = \sfrac{1}{z-1}$ and CDF $\Ud_{1,z}(x) = \sfrac{x-1}{z-1}$  on $[1,z]$.  Uniforms with truncation have PDF $\topt{\ud}_{1,b}^{\maxval'}(x) = \sfrac{1}{b-1}$ on $[1,\maxval)$ and point mass $\topt{\ud}_{1,b}^{\maxval'}(\maxval) = \sfrac{b-\maxval}{b-1}$, correspondingly $\topt{\Ud}_{1,b}^{\maxval'}(x) = \sfrac{x-1}{b-1}$ on $[1,\maxval)$ and $\topt{\Ud}^{\maxval}(\maxval) = 1$.

The weights of the upward-closed Quadratics blend ($\blendi[1]$) are as follows:\label{page:finiteweightquadsvsunifs}
\begin{itemize}
\item point mass of weight $o_{\text{pm}} = 1$ on (truncated) distribution $\topt{\Qud}^{\maxval'}_1$;
\item weights $o_{Qz} = \frac{2}{z}dz$ on all upward-closed (truncated) distributions $\topt{\Qud}^{\maxval'}_{z}$ for $z\in\left[1,\maxval\right]$.
\end{itemize}

\noindent The weights of the downward-closed Uniforms blend ($\blendi[2]$) are as follows:
\begin{itemize}
\item point mass of weight $\omega_{\text{pm}} = \frac{(2\maxval-1)^2}{
\maxval^2}$ on (truncated) distribution $\topt{\Ud}_{1,2\maxval}^{\maxval'}$;
\item weights $\omega_{Uz} = \frac{2(z-1)^2}{z^3}dz$ on all downward-closed distributions $\Ud_{1,z}$ for $z\in\left[1,\maxval\right]$.
\end{itemize}
\noindent (In fact, we use only one uniform distribution with truncation: $\topt{\Ud}_{1,2\maxval}^{\maxval'}$.)  Calculations to show that these blends result in the same correlated distribution $g$ over $[1,\maxval]^2$ are given in \Cref{a:finiteblendsconfirm}.

\subsection{First Illustrative Results in Mechanism Design}
\label{s:examplemdresults}

We show two prior independent lower bounds in mechanism design from the exact same dual blends solution (using Quadratics-versus-Uniforms with finite weight of \Cref{s:tensorexampleoutline} 
and the Blends Technique of \Cref{def:thetheblendstechnique}).  {\em Revenue} and {\em residual surplus} are two objectives within mechanism design (see \Cref{s:mechintro}).  
\Cref{thm:finitequadsversusunifsrevpiauction} (below, for a revenue objective) uses an adversarial distribution over the Uniforms side of the dual blend.  By contrast, \Cref{thm:finitequadsversusunifsressurppiauction2} (for a residual surplus objective) uses an adversarial distribution over the Quadratics side.  This dichotomy of the respective adversaries' choices highlights how even a single example of dual blends can be distinctly applied to two algorithm settings in order to identify a PI approximation lower bound within each setting.

A fixed prior independent lower bound is stronger if it holds for a smaller class of distributions.  Let $\lowb^{\scF}$ be a lower bound on the optimal approximation factor $\piratio^{\scF}$ for a class $\scF$.  \Cref{fact:disthierarchypropfromblends} makes clear that $\lowb^{\scF}$ holds additionally for a superclass $\mathcal{E}$:

 \begin{fact}
\label{fact:disthierarchypropfromblends}
Given two classes of distributions $\mathcal{E}$ and $\scF$ such that $\mathcal{E}\supset \scF$.  Then $\piratio^{\mathcal{E}}\geq\piratio^{\scF}\geq\lowb^{\scF}$.
\end{fact}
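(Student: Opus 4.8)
The plan is to prove the two inequalities separately; both are essentially immediate. The second inequality, $\piratio^{\scF} \geq \lowb^{\scF}$, requires no argument: $\lowb^{\scF}$ is by definition a lower bound on the optimal prior independent approximation factor $\piratio^{\scF}$ for the class $\scF$, so the claim holds by construction of $\lowb^{\scF}$.

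For the first inequality, $\piratio^{\mathcal{E}} \geq \piratio^{\scF}$, I would argue by monotonicity, relying on the fact that in \Cref{def:pidesign} the class of algorithms $\algspace$ (and the support $\valspace$) is specified independently of the distribution class, so passing from $\scF$ to the superclass $\mathcal{E}$ enlarges the adversary's feasible set without changing the designer's. Fix an arbitrary $\algo \in \algspace$. Since $\mathcal{E} \supseteq \scF$, the maximum of the ratio $\OPT_{\dist}(\dist)/\algo(\dist)$ taken over the larger index set dominates the one taken over the smaller:
\begin{equation*}
\max_{\dist \in \mathcal{E}} \frac{\OPT_{\dist}(\dist)}{\algo(\dist)} \;\geq\; \max_{\dist \in \scF} \frac{\OPT_{\dist}(\dist)}{\algo(\dist)}.
\end{equation*}
This holds for every $\algo \in \algspace$, so applying $\min_{\algo \in \algspace}$ to both sides preserves the inequality (if one function pointwise dominates another over a common domain, so do their minima). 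This gives $\piratio^{\mathcal{E}} \geq \piratio^{\scF}$, and chaining with the second inequality completes the proof.

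I do not expect any genuine obstacle; the statement is a structural observation rather than a substantive result. The only point that merits an explicit sentence is the one noted above — that the designer's option set $\algspace$ is held fixed while the adversary's option set grows with the distribution class — since this is exactly what forces the monotonicity to run in the stated direction.
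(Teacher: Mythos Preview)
Your proposal is correct and is essentially the only sensible argument here. The paper treats this as an unproved \texttt{Fact} (a structural observation used to justify stating bounds for the smallest available class), so your two-part monotonicity argument is exactly what is intended.
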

 
\noindent  Thus, we give our results for the smallest classes of distributions in order to state the strongest bounds from our analysis.  Define two sub-classes: Uniforms $\scF^{\text{unif}}[1,\maxval]=\{\topt{\Ud}_{1,b}^{\maxval'}~:~ 1 \leq b \}\equiv \text{uniforms on}~[1,b]~\text{truncated at}~\maxval$; and Quadratics $\scF^{\text{quad}}[1,\maxval]=\{\topt{\Qud}_{a}^{\maxval'}~:~1\leq a \leq \maxval \}\equiv \text{quadratics on}~[a,\maxval]~\text{truncated at}~\maxval$.  We explain the approach for both theorems but full proofs are deferred to \Cref{a:mechdesign}.

\begin{theorem}
\label{thm:finitequadsversusunifsrevpiauction}
Given a single-item, 2-agent, truthful auction setting with a revenue objective and with agent values restricted to the space $[1,\maxval]$ for $\maxval>2$.  For the class of uniform distributions $\scF^{\text{{\em unif}}}$, the optimal prior independent approximation factor of any (truthful) mechanism is lower bounded as:
\begin{equation}
    \label{eqn:finitequadsversusunifsrevpiauction}
    \piratio^{\scF^{\text{\em unif}}}_{\maxval} \geq 
    \frac{\text{{\em opt}}_{2,2}}{\text{{\em opt}}_{2,1}}=
    \frac{\frac{23\maxval}{6}-\frac{7}{2}-\ln(\sfrac{\maxval}{2})}{3\maxval-2}=\lowb_{\maxval}^{\scF^{\text{\em unif}}}
\end{equation}
\noindent The lower bound $\lowb^{\scF^{\text{{\em unif}}}}_{\maxval}\rightarrow 
\sfrac{23}{18}\approx\EXBREV$ as $\maxval\rightarrow\infty$ and this is the supremum of $\lowb^{\scF^{\text{{\em unif}}}}_{\maxval}$ over $\maxval\geq 1$.
\end{theorem}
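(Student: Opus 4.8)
The plan is to invoke the Blends Technique (\Cref{def:thetheblendstechnique}) with the finite-weight Quadratics-versus-Uniforms dual blend of \Cref{s:tensorexampleoutline}, using the Uniforms blend as $\blendi[2]$ (the adversary's choice, which fixes the benchmark) and the Quadratics blend as $\blendi[1]$ (which supplies the gap-inducing upper bound on every mechanism). Here $\blendi[2]\in\Delta(\scF^{\text{unif}})$ places weight $\omega_{\text{pm}}=\sfrac{(2\maxval-1)^2}{\maxval^2}$ on the truncated uniform $\topt{\Ud}_{1,2\maxval}^{\maxval'}$ and weights $\omega_{Uz}=\sfrac{2(z-1)^2}{z^3}\,dz$ on $\Ud_{1,z}$ for $z\in[1,\maxval]$, while $\blendi[1]\in\Delta(\scF^{\text{all}})$ places weight $o_{\text{pm}}=1$ on $\topt{\Qud}_1^{\maxval'}$ and weights $o_{Qz}=\sfrac{2}{z}\,dz$ on $\topt{\Qud}_z^{\maxval'}$ for $z\in[1,\maxval]$. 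Since \Cref{a:finiteblendsconfirm} verifies that these two blends induce the same correlated density over $[1,\maxval]^2$, \Cref{thm:blendsbound} gives $\piratio^{\scF^{\text{unif}}}_{\maxval}\geq \text{opt}_{2,2}/\text{opt}_{2,1}$. I would also check that both blends carry the same total weight $1+2\ln\maxval$, so this normalization cancels in the ratio and it then suffices to compute the two unnormalized blended optima.

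The substance of the proof is evaluating $\OPT_{\dist}(\dist)$, the expected revenue of Myerson's optimal two-agent auction, for each distribution appearing in the two blends. For the truncated equal-revenue distribution $\topt{\Qud}_z^{\maxval'}$ (continuous density $\sfrac{z}{x^2}$ on $[z,\maxval)$ plus an atom of mass $\sfrac{z}{\maxval}$ at $\maxval$), the revenue curve is concave --- it rises linearly with slope $\maxval$ up to quantile $\sfrac{z}{\maxval}$ (the atom) and then stays flat at revenue level $z$ --- so the distribution is regular, the monopoly reserve does not bind, and the optimal mechanism is simply the second-price auction, with revenue equal to the expected smaller of two i.i.d.\ draws: $\OPT_{\topt{\Qud}_z^{\maxval'}}(\topt{\Qud}_z^{\maxval'})=2z-\sfrac{z^2}{\maxval}$. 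The delicate point here is the atom: the flat stretch of the revenue curve has virtual value $0$, so one must argue that the base revenue $z$ is actually collected by always allocating to the top type --- the ``every price yields the same revenue'' property of equal revenue --- rather than being lost. Integrating $2z-\sfrac{z^2}{\maxval}$ against $o_{\text{pm}}$ on $\topt{\Qud}_1^{\maxval'}$ and the density $o_{Qz}$ over $[1,\maxval]$ then gives $\text{opt}_{2,1}=3\maxval-2$ (unnormalized).

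For the Uniforms side, $\Ud_{1,z}$ is regular with monopoly reserve $\max(\sfrac{z}{2},1)$, so $\OPT_{\Ud_{1,z}}(\Ud_{1,z})$ is the revenue of the second-price auction with that reserve; the reserve is non-binding for $z\leq 2$ and active for $z\geq 2$, which is why we require $\maxval>2$ for the stated formula (both regimes occur in $[1,\maxval]$). For the lone truncated uniform $\topt{\Ud}_{1,2\maxval}^{\maxval'}$, its revenue curve is again concave, the atom at $\maxval$ has virtual value $\maxval$ while every continuous type has negative virtual value, so the optimal auction sells (at price $\maxval$) exactly when some agent lies in the atom; a short computation gives $\OPT=\sfrac{\maxval^2(3\maxval-2)}{(2\maxval-1)^2}$, which against the weight $\omega_{\text{pm}}$ contributes precisely $3\maxval-2$. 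Splitting the integral of $\OPT_{\Ud_{1,z}}(\Ud_{1,z})$ against $\omega_{Uz}$ at $z=2$ (the $[1,2]$ piece evaluates to $\sfrac{1}{6}$, and the $[2,\maxval]$ piece to $\sfrac{5\maxval}{6}-\sfrac{5}{3}-\ln(\sfrac{\maxval}{2})$) yields $\text{opt}_{2,2}=\sfrac{23\maxval}{6}-\sfrac{7}{2}-\ln(\sfrac{\maxval}{2})$ (unnormalized), and dividing produces the displayed formula for $\lowb_{\maxval}^{\scF^{\text{unif}}}$.

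Finally I would verify the limit and the supremum. As $\maxval\to\infty$ the numerator is asymptotic to $\sfrac{23\maxval}{6}$ and the denominator to $3\maxval$, so $\lowb_{\maxval}^{\scF^{\text{unif}}}\to\sfrac{23}{18}\approx\EXBREV$. For the supremum, rearrange to $\sfrac{23}{18}-\lowb_{\maxval}^{\scF^{\text{unif}}}=\frac{\sfrac{17}{18}+\ln(\sfrac{\maxval}{2})}{3\maxval-2}$, which is strictly positive for every $\maxval\geq 1$ (since $\ln(\sfrac{\maxval}{2})\geq -\ln 2>-\sfrac{17}{18}$) and decreasing in $\maxval$ by an elementary derivative check; hence $\lowb_{\maxval}^{\scF^{\text{unif}}}$ is increasing in $\maxval$ with supremum equal to its limit $\sfrac{23}{18}$. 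The main obstacle is the middle two paragraphs --- correctly pinning down Myerson's optimal two-agent auction for the truncated distributions, in particular handling the atom that truncation creates at $\maxval$ via the revenue-curve (ironed virtual value) formalism; once those per-distribution optima are in hand, the integrations and the final calculus are routine.
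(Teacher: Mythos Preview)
Your proposal is correct and follows the same overall strategy as the paper: invoke the Blends Technique with the finite-weight Quadratics-versus-Uniforms dual blend, set $\blendi[2]$ to the Uniforms and $\blendi[1]$ to the Quadratics, compute $\text{opt}_{2,1}$ and $\text{opt}_{2,2}$, and take their ratio. The case split at $z=2$ for the Uniforms integral, the handling of the truncated uniform, and the final numerics all match.

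There is one mild computational difference worth noting. For $\text{opt}_{2,1}$ you compute the per-distribution optimum $\OPT_{\topt{\Qud}_z^{\maxval'}}(\topt{\Qud}_z^{\maxval'})=2z-\sfrac{z^2}{\maxval}$ (the expected second order statistic, since the SPA is optimal for each truncated equal-revenue distribution) and then integrate against the blend weights. The paper instead observes that a single mechanism, the posted price $\text{AP}_{\maxval}$, is simultaneously optimal for every distribution in the Quadratics blend, and so evaluates $\text{opt}_{2,1}=\text{AP}_{\maxval}(g)$ directly on the correlated density $g$ using only the $0$- and $1$-dimensional pieces $g_{0D},g_{1D}$. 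Both routes yield $3\maxval-2$; the paper's shortcut avoids the integration, while yours is the more generic blend-weighted computation. The paper uses the analogous shortcut for the truncated uniform $\topt{\Ud}_{1,2\maxval}^{\maxval'}$ as well (its optimal mechanism is also $\text{AP}_{\maxval}$, and it carries all the atom mass in $g$), whereas you compute its optimal revenue directly; again both give the contribution $3\maxval-2$.

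One small caveat: your supremum argument asserts that $\sfrac{23}{18}-\lowb_{\maxval}^{\scF^{\text{unif}}}$ is monotone decreasing, hence $\lowb_{\maxval}^{\scF^{\text{unif}}}$ is monotone increasing, on all of $\maxval\geq 1$. The monotonicity is not quite global (check the sign of the derivative near $\maxval=1$), but you do not actually need it: the strict positivity of $\sfrac{23}{18}-\lowb_{\maxval}^{\scF^{\text{unif}}}=\frac{\sfrac{17}{18}+\ln(\sfrac{\maxval}{2})}{3\maxval-2}$ for all $\maxval\geq 1$, together with the limit $\sfrac{23}{18}$, already gives the supremum claim. The paper itself only states the limit via l'H\^opital and asserts the supremum without a detailed argument, so your treatment here is in fact more thorough.
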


\noindent The canonical PI revenue maximization problem measures worst-case approximation with respect to the class of {\em regular} distributions $\scF^{\text{reg}}$ (\Cref{def:reg}).  All of our Uniforms are regular:  $\scF^{\text{reg}}\supset \scF^{\text{unif}}$.  As a corollary, we get a lower bound for regular distributions: $\piratio^{\scF^{\text{reg}}}_{\maxval} \geq \lowb_{\maxval}^{\scF^{\text{unif}}}$.


As already stated, the proof of \Cref{thm:finitequadsversusunifsrevpiauction} follows the script of the Blends Technique (\Cref{def:thetheblendstechnique}).  We set $\blendi[2]\in\Delta(\scF^{\text{unif}})$ to be the Uniforms blend with finite weights (page~\pageref{page:finiteeqrversusuniform}) and we set $\blendi[1]\in\Delta(\scF^{\text{all}})$ to be the corresponding Quadratics dual blend.  The Second Price Auction (SPA; \Cref{def:spaauction}) is optimal for all Quadratics in $\scF^{\text{quad}}$; 
the lower bound $\maxval>2$ is necessary so that the SPA is not also optimal for all Uniform distributions with positive weight in $\blendi[2]$ (otherwise there is no gap: $\sfrac{\text{ opt}_{2,2}}{\text{opt}_{2,1}}=1$).  Given these, the right-hand side of equation~\eqref{eqn:finitequadsversusunifsrevpiauction} is simply the result of evaluating $\sfrac{\text{opt}_{2,2}}{\text{opt}_{2,1}}$ (and recalling from \Cref{def:blend} that $\text{opt}_{n,i}= \mathbf{E}_{\dist\sim\blendi}\left[\OPT_{\dist}(\dist) \right]$).

\begin{theorem}
\label{thm:finitequadsversusunifsressurppiauction}
Given a single-item, 2-agent, truthful auction setting with a residual surplus objective and with agent values restricted to the space $[1,\maxval]$ for $\maxval\geq \RSMINH$.  For the class of quadratic distributions $\scF^{\text{{\em quad}}}$, the optimal prior independent approximation factor of any (truthful) mechanism is lower bounded as:
\begin{equation}
    \label{eqn:finitequadsversusunifsressurppiauction}
    \piratio^{\scF^{\text{\em quad}}}_{\maxval} \geq
    \frac{\text{{\em opt}}_{2,2}}{\text{{\em opt}}_{2,1}}> \frac{4\maxval^2-2\maxval-\maxval\ln\maxval-e\ln\maxval-e}{4\maxval^2-3\maxval-\maxval\ln\maxval} = \lowb_{\maxval}^{\scF^{\text{\em quad}}}
\end{equation}
\noindent The lower bound $\lowb^{\scF^{\text{{\em quad}}}}_{\maxval}\rightarrow1$ as $\maxval\rightarrow\infty$.  As an example bound: for $\maxval \in \mathbb{N}$, the maximum of $\lowb^{\scF^{\text{{\em quad}}}}_{\maxval}$ is achieved at $\maxval=\EXBRSWN$ with $\lowb^{\scF^{\text{\em quad}}}_{\EXBRSWN} \approx \EXBRS$.
\end{theorem}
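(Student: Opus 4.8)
The plan is to invoke the Blends Technique (\Cref{def:thetheblendstechnique}) via \Cref{thm:blendsbound}, applied to the finite-weight Quadratics-versus-Uniforms dual blend of \Cref{s:tensorexampleoutline} but with the roles of the two sides swapped relative to \Cref{thm:finitequadsversusunifsrevpiauction}. Concretely, let $\blendi[2]\in\Delta(\scF^{\text{quad}})$ be the finite-weight Quadratics blend — a legitimate adversary blend since every $\topt{\Qud}_z^{\maxval'}$ lies in $\scF^{\text{quad}}$ — and let $\blendi[1]\in\Delta(\scF^{\text{all}})$ be the corresponding finite-weight Uniforms blend; the check that these two induce the same correlated density $g$ on $[1,\maxval]^2$ is exactly the computation deferred to \Cref{a:finiteblendsconfirm}. \Cref{thm:blendsbound} then yields $\piratio^{\scF^{\text{quad}}}_{\maxval}\ge\text{{\em opt}}_{2,2}/\text{{\em opt}}_{2,1}$, so everything reduces to evaluating, for the residual surplus objective, $\text{{\em opt}}_{2,2}=\mathbf{E}_{\dist\sim\blendi[2]}[\OPT_{\dist}(\dist)]$ (the benchmark, obtained from the Quadratics side) and $\text{{\em opt}}_{2,1}=\mathbf{E}_{\dist\sim\blendi[1]}[\OPT_{\dist}(\dist)]$ (the quantity upper bounding every mechanism, obtained from the Uniforms side).

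The substantive part is determining, for a \emph{single} distribution, the residual-surplus-optimal two-agent truthful mechanism $\OPT_{\dist}$ and its value $\OPT_{\dist}(\dist)$ — there is no off-the-shelf analogue of \citet{mye-81} for this objective, so one works it out by hand. Revenue equivalence is the lever: for a symmetric, dominant-strategy truthful, ex-post individually rational mechanism with no subsidies to the agents, the expected residual surplus equals the total agent utility, which (setting the lowest type's payment to its optimal value $0$) equals $2\bigl[\underline{\val}\,Q(\underline{\val})+\int Q(s)(1-\dist(s))\,ds\bigr]$ for interim allocation rule $Q$; maximizing this over monotone, interim-feasible $Q$ is an ironing problem in the inverse hazard rate $(1-\dist)/\distp$. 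For the Quadratics (equal revenue) this inverse hazard rate equals $v$ on the continuous part and so is increasing, pushing the optimum toward the second-price auction — but the point mass at $\maxval$ acts like a region of inverse-hazard-rate $0$ and forces a pooling region at the top whose boundary satisfies the ironing condition $v^{\star}\ln(\maxval/v^{\star})=v^{\star}$, i.e. $v^{\star}=\maxval/e$; this is where the constant $e$ enters the bound. When the scale is large the pooling region engulfs the whole support and the optimum degenerates to the uniform-random lottery, so $\OPT_{\dist}(\dist)$ is piecewise in the scale parameter. For the Uniforms the inverse hazard rate is decreasing, so ironing collapses it to a constant and the residual-surplus-optimal mechanism is simply the uniform-random lottery, whose value is the mean of $\dist$. (Both claims should be reconciled with the residual-surplus results surrounding \citet{HR-14}, which also supply the matching $\sfrac{4}{3}$ upper bound.)

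With $\OPT_{\dist}(\dist)$ in closed form on both one-parameter families, the remainder is bookkeeping. Substitute the explicit Quadratics weights ($o_{\text{pm}}=1$, $o_{Qz}=\sfrac{2}{z}\,dz$) and Uniforms weights ($\omega_{\text{pm}}=\sfrac{(2\maxval-1)^2}{\maxval^2}$, $\omega_{Uz}=\sfrac{2(z-1)^2}{z^3}\,dz$), integrate (splitting the Quadratics integral at the piecewise boundary), and note that the shared total weight $1+2\ln\maxval$ cancels in the ratio. Bounding the resulting log-laden exact ratio from below by discarding lower-order terms produces the clean expression $\lowb_{\maxval}^{\scF^{\text{quad}}}=\sfrac{(4\maxval^{2}-2\maxval-\maxval\ln\maxval-e\ln\maxval-e)}{(4\maxval^{2}-3\maxval-\maxval\ln\maxval)}$. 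The asymptotics follow by inspection — numerator and denominator are each $4\maxval^{2}$ to leading order — and the hypothesis $\maxval\ge\RSMINH$ is exactly the non-triviality threshold: the numerator minus the denominator is $\maxval-e\ln\maxval-e$, which is positive iff $\maxval>e(1+\ln\maxval)$, a transcendental condition whose root is $\approx\RSMINH$. The integer maximizer $\maxval=\EXBRSWN$ (with value $\approx\EXBRS$) is then located by one-variable calculus on $\maxval\mapsto\lowb_{\maxval}^{\scF^{\text{quad}}}$ for $\maxval\ge1$.

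I expect the second step — the residual-surplus-optimal mechanism — to be the main obstacle. For revenue, \citet{mye-81} hands us $\OPT_{\dist}$ and identifies it as the plain second-price auction on every equal-revenue distribution; for residual surplus the ironing argument must be run explicitly, getting the top-atom pooling region (the source of the $e$) and the induced case split in the scale exactly right, and confirming interim feasibility of the claimed optimum through the Border constraints. Once $\OPT_{\dist}(\dist)$ is secured for the two families, the integration, the reduction to the clean lower bound $\lowb_{\maxval}^{\scF^{\text{quad}}}$, and the optimization over $\maxval$ are all routine.
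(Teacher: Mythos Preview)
Your high-level plan is right and matches the paper: swap the roles of the two sides, take $\blendi[2]$ as the Quadratics blend and $\blendi[1]$ as the Uniforms blend, and invoke \Cref{thm:blendsbound}. You also correctly identify that the Lottery is residual-surplus-optimal for every uniform in the blend, so $\text{opt}_{2,1}$ is just the expected mean under the Uniforms weights; and you correctly locate the top pooling boundary $v^{\star}=\maxval/e$ (equivalently $q^{\star}=e/\maxval$) for the truncated Quadratic.

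Where your route diverges from the paper is in bounding $\text{opt}_{2,2}$. You propose to compute the residual-surplus-optimal mechanism for each $\topt{\Qud}_z^{\maxval'}$, case-split on the scale, and integrate over the Quadratics weights. The paper instead takes a deliberate shortcut that avoids any integration over $z$: it defines a quantity $\text{lb}_{2,2}$ that uses the Lottery's performance on \emph{every} continuous-weight Quadratic $\topt{\Qud}_z^{\maxval'}$ and uses a special (still suboptimal) ``two-piece-iron'' mechanism $\mecha_{o_{\text{pm}}}$ only on the single point-mass-weighted distribution $\topt{\Qud}_1^{\maxval'}$. The key observation is that since the Lottery is optimal for all Uniforms and both blends induce the same $g$, the expected Lottery performance over the Quadratics blend equals $\text{opt}_{2,1}$ exactly. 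Hence $\text{lb}_{2,2}=\text{opt}_{2,1}+o_{\text{pm}}\cdot\bigl[\mecha_{o_{\text{pm}}}(\topt{\Qud}_1^{\maxval'})-\text{LOT}_2(\topt{\Qud}_1^{\maxval'})\bigr]$, and the entire bound reduces to evaluating two mechanisms on \emph{one} distribution via twice-the-area-under-the-ironed-curve (an extension of \Cref{lem:DRY-15}). The specific expression $\lowb_\maxval^{\scF^{\text{quad}}}$ in the statement is exactly $\text{lb}_{2,2}/\text{opt}_{2,1}$ from this construction, not a simplification of a larger integral.

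Two smaller points. First, the paper's $\mecha_{o_{\text{pm}}}$ irons \emph{two} regions of quantile space, $[0,e/\maxval]$ and $[e/\maxval,1]$, i.e., it is a two-type mechanism, not (top pooling + SPA); the true optimum has a distinct bottom pooling region $[q^{\&},1]$ coming from the vertical jump of the residual-surplus curve at $q=1$ (a consequence of the value-space lower bound), which your description does not mention. Second, the threshold $\maxval\ge\RSMINH$ in the paper arises precisely as the condition for $\mecha_{o_{\text{pm}}}$ to strictly beat the Lottery on $\topt{\Qud}_1^{\maxval'}$, i.e., for the single correction term to be positive, which coincides with your numerator-minus-denominator computation. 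Your more direct integration would yield a valid (and likely sharper) lower bound on $\text{opt}_{2,2}$, but it would not naturally produce the exact closed form stated in the theorem.
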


\noindent  The canonical PI residual surplus maximization problem measures worst-case approximation with respect to the class of all distributions $\scF^{\text{all}}$.\footnote{\label{foot:justificationfromHR14} We note the contrast: $\scF^{\text{all}}$ is standard for prior independent design with a residual surplus objective, whereas $\scF^{\text{reg}}$ is standard with a revenue objective.  As partial explanation: for the class $\scF^{\text{all}}$, \citet{HR-14} show that constant-approximation is possible for residual surplus, and also show a super-constant lower bound for revenue.  Revenue maximization restricts to regular distributions which satisfy a natural concavity property, and for which constant-approximation is possible (the first upper bound was from \citet{DRY-15}).}  As a corollary, we get a lower bound for all distributions: $\piratio^{\scF^{\text{all}}}_{\maxval} \geq \lowb_{\maxval}^{\scF^{\text{quad}}}$.

Once again, the proof of \Cref{thm:finitequadsversusunifsressurppiauction} uses the Blends Technique.  This time we set $\blendi[2]\in\Delta(\scF^{\text{quad}})$ to be the Quadratics blend with finite weights and set $\blendi[1]\in\Delta(\scF^{\text{all}})$ to be the corresponding Uniforms.  The Lottery (\Cref{def:klotterymech}) is optimal for all Uniforms in $\scF^{\text{unif}}$; the lower bound $\maxval\geq\RSMINH$ is necessary so that the Lottery is not also optimal for all Quadratics with positive weight in $\blendi[2]$ (otherwise there is no gap).  Note, the right-hand side of equation~\eqref{eqn:finitequadsversusunifsressurppiauction} is a simplified lower bound on the ratio $\sfrac{\text{opt}_{2,2}}{\text{opt}_{2,1}}$ as shown in the statement.

Previously for 2-agent auctions for revenue and {\em unbounded value space}, with the additional restriction to {\em scale-invariant} mechanisms, \citet{AB-18} proved for monotone hazard rate distributions ($\scF^{\text{mhr}}$; \Cref{def:haz}) that the SPA is optimal and gave the optimal approximation $\piratio^{\scF^{\text{\em mhr}}} \approx \apxsimpmhr$ (\Cref{thm:pioptn2revmhr}); and also proved for regular distributions ($\scF^{\text{reg}}$) the first-ever PI lower bound.  \citet{HJL-20} gave the optimal mechanism and approximation $\piratio^{\scF^{\text{\em reg}}} \approx \apxsimp$ (\Cref{thm:pioptn2revenue}).  For residual surplus, there is no previous lower bound.  Our mechanism design results have not been optimized in order to identify best lower bounds from the Blends Technique.

\section{General Dual Blends Solutions: Order-statistic Separability and Inverse-distributions}
\label{s:blendsfrominvdist}

This section describes two broad approaches for infinite-weight dual blends solutions that may be useful for identifying good lower bounds for problems of interest, i.e., within a search over dual blends for the one that yields the best lower bound.

The  first blends structure exists when the common function $g$ can be written as {\em multiplicatively-separable functions per order-statistic of the inputs} (for $n=2$).  The second blends structure generates one side of the dual blend by parameterizing over scales of a fixed, base function $\dist$, and the other side is then automatically generated by parameterizing over scales of the {\em inverse-distribution} of $\dist$.  The example of \Cref{s:tensorexampleoutline} is a special case of both approaches.

For simplicity, we describe these constructions allowing for infinite-weight blends.  Similar methods as used in the example of \Cref{s:tensorexampleoutline} can convert them to proper probability distributions.

\subsection{Blends from Order-statistic Separability}
\label{s:mainbodyblendsfromseparate}

This section introduces {\em order-statistic-separable} functions and subsequently describes a class of dual blends based on these functions.  Fix $n=2$ and our inputs in the cone $\vali[1]\geq \vali[2]\geq0$ in which $\vali[1]$ represents the first (largest) order statistic and $\vali[2]$ the second (smaller) order statistic.
\begin{definition}
\label{def:orderstatsep}
Given $n=2$.  An {\em order-statistic-separable function} (with domain $\valspace^2$) is symmetric across the line $\vali[1]=\vali[2]$ and for inputs subject to $\vali[1]\geq \vali[2]\geq0$, has the form:
\label{def:orderstatseparable}
\begin{equation*}
g(\vals)= 
    g_1(\vali[1])\cdot g_2(\vali[2])
\end{equation*}
for which both $g_1$ and $g_2$ adopt the domain $\valspace$.
\end{definition}
\noindent 
To be clear, the separate functions $g_1$ and $g_2$ are \underline{not} independent factors of $g$ because of the condition $\vali[1]\geq \vali[2]$.  The function $g$ is correlated and is not a product itself.  
Let $G_1(z)=\int_z^{\infty} g_1(y)dy$ and $G_2(z)=\int_0^z g_2(y)dy$ be respectively upward-cumulative and downward-cumulative functions.  
(Intuitively, if $G_1(z)$ is finite, then a ``normalized" function $\sfrac{g_1(x)}{G_1(z)}$ gives the PDF of a conditional probability distribution parameterized by $z$, on domain $[z,\infty)$; and the same is true for finite $G_2(z)$ on domain $(0,z]$.)


Before stating a formal result in \Cref{thm:blendsgeneratorfromseparate} to construct dual blends, we show that the Quadratics-versus-Uniforms example of \Cref{s:tensorexampleoutline} exhibits order-statistic separability.  The blends' correlated density at every point $\vals\in\reals^2_+$ for $\vali[1]\geq\vali[2]$ was calculated in equations~\eqref{eqn:qudversusudinfiniteo} and~\eqref{eqn:qudversusudinfiniteomega} to be $g(\vals) =\sfrac{1}{\vali[1]^2}$.  It is easy to verify that 
$g_1(\vali[1])=\sfrac{1}{\vali[1]^2}$ and $g_2(\vali[2])=1$ satisfy \Cref{def:orderstatsep}.  %
%
%
%
The proof and discussion of \Cref{thm:blendsgeneratorfromseparate} are given in \Cref{a:generalblendresults}.

\begin{theorem}
\label{thm:blendsgeneratorfromseparate}
Consider non-negative functions $g_1(\cdot)$ and $g_2(\cdot)$ each with domain $(0,\infty)$.  For every $z>0$, let $g_{1,z}$ be $g_1$ restricted to the domain $[z,\infty)$ and $g_{2,z}$ be $g_2$ restricted to the domain $(0,z]$.

Each $\blendi$ blend is a distribution over the set $\{g_{i,z}~:~z>0\}$.  Let $o_{g_1}(z)$ and $\omega_{g_2}(z)$ be functions ({\em as free parameters which we may design)} to describe weights corresponding respectively to each $g_{1,z}$ and to each $g_{2,z}$.

First, assume $g_1(\cdot)$ and $g_2(\cdot)$ satisfy the following conditions:
\begin{enumerate}
\item The function $\chi(z) = \frac{g_1(z)}{g_2(z)}$ evaluated in the limit at $\infty$ is $0$, i.e., $\lim_{z\rightarrow \infty}\chi(z) = 0$;
\item the function $\psi(z) = \frac{g_2(z)}{g_1(z)}$ evaluated in the limit at $0$ is $0$, i.e., $\lim_{z\rightarrow 0} \psi(z) = 0$;
\item $\chi(z)$ must be weakly decreasing, equivalently, $\psi(z)$ must be weakly increasing;
\end{enumerate}
Then the weights functions $o_{g_1}(z)=d\psi(z)$ and $\omega_{g_2}(z)=-d\chi(z)$ 
give a dual blends solution with: $$g(\vals) = g_1(v_1)\cdot g_2(v_2)~\text{{\em for}}~\vals= (\vali[1],\vali[2]\leq\vali[1])$$

\noindent If the following condition additionally holds:
\begin{enumerate}
\setcounter{enumi}{3}
\item the integrals $G_1(z)=\int_z^{\infty} g_1(y)~dy$ and $G_2(z)=\int_0^z g_2(y)~dy$ are positive and finite for all $x\in(0,\infty)$;
\end{enumerate}
then for the same function $g$, there exists a dual blends solution (by modification from the original solution) for which all of the $g_{1,z}$ and $g_{2,z}$ functions are distributions.  
\end{theorem}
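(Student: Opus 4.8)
The plan is to reverse-engineer the two weight functions from the matching requirement, verify that this forced choice works (with the three hypotheses entering exactly where needed), and then rescale to obtain distributions. Throughout I work in the cone $\vali[1]\ge\vali[2]\ge 0$: both $\blendi[1]^2$ and $\blendi[2]^2$ are symmetric in their two arguments (being induced by i.i.d.\ draws), and the target $g=g_1(\vali[1])\,g_2(\vali[2])$ is symmetric by \Cref{def:orderstatsep}, so matching the three functions on the cone suffices.

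First I would write the two blended functions explicitly. Fix $\vals=(\vali[1],\vali[2])$ in the cone, and recall $\blendi^2(\vals)=\int(\text{weight at }z)\cdot g_{i,z}(\vali[1])\cdot g_{i,z}(\vali[2])\,\dd z$. Since $g_{1,z}$ is supported on $[z,\infty)$ and there coincides with $g_1$, it contributes to $\blendi[1]^2(\vals)$ exactly for $z\le\vali[2]$, so $\blendi[1]^2(\vals)=g_1(\vali[1])\,g_1(\vali[2])\int_0^{\vali[2]}o_{g_1}(z)\,\dd z$. Dually, $g_{2,z}$ is supported on $(0,z]$, so it contributes to $\blendi[2]^2(\vals)$ exactly for $z\ge\vali[1]$, giving $\blendi[2]^2(\vals)=g_2(\vali[1])\,g_2(\vali[2])\int_{\vali[1]}^{\infty}\omega_{g_2}(z)\,\dd z$. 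For each to equal $g_1(\vali[1])\,g_2(\vali[2])$ at every cone point one needs $\int_0^{t}o_{g_1}(z)\,\dd z=g_2(t)/g_1(t)=\psi(t)$ and $\int_t^{\infty}\omega_{g_2}(z)\,\dd z=g_1(t)/g_2(t)=\chi(t)$ for all $t>0$; differentiating in $t$ forces $o_{g_1}=\dd\psi$ and $\omega_{g_2}=-\dd\chi$, the claimed choice. Conversely, with this choice condition~3 (equivalently, monotonicity of $\psi=1/\chi$) makes $\dd\psi$ and $-\dd\chi$ nonnegative, hence legitimate weights; $\int_0^{\vali[2]}\dd\psi=\psi(\vali[2])-\lim_{z\to 0}\psi(z)=\psi(\vali[2])$ by condition~2; and $\int_{\vali[1]}^{\infty}(-\dd\chi)=\chi(\vali[1])-\lim_{z\to\infty}\chi(z)=\chi(\vali[1])$ by condition~1. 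Substituting back gives $\blendi[1]^2(\vals)=g_1(\vali[1])\,g_2(\vali[2])=\blendi[2]^2(\vals)=g(\vals)$ on the cone, so $\blendi[1],\blendi[2]$ are dual blends inducing $g$.

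For the last claim, assume condition~4, so $G_1(z),G_2(z)\in(0,\infty)$. I would normalize the base functions, $\tilde g_{1,z}=g_{1,z}/G_1(z)$ (a density on $[z,\infty)$) and $\tilde g_{2,z}=g_{2,z}/G_2(z)$ (a density on $(0,z]$), and compensate in the weights, $\tilde o_{g_1}(z)=G_1(z)^2\,o_{g_1}(z)$ and $\tilde\omega_{g_2}(z)=G_2(z)^2\,\omega_{g_2}(z)$ (still nonnegative). Because each base function enters the integrand evaluated at both coordinates, the two factors $1/G_i(z)$ cancel the $G_i(z)^2$ in the weight: $\tilde o_{g_1}(z)\,\tilde g_{1,z}(\vali[1])\,\tilde g_{1,z}(\vali[2])=o_{g_1}(z)\,g_{1,z}(\vali[1])\,g_{1,z}(\vali[2])$, and likewise on the $\blendi[2]$ side. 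Hence the modified blends induce the same $g$ while all base functions are now distributions.

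The computations are routine; the only points that need care are the support bookkeeping---upward-closed supports pin the integration range to $z\le\vali[2]$, downward-closed supports to $z\ge\vali[1]$---and the observation that conditions~1--3 are calibrated exactly so that the two boundary terms (at $0$ and at $\infty$) vanish and the induced measures $\dd\psi,-\dd\chi$ are nonnegative. If one does not wish to assume $\psi,\chi$ differentiable, reading $o_{g_1},\omega_{g_2}$ as the Lebesgue--Stieltjes measures of the monotone functions $\psi$ and $\chi$ makes everything go through verbatim, any atoms affecting only a null subset of the correlated density.
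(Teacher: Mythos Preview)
Your proof is correct and follows essentially the same approach as the paper: factor out $g_i(\vali[1])\,g_i(\vali[2])$ from each blend, use the support bookkeeping to pin the integration limits to $[0,\vali[2]]$ and $[\vali[1],\infty)$, and invoke conditions~1--3 so that the resulting integrals equal $\psi(\vali[2])$ and $\chi(\vali[1])$ respectively; the normalization for the distributions case is identical. Your framing of first reverse-engineering the weights from the matching constraint (rather than just positing them) and the remark about reading $\dd\psi,-\dd\chi$ as Lebesgue--Stieltjes measures are nice expository additions but not substantive departures.
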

\noindent The modification for the last part of \Cref{thm:blendsgeneratorfromseparate} is defined by: the distributions making up the blends classes are $\tilde{g}_{1,z}(x) = \sfrac{g_{1,z}(x)}{G_1(z)}$ and $\tilde{g}_{2,z}(x)=\sfrac{g_{2,z}(x)}{G_2(z)}$ and the weights are $\tilde{o}_{g_1}(z)=d\psi(z)\cdot\left(G_1(z) \right)^2$ and $\tilde{\omega}_{g_2}(z)=-d\chi(z)\cdot\left(G_2(z) \right)^2$.

\subsection{Blends from Inverse-distributions}
\label{s:inversedistributions}

It is a remarkable feature of the infinite-weight Quadratics-versus-Uniforms dual blends {\em that both sides use the exact same weights parameters per $z$}, namely $o_{Qz}=\omega_{Uz}=\sfrac{2}{z}\cdot dz$.  
This structure is not an anomaly -- it is indicative of a class of infinite-weight dual blends solutions which we formalize in \Cref{thm:general2overzblends} (and give the key definitions and proof below).

The critical structure is the multiplicative inverse `$\sfrac{1}{z}$.'  
Its importance is highlighted from two perspectives: inverse-distributions and arbitrary distribution rescaling.  Notably, Quadratics and Uniforms are inverse-distributions to each other, which we see directly from $\Qud_1(x) = 1 - \sfrac{1}{x}$ on $[1,\infty)$ for which the inverse-distribution CDF is $1-\Qud_1(\sfrac{1}{x})=1-(1-\sfrac{1}{1/x})=x=\Ud_{0,1}(x)$ on $[0,1]$.  Additionally, the Quadratics blend assigns weights to all rescalings of $\Qud_1$ and the Uniform blend assigns weights to all rescalings of $\Ud_{0,1}$.  Fundamentally, \Cref{thm:general2overzblends} shows that there is a duality between distribution values and distribution scales, as can be observed in equation~\eqref{eqn:f1inputduality}.

\begin{theorem}
\label{thm:general2overzblends}
Given distribution $\Upd$, define members $\Upd_y$ of its parameterized class of all possible rescalings  $y>0$, and its inverse-distribution $\Dod$ by
\begin{equation}
    \label{eqn:f1inputduality}
    \Upd_z(x) = \Upd(\sfrac{x}{z}) = 1-\Dod(\sfrac{z}{x}) = 1 - \Dod_x(z)
\end{equation}
For $n=2$, $\Upd_z$ and $\Dod_z$ give classes that are dual blends using weights $o_z=\omega_z=\sfrac{1}{z}$, i.e., they describe a common function $g$ at every $\vals = (\vali[1],\vali[2]\leq\vali[1])$:
\begin{equation}
\label{eqn:generalblendversusinverse}
\int_0^{\infty} \frac{1}{z}\cdot \upd_z(\vali[1])\cdot\upd_z(\vali[2])~dz =g(\vals) = \int_{0}^{\infty}\frac{1}{z}\cdot \dod_z(\vali[1])\cdot\dod_z(\vali[2])~dz
\end{equation}
\end{theorem}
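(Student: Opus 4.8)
The plan is to show that each of the two integrals in equation~\eqref{eqn:generalblendversusinverse} equals one and the same integral against the base density $\upd$, namely $\int_0^\infty t\,\upd(t\vali[1])\,\upd(t\vali[2])\,dt$; since that integral is manifestly symmetric in $\vali[1],\vali[2]$ and depends only on the base distribution, this simultaneously identifies $g$ and proves the claimed duality.

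First I would record the two elementary facts that drive everything. Writing $\upd$ for the density of $\Upd$, rescaling by $z$ gives $\upd_z(x) = (1/z)\,\upd(x/z)$. And differentiating the inverse-distribution identity $\Dod(u) = 1 - \Upd(1/u)$ — which is just equation~\eqref{eqn:f1inputduality} specialized to $z=1$ — yields $\dod(u) = u^{-2}\,\upd(1/u)$, so that $\dod_z(x) = (1/z)\,\dod(x/z) = (z/x^2)\,\upd(z/x)$. This uses that $\Upd$ is atomless so the densities exist; the infinite-weight regime of the theorem, as with the untruncated Quadratics and Uniforms of \Cref{s:tensorexampleoutline}, is exactly where this is available, and one could alternatively carry the whole argument through CDFs to avoid the assumption.

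Now the two reductions. For the left-hand side of~\eqref{eqn:generalblendversusinverse}, substituting $\upd_z(\vali[k]) = (1/z)\,\upd(\vali[k]/z)$ gives $\int_0^\infty z^{-3}\,\upd(\vali[1]/z)\,\upd(\vali[2]/z)\,dz$, and the change of variables $z \mapsto 1/t$ turns this into $\int_0^\infty t\,\upd(t\vali[1])\,\upd(t\vali[2])\,dt$. For the right-hand side, substituting $\dod_z(\vali[k]) = (z/\vali[k]^2)\,\upd(z/\vali[k])$ gives $(\vali[1]\vali[2])^{-2}\int_0^\infty z\,\upd(z/\vali[1])\,\upd(z/\vali[2])\,dz$, and the change of variables $z = \vali[1]\vali[2]\,t$ turns \emph{this} into the very same $\int_0^\infty t\,\upd(t\vali[1])\,\upd(t\vali[2])\,dt$. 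Hence the two sides agree; their common value is a valid choice of the correlated function $g$ on the cone $\vali[1]\geq\vali[2]\geq 0$, and imposing symmetry across $\vali[1]=\vali[2]$ extends $g$ to all of $\valspace^2$. So $\{\Upd_z\}_{z>0}$ and $\{\Dod_z\}_{z>0}$ carrying weights $o_z=\omega_z=1/z$ are dual blends in the relaxed, possibly infinite-weight sense of \Cref{def:blend} (footnote~\ref{foot:blendsweights}). As a consistency check, for $\upd(x)=x^{-2}$ on $[1,\infty)$ this formula returns $g(\vals)=1/(2\vali[1]^2)$, matching the Quadratics-versus-Uniforms computation of equations~\eqref{eqn:qudversusudinfiniteo}--\eqref{eqn:qudversusudinfiniteomega} up to the factor $2$ coming from the weight $2/z$ used there.

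The remaining work is bookkeeping, not conceptual. Both integrands are non-negative, so each side of~\eqref{eqn:generalblendversusinverse} is a well-defined element of $[0,\infty]$ and no convergence hypothesis is needed for the identity itself — the substitutions are just monotone reparametrizations of the half-line and preserve the value. One does need to keep track of supports: if $\Upd$ lives on $[a,\infty)$ then $\Dod$ lives on $(0,1/a]$ and $\Upd_z$ on $[az,\infty)$, so for fixed $\vals$ each integrand is automatically restricted to the sub-interval of $z$ where both factors are positive, and the substitutions $z \mapsto 1/t$ and $z = \vali[1]\vali[2]t$ carry those intervals onto exactly the same $t$-interval — this is where the matching genuinely occurs and the one place I would be careful. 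Finally, since $\int_0^\infty (1/z)\,dz$ diverges, these are blends only in the relaxed sense; turning them into genuine probability distributions on a bounded support is a separate, routine truncation-and-renormalization (as already illustrated in \Cref{s:tensorexampleoutline}) and is not part of this statement. The only real obstacle is making the change-of-variables-with-supports argument airtight for a fully general $\Upd$; beyond that the proof is short.
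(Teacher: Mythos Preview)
Your proof is correct and takes essentially the same approach as the paper: both compute the densities $\upd_z(x)=\tfrac{1}{z}\upd(x/z)$ and $\dod_z(x)=\tfrac{z}{x^2}\upd(z/x)$ and then perform a change of variables. The paper does a single substitution $z=\vali[1]\vali[2]/y$ on the right-hand side to obtain the left-hand side directly; your two substitutions ($z\mapsto 1/t$ on the left, $z=\vali[1]\vali[2]\,t$ on the right) compose to exactly that map, and along the way you gain the nice byproduct of an explicit intermediate form $g(\vals)=\int_0^\infty t\,\upd(t\vali[1])\,\upd(t\vali[2])\,dt$.
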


\begin{definition}
\label{def:inversedistribution}
Given a distribution $\dist$ with domain $[a,b]$ (or domain $[a,\infty)$), i.e., $\dist(a) = 0$ and $\dist(b)=1$.  The {\em inverse-distribution} of $\dist$ is defined by the CDF function $\I \dist(x) = 1-\dist(\sfrac{1}{x})$ on domain $[\sfrac{1}{b},\sfrac{1}{a}]$ (respectively domain $(0,\sfrac{1}{a}]$).  The PDF of the inverse-distribution is denoted $\I\distp$.  (Fact: as an operation, {\em distribution inversion} is its own inverse, i.e., it respects the identity $\I(\I\dist) = \dist$.)
\end{definition}
%
%
%
\begin{fact}
\label{def:distrescaling}
Given a distribution $\dist_{z=1}$ with default scaling parameter $z=1$ and with domain $[a,b]$ (or domain $[a,\infty)$).  The distribution $\dist_1$ can be arbitrarily re-scaled for $z\in(0,\infty)$ to $\dist_z(x) = \dist_1 (\sfrac{x}{z}$) with domain $[z\cdot a,z\cdot b]$ (respectively domain $[z\cdot a,\infty)$).
\end{fact}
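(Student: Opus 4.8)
The plan is to verify directly that, for each fixed $z \in (0,\infty)$, the function $\dist_z(x) \coloneqq \dist_1(x/z)$ is a genuine cumulative distribution function supported on the claimed rescaled domain, and then to record the equivalent probabilistic statement that this is the law of $zV$ when $V \sim \dist_1$. Since $\dist_1$ is by hypothesis a CDF on $[a,b]$ (resp.\ $[a,\infty)$), it is non-decreasing, right-continuous, and satisfies $\dist_1(a)=0$ and $\dist_1(b)=1$ (resp.\ $\lim_{x\to\infty}\dist_1(x)=1$). These three properties are exactly what must be transported through the substitution $x \mapsto x/z$.

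First I would check monotonicity and right-continuity. For fixed $z>0$ the map $x \mapsto x/z$ is a strictly increasing, continuous bijection of $[za,zb]$ onto $[a,b]$ (resp.\ of $[za,\infty)$ onto $[a,\infty)$). Composing the non-decreasing, right-continuous function $\dist_1$ with this increasing continuous bijection preserves both properties, so $\dist_z$ is non-decreasing and right-continuous on its stated domain. Next I would check the boundary values: at the left endpoint, $\dist_z(za) = \dist_1(za/z) = \dist_1(a) = 0$; at the right endpoint in the bounded case, $\dist_z(zb) = \dist_1(b) = 1$; and in the unbounded case, $\lim_{x\to\infty}\dist_z(x) = \lim_{x\to\infty}\dist_1(x/z) = \lim_{y\to\infty}\dist_1(y) = 1$, using $z>0$ so that $x/z \to \infty$. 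Hence $\dist_z$ is a CDF with support $[za,zb]$ (resp.\ $[za,\infty)$), which is the asserted rescaled domain.

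Finally, to justify the term ``re-scaling,'' I would exhibit the random-variable interpretation: if $V$ has CDF $\dist_1$, then for every $z>0$ the variable $zV$ satisfies $\prob{zV \le x} = \prob{V \le x/z} = \dist_1(x/z) = \dist_z(x)$, so $\dist_z$ is precisely the distribution of $zV$; when densities exist, differentiating gives $\distp_z(x) = \tfrac{1}{z}\,\distp_1(x/z)$. There is no genuine obstacle here — the content is a one-line verification. The only point to be careful about is bookkeeping of the endpoints together with the direction of the substitution: because $z \mapsto x/z$ is order-reversing in $z$, rescaling the \emph{argument} of the distribution must not be conflated with rescaling the \emph{parameter}, which is exactly the duality between distribution values and distribution scales that \Cref{thm:general2overzblends} later exploits through equation~\eqref{eqn:f1inputduality}.
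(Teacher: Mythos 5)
Your proposal is correct; the paper states this rescaling property as an unproved fact (it is essentially the definitional convention used later in \Cref{thm:general2overzblends}), and your direct verification that $\dist_1(\sfrac{x}{z})$ is a CDF on the scaled domain, equivalently the law of $z\cdot V$ for $V\sim\dist_1$ with density $\tfrac{1}{z}\distp_1(\sfrac{x}{z})$, is exactly the intended justification and matches how the paper uses it in the proof of that theorem.
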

\noindent These concepts come together in \Cref{thm:general2overzblends}.  Note that technically, \Cref{thm:general2overzblends} is a special case of \Cref{thm:blendsgeneratorfromseparate}.  However, it proves that an infinite-weight blends solution always exists effectively from fixing symmetric weights $o_z=\omega_z=\sfrac{1}{z}\cdot dz$ and then choosing the $g_1$ and $g_2$ as inverse-distributions of each other.  In comparison, $g_1$ and $g_2$ were (relatively) free parameters in \Cref{thm:blendsgeneratorfromseparate} to be chosen first, for which weights could then be identified to complete a dual blends solution.  We give a concise proof of \Cref{thm:general2overzblends} from the key ideas of this section (inverse-distributions and rescaling):

\begin{proof}
Given distribution $\Upd$ and its inverse-distribution $\Dod$, the rescaled CDFs and PDFs are:
\begin{align}
\nonumber
   \Upd_z(x) &= \Upd(\sfrac{x}{z}) &\quad && \Dod_z(x) & = \Dod(\sfrac{x}{z}) = 1-\Upd(\sfrac{z}{x})\\
   \nonumber
   \upd_z(x) &=  \frac{1}{z}\cdot \upd(\sfrac{x}{z}) &\quad && \dod_z(x) & = \frac{z}{x^2}\cdot \upd(\sfrac{z}{x})
\end{align}
Starting from the right-hand side of equation~\eqref{eqn:generalblendversusinverse}, 
the  following sequence completes the proof:
\begin{align*}
    &\int_{0}^{\infty}\frac{1}{z}\cdot \dod_z(\vali[1])\cdot\dod(\vali[2])~dz= \int_{0}^{\infty}\left[\frac{1}{z}\cdot dz\right]\cdot \left(\frac{z}{\vali[1]^2}\cdot \upd(\sfrac{z}{\vali[1]}) \right)\cdot \left( \frac{z}{\vali[2]^2}\cdot \upd(\sfrac{z}{\vali[2]})\right)
\intertext{(here we perform calculus-change-of-variables using $z = \zeta(y) = \frac{\vali[1]\cdot\vali[2]}{y}$; 
recall that part of the substitution is $dz = \zeta'(y)\cdot dy$, and integral endpoints get mapped by $\zeta^{-1}(\cdot)$)}
= & \int_{\infty}^0\left[\frac{1}{\frac{\vali[1]\cdot\vali[2]}{y}}\cdot \left(\frac{-\vali[1]\cdot\vali[2]}{y^2}\cdot dy  \right) \right]\cdot\left( \frac{\frac{\vali[1]\cdot\vali[2]}{y}}{\vali[1]^2}\cdot \upd(\sfrac{\vali[2]}{y})\right)\cdot\left( \frac{\frac{\vali[1]\cdot\vali[2]}{y}}{\vali[2]^2}\cdot \upd(\sfrac{\vali[1]}{y})\right)\\
= & \int_0^{\infty}\left[\frac{1}{y}\cdot dy\right]\cdot \left(\frac{1}{y}\cdot\upd(\sfrac{\vali[2]}{y}) \right)\cdot \left(\frac{1}{y}\cdot\upd(\sfrac{\vali[1]}{y}) \right) = \int_0^{\infty} \frac{1}{y}\cdot \upd_y(\vali[2])\cdot\upd_y(\vali[1])~dy \qedhere
\end{align*}
\end{proof}
\noindent An interesting property of (infinite-weight) dual blends from \Cref{thm:general2overzblends} that emerges from the proof is: {\em we don't have to solve for a closed-form expression} for the function $g$ in order to prove equality of its dual descriptions.  As a consequence, the process of obtaining lower bounds from dual blends may reduce to computation of expectations over optimal performances $\OPT_{\dist}(\dist)$.

%
%


\section{Blends Design is Information-Design-Design}
\label{s:info_design}
\label{s:blendsisidd}

This section connects {\em theoretical optimization} of the Blends Technique to the economics topic of information design, specifically as a procedure of {\em information-design-design} (IDD).  For a given prior independent problem (parameterized by class of distributions $\scF$), the main idea is to separate into modular problems the search for the optimal dual blend (which yields the largest lower bound of any dual blend).  (1) An ``outer" problem identifies an optimal correlated distribution $g^*\in\mathcal{G} = \{\blend^n~|~\blend\in\Delta(\scF)\}$.  The outer problem searches over: (2) for any exogenous $g\in\mathcal{G}$, an ``inner" problem identifies two blends that induce $g$ -- respectively from $\scF$ and $\scF^{\text{all}}$ -- to maximally separate the ratio of optimal performances given each blend (cf., the Blends Technique).

Effectively, the distributions that compose each blend acts as {\em signals} to which each corresponding optimal algorithm $\OPT_{\dist}$ may respond.  If signals can be designed as outputs of a mapping from underlying inputs as {\em fixed states}, then such signal-response games are called {\em information design}.  (We can design signals in this way for our problems, see \Cref{lem:iolemma} based on Bayes Law.)  We exhibit the separation of problems first and defer the presentation of information design.

Describing the sequence of inequalities below, the first line starts with a prior independent problem and its right-hand side optimizes over lower bounds from the Blends Technique.  This step removes the algorithm design problem of the $\min$-player and gives a new problem (which is constrained with respect to the original, possibly with loss).

Next where an adversary optimizes both steps of a $\sup-\sup$ program, we rearrange these two successive choices to: (a) optimize the correlated distribution $g$ which represents both (flattened) sides of the dual blends simultaneously; and then (b) optimize over sets of blends which induce $g$ to maximize the numerator (using $\scF$) and minimize the denominator (using $\scF^{\text{all}}$).\footnote{\label{foot:gtomultipleblendssolutions} This optimization may be non-trivial -- for a single exogenous $g$, there are generally multiple candidate blends which induce $g$.  
Intuitively, this is true because the set $\{\blend~|~\blend^n=g\}$ is closed under convex combination.  As illustration, first consider two distinct dual blends examples $g^a=\blendi[1]^n=\blendi[2]^n$ and $g^b=\blendi[3]^n=\blendi[4]^n$ as may be generated per the large class of \Cref{thm:general2overzblends}.  Then $g^{ab} = \sfrac{g^a}{2} + \sfrac{g^b}{2}$ has four blends solutions: $\sfrac{\blendi^n}{2}+\sfrac{\blendi[j]^n}{2}$ for all $i\in\{1,2\},~j\in\{3,4\}$.  (We count here the four combinations of ``corner" descriptions of $g^{ab}$.  We ignore that, e.g., the $\sfrac{\blendi^n}{2}$ term may mix over $\sfrac{\blendi[1]^n}{2}$ and $\sfrac{\blendi[2]^n}{2}$ -- an optimization never needs this mix by linearity of expectation.)  To generalize, the convex set $\{\blend~|~\blend^n=g\}$ is generally a Hilbert space, e.g., if $g$ is a continuous mixture over a continuum of dual blends.}   The final line is a reorganization using independence of numerator and denominator which now each comprise a {\em sub-problem of the Blends Technique}.
\begin{align}
\nonumber
    \piratio^{\scF} = \min_{\algo\in\algspace} \max_{\dist\in \scF} \frac{\OPT_{\dist}(\dist)}{\algop(\dist)}
    &\geq \sup_{\blendi[2]\in\Delta(\scF)}\left[\sup_{\blendi[1]\in\{\blend~|~\blend\in\Delta(\scF^{\text{all}})~\text{and}~\blend^n=g=\blendi[2]^n\}}\left[\frac
    { \expecta_{\dist\sim\blendi[2]}\left[\OPT_{\dist}(\dist) \right] }
    { \expecta_{\dist\sim\blendi[1]}\left[\OPT_{\dist}(\dist) \right] }
    \right]\right]\\
    \nonumber
    &= \sup_{g\in\mathcal{G}}~\left[\sup_{\substack{\blendi[2]\in\{\blend~|~\blend\in\Delta(\scF)~\text{and}~\blend^n=g\}\\\blendi[1]\in\{\blend~|~\blend\in\Delta(\scF^{\text{all}})~\text{and}~\blend^n=g\}}}\left[\frac
    { \expecta_{\dist\sim\blendi[2]}\left[\OPT_{\dist}(\dist) \right] }
    { \expecta_{\dist\sim\blendi[1]}\left[\OPT_{\dist}(\dist) \right] }
    \right]\right]\\
\label{eqn:blendsisidd}
    &= \sup_{g\in\mathcal{G}}~\left[\frac
    { \sup_{\blendi[2]\in\{\blend~|~\blend\in\Delta(\scF)~\text{and}~\blend^n=g\}}\left( \expecta_{\dist\sim\blendi[2]}\left[\OPT_{\dist}(\dist) \right]\right) }
    { \inf_{\blendi[1]\in\{\blend~|~\blend\in\Delta(\scF^{\text{all}})~\text{and}~\blend^n=g\}}\left(\expecta_{\dist\sim\blendi[1]}\left[\OPT_{\dist}(\dist) \right] \right)}
    \right]
\end{align}

\begin{definition}
\label{def:blendsisidd}
The optimization problem of equation~\eqref{eqn:blendsisidd} is {\em Information-Design-Design}.  Within the brackets, we refer to the optimizations respectively as the {\em Numerator} and {\em Denominator Games}.
\end{definition}

\noindent Thus, when $g$ is fixed exogenously by an outer maximization, there is a reduction to diametrically-opposite questions of constrained information design (\Cref{prop:blendsisidd} next).  Constraining the design is the key step -- informally information design is a {\em signalling game} and we require that signals be distributions $\dist\in\scF$ (which each induce a product distribution $\dist^n$).  Thus, (a) the marginal distribution over signals is a blend, and (b) an optimal algorithm can be run in response to a given signal $\hat{\dist}$ (cf., the use of distributions-as-signals in $\text{opt}_{n,i}= \mathbf{E}_{\dist\sim\blendi}\left[\OPT_{\dist}(\dist) \right]$).

\begin{prop}
\label{prop:blendsisidd}
Consider the prior independent design problem (\Cref{def:pidesign}) given a class of distributions $\scF$, a class of algorithms $\algspace$, and $n$ inputs.  Optimization of the Blends Technique approach to prior independent lower bounds is described by:
\begin{equation*}
    \piratio^{\scF} \geq \sup_{g\in\mathcal{G}}~\left[\frac
    { \sup_{\blendi[2]\in\{\blend~|~\blend\in\Delta(\scF)~\text{and}~\blend^n=g\}}\left( \expecta_{\dist\sim\blendi[2]}\left[\OPT_{\dist}(\dist) \right]\right) }
    { \inf_{\blendi[1]\in\{\blend~|~\blend\in\Delta(\scF^{\text{all}})~\text{and}~\blend^n=g\}}\left(\expecta_{\dist\sim\blendi[1]}\left[\OPT_{\dist}(\dist) \right] \right)}
    \right]
\end{equation*}
\noindent Further, its Numerator Game and its Denominator Game can be independently instantiated as problems of constrained information design.
\end{prop}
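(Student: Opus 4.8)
The plan is to make rigorous the informal derivation that produced equation~\eqref{eqn:blendsisidd} and then to identify the two resulting optimizations with constrained information‑design problems. First I would obtain the opening inequality directly from \Cref{thm:blendsbound}: every pair of dual blends $(\blendi[1],\blendi[2])$ with $\blendi[1]\in\Delta(\scF^{\text{all}})$, $\blendi[2]\in\Delta(\scF)$ and $\blendi[1]^n=\blendi[2]^n$ certifies $\piratio^{\scF}\ge \text{opt}_{n,2}/\text{opt}_{n,1}$, so the supremum of the right‑hand side over all such pairs is still a valid lower bound; this is the first line of~\eqref{eqn:blendsisidd}. Next I would reindex this double supremum by the common correlated density $g:=\blendi[1]^n=\blendi[2]^n$. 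Two facts make this an exact identity rather than a further relaxation: (i) any dual blend has $g\in\mathcal{G}$, since $g=\blendi[2]^n$ with $\blendi[2]\in\Delta(\scF)$; and (ii) for every $g\in\mathcal{G}$ both inner feasibility sets $\{\blend\in\Delta(\scF):\blend^n=g\}$ and $\{\blend\in\Delta(\scF^{\text{all}}):\blend^n=g\}$ are nonempty (the first by the definition of $\mathcal{G}$, the second because $\scF\subseteq\scF^{\text{all}}$). This gives the second line of~\eqref{eqn:blendsisidd}.

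For the third line I would use that, with $g$ held fixed, the objective $\expecta_{\dist\sim\blendi[2]}[\OPT_{\dist}(\dist)]\big/\expecta_{\dist\sim\blendi[1]}[\OPT_{\dist}(\dist)]$ has a numerator depending only on $\blendi[2]$ and a denominator depending only on $\blendi[1]$, and the feasibility constraints on $\blendi[1]$ and $\blendi[2]$ are decoupled (each is simply ``$\blend^n=g$''). Since $\OPT_{\dist}(\dist)\ge 0$ and the denominators appearing are positive in the settings under consideration, $\sup_{\blendi[2],\blendi[1]}(N/D)=(\sup_{\blendi[2]}N)\big/(\inf_{\blendi[1]}D)$, which is exactly the display in the proposition and the Numerator and Denominator Games of \Cref{def:blendsisidd}.

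It remains to recognize each game as a constrained information‑design problem. Here I would fix $g\in\mathcal{G}$ and regard it as a prior over ``states'' $\vals\in\valspace^n$, and apply \Cref{lem:iolemma} (the Bayes‑law correspondence): blends $\blend$ with $\blend^n=g$ are in bijection with signalling schemes on states drawn from $g$ whose signal realizations are i.i.d.\ distributions $\hat\dist$ and whose posterior over states given signal $\hat\dist$ is exactly $\hat\dist^n$, the marginal over signals being $\blend$. Under this correspondence $\expecta_{\dist\sim\blend}[\OPT_{\dist}(\dist)]$ is precisely the expected payoff of a decision‑maker who, upon observing signal $\hat\dist$, runs the Bayes‑optimal algorithm $\OPT_{\hat\dist}$ and collects $\OPT_{\hat\dist}(\hat\dist)=\expecta_{\vals\sim\hat\dist^n}[\OPT_{\hat\dist}(\vals)]$, i.e.\ its performance under the induced posterior. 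Hence the Numerator Game is the information‑design problem of choosing an information structure, with signal realizations constrained to lie in $\scF$, that maximizes this payoff; the Denominator Game is the same problem with signal realizations constrained to $\scF^{\text{all}}$ and the designer minimizing. Both are \emph{constrained} information‑design problems because the admissible posteriors are limited to i.i.d.\ product distributions from the relevant class, unlike unconstrained information design where every Bayes‑plausible distribution of posteriors is feasible.

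I expect the main obstacle to be the information‑design reformulation, which rests entirely on \Cref{lem:iolemma}: establishing that a blend is exactly a Bayes‑plausible distribution of i.i.d.\ posteriors and that it can be realized as a measurable map from states to signals is the real content, whereas the reindexing by $g$ and the passage from $\sup(N/D)$ to $(\sup N)/(\inf D)$ are bookkeeping once the nonemptiness of the inner feasibility sets and positivity of the denominator are checked.
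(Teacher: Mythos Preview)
Your proposal is correct and follows essentially the same route as the paper: you derive the displayed inequality by taking the supremum over dual blends in \Cref{thm:blendsbound}, reindexing by the common correlated density $g$, and then separating the decoupled numerator and denominator, and you cast the Numerator and Denominator Games as constrained information design via the Bayes-law correspondence of \Cref{lem:iolemma}. Your identification of \Cref{lem:iolemma} as the substantive content (with the rest being bookkeeping) matches the paper's own emphasis.
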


\noindent Most of this \Cref{s:blendsisidd} is deferred to the appendix. \Cref{s:intro_info_design} gives a formal introduction to information design.  \Cref{s:iddreduction} describes the respective reductions of the Numerator and Denominator Games to information design (thereby providing the proof for \Cref{prop:blendsisidd}).

\Cref{s:blackwell} evaluates dual blends from the perspective of {\em Blackwell (partial) ordering}, which compares two designs of signalling strategies equivalently in terms of both a strong measure of their information content, and a strong measure of their usefulness for arbitrary optimization objectives.  In our case, signalling strategies correspond to blends, and the IDD Numerator Game searches for the {\em best} signals using $\scF$ while its Denominator Game searches for the {\em worst} signals using $\scF^{\text{all}}$.  We include here an observation regarding our example of Quadratics-versus-Uniforms dual blends:

\begin{corollary}
\label{thm:blendsnotblackwell}
Finite-weight Quadratics-versus-Uniforms dual blends 
are an example for which there is no relationship according to Blackwell ordering.
\end{corollary}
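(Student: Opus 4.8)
The plan is to derive Blackwell-incomparability directly from the two mechanism-design lower bounds already established, \Cref{thm:finitequadsversusunifsrevpiauction} and \Cref{thm:finitequadsversusunifsressurppiauction}, exploiting exactly the dichotomy highlighted above: the Uniforms side and the Quadratics side of the finite-weight dual blend switch roles (adversary's choice versus gap-inducing upper bound) between the revenue objective and the residual surplus objective. Write $\blendi[Q]$ for the finite-weight Quadratics blend and $\blendi[U]$ for the finite-weight Uniforms blend, and fix any $\maxval$ for which both theorems exhibit a strict gap --- for instance any $\maxval\geq\RSMINH$, since $\RSMINH>2$. First I would recall, from the Blackwell development of \Cref{s:blackwell}, the operational meaning of the ordering here: because $\blendi[Q]^2 = g = \blendi[U]^2$ with $g$ a genuine probability measure on the input space $[1,\maxval]^2$ (this is exactly where finiteness of the weights matters --- the infinite-weight version is not an experiment), each blend is a Blackwell experiment whose underlying state is the input profile, whose common prior over states is $g$, and whose signal is the drawn distribution $\hat{\dist}$ inducing posterior $\hat{\dist}^2$ (i.e.\ $2$ i.i.d.\ draws from $\hat{\dist}$) over states; and, for whichever algorithm-design objective is under discussion, the value of this experiment --- the best expected objective value attainable by a decision rule mapping signals to mechanisms --- equals $\text{opt}_{2}(\blend):=\expecta_{\hat{\dist}\sim\blend}[\OPT_{\hat{\dist}}(\hat{\dist})]$, the quantity $\text{opt}_{2,i}$ of \Cref{def:blend}, since the optimal response to signal $\hat{\dist}$ is the Bayesian-optimal mechanism $\OPT_{\hat{\dist}}$ for the i.i.d.\ posterior $\hat{\dist}^2$.

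Given this dictionary the deduction is short. Blackwell's theorem states (in the direction we use) that if experiment $\blend$ Blackwell-dominates experiment $\blend'$, then the value of $\blend$ is at least the value of $\blend'$ for \emph{every} decision problem. By \Cref{thm:finitequadsversusunifsrevpiauction}, for the revenue objective the ratio $\text{opt}_{2,2}/\text{opt}_{2,1}$ is strictly above $1$, with $\blendi[2]$ the Uniforms blend and $\blendi[1]$ the Quadratics blend --- i.e.\ $\text{opt}_{2}(\blendi[U]) > \text{opt}_{2}(\blendi[Q])$ for revenue --- so $\blendi[Q]$ cannot Blackwell-dominate $\blendi[U]$. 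By \Cref{thm:finitequadsversusunifsressurppiauction}, for the residual surplus objective the ratio $\text{opt}_{2,2}/\text{opt}_{2,1}$ is again strictly above $1$, but now $\blendi[2]$ is the Quadratics blend and $\blendi[1]$ the Uniforms blend --- i.e.\ $\text{opt}_{2}(\blendi[Q]) > \text{opt}_{2}(\blendi[U])$ for residual surplus --- so $\blendi[U]$ cannot Blackwell-dominate $\blendi[Q]$. Hence the two experiments dominate each other in neither direction and stand in no Blackwell relationship, which is the claim. It is precisely the \emph{reversal} of the two value orderings that forces incomparability: a single objective with a strict value gap would rule out only one of the two directions.

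The main obstacle is not this deduction but setting up the Blackwell comparison so it can be invoked cleanly: (i) verifying that the two finite-weight blends induce literally the same prior $g$ over input profiles --- this is the dual-blend identity checked for the finite-weight construction in \Cref{a:finiteblendsconfirm}, and it is indispensable because Blackwell ordering is only defined relative to a fixed state space and prior; and (ii) checking that ``maximize revenue'' and ``maximize residual surplus'' are admissible decision problems whose experiment values are exactly the $\text{opt}_{2,i}$ appearing in the two theorems --- which reduces to the facts that for an i.i.d.\ posterior $\hat{\dist}^2$ the optimal truthful mechanism is $\OPT_{\hat{\dist}}$ (Myerson for revenue; the analogous characterization for residual surplus), that payoffs are bounded since values lie in $[1,\maxval]$, and that the infinite action set of truthful mechanisms is harmless for the ``more informative $\Rightarrow$ higher value'' direction of Blackwell's theorem, which is all that is used. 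With those two points in hand the corollary is immediate from the previously proved bounds.
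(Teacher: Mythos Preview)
Your proposal is correct and takes essentially the same approach as the paper: invoke the contrapositive of Blackwell's theorem (\Cref{thm:blackwelleqv}) using the strict gaps from \Cref{thm:finitequadsversusunifsrevpiauction} and \Cref{thm:finitequadsversusunifsressurppiauction}, whose reversed preference orderings over the two blends (Uniforms preferred for revenue, Quadratics preferred for residual surplus) rule out domination in either direction. Your writeup is more explicit than the paper's about the setup of the experiments and the admissibility of the decision problems, but the argument is identical.
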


\noindent If two distinct optimizations prefer expectation over optimal performances from distinct sides of a dual blend, then Blackwell ordering is precluded.  \Cref{thm:blendsnotblackwell} is a consequence of our results in \Cref{s:examplemdresults} whereby \Cref{thm:finitequadsversusunifsrevpiauction} (for revenue) used an adversarial distribution over the Uniforms side of the dual blend, versus, \Cref{thm:finitequadsversusunifsressurppiauction2} (for residual surplus) used an adversarial distribution over the Quadratics side.

\bibliographystyle{apalike}
\bibliography{bib}

\begin{thebibliography}{}

\bibitem[Allouah and Besbes, 2018]{AB-18}
Allouah, A. and Besbes, O. (2018).
\newblock Prior-independent optimal auctions.
\newblock In {\em Proceedings of the 2018 ACM Conference on Economics and
  Computation}, pages 503--503. ACM.

\bibitem[Auer et~al., 2002]{ACF-02}
Auer, P., Cesa-Bianchi, N., and Fischer, P. (2002).
\newblock Finite-time analysis of the multiarmed bandit problem.
\newblock {\em Machine learning}, 47(2):235--256.

\bibitem[Bergemann et~al., 2015]{BBM-15}
Bergemann, D., Brooks, B., and Morris, S. (2015).
\newblock The limits of price discrimination.
\newblock {\em American Economic Review}, 105(3):921--57.

\bibitem[Bergemann and Morris, 2019]{BM-19}
Bergemann, D. and Morris, S. (2019).
\newblock Information design: A unified perspective.
\newblock {\em Journal of Economic Literature}, 57(1):44--95.

\bibitem[Blackwell, 1953]{bla-53}
Blackwell, D. (1953).
\newblock Equivalent comparisons of experiments.
\newblock {\em The annals of mathematical statistics}, pages 265--272.

\bibitem[Borodin and El-Yaniv, 1998]{BE-98}
Borodin, A. and El-Yaniv, R. (1998).
\newblock {\em Online Computation and Competitive Analysis}.
\newblock Cambridge University Press, Cambridge, UK.

\bibitem[Dhangwatnotai et~al., 2015]{DRY-15}
Dhangwatnotai, P., Roughgarden, T., and Yan, Q. (2015).
\newblock Revenue maximization with a single sample.
\newblock {\em Games and Economic Behavior}, 91:318--333.

\bibitem[Dughmi et~al., 2019]{DNPW-19}
Dughmi, S., Niazadeh, R., Psomas, A., and Weinberg, S.~M. (2019).
\newblock Persuasion and incentives through the lens of duality.
\newblock In {\em International Conference on Web and Internet Economics},
  pages 142--155. Springer.

\bibitem[Fu et~al., 2015]{FILS-15}
Fu, H., Immorlica, N., Lucier, B., and Strack, P. (2015).
\newblock Randomization beats second price as a prior-independent auction.
\newblock In {\em Proceedings of the Sixteenth ACM Conference on Economics and
  Computation}, EC ’15, page 323, New York, NY, USA. Association for
  Computing Machinery.

\bibitem[Glazer and Rubinstein, 2004]{GR-04}
Glazer, J. and Rubinstein, A. (2004).
\newblock On optimal rules of persuasion.
\newblock {\em Econometrica}, 72(6):1715--1736.

\bibitem[Hartline, 2017]{hart-20}
Hartline, J. (2011-2017).
\newblock Mechanism design and approximation.
\newblock \textit{Working manuscript}. \url{jasonhartline.com/MDnA/}.

\bibitem[Hartline et~al., 2020]{HJL-20}
Hartline, J., Johnsen, A., and Li, Y. (2020).
\newblock Benchmark design and prior-independent optimization.
\newblock In {\em 61st Annual Symposium on Foundations of Computer Science},
  IEEE FOCS 20. Institute of Electrical and Electronics Engineers.

\bibitem[Hartline and Roughgarden, 2008]{HR-08}
Hartline, J. and Roughgarden, T. (2008).
\newblock Optimal mechanism design and money burning.
\newblock In {\em Proceedings of the 40th {ACM} Symposium on Theory of
  Computing}, pages 75--84".

\bibitem[Hartline and Roughgarden, 2014]{HR-14}
Hartline, J. and Roughgarden, T. (2014).
\newblock Optimal platform design.
\newblock {\em CoRR}, \texttt{arxiv.org/abs/1412.8518}.

\bibitem[Kamenica and Gentzkow, 2011]{KG-11}
Kamenica, E. and Gentzkow, M. (2011).
\newblock Bayesian persuasion.
\newblock {\em The American Economic Review}, 101(6):2590--2615.

\bibitem[Myerson, 1981]{mye-81}
Myerson, R.~B. (1981).
\newblock Optimal auction design.
\newblock {\em Mathematics of Operations Research}, 6(1):58--73.

\bibitem[Rayo and Segal, 2009]{RS-09}
Rayo, L. and Segal, I. (2009).
\newblock Optimal information disclosure.
\newblock {\em Journal of Political Economy}, 118:949--949.

\bibitem[Ronen, 2001]{ron-01}
Ronen, A. (2001).
\newblock On approximating optimal auctions.
\newblock In {\em Proceedings of the 3rd ACM Conference on Electronic
  Commerce}, EC ’01, page 11–17, New York, NY, USA. Association for
  Computing Machinery.

\bibitem[Roughgarden, 2019]{rou-19}
Roughgarden, T. (2019).
\newblock Beyond worst-case analysis.
\newblock {\em Communications of the ACM}, 62(3):88--96.

\bibitem[Vohra, 2011]{voh-11}
Vohra, R.~V. (2011).
\newblock {\em Mechanism design: a linear programming approach}, volume~47.
\newblock Cambridge University Press.

\bibitem[{Yao}, 1977]{yao-77}
{Yao}, A.~C. (1977).
\newblock Probabilistic computations: Toward a unified measure of complexity.
\newblock In {\em 18th Annual Symposium on Foundations of Computer Science
  (sfcs 1977)}, pages 222--227.

\end{thebibliography}

\begin{appendix}

\section{Supporting Material for \Cref{s:setup} and \Cref{s:tensorexampleoutline}}
\label{a:mainpostmd}
\label{a:setup}

We start this \Cref{a:setup} with an outline of all appendix sections.  

\begin{enumerate}
    \item[\textbf{A.}] Supporting material for the Blends Technique in \Cref{s:setup} and our main blends example of Quadratics-versus-Uniforms in \Cref{s:tensorexampleoutline}; broadly, the naming scheme and notation for distributions is explained in \Cref{a:distnames}.
    \item[\textbf{B.}] (page~\pageref{page:mechsetting}) Introduction and preliminaries for Mechanism Design, and proofs of our mechanism design results in \Cref{s:examplemdresults}; applications to mechanism design play a more prominent role through the appendix than the main body of the paper.
    \item[\textbf{C.}] (page~\pageref{page:genblendsapp}) Proof, discussion, and corollaries of \Cref{thm:blendsgeneratorfromseparate} in \Cref{s:mainbodyblendsfromseparate} which introduced infinite-weight blends from order-statistic separability.
    \item[\textbf{D.}]  (page~\pageref{page:blendsisiddapp}) Deferred presentation of information-design-design and assessment of Blackwell ordering from \Cref{s:blendsisidd}; includes introductions to information design 
    and Blackwell ordering.
\end{enumerate}

\subsection{Proof of \Cref{thm:blendsbound} in \Cref{s:setup}}
\label{a:blendsbound}

For use in this section, recall our notation $\algo(\dist) = \expect_{\vals \sim  \dist}[{\algo(\vals)}]$ for the expected performance of algorithm $\algo$ on $n$ i.i.d.\ draws from a distribution $\dist$.

First we state and prove \Cref{lem:mixedbenchmark} which shows that for any fixed blend $\bar{\blend}$ (as implicit choice of the adversary), we can obtain a lower bound on prior independent approximation.  (This lower bound is used as an interim step within the proof of \Cref{thm:blendsbound}.)

\Cref{lem:mixedbenchmark} states that we can replace the adversary's maximization problem within prior independent design (for reference see equation~\eqref{eqn:mixedbenchmarklemma}).  In its place, the adversary effectively sets a benchmark as the expectation of optimal performance over distributions drawn from $\bar{\blend}$ (thus, the benchmark is $\mathbf{E}_{\dist\sim\bar{\blend}}\left[\OPT_{\dist}(\dist) \right]$).  Symmetrically, the algorithm's performance is its expected performance over distributions drawn from $\bar{\blend}$ (thus, its performance is $\mathbf{E}_{\dist\sim\bar{\blend}}\left[\algop(\vals)\right]$).

An algorithm's approximation of the benchmark is measured as the ratio of this benchmark to its performance, i.e., as {\em ratio-of-expectations} (ROE).  The lower bound results from the minimum ratio achieved by any algorithm $\algo\in\algspace$.  Practically, this lower bound is only an abstraction because we don't say anything about how to optimize the algorithm $\algo$.


\begin{lemma}[The Ratio-of-Expectations Benchmark Lemma]
\label{lem:mixedbenchmark}
Consider a prior independent setting with input space $\valspace^n$, class of algorithms $\algspace$, and class of distributions $\scF$.  Let $\bar{\blend}\in\Delta(\scF)$ be any fixed blend, i.e., a fixed distribution over the distributions of $\scF$.  Then

\begin{equation}
\label{eqn:mixedbenchmarklemma}
\piratio^{\scF}=\min_{\algo\in\algspace} \max_{\dist\in \scF} \frac{\OPT_{\dist}(\dist)}{\algop(\dist)} \geq \min_{\algo\in\algspace}\left[ \frac{\mathbf{E}_{\dist\sim\bar{\blend}}\left[\OPT_{\dist}(\dist) \right]}{\mathbf{E}_{\dist\sim\bar{\blend}}\left[\algop(\dist)\right]} \right],~~\text{{\em for fixed}}~\bar{\blend}
\end{equation}
\end{lemma}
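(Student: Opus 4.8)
The plan is to prove the inequality one algorithm at a time: fix an arbitrary $\algo \in \algspace$, show that its worst-case ratio over all of $\scF$ already dominates its ratio-of-expectations against $\bar{\blend}$, and then take the infimum over $\algo$. Concretely, for the fixed $\algo$ set $c := \sup_{\dist\in\scF}\OPT_{\dist}(\dist)/\algo(\dist)$, which is exactly the inner maximization on the left-hand side of \eqref{eqn:mixedbenchmarklemma} evaluated at $\algo$. If $c=+\infty$ the claimed bound is trivial, so assume $c<\infty$; then by definition of the supremum, $\OPT_{\dist}(\dist)\le c\cdot\algo(\dist)$ for every single $\dist\in\scF$. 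This pointwise comparison is the only place the structure of the prior independent problem enters.

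Next I would integrate that pointwise inequality against $\bar{\blend}$. Since $\bar{\blend}\in\Delta(\scF)$ is supported on $\scF$, taking expectations over $\dist\sim\bar{\blend}$ preserves the inequality and gives $\mathbf{E}_{\dist\sim\bar{\blend}}[\OPT_{\dist}(\dist)]\le c\cdot\mathbf{E}_{\dist\sim\bar{\blend}}[\algo(\dist)]$. Using non-negativity of performances (and $\OPT_{\dist}$ dominating $\algo$ pointwise), the denominator $\mathbf{E}_{\dist\sim\bar{\blend}}[\algo(\dist)]$ is either strictly positive — in which case dividing yields $\mathbf{E}_{\dist\sim\bar{\blend}}[\OPT_{\dist}(\dist)]/\mathbf{E}_{\dist\sim\bar{\blend}}[\algo(\dist)]\le c = \max_{\dist\in\scF}\OPT_{\dist}(\dist)/\algo(\dist)$ — or it is zero, a degenerate case in which the left-hand side is already infinite (so the ratio-of-expectations bound is vacuous) and nothing needs to be proved. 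Finally, since this holds for every $\algo\in\algspace$, I take the infimum over $\algo$ on both sides, using the elementary fact that $f(\algo)\ge h(\algo)$ for all $\algo$ implies $\inf_\algo f(\algo)\ge\inf_\algo h(\algo)$; this is precisely \eqref{eqn:mixedbenchmarklemma}, after recalling that $\piratio^{\scF}=\min_{\algo}\max_{\dist}\OPT_{\dist}(\dist)/\algo(\dist)$ by \Cref{def:pidesign}.

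I do not expect a real obstacle: the lemma is essentially the mediant (``freshman sum'') inequality — a maximum of ratios $a_{\dist}/b_{\dist}$ with positive $b_{\dist}$ dominates the ratio $(\sum w_{\dist}a_{\dist})/(\sum w_{\dist}b_{\dist})$ of $\bar{\blend}$-weighted averages — specialized to a fixed $\algo$ and then followed by the outer $\min_\algo$. The only care needed is bookkeeping: keeping the quantifier order straight so that the minimization over algorithms passes through correctly, and disposing of the degenerate case of an algorithm with zero expected performance under $\bar{\blend}$. Neither of these requires genuine work, so I would keep the write-up to a few lines.
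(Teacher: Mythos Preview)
Your proof is correct and takes a more elementary route than the paper's. The paper argues in several steps: first it shows (via an auxiliary ``ROE dominance'' lemma, which is essentially the mediant inequality you invoke in the other direction) that relaxing the inner $\max_{\dist}$ to a $\max_{\blend\in\Delta(\scF)}$ with a ratio-of-expectations objective holds with \emph{equality}; then it applies Yao's Minimax Principle to swap the order of the $\min$ and $\max$; then it collapses $\Delta(\Delta(\scF))$ back to $\Delta(\scF)$ using closure under convex combination; and finally it fixes the outer argument at $\bar{\blend}$. Your per-algorithm argument bypasses the minimax step entirely: you directly show that for each fixed $\algo$, the inner $\max_{\dist}$ already dominates the ratio-of-expectations against $\bar{\blend}$, and then take $\min_{\algo}$ on both sides. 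This is shorter and needs nothing beyond the mediant inequality and monotonicity of the infimum. The paper's longer route has the virtue of embedding the lemma in the minimax framework that motivates the Blends Technique as a whole, but for the lemma in isolation your argument is the cleaner one.
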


\begin{proof}
\noindent We start with the prior independent design problem.  Explanations for each step of this sequence are given following.

\begin{align*}
\min_{\algo\in\algspace} \max_{\dist\in \scF} \frac{\OPT_{\dist}(\dist)}{\algop(\dist)} 
&=\min_{\algo\in\algspace} \max_{\blend\in\Delta(\scF)} \frac{\mathbf{E}_{\dist\sim\blend}\left[\OPT_{\dist}(\dist) \right]}{\mathbf{E}_{\dist\sim\blend}\left[\algop(\dist)\right]}\\
&\geq \max_{\abstrd\in\Delta(\Delta(\scF))}\min_{\algo\in\algspace}\mathbf{E}_{\blend\sim \abstrd}\left[ \frac{\mathbf{E}_{\dist\sim\blend}\left[\OPT_{\dist}(\dist) \right]}{\mathbf{E}_{\dist\sim\blend}\left[\algop(\dist)\right]} \right]\\
&= \max_{\blend\in\Delta(\scF)}\min_{\algo\in\algspace}\left[ \frac{\mathbf{E}_{\dist\sim\blend}\left[\OPT_{\dist}(\dist) \right]}{\mathbf{E}_{\dist\sim\blend}\left[\algop(\dist)\right]} \right]\\
&= \min_{\algo\in\algspace}\left[ \frac{\mathbf{E}_{\dist\sim\bar{\blend}}\left[\OPT_{\dist}(\dist) \right]}{\mathbf{E}_{\dist\sim\bar{\blend}}\left[\algop(\dist)\right]} \right],~~\text{for fixed}~\bar{\blend}
\end{align*}

\begin{itemize}
    \item The first line above both relaxes the adersary's action space to allow a mixture of distributions -- i.e., a blend $\blend\in\scF$ -- and changes the benchmark (numerator) to be set by the expected optimal performance over distributions from the blend.
    
    It holds with equality because by \Cref{lemma:roedominance} below, the value of the inner maximization program before-and-after this step is the same for every $\algo$ -- the adversary gains no extra advantage because the ratio on the right-hand side must always be dominated anyway by the ratio achieved by some distribution $\dist_+$ in the support of any chosen $\blend$.  (To explain in further detail, the adversary could choose $\dist_+$ in the left-hand program and can still choose a point mass on $\dist_+$ in the right-hand program.)
    \item The second line applies Yao's Minimax Principle (\Cref{thm:yaominmax}).  Note, the adversary's choice of actions $\abstrd\in\Delta(\Delta(\scF))$ represents the exact transformation using Minimax: the adversary now acts first and plays a distribution over actions in its support from the initial $\min-\max$ side.  Then:
    \item The third line holds because the set of all blends over $\scF$ -- namely, $\Delta(\scF)$ -- is closed under convex combination.
    \item The last line holds because fixing an argument of the outer program can only impair its objective (in this case by fixing $\blend=\bar{\blend}$ for any $\bar{\delta}\in\Delta(\scF)$ per the lemma statement).\qedhere
\end{itemize}
\end{proof}

\noindent \Cref{lemma:roedominance} supports the previous proof.  It states that for a ROE objective like we use above, a point mass on an element of the mixture must achieve at least the value of the overall ratio.  This statement is similar to a standard statement from the probabilistic method -- that there exists a point in the support of a distribution that is at least the expectation.  \Cref{lemma:roedominance} is proved using this standard statement.

\begin{lemma}
\label{lemma:roedominance}
Consider a domain $\Omega$ and two positive functions $a:\Omega\rightarrow\reals_+$ and $b:\Omega\rightarrow\reals_+$.  For every distribution $\abstrd$ over the elements of $\Omega$, there exists $\omega_+$ in the support of $\abstrd$ for which
\begin{equation}
    \label{eqn:roedominance}
    \frac{a(\omega_+)}{b(\omega_+)} \geq \frac{\expecta_{\omega\sim\abstrd}\left[ a(\omega)\right]}{\expecta_{\omega\sim\abstrd}\left[ b(\omega)\right]}
\end{equation}
\end{lemma}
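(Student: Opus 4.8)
The plan is to argue by contradiction using the standard probabilistic-method fact that a random variable cannot be strictly less than its own expectation with probability one. Suppose, for the sake of contradiction, that for \emph{every} $\omega$ in the support of $\abstrd$ the strict reverse inequality $\frac{a(\omega)}{b(\omega)} < \frac{\expecta_{\omega'\sim\abstrd}[a(\omega')]}{\expecta_{\omega'\sim\abstrd}[b(\omega')]}$ holds. Write $r = \frac{\expecta[a]}{\expecta[b]}$ for this target ratio, which is a finite positive constant since both $a$ and $b$ map into $\reals_+$ (here I will want to note that $\expecta[b] > 0$, which is immediate because $b$ is strictly positive on a nonempty support, so $r$ is well-defined and positive).

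The key manipulation is to clear the denominator pointwise: since $b(\omega) > 0$, the assumed inequality is equivalent to $a(\omega) < r \cdot b(\omega)$ for all $\omega$ in the support. Taking expectations over $\omega \sim \abstrd$ — and using monotonicity of expectation, together with the fact that a strict pointwise inequality that holds on all of the support yields a strict inequality in expectation (the support has positive total mass) — gives $\expecta[a] < r \cdot \expecta[b]$. But $r \cdot \expecta[b] = \frac{\expecta[a]}{\expecta[b]} \cdot \expecta[b] = \expecta[a]$, so this reads $\expecta[a] < \expecta[a]$, a contradiction. Hence there must exist some $\omega_+$ in the support with $\frac{a(\omega_+)}{b(\omega_+)} \geq r$, which is exactly \eqref{eqn:roedominance}.

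One subtlety worth handling carefully, and the only place I expect any friction, is the step where a strict \emph{pointwise} inequality on the support is promoted to a strict inequality of expectations: this is fine for a general probability measure (if $a(\omega) < r\,b(\omega)$ everywhere on the support and the measure is a probability measure, then $\expecta[a] \le r\,\expecta[b]$ always, and the inequality is strict unless $a = r\,b$ almost everywhere, which the strict pointwise bound rules out). In fact, for the contradiction it suffices to obtain the \emph{non-strict} conclusion $\expecta[a] \le r\,\expecta[b] = \expecta[a]$ from the non-strict version $a(\omega) \le r\,b(\omega)$ — so I can sidestep the strictness issue entirely by instead assuming, toward a contradiction, only that $\frac{a(\omega)}{b(\omega)} < r$ fails to be $\ge r$ for all $\omega$, i.e. assume $\frac{a(\omega)}{b(\omega)} < r$ on the support, derive $a \le r b$ pointwise hence $\expecta[a] \le r \expecta[b] = \expecta[a]$ with no contradiction — so I actually need the strict version after all. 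The cleanest route is therefore: assume $\frac{a(\omega)}{b(\omega)} < r$ strictly on the support, get $a(\omega) < r\,b(\omega)$ strictly, and invoke that integrating a strictly-dominated nonnegative function against a probability measure gives a strict inequality. This is a one-line measure-theoretic fact and is the crux; everything else is routine algebra.
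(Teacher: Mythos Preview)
Your argument is correct and is essentially the same approach as the paper's: both linearize the ratio by considering $a(\omega)-r\,b(\omega)$ (equivalently $\beta\,a(\omega)-\alpha\,b(\omega)$ with $\alpha=\expecta[a]$, $\beta=\expecta[b]$), observe this has expectation zero, and apply the probabilistic-method fact that some point achieves at least the mean. The paper's version is marginally cleaner in that it argues directly---since $\expecta[\beta a-\alpha b]=0$, some $\omega_+$ satisfies $\beta\,a(\omega_+)-\alpha\,b(\omega_+)\ge 0$---which yields the desired non-strict inequality immediately and sidesteps the strict-inequality-in-expectation subtlety you flagged.
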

\begin{proof}
Set $\alpha = \expecta_{\omega\sim\abstrd}\left[ a(\omega)\right]$ and $\beta = \expecta_{\omega\sim\abstrd}\left[ b(\omega)\right]$.  The first line uses these definitions and the second line is a simple re-organization:
\begin{align*}
    \frac{\alpha}{\beta} &= \frac{\expecta_{\omega\sim\abstrd}\left[ a(\omega)\right]}{\expecta_{\omega\sim\abstrd}\left[ b(\omega)\right]} \\
    0 &= \expecta_{\omega\sim\abstrd}\left[\beta\cdot a(\omega)-\alpha\cdot b(\omega) \right]
\end{align*}
\noindent Applying the probabilistic method (explained immediately before this lemma) to the last line, there must exist $\omega_+$ for which $\beta\cdot a(\omega_+)-\alpha\cdot b(\omega_+)\geq 0$ which is equivalent to $\sfrac{a(\omega_+)}{b(\omega_+)}\geq \sfrac{\alpha}{\beta}$.
\end{proof}

\noindent With the full proof of \Cref{lem:mixedbenchmark} in place, we are prepared to restate and prove \Cref{thm:blendsbound}.

\begin{numberedtheorem}{\ref{thm:blendsbound}}
\label{thm:blendsbound2}
Consider a prior independent setting with input space $\valspace^n$, class of algorithms $\algspace$, and class of distributions $\scF$.  Let $\scF^{\text{{\em all}}}$ be all distributions.  Assume there exist two distinct dual blends $\blendi[1]\in\Delta(\scF^{\text{\em all}})$ and $\blendi[2]\in\Delta(\scF)$ and correlated density function $g$ (of \Cref{def:blend}) such that:
\begin{align*}
\blendi[1]^n(\vals) &= g(\vals) = \blendi[2]^n(\vals)\quad\forall ~\vals
\end{align*}

\noindent Then the optimal prior independent approximation factor $\piratio^{\scF}$ is at least the ratio $\sfrac{\text{{\em opt}}_{n,2}}{\text{{\em opt}}_{n,1}}$:
\begin{equation*}
    \piratio^{\scF} = \min_{\algo\in\algspace} \max_{\dist\in \scF} \frac{\text{{\em OPT}}_{\dist}(\dist)}{\algop(\dist)} \geq \frac{\text{{\em opt}}_{n,2}}{\text{{\em opt}}_{n,1}}
\end{equation*}
\end{numberedtheorem}
\begin{proof}

\noindent We start with the prior independent design problem and apply \Cref{lem:mixedbenchmark} (given above; by assigning $\bar{\blend} = \blendi[2]$).  Justifications for the next steps are given afterwards.

\begin{align}
\nonumber
\min_{\algo\in\algspace} \max_{\dist\in\scF} \frac{\OPT_\dist(\dist)}{\algop(\dist)} &\geq \min_{\algo\in\algspace}\left[ \frac{\mathbf{E}_{F\sim\blendi[2]}\left[\OPT_\dist(\dist) \right]}{\mathbf{E}_{\dist\sim\blendi[2]}\left[\algop(\dist)\right]} \right]\\
\nonumber
&= \min_{\algo\in\algspace}\left[ \frac{\text{opt}_{n,2}}{\mathbf{E}_{\vals\sim g}\left[\algop(\vals)\right]} \right]\\
\nonumber
&= \min_{\algo\in\algspace}\left[ \frac{\text{opt}_{n,2}}{\mathbf{E}_{\dist\sim\blendi[1]}\left[\algop(\dist)\right]} \right]\\
\label{eqn:theoblendsconclusion}
&\geq \min_{\algo\in\algspace}\left[ \frac{\text{opt}_{n,2}}{\mathbf{E}_{\dist\sim\blendi[1]}\left[\OPT_{\dist}(\dist)\right]} \right] =  \frac{\text{opt}_{n,2}}{\text{opt}_{n,1}} 
\end{align}
\begin{itemize}
    \item The second and third lines substitute using the definition of $\text{opt}_{n,i}$ and the assumption in the theorem statement that $\blendi[1]^n(\vals) = g(\vals) = \blendi[2]^n(\vals)$.
    
    Note, the adversary's choice of $\blendi[2]$ is restricted to the set $\Delta(\scF)$ up front in the prior indepdent problem (i.e., the parameter $\scF$ is fixed exogenously), and $\blendi[2]$ induces $g=\blendi[2]^n$.  However given $g$, there may exist any alternative description $\blendi[1]$ with $g=\blendi[1]^n$, including a $\blendi[1]\in\Delta(\scF^{\text{all}})$ that uses distributions outside the original class $\scF$.  This freedom to design $\blendi[1]$ is an inherent {\em consequence of nature}.
    \item The fourth line inequality recognizes that expectation over locally optimal performances -- each knowing the true $\dist$ when realized -- must weakly dominate the performance of a single algorithm run against all realizations of $\dist$ (formally: \Cref{fact:localoptisworseglobal} after this proof).
    \item The final equality substitutes and realizes that the algorithm no longer appears in the function to be minimized, i.e., the objective is constant.\qedhere
\end{itemize}
\end{proof}

\noindent The following holds because each $\OPT_{\dist}$ algorithm is optimal pointwise per $\dist$, whereas running $\algo$ against each $\dist$ is itself immediately upper bounded by $\OPT_{\dist}$:

\begin{fact}
\label{fact:localoptisworseglobal}
Given an arbitrary prior independent algorithm design setting with class of distributions $\scF$ and class of algorithms $\algspace$, and given $\blend\in\Delta(\scF)$.  For any fixed algorithm $\algo\in\algspace$:
\begin{equation*}
\mathbf{E}_{\dist\sim\blend}\left[\OPT_{\dist}(\dist)\right]\geq \mathbf{E}_{\dist\sim\blend}\left[\algo(\dist)\right]
\end{equation*}
\end{fact}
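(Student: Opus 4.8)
The plan is to derive the inequality from a pointwise domination and then integrate against $\blend$. First I would fix an arbitrary distribution $\dist\in\scF$. By \Cref{def:bayesianmechdp} the Bayesian optimal algorithm for $\dist$ is $\OPT_{\dist}=\argmax_{\algo'\in\algspace}\algo'(\dist)$, so its expected performance on $\dist$ is the optimal value of that program, i.e.\ $\OPT_{\dist}(\dist)=\max_{\algo'\in\algspace}\algo'(\dist)$. Since the fixed algorithm $\algo$ belongs to the feasible class $\algspace$, it is one of the candidates over which this maximum is taken, hence $\OPT_{\dist}(\dist)\ge\algo(\dist)$. This holds for \emph{every} $\dist\in\scF$, in particular for every $\dist$ in the support of $\blend$.

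With the pointwise inequality $\OPT_{\dist}(\dist)\ge\algo(\dist)$ in hand for all $\dist$ in the support of $\blend\in\Delta(\scF)$, I would take expectations over $\dist\sim\blend$ and apply monotonicity of expectation to conclude $\mathbf{E}_{\dist\sim\blend}[\OPT_{\dist}(\dist)]\ge\mathbf{E}_{\dist\sim\blend}[\algo(\dist)]$, which is exactly the claim. Here both $\OPT_{\dist}(\dist)$ and $\algo(\dist)$ are read in the sense fixed at the start of \Cref{a:blendsbound}, namely $\algo(\dist)=\expect_{\vals\sim\dist}[\algo(\vals)]$, so the comparison already lives at the level of expected performances and no further unpacking of the two-stage (construct-then-run) structure is needed.

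There is essentially no obstacle: the statement is an immediate consequence of the definition of $\OPT_{\dist}$ as a pointwise maximizer, and the only delicate point is whether that $\argmax$ is attained. This is presumed by the notation of \Cref{def:bayesianmechdp}; if one preferred not to assume attainment, the argument goes through verbatim after replacing $\max$ by $\sup$, since $\sup_{\algo'\in\algspace}\algo'(\dist)\ge\algo(\dist)$ for any $\algo\in\algspace$. I would also note that the fact is invoked in the proof of \Cref{thm:blendsbound} exactly in the stated direction — a single algorithm run against all realizations of $\dist$ cannot beat the $\blend$-average of the distribution-specific optima — so no reverse inequality or tightness statement is required.
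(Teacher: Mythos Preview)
Your proposal is correct and matches the paper's own justification, which is simply the one-line remark preceding the fact: each $\OPT_{\dist}$ is pointwise optimal, so $\algo(\dist)\le\OPT_{\dist}(\dist)$ for every $\dist$, and taking expectations over $\blend$ gives the claim. You have just written out that sentence with a bit more care.
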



\subsection{An Alternative Proof of \Cref{thm:blendsbound} from Linear Programming}
\label{a:duals}
\label{s:optpiprogram}
\label{s:lpdesignintro}

We give a second proof of \Cref{thm:blendsbound} for algorithms settings in which it is possible to explicitly model the prior independent problem (\Cref{def:pidesign}) as a linear program, in particular in which the algorithm's performance is a linear combination over variables.  We use a specific example of truthful auctions within mechanism design (see \Cref{s:mechintro} for introduction) but it will be clear where algorithm-specific considerations ``disappear" and we are left with an alternative proof for the Blends Technique.  The techniques and principles of linear programming that we apply here follow from \citet{voh-11}.

To summarize, this section re-proves the Blends Technique using an example problem (a simple auction) in a restricted analytical setting (linear programming).  We identify two prominent structures:
\begin{enumerate}
    \item The Blends Technique describes lower bounds by measuring the prior independent approximation of an ``algorithm" that -- rather than choosing assignments of problem-specific variables -- can directly choose its {\em pseudo-performance} outcome on every input $\vals$ independently of problem-specific constraints, as long as for every distribution $\dist\in\scF^{\text{all}}$, its expected pseudo-performance on inputs drawn from $\dist$ does not exceed the optimal algorithm's performance $\OPT_{\dist}(\dist)$.  This structure is observed in \Cref{prog:optpinonsuperoptlaxsimple} below, which is a relaxation of the initial problem's LP.
    \item \Cref{prog:optpiappendeddual} is the dual program of the primal in the previous point.  Critically, our dual blends (of \Cref{def:blend}) give feasible solutions for this dual program.  The Blends Technique for obtaining lower bounds on prior independent approximation then follows from the inequality between the optimal value of the primal program and the value of the objective of the dual for feasible solutions.
\end{enumerate}

\noindent Regarding specifics of mechanism design: we write a program to describe the prior independent truthful mechanism design problem, for which it is sufficient to use virtual value maximization and characterization of truthful mechanisms (\Cref{thm:myerson}, \citet{mye-81}).  Note that we can write the program once and it applies for each objective using the corresponding virtual value function.  Further, the linear programming approach -- in conjunction with Myerson's characterization -- uses the fact that optimization over truthful mechanisms $\mecha=(\allocs,\prices)$ reduces to optimization over implementable allocations $\allocs$ (cf.\ \Cref{thm:myedsicchar}).  Thus, the arguments of the initial linear program are (monotone) allocations $\allocs$.  Let $\mecha(\dist)$ be the expected performance of mechanism $\mecha$ on $n$ i.i.d.\ draws from $\dist$.

In order to write the problem as a linear program, we define $\check{\piratio}^{\scF} = \sfrac{1}{\piratio^{\scF}}$ to be the multiplicative inverse of our standard approximation factor.  Thus, we may think of $\check{\piratio}^{\scF}\in[0,1]$ as the largest ($\max-\min$) fraction of $\OPT_{\dist}$ that optimal $\mecha^*$ can guarantee in worst-case (i.e., $\mecha^*(\dist) \geq \check{\piratio}^{\scF}\cdot \OPT_{\dist}(\dist)~\forall~\dist\in\scF$).

We need to write a linear program with a single objective.  The technique to ``unravel" the $\max-\min$ formulation (of prior independent design) in order to remove the embedded adversarial-$\min$-objective relies on moving it into a constraint (see the ``approximation" line below) and optimizing an approximation-ratio variable $\check{\piratio}$ as the value of the program.  The optimal factor $\check{\piratio}^{\scF}$ is necessarily at most 1 and we copy this fact into the objective function line.

\begin{program}[The Prior Independent Truthful Mechanism Design Program]
\label{prog:optpi}
Given a class of distributions $\scF$ and any auction objective -- along with its corresponding definition of the virtual value function -- the optimal single-item, $n$-agent truthful mechanism (described by $\allocs^*$) and its optimal approximation factor $\piratio^{\scF}=\sfrac{1}{\check{\piratio}^{\scF}}$ are given by the $\argmax$ of the following program:
\begin{gather}
    \hspace{-4cm}\check{\piratio}^{\scF}=\max_{\allocs,~\check{\piratio}} \check{\piratio}~\leq 1\\
    \hspace{-4cm}\nonumber \text{{\em s.t.}}
\end{gather}
\vspace{-1cm}
\newlength{\optpiprogunit}
\setlength{\optpiprogunit}{-1cm}
\begin{align*}
        \nonumber \int_{\valspace^n} \left(\sum_i \vv_i^F(\vals)\cdot \alloci(\vals) \right) d(\distp^n(\vals))&\geq \check{\piratio}\cdot\OPT_{\dist}(\dist)&&\forall~\dist\in\scF&\hspace{\optpiprogunit}\text{{\em (approximation)}}\\
    \nonumber
    \sum\nolimits_i \alloci(\vals) &\leq 1&&\forall~ \vals\in\valspace^n&\hspace{\optpiprogunit}\text{{\em (single-item feasibility)}}\\
    \nonumber
    \alloci(\vali,\valsmi)&\leq \alloci(\vali',\valsmi) &&\forall~i,~\vali,~\vali'>\vali,~\valsmi&\hspace{\optpiprogunit}\text{{\em (monotonicity)}}\\
    \nonumber
    \alloci(\vals)&\geq 0&&\forall~i,~\vals&\hspace{\optpiprogunit}\text{{\em (non-negativity)}}
\end{align*}
\end{program}

\noindent (From now on, we assume non-negativity without writing it.)  Starting from \Cref{prog:optpi}, we provide a sequence of modifications in order to reprove \Cref{thm:blendsbound} for linear prior independent algorithm design problems.  The goal from here is to obtain a linear program for which we can assign weights of a dual blend to its variables as a feasible solution, and then analysis of an identifiable bound on the objective function implies the desired inequality: $\piratio^{\scF}\geq \sfrac{\text{opt}_{n,2}}{\text{opt}_{n,1}}$.

The key observation for the first modification step is that {\em without loss} we can add to the program a constraint of {\em non-super-optimality}, and not only with respect to $\scF$ but with respect to all distributions (represented by the class $\scF^{\text{all}}$):

\begin{program}[The Appended Program]
\label{prog:optpinonsuperopt}
This program adds a {\em non-super-optimality} constraint to \Cref{prog:optpi} without loss.  We give only the new constraint:
\begin{align*}
        \nonumber \int_{\valspace^n} \left(\sum_i \vv_i^F(\vals)\cdot \alloci(\vals) \right) d(\distp^n(\vals))&\leq \OPT_{\dist}(\dist)&&\forall~\dist\in\scF^{\text{{\em all}}}&\hspace{\optpiprogunit}\text{{\em (non-super-optimality)}}
\end{align*}
\end{program}

\noindent The new constraint is without loss because no prior independent algorithm can do strictly better given $\dist$ than the optimal algorithm $\OPT_{\dist}$ which knows $\dist$ (\Cref{fact:localoptisworseglobal}), and further, this is true regardless of any restrictions imposed on the distribution by the class $\scF$.  The next step is to in fact {\em drop all of the setting-specific constraints} within the linear program, giving us a program whose optimal value $\check{\piratio}^{\scF}_{\text{lax}}$ upper bounds the previous program (i.e., the maximum may now be larger):

\begin{program}[The Appended-Relaxed Program]
\label{prog:optpinonsuperoptlax}
This program relaxes \Cref{prog:optpinonsuperopt} by dropping its mechanism-design-setting-specific constraints.  We are left with:
\begin{gather}
    \hspace{-4cm}\check{\piratio}^{\scF}\leq \check{\piratio}^{\scF}_{\text{{\em lax}}}=\max_{\allocs,~\check{\piratio}} \check{\piratio}~\leq 1\\
    \hspace{-4cm}\nonumber \text{{\em s.t.}}
\end{gather}
\vspace{-1cm}
\begin{align*}
        \nonumber \int_{\valspace^n} \left(\sum_i \vv_i^F(\vals)\cdot \alloci(\vals) \right) d(\distp^n(\vals))&\geq \check{\piratio}\cdot\OPT_{\dist}(\dist)&&\forall~\dist\in\scF&\hspace{\optpiprogunit}\text{{\em (approximation)}}\\
        \nonumber \int_{\valspace^n} \left(\sum_i \vv_i^F(\vals)\cdot \alloci(\vals) \right) d(\distp^n(\vals))&\leq \OPT_{\dist}(\dist)&&\forall~\dist\in\scF^{\text{{\em all}}}&\hspace{\optpiprogunit}\text{{\em (non-super-optimality)}}
\end{align*}
\end{program}

\noindent Of course, the bound $\check{\piratio}^{\scF}\leq \check{\piratio}^{\scF}_{\text{{\em lax}}}$ holds if and only $\piratio^{\scF}=\sfrac{1}{\check{\piratio}^{\scF}}\geq \sfrac{1}{\check{\piratio}^{\scF}_{\text{{\em lax}}}}$, therefore $\sfrac{1}{\check{\piratio}^{\scF}_{\text{{\em lax}}}}$ is a lower bound on the prior independent approximation factor of the original problem.  The next step is to notice that without coordinate-specific constraints on the variables $\allocs$, each parenthetical term may in fact be replaced by a pair of variables $\tilde{A}(\vals)$ and $\tilde{B}(\vals)$, which together represent a measure of algorithm pseudo-performance on input $\vals$ that is locally unconstrained.  (We use $\tilde{A}(\vals)-\tilde{B}(\vals)$ everywhere, effectively as one variable that may be positive or negative.)

The only remaining constraint on the assignment of the variables $\tilde{\mathbf{A}}=\{\tilde{A}(\vals)~:~\vals\in\valspace^n\}$ and $\tilde{\mathbf{B}}=\{\tilde{B}(\vals)~:~\vals\in\valspace^n\}$ is: the expectation of pseudo-performance on any distribution $\dist$ must not exceed the optimal algorithm given $\dist$ (which retains all constraints), i.e., per the non-super-optimality constraint which remains.

We make one more modification to the linear program in this step: we multiply its objective by a positive constant $\kappa$.  For now, we leave $\kappa$ to-be-defined but we will use it later to help short-cut the analysis.  This modification is obviously benign in terms of the $\argmax$.  (Note that if we want to ignore $\kappa$, we set $\kappa=1$ and the objective line here satisfies $\check{\piratio}^{\scF}\leq \check{\piratio}^{\scF}_{\text{{\em lax}}}=\max_{\tilde{\mathbf{A}},~\tilde{\mathbf{B}},~\check{\piratio}} \check{\piratio}~\leq 1$.)

\begin{program}[The Appended-Relaxed-Simplified Program (ARS)]
\label{prog:optpinonsuperoptlaxsimple}
This program simplifies the variable-space of \Cref{prog:optpinonsuperoptlax} without loss by replacing the original allocation variables $\allocs$ with algorithm pseudo-performance variables $\tilde{\mathbf{A}}$ and $\tilde{\mathbf{B}}$, i.e., by substituting $\tilde{A}(\vals)-\tilde{B}(\vals) = \left(\sum_i \vv_i^F(\vals)\cdot \alloci(\vals) \right)$:
\begin{gather}
    \hspace{-4cm}\kappa\cdot\check{\piratio}^{\scF}\leq \kappa\cdot\check{\piratio}^{\scF}_{\text{{\em lax}}}=\max_{\tilde{\mathbf{A}},~\tilde{\mathbf{B}},~\check{\piratio}} \kappa\cdot \check{\piratio}~\leq \kappa\\
    \hspace{-4cm}\nonumber \text{{\em s.t.}}
\end{gather}
\vspace{-1cm}
\begin{align*}
        \nonumber \int_{\valspace^n} \left(\tilde{A}(\vals)-\tilde{B}(\vals) \right) d(\distp^n(\vals))&\geq \check{\piratio}\cdot\OPT_{\dist}(\dist)&&\forall~\dist\in\scF&\hspace{\optpiprogunit}\text{{\em (approximation)}}\\
        \nonumber \int_{\valspace^n} \left(\tilde{A}(\vals)-\tilde{B}(\vals)\right) d(\distp^n(\vals))&\leq \OPT_{\dist}(\dist)&&\forall~\dist\in\scF^{\text{{\em all}}}&\hspace{\optpiprogunit}\text{{\em (non-super-optimality)}}
\end{align*}
\end{program}

\noindent At this point, {\em no structure of the original mechanism design setting remains in \Cref{prog:optpinonsuperoptlaxsimple}} -- thus, any algorithm setting may continue from this point if its prior independent program can drop setting-specific constraints and write pseudo-performance as a single variable (because also: any algorithm setting may add non-super-optimality).

We now convert \Cref{prog:optpinonsuperoptlaxsimple} to its dual program.  (The value of the dual program is at least the value of the primal program and we write this into the objective line.)  Each constraint-line of the dual is assigned an intuitive label to describe its behavior within the program; and the dual has the following variables (one per primal constraint):
\begin{description}
    \item[approximation:] $\omega_{\dist}~\forall~\dist\in\scF$
    \item[non-super-optimality:] $o_{\dist}~\forall~\dist\in\scF^{\text{all}}$
\end{description}

\begin{program}[The Dual of the ARS Program]
\label{prog:optpiappendeddual}
\begin{gather}
    \hspace{-4cm}\kappa\cdot\check{\piratio}^{\scF}_{\text{{\em lax}}}\leq\min_{\boldsymbol{\omega},~\boldsymbol{o}} 
    \int_{\scF^{\text{{\em all}}}}o_{\dist}\cdot \OPT_{\dist}(\dist)~d\dist\\
    \hspace{-4cm}\nonumber \text{{\em s.t.}}
\end{gather}
\vspace{-1cm}
\begin{align*}
        \nonumber
        \int_{\scF} \omega_{\dist}\cdot\OPT_{\dist}(\dist)~d\dist& \geq \kappa && (\text{{\em for}}~\check{\piratio}) & \text{{\em (scale-setting)}}
        \\
        \nonumber 
        \int_{\scF^{\text{\em all}}}\left(o_{\dist}-\omega_{\dist} \right)\cdot \distp^n(\vals)~d\dist &\geq 0&& \forall~\vals\in\valspace^n~(\text{{\em for}}~\tilde{A}(\vals))&\text{{\em (density-matching-A)}}
        \\
        \nonumber 
        \int_{\scF^{\text{\em all}}}\left(-o_{\dist}+\omega_{\dist} \right)\cdot \distp^n(\vals)~d\dist &\geq 0&& \forall~\vals\in\valspace^n~(\text{{\em for}}~\tilde{B}(\vals))&\text{{\em (density-matching-B)}}
\end{align*}
\end{program}

\noindent The final point is to choose (a) dual arguments $\bomega=\{\omega_{\dist}~:~\dist\in\scF\}$ and $\boldsymbol{o}=\{o_{\dist}~:~\dist\in\scF^{\text{all}}\}$ such that {\em these variables describe a finite-weight dual blend} with $\bomega$ the weights for distributions in a blend $\blendi[2]\in\scF$ and $\boldsymbol{o}$ the weights for distributions in a blend $\blendi[1]\in\scF^{\text{all}}$; and (b) choose $\kappa = \int_{\dist}\omega_{\dist}\cdot\OPT_{\dist}(\dist)d\dist$.  Making all of these substitutions into \Cref{prog:optpiappendeddual}, we see that this assignment of dual arguments gives a feasible solution to the dual constraints:
\begin{itemize}
    \item this assignment meets density-matching with equality by definition of a dual blend which is in fact a necessary structure to satisfy both constraints (and further, note that equality is necessary per complimentary slackness wherever we need to allow strictly positive assignment to the corresponding primal variables $\tilde{A}(\vals)$ and $\tilde{B}(\vals)$);
    \item and, it meets scale-setting with equality by choice of $\kappa$ (which makes it is easy to verify).
\end{itemize}
\noindent An assignment to variables that satisfies all constraints gives an upper bound on the optimal value of a minimization LP.  Thus, substituting, re-arranging the objective of \Cref{prog:optpiappendeddual}, and incorporating relationships stated previously gives
\begin{align*}
    \check{\piratio}^{\scF}_{\text{ lax}}&\leq\frac{\int_{\scF^{\text{all}}}o_{\dist}\cdot\OPT_{\dist}(\dist)~d\dist}{\int_{\scF}\omega_{\dist}\cdot\OPT_{\dist}(\dist)~d\dist}=\frac{\text{opt}_{n,1}}{\text{opt}_{n,2}}\\
    \Downarrow&\\
     \frac{\text{opt}_{n,2}}{\text{opt}_{n,1}} & \leq \frac{1}{\check{\piratio}^{\scF}_{\text{lax}}}\leq \frac{1}{\check{\piratio}^{\scF}} = \piratio^{\scF}
\end{align*}
\noindent which finishes the re-proof of \Cref{thm:blendsbound} for linear algorithm settings.

\subsection{Distribution Naming Conventions, Including Exogenous Restrictions}
\label{a:distnames}
\label{s:exorestricteddists}

All distributions will be ``named" functions written in un-italicized lettering, using the following scheme.  Distribution names will:

\begin{itemize}
\item use letters corresponding to the beginning letters of their standard names in the math community (or natural attempts to copy such); and end with the last letter `d' for ``distribution;"
\item use the first letter capitalized to reference the distribution itself as an object and to represent its CDF; and use all letters lowercase to reference the PDF;
\item use a lower-case-i prefix to identify an inverse-distribution (per \Cref{def:inversedistribution});
\item e.g., these should clearly distinguish the exponential function $\exp(x) = e^x = \sum\nolimits_{k=0}^{\infty} \frac{x^k}{x!}$; versus an exponential distribution `$\Exd_1$' with PDF $\exd_1(x) = e^{-x}$.
\item an exception to this naming scheme is {\em local definition and usage of a distribution} 
$\boldsymbol{\xi}$.
\end{itemize}

\noindent  We give further notation to represent operations to modify a given distribution $\dist$ to a related form.  For absolute clarity, we first explicitly explain these standard operations.  {\em Truncation} cuts off a distribution (either at the top or at the bottom) and re-allocates the deleted probability measure of the discarded support to a point mass at the truncation point.  {\em Conditioning} cuts off a distribution and re-normalizes the densities in the remaining domain by dividing by its remaining total probability measure.

Given a distribution $\dist$, we introduce the following formal notation.  Everything that follows applies to a distribution name $\dist$, its CDF $\dist$, and its PDF $\distp$.  Denote a bottom-conditioning and re-normalization of $\dist$ at input $a$ by $\bott{\dist}^a$, top-conditioning and re-normalization at input $b$ by $\topt{\dist}^b$, and both operations at $a$ and $b>a$ respectively by $\botht{\dist}^{a,b}$.\footnote{\label{foot:truncationarrows} The arrows, where present, indicate the deleted density's direction of movement on the real line.  This includes the use of `left-right-arrow' to indicate a both-top-and-bottom domain restriction which ``smushes" the density towards the middle of the original domain.}  
If the distribution $\dist$ instead becomes truncated on one side with a point mass (rather than being conditioned and re-normalized), we accent the end point to indicate the point mass, e.g., $\botht{\dist}^{a,b'}$ represents conditioning above $a$ and top-truncation at $b$.  If the original distribution is described by one or more parameters, e.g. $\dist_z$, these naturally persist as subscripts, for example $\botht{\dist}^{a,b'}_z$.

\subsection{Details of the Blends Technique and a First Example Implemented Directly with Finite Weights: Shifted-Exponentials versus Uniforms}
\label{a:blendsoutline}
\label{s:example}
\label{a:example}

The goal of this section is to illustrate (a) the process of proving a dual blends structure from description of its elements in order to fit into \Cref{thm:blendsbound}, and then (b) the process of obtaining an algorithm-specific lower bound on prior independent approximation (which requires a specific algorithm setting).  In addition to working through the process for (a) in detail in this section, it is fully outlined within \Cref{note:blendsanalysis} as an offset page.  This section includes some steps of the general process that do not apply in the case of our example here.  As previously mentioned -- both this process and the construction of concrete examples of dual blends exist independently of algorithm setting.  For both (a) and (b) we use a dual blends example of Shifted-Exponentials-versus-Uniforms.

Looking ahead, the setting for (b) will be an application of mechanism design (which is introduced in \Cref{s:mechintro}).  Specifically, the setting for (b) will use a 2-agent truthful auction with a revenue objective, which is sufficient description to analyze a revenue gap; only at the very end will we identify relevant classes of distributions for which the revenue gap is meaningful and then formalize the gap with \Cref{prop:blendsEversusUexample}.

\paragraph{Dual Blends.}  We now describe the elements of our example and prove that they describe dual blends.  We assume $n=2$ and start with $\blendi[1]$ and $\blendi[2]$ as follows.  The $\blendi[1]$ blend is a mixture restricted to upward-closed Shifted-Exponentials.  The shifted exponential distribution $\Sed_{z,1}$ has PDF $\sed_{z,1}(x) = e^{-(x-z)}$ and CDF $\Sed_{z,1}(x) = 1-e^{-(x-z)}$ on $[z,\infty)$.

The $\blendi[2]$ blend is a mixture restricted to downward-closed Uniforms.  The uniform distribution $\Ud_{0,z}$ has PDF $\ud_{0,z}(x) = 1/z$ and CDF $\Ud_{0,z}(x) = x/z$ on $[0,z]$.  Note explicitly, each input has support $\valspace = \left[0,\infty\right)$ and input space is $\valspace^2$.

\begin{figure}
\begin{framed}\small
  \begin{minipage}{\textwidth}
\begin{center}
    \textbf{Overview observation:}
\end{center}
{\em Blends} in this paper always result from distributions over: i.i.d.\ draws from a distribution $\dist$.  As such, measurements of density at any fixed input $(\vali[1], \vali[2])$ are always dimensionally-aligned to the obvious axes.  Measurements of density must respect the difference between continuous density $d\vali$ and point masses.  Regarding density at a fixed input $(\vali[1], \vali[2])$, it could consist of continuous density in both dimensions (i.e., $d\vali[1]d\vali[2]$), point mass density in both dimensions, or one dimension of each, e.g., point mass density of $\vali[1]$ multiplied by a continuous density per $d\vali[2]$.
\begin{center}
    \textbf{Steps to confirm $\blendi[1]$ and $\blendi[2]$ as dual blends:}
\end{center}
\begin{description}[leftmargin=0in]\setlength{\parskip}{0in}
    \item [1. Description] Explicitly enumerate the composition of the distributions $\blendi[1]$ and $\blendi[2]$.
    \item [2. Pointwise equality] For all inputs $\vals = (\vali[1],\vali[2])\in\valspace^2$, compute the total density resulting from each blend for each type of measurement of density at the given $\vals$.
    \item [3. Finiteness] (if applicable) Compute the total weight over all inputs for each blend to confirm they are finite; this computation doubles as a sanity check to help confirm that they are equal.
\end{description}
\begin{center}
    \textbf{Identification of Sufficient Integral End Points:}
\end{center}
When a blend contains an integral over distributions parameterized by a bound $z$ on the distributions' respective domains, distributions that contribute 0 at a point $\vals$ may -- without loss -- be excluded by the integral computation of density at $\vals$.  This can be implemented by assignment of the integral end points because by observation, the ignored distributions are described by either an upward or downward-closed set over parameters $z$.  E.g. for Uniforms at input $(4,2)$, ignore $z<4$ because only $\Ud_{0,z}$ with $z\geq 4$ contribute positive density at $(4,2)$; cf. for $z=3$, we have $\ud_{0,3}(4)\cdot\ud_{0,3}(1) = 0\cdot \sfrac{1}{3}~dz=0$.  Symmetric consequences apply for $z$ as a parameter for a distribution's lower bound.  See \Cref{fig:blendsends} for illustration.
\begin{center}
    \textbf{Steps to analyze resulting performance gap:}
\end{center}
\begin{description}[leftmargin=0in]\setlength{\parskip}{0in}
    \item [1. Optimal performances] Compute $\OPT_{\dist}(\dist)$ for every $\dist$ with positive weight in either $\blendi[1]$ or $\blendi[2]$.
    \item [2. Blend performance] Compute $\text{opt}_{2,1}$ and $\text{opt}_{2,2}$ as the measures of blend-weighted expected optimal performance, accounting for both continuous density and point mass blends weights.
    \item[3. Identify lower bound] The ratio of blend performances (arranged to be at least 1) proves a necessary gap between an adversary's choice and the performance of any algorithm, and therefore lower bounds optimal approximation $\piratio^{\scF}$.
    \item [4. Worst-case] (optional/ if applicable) If the analysis is parameterized, analyze worst-case assignment of the parameter -- e.g., for value space $[1,\maxval]$, consider $\maxval\rightarrow\infty$.
\end{description}
  \end{minipage}
  \end{framed}
 \caption{\label{note:blendsanalysis}
 Given class of distributions $\scF$ and $n=2$, this offset provides outlines of required steps to (a) prove that two distributions over distributions $\blendi[1]$ and $\blendi[2]$ are dual blends (of \Cref{def:blend}); and (b) prove a performance gap from \Cref{thm:blendsbound} applied to $\blendi[1]$ and $\blendi[2]$.}
\end{figure}

  The weights $o_F$ correspond to distributions in $\blendi[1]$ and the weights $\omega_F$ correspond to distributions in $\blendi[2]$.  We require here that the total weight in the system is $W=1$, though this could be relaxed for general blend examples:

\begin{fact}
\label{fact:finiteweightok}
The total weight $W$ of a dual blends analysis may be any positive constant as long as the total weight is finite -- any finite weight will divide-out regardless in the right-hand side of line~\eqref{eqn:theoblendsconclusion}).  The total weight on each side of the dual blend must be equal.
\end{fact}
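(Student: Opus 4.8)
The plan is to establish the two assertions separately; both are bookkeeping consequences of the definitions, and neither requires any new idea beyond linearity of integration.

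For the claim that the two sides of a dual blend have equal total weight, I would integrate the defining pointwise identity $\blendi[1]^n(\vals) = g(\vals) = \blendi[2]^n(\vals)$ over the whole input space. Writing $W_i = \int_{\scF}\blendi[i](\dist)\,d\dist$ for the total weight of side $i$ (the integral read against the appropriate measure, so that point-mass weights are included), and recalling from \Cref{def:blend} that $\blendi[i]^n(\vals) = \int_{\scF}\blendi[i](\dist)\,\distp^n(\vals)\,d\dist$, Tonelli's theorem applies because the integrand is nonnegative, giving
\begin{equation*}
\int_{\valspace^n}\blendi[i]^n(\vals)\,d\vals = \int_{\scF}\blendi[i](\dist)\left(\int_{\valspace^n}\distp^n(\vals)\,d\vals\right)d\dist = \int_{\scF}\blendi[i](\dist)\,d\dist = W_i,
\end{equation*}
since each $\distp^n$ is the density of a genuine probability distribution on $\valspace^n$ and hence integrates to $1$. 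As the left side equals $\int_{\valspace^n}g(\vals)\,d\vals$ for both $i=1$ and $i=2$, we conclude $W_1=W_2$; call this common value $W$.

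For the claim that $W$ may be taken to be any positive finite constant without changing the resulting bound, suppose $\blendi[1]$ and $\blendi[2]$ are dual blends with common total weight $W<\infty$ and correlated function $g$, as they arise from the computation feeding into line~\eqref{eqn:theoblendsconclusion}. Set $\blendi[i]' := \blendi[i]/W$. Then each $\blendi[i]'$ is a probability distribution over $\scF$, and by linearity $(\blendi[i]')^n = \blendi[i]^n/W = g/W$ for both $i$, so $\blendi[1]'$ and $\blendi[2]'$ are dual blends in the sense of \Cref{def:blend} with correlated density $g/W$, which is now a genuine probability density on $\valspace^n$. Again by linearity of expectation,
\begin{equation*}
\text{opt}'_{n,i} = \mathbf{E}_{\dist\sim\blendi[i]'}\!\left[\OPT_{\dist}(\dist)\right] = \frac{1}{W}\int_{\scF}\blendi[i](\dist)\,\OPT_{\dist}(\dist)\,d\dist = \frac{\text{opt}_{n,i}}{W},
\end{equation*}
so $W$ cancels in the ratio $\text{opt}'_{n,2}/\text{opt}'_{n,1} = \text{opt}_{n,2}/\text{opt}_{n,1}$. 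Hence applying \Cref{thm:blendsbound} (whose hypotheses require honest probability distributions) to the normalized pair $(\blendi[1]',\blendi[2]')$ yields exactly the lower bound obtained, formally, with the unnormalized total weight $W$ in line~\eqref{eqn:theoblendsconclusion}.

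I do not expect a substantive obstacle. The only points needing care are that the argument genuinely uses $W<\infty$ — division by $W$ is undefined otherwise, which is precisely why the infinite-weight constructions of \Cref{s:tensorexampleoutline} cannot certify a bound on their own — and that a blend's weight may mix continuous density (in one or both coordinates) with point masses, as flagged in the overview box of \Cref{note:blendsanalysis}; reading every integral above as integration against the corresponding measure, linearity of the integral covers the mixed case verbatim.
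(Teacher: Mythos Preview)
Your proposal is correct. The paper does not actually supply a proof of this fact; it is stated as a self-evident observation, with the only justification being the parenthetical pointer to line~\eqref{eqn:theoblendsconclusion} where the common factor visibly cancels in the ratio $\text{opt}_{n,2}/\text{opt}_{n,1}$. Your argument spells out precisely the two bookkeeping steps (Tonelli to get $W_1=W_2$ from integrating the pointwise identity, and rescaling to see the ratio is invariant) that the paper leaves implicit, so you are filling in rather than diverging from the intended reasoning.
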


\noindent Further, blends must match up exactly to the technical degree which recognizes the difference between continuous density $d\val$ and point masses.  The blend itself is included as a dimension if puts weight on a continuous mix over a parameter z.

\begin{definition}
    \label{def:dimensionsofdensity}
    Define the {\em count of dimensional density} by the number of (axis-aligned) dimensions $i\in\left\{1,\dots,n\right\}$ in which density is continuous: $d\vali$ or $dz$.
\end{definition}

    \begin{fact}
    \label{fact:blendsdimensionmatch}
    For distributions $\blendi[1]$ and $\blendi[2]$ to be dual blends, it is necessary at every input $\vals$ that they match up density exactly {\em for every type of measurement of density}, in order to account for the difference between continuous density and point masses.
    \end{fact}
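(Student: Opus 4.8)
The statement is an unpacking of the equality $\blendi[1]^n = g = \blendi[2]^n$ from \Cref{def:blend}: this is an equality of (probability) measures on $\valspace^n$, and the claim is that such equality is equivalent to --- hence necessitates --- agreement of the two blend descriptions \emph{stratum by stratum}, where a stratum is indexed by the subset $S\subseteq\{1,\dots,n\}$ of coordinates that carry a point mass at $\vals$ (the remaining coordinates carrying continuous, i.e.\ Lebesgue, density). The plan is to make this decomposition precise, observe that the density of each stratum at each point is a measurable functional of the induced measure, and conclude that equality of the two induced measures is inherited by every stratum at every input.

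First I would recall that every finite Borel measure $\nu$ on $\valspace^n\subseteq\reals^n$ that is a mixture of absolutely continuous pieces and atoms --- which covers every distribution used in the paper, and in general is just the Lebesgue decomposition of a product measure --- admits a unique decomposition $\nu = \sum_{S\subseteq\{1,\dots,n\}}\nu_S$ with $\nu_S$ purely atomic in each coordinate $i\in S$ and absolutely continuous in each coordinate $i\notin S$. Concretely, for $\vals\in\valspace^n$ define the \emph{$S$-density of $\nu$ at $\vals$}, $d^\nu_S(\vals)$, as the limit of $\epsilon^{-(n-|S|)}\,\nu\bigl(\prod_{i\in S}\{v_i\}\times\prod_{i\notin S}[v_i,v_i+\epsilon]\bigr)$ as $\epsilon\to0$, so that for $S=\{1,\dots,n\}$ this is the atom mass $\nu(\{\vals\})$ and for $S=\emptyset$ it is the ordinary density of $\nu$ on $\reals^n$. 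Lebesgue differentiation in the free coordinates $i\notin S$ guarantees that this limit exists where needed and recovers $\nu_S$, so $\nu$ is determined by the family $\{d^\nu_S\}_S$.

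Given this, the argument itself is short: if $\blendi[1]$ and $\blendi[2]$ are dual blends then $\blendi[1]^n$ and $\blendi[2]^n$ agree on every Borel set, in particular on every degenerate box $\prod_{i\in S}\{v_i\}\times\prod_{i\notin S}[v_i,v_i+\epsilon]$; dividing by $\epsilon^{\,n-|S|}$ and sending $\epsilon\to0$ gives $d^{\blendi[1]^n}_S(\vals) = d^{\blendi[2]^n}_S(\vals)$ for every $S$ and every $\vals$, which is exactly the claim that the two blends ``match up density for every type of measurement'' at every input. Conversely $\nu=\sum_S\nu_S$, so agreement of all $S$-densities of $\blendi[1]^n$ and $\blendi[2]^n$ would imply $\blendi[1]^n=\blendi[2]^n$; the stratum-by-stratum condition is thus equivalent to being dual blends, which is why it serves as the working verification criterion in \Cref{note:blendsanalysis}. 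I would also point out how the ``$dz$'' of \Cref{def:dimensionsofdensity} enters: when $\blend^n(\vals)$ is assembled from a blend description --- an integral or sum over the distributions in $\blend$, as in \eqref{eqn:qudversusudinfiniteo}--\eqref{eqn:qudversusudinfiniteomega} --- a blend that mixes continuously over a distribution parameter $z$ carries its own continuous dimension, and once the $z$-integral is carried out, $z$ is pinned (up to a Jacobian) to a coordinate value, so a point mass of the individual distributions $\xi_z$ can re-emerge as \emph{continuous} density of $g$, whereas a point-mass blend passes atoms through unchanged; hence one tabulates, per stratum $S$, the total $S$-density contributed by each side separately, and the argument above forces these totals to coincide.

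The hard part is not the main argument but the measure-theoretic bookkeeping behind the decomposition recalled above --- checking that it is genuinely unique and that the $S$-density limits exist where they are used. For the mixed discrete--continuous distributions that actually occur in the paper (absolutely continuous densities glued to finitely many atoms, and continuous mixtures thereof) this is elementary; stated for arbitrary distributions it is the Lebesgue decomposition of a product measure, and one should either exclude singular-continuous components or fold them into the ``continuous'' stratum. Once that decomposition is in hand, the statement follows purely from the fact that a measure on $\reals^n$ is determined by its values on boxes.
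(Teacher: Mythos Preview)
The paper does not prove this statement at all: it is recorded as a \emph{Fact} immediately after \Cref{def:dimensionsofdensity} and treated as a direct unpacking of the equality $\blendi[1]^n=\blendi[2]^n$ in \Cref{def:blend}, with no further justification. Your proposal is therefore correct and strictly more detailed than the paper's own treatment.

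What you do differently is supply the measure-theoretic backbone the paper leaves implicit: the Lebesgue-style decomposition $\nu=\sum_S\nu_S$ by the subset $S$ of coordinates carrying atoms, recovery of each $S$-density from degenerate boxes via Lebesgue differentiation, and the observation that equality of measures on Borel sets forces equality stratum by stratum. This buys you an actual proof rather than an assertion, and it also makes precise the role of the blend parameter $z$ that the paper gestures at in \Cref{def:dimensionsofdensity}. The caveat you flag --- that one should either exclude singular-continuous components or fold them into the continuous stratum --- is real but harmless here, since every distribution appearing in the paper is a mixture of an absolutely continuous part and finitely many atoms.
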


\noindent The weights on the upward-closed Shifted-Exponentials blend ($\blendi[1]$) are as follows:\footnote{\label{foot:distnamesassubscripts} Subscripts on blends weights correspond to natural indicators of the distributions represented without respecting their exact naming schemes.  We trust these will be clear from context.  In this case, Shifted-Exponentials $\Sed$ are reduced to $E$ and Uniforms $\Ud$ to $U$.  The $z$ parameter is also present in the subscript.}
\begin{itemize}
\item point mass of weight $o_\text{pm} = \frac{1}{2}$ on the distribution $\Sed_{0,1}$.
\item weights $o_{Ez} = \frac{1}{2}e^{-z}dz$ on all upward-closed distributions $\Sed_{z,1}$ for $z\in\left[0,\infty\right)$.
\end{itemize}

\noindent The weights on the downward-closed Uniforms blend ($\blendi[2]$) are as follows:
\begin{itemize}
\item (explicitly) we don't need a point mass;
\item weights $\omega_{Uz} = \frac{1}{2}z^2e^{-z}dz$ on all downward-closed distributions $\Ud_{0,z}$ for $z\in\left[0,\infty\right)$.
\end{itemize}

\begin{figure}[t]
\begin{flushleft}
\begin{minipage}[t]{0.48\textwidth}
\centering
\begin{tikzpicture}[scale = 0.7,pile/.style={->}]

\fill[gray!20]    (6,6) -- ++(3.4,0) -- ++(0,3.4) -- ++(-3.4,0) -- ++(0,-3.4);

\draw [dashed] (6,6) -- (9,6);
\draw [dashed] (6,6) -- (6,9);

\draw [dashed] (2,2) -- (9,2);
\draw [dashed] (2,2) -- (2,9);

\draw [dashed] (1,1) -- (9,1);
\draw [dashed] (1,1) -- (1,9);

\draw [pile] (0,0) -- (9.5,9.5);

\draw [dotted] (4,4) -- (4,-0.2);
\draw [dotted] (4,4) -- (-0.2,4);

\fill[black] (4,2) circle (0.2cm);

\draw [pile] (-0.2,0) -- (9.5, 0);
\draw [pile] (0, -0.2) -- (0, 9.5);

\draw (0, -0.5) node {$0$};
\draw (9, -0.5) node {$\vali[1]$};
\draw (8, -0.5) node {$\ldots$};
\draw (4, -0.5) node {$4$};
\draw (1, -0.5) node {$1$};
\draw (2, -0.5) node {$2$};
\draw (6, -0.5) node {$6$};

\draw (-0.4, 0) node {$0$};
\draw (-0.4, 1) node {$1$};
\draw (-0.4, 2) node {$2$};
\draw (-0.4, 4) node {$4$};
\draw (-0.4, 6) node {$6$};
\draw (-0.4, 8) node {$\vdots$};
\draw (-0.4, 9) node {$\vali[2]$};

\draw (5, 2.5) node {$(4,2)$};
\draw (8.5, 6.4) node {$z=6$};
\draw (8.5, 2.4) node {$z=2$};
\draw (8.5, 1.4) node {$z=1$};

\draw (5,-1.4) node {Upward-closed density regions $[z,\infty)$};

\end{tikzpicture}
\end{minipage}
\begin{minipage}[t]{0.48\textwidth}
\centering
\begin{tikzpicture}[scale = 0.7,pile/.style={->}]

\fill[gray!20]    (0,0) -- ++(1,0) -- ++(0,1) -- ++(-1,0) -- ++(0,-1);

\draw [dashed] (6,6) -- (-0.2,6);
\draw [dashed] (6,6) -- (6,-0.2);

\draw [dotted] (2,2) -- (9,2);
\draw [dotted] (2,2) -- (2,9);

\draw [dashed] (1,1) -- (-0.2,1);
\draw [dashed] (1,1) -- (1,-0.2);

\draw [pile] (0,0) -- (9.5,9.5);

\draw [dashed] (4,4) -- (4,-0.2);
\draw [dashed] (4,4) -- (-0.2,4);

\fill[black] (4,2) circle (0.2cm);

\draw [pile] (-0.2,0) -- (9.5, 0);
\draw [pile] (0, -0.2) -- (0, 9.5);

\draw (0, -0.5) node {$0$};
\draw (9, -0.5) node {$\vali[1]$};
\draw (8, -0.5) node {$\ldots$};
\draw (4, -0.5) node {$4$};
\draw (1, -0.5) node {$1$};
\draw (2, -0.5) node {$2$};
\draw (6, -0.5) node {$6$};

\draw (-0.4, 0) node {$0$};
\draw (-0.4, 1) node {$1$};
\draw (-0.4, 2) node {$2$};
\draw (-0.4, 4) node {$4$};
\draw (-0.4, 6) node {$6$};
\draw (-0.4, 8) node {$\vdots$};
\draw (-0.4, 9) node {$\vali[2]$};

\draw (5, 2.5) node {$(4,2)$};
\draw (1, 6.4) node {$z=6$};
\draw (1, 4.4) node {$z=4$};
\draw (1, 1.4) node {$z=1$};

\draw (5,-1.4) node {Downward-closed density regions $[0,z]$};

\end{tikzpicture}
\end{minipage}
\end{flushleft}
\caption{
\label{fig:blendsends}
We illustrate identification of integral endpoints for blends calculations when one side (left) is a blend over distributions with upward-closed domain $[z,\infty)$ for all $z$ and the other side (right) is a blend over distributions with downward-closed domain $[0,z]$ for all $z$.  Fix input values, e.g., $(\vali[1],\vali[2])=(4,2)$.  
On each side, the shaded region illustrates a parameter that contributes 0 density (or mass) at the point $(4,2)$.  
Observably, the effective ranges of integration for a blends calculation are respectively $[0,2]$ and $[4,\infty)$.  Generally, they are $[0,\vali[2]]$ and $[\vali[1],\infty)$.
}
\end{figure}
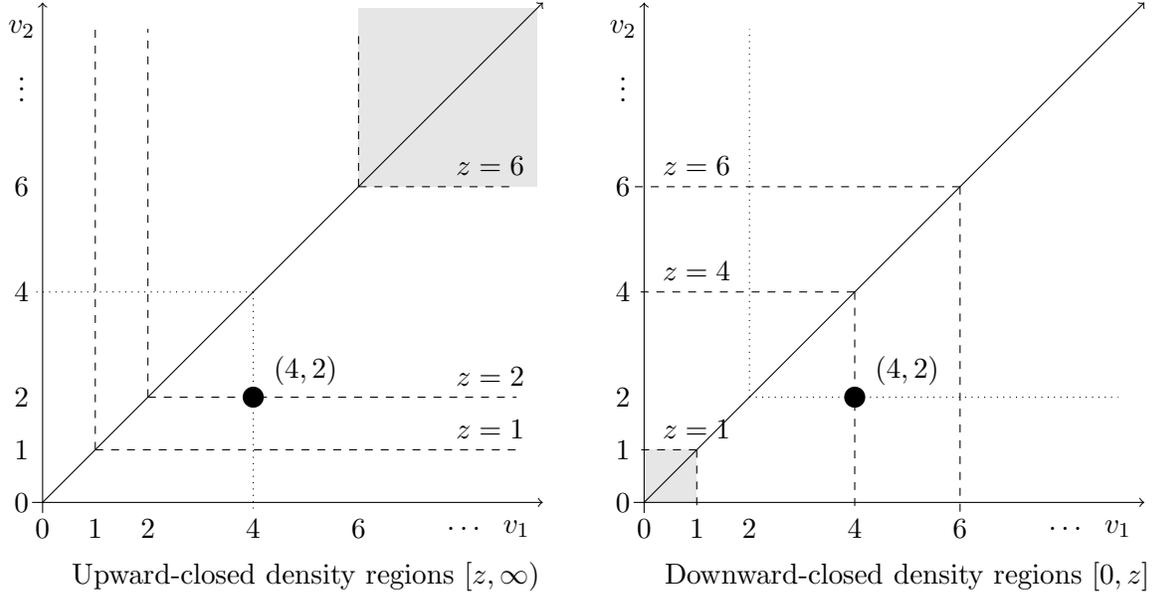

\noindent Here we observe that none of the distributions composing the blends incorporate point mass density.  Therefore the only type of measurement of density that appears in the correlated distribution of this example is of the doubly-continuous form $d\vali[1]d\vali[2]$.  I.e., we only need to check the 2-D density of $g(\vals)$ at each $\vals\in[0,\infty)^2$.  By symmetry we analyze density in the cone $\vali[1]\geq\vali[2]\geq0$.  

Determination of end points of integrals to compute the blends' densities are both (a) described in \Cref{note:blendsanalysis} as part of procedure, and (b) illustrated in \Cref{fig:blendsends}.  Intuitively, we start with an integral over all $z\in[0\infty)$.  However, we truncate the integral end points because not all elements of a blend will put positive density on an input $(vali[1],\vali[2]\leq\vali[1])$.  The calculations of density -- ignoring the continuous density terms $d\vali[1]d\vali[2]$ which are implied by the $2D$ subscript of $g$ -- are given by:

\begin{align}
\label{eqn:expversusunifexline1}
\text{result of $\Sed_{z,1}$ blend} &= \int_0^{\vali[2]}o_{Ez}\cdot \sed_{z,1}(\vali[1])\cdot\sed_{z,1}(\vali[2]) + o_{\text{pm}}\cdot \sed_{0,1}(\vali[1])\cdot\sed_{0,1}(\vali[2])\\
\nonumber
&=\int_0^{\vali[2]} \left(\frac{1}{2}e^{-z}\right) \cdot e^{-(\vali[1]-z)}\cdot e^{-(\vali[2]-z)}~dz +\frac{1}{2}\cdot e^{-\vali[1]}\cdot e^{-\vali[2]} = \frac{1}{2}e^{-\vali[1]} = g_{2D}(\vals)\\
\label{eqn:expversusunifexline2}
\text{result of $\Ud_{0,z}$ blend} &= \int_{\vali[1]}^{\infty}\omega_{Uz}\cdot \ud_{0,z}(\vali[1])\cdot\ud_{0,z}(\vali[2])\\
\nonumber
&= \int_{\vali[1]}^{\infty} \left(\frac{1}{2}z^2e^{-z}\right)\cdot \frac{1}{z}\cdot\frac{1}{z}~dz= \frac{1}{2}e^{-\vali[1]} = g_{2D}(\vals)
\end{align}

\noindent This concludes the blending of Shifted-Exponentials on one side and Uniforms on the other side into the same correlated distribution.  The total weight of the system is 1 from the $\Sed_{z,1}$ side:
\begin{equation*}
o_{\text{pm}}+\int_0^{\infty}o_{Ez} = \frac{1}{2} + \int_0^{\infty} \frac{1}{2}\cdot e^{-z} dz = 1
\end{equation*}

\noindent Total weight is fairly obvious from the Shifted-Exponentials side.  To confirm the total weight from the Uniforms side, we compute the total weight using two iterations of integration-by-parts:
\begin{align*}
\int_0^{\infty}\omega_{Uz} = \int_0^{\infty} \frac{1}{2}z^2e^{-z}dz &= \frac{1}{2}\left(\left[-z^2e^{-z}\right]_0^{\infty}+\int_0^{\infty}2ze^{-z}dz\right)\\
&= \frac{1}{2}\left(0+2\left[\left[-ze^{-z}\right]_0^{\infty}+\int_0^{\infty}e^{-z}dz\right]\right)\\
&= \frac{1}{2}\left(2\left[0+1\right]\right)= 1
\end{align*}

\paragraph{Lower Bound from Revenue Gap.}  Having Shifted-Exponentials-versus-Uniforms as an example of dual blends, we now illustrate how to plug them into \Cref{thm:blendsbound} for a mechanism design setting to obtain a lower bound on prior independent approximation via calculation of $\text{opt}_{2,1}$ and $\text{opt}_{2,2}$.  (Mechanism design is introduced in \Cref{s:mechintro}.)  The setting is a 2-agent truthful auction with a revenue objective and the specific bound that we will obtain is $\piratio^{\scF}\geq 1.1326$.

The first step is to compute optimal revenue for each distribution appearing in either side of the dual blend.  Consider the $\blendi[2]$ side -- i.e., the $\Ud_{0,z}$ side.  The optimal revenue from $n=2$ agents drawn i.i.d.\ from a uniform distribution $\text{U}\left[0,z\right]$ is known to be $\frac{5}{12}z$.  This solution completely covers the distributions used on the $\blendi[2]$ side.  Therefore we have
\begin{equation}
\text{opt}_{2,2} = \int_0^{\infty}\omega_{Uz}\cdot \OPT_{\Ud_{0,z}}(\Ud_{0,z}) =\int_0^{\infty}\left(\frac{1}{2}z^2e^{-z}\right)\cdot \frac{5}{12}z~dz = \frac{1}{2}\cdot\frac{5}{12}\cdot 3! = \frac{5}{4}
\end{equation}

\noindent where the factor of $3!$ results from noting that for positive integers $k$, after repeated integration by parts, $\int_0^{\infty} z^ke^{-z}dz = k!$, and in this case we have $k=3$.

To analyze the $\blendi[1]$ side -- i.e. the $\Sed_{z,1}$ side -- it is sufficient to only look at the virtual value function for arbitrary $\Sed_{z,1}$ because all distributions $\Sed_{z,1}$ can be represented by draws from [$\Sed_{0,1}$ conditioned on $\vali[2]\geq z> 0$].  Later, \Cref{fact:vvconstantunderconditional} will state that given this setup, an observed value $\hat{\val}$ will necessarily {\em have the same virtual value} regardless of which distribution-with-conditioning was used to generate it.  In fact this property holds independently of the hazard rate parameter $\hazr>0$:
\begin{equation}
\vv^{\Sed_{z,\hazr}}(\val) = \val- \frac{1-\left(1-e^{-\hazr(\val-z)}\right)}{\hazr\cdot e^{-\hazr(\val-z)}}=\left(\val-\frac{1}{\hazr}\right)~\forall~z>0,~\val\geq z
\end{equation}

\begin{fact}
\label{fact:expoptrevmech}
For all $z\geq0$ in the $\Sed_{z,\hazr}$ class of distributions including our specific case for $\hazr =1$, there exists a single mechanism which is optimal for them all: a Second Price Auction with a reserve price at $\sfrac{1}{\hazr}$.
\end{fact}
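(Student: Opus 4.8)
The plan is to invoke Myerson's characterization of the revenue-optimal truthful auction (\Cref{thm:myerson}) together with the virtual value computation displayed immediately above the statement, and then to observe that the optimal mechanism it produces is specified without any reference to the shift parameter $z$.

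\textbf{Step 1 (regularity).} The virtual value of $\Sed_{z,\hazr}$ is $\vv^{\Sed_{z,\hazr}}(\val) = \val - \sfrac{1}{\hazr}$ on the support $[z,\infty)$, and the same computation gives $\val - \sfrac{1}{\hazr}$ for $z = 0$ (where $\Sed_{0,\hazr}$ is the plain exponential). This is strictly increasing in $\val$, so every $\Sed_{z,\hazr}$ is regular (\Cref{def:reg}); in particular no ironing is needed, and by \Cref{thm:myerson} the revenue-optimal truthful auction for $\Sed_{z,\hazr}$ is the virtual-surplus-maximizing allocation with threshold payments.

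\textbf{Step 2 (the optimal allocation for a fixed $z$).} With $n=2$ i.i.d.\ agents and a single item, virtual-surplus maximization awards the item to the agent with the larger virtual value $\val_i - \sfrac{1}{\hazr}$ whenever that quantity is nonnegative, i.e., to the higher bidder provided her value is at least $\sfrac{1}{\hazr}$, and leaves the item unsold otherwise. The threshold payment of the winner $i$ is the smallest value at which she still wins, namely $\max\{\val_{-i},\,\sfrac{1}{\hazr}\}$. This allocation-and-payment rule is exactly the Second Price Auction with reserve $\sfrac{1}{\hazr}$ (\Cref{def:spaauction}).

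\textbf{Step 3 (one mechanism for all $z$).} The mechanism ``SPA with reserve $\sfrac{1}{\hazr}$'' is specified without reference to $z$, so it is a single well-defined mechanism on the common type space. For $z < \sfrac{1}{\hazr}$ the reserve lies strictly inside $[z,\infty)$; for $z \geq \sfrac{1}{\hazr}$ it lies at or below $z$ and hence never binds on $[z,\infty)$, where the mechanism then agrees with the pure Second Price Auction. In either case, on inputs supported on $[z,\infty)^2$ this mechanism's allocation is precisely the virtual-surplus-maximizing allocation for $\Sed_{z,\hazr}$ and its payments are the corresponding threshold payments, so by \Cref{thm:myerson} its expected revenue equals $\OPT_{\Sed_{z,\hazr}}(\Sed_{z,\hazr})$. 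Since this holds for every $z \geq 0$, the fixed mechanism is revenue-optimal for the whole class, which is the claim. The only point requiring care is the regime $z \geq \sfrac{1}{\hazr}$, where the reserve is vacuous: there one must note that a vacuous reserve does not alter the mechanism's behavior on the relevant support, so the fixed mechanism remains \emph{optimal} (not merely feasible) rather than needing a different reserve; the essential input — $z$-independence of the virtual value — and regularity are already established, so no further obstacle arises.
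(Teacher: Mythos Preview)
Your proof is correct and follows essentially the same reasoning the paper relies on: the paper states this as a \emph{Fact} without explicit proof, immediately after computing $\vv^{\Sed_{z,\hazr}}(\val)=\val-\sfrac{1}{\hazr}$ and noting (via \Cref{fact:vvconstantunderconditional}) that this is independent of $z$; the conclusion then follows from Myerson's characterization (\Cref{thm:myerson}, or more directly \Cref{thm:commonbayesopt}). Your write-up simply makes these steps explicit, including the harmless case $z\geq\sfrac{1}{\hazr}$ where the reserve is vacuous.
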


\noindent This simplifies our task to calculate $\text{opt}_{2,1}$ because we can directly add up the revenue of the globally optimal mechanism across the correlated distribution $g(\vals)=\frac{1}{2}e^{-\vali[1]}$.  Recalling $\vali[1]\geq\vali[2]$, we use symmetry across the line $\vali[1]=\vali[2]$ and calculate this as:
\begin{align}
&\text{opt}_{2,1}\\
\nonumber
=&2\left[ \int_0^1 0\cdot d\vali[1] + \int_1^{\infty}\left(\int_0^1\frac{1}{2}e^{-\vali[1]}\text{[reserve price]}~d\vali[2]+\int_1^{\vali[1]}\frac{1}{2}e^{-\vali[1]}\text{[second price]}~d \vali[2]\right)d\vali[1]\right]\\
\nonumber
=&~2\left[ \int_1^{\infty}\left(\int_0^1\frac{1}{2}e^{-\vali[1]}\cdot 1\cdot d\vali[2]+\int_1^{\vali[1]}\frac{1}{2}e^{-\vali[1]}\cdot \vali[2]\cdot d \vali[2]\right)d\vali[1]\right]\\
\nonumber
=&~\int_1^{\infty}e^{-\vali[1]}\left(1+\frac{1}{2}\left(\vali[1]^2-1\right)\right)d\vali[1]\\
\nonumber
=&~\left[\frac{-1}{2}e^{-\vali[1]}\left(\vali[1]^2+2\vali[1]+3\right)\right]_1^{\infty}=\left[\left(\frac{1}{2}\cdot\frac{1}{e}\cdot6\right)-0\right]=\frac{3}{e}\approx 1.1036
\end{align}

\paragraph{Discussion} Having solved for $\text{opt}_{2,1}$ and $\text{opt}_{2,2}$, \Cref{thm:blendsbound} gives us a necessary revenue gap and therefore a lower bound on approximation.  The gap is\footnote{\label{foot:blendexamplesanity} Dual blends analyses generally can depend on difficult and technically tricky computation.  Here we exhibit a quick sanity check on the revenue of the Shifted-Exponentials side.  As stated above, the optimal algorithm is the SPA with a reserve price of 1.  Note then that the SPA with no reserve is sub-optimal, therefore its revenue lower bounds the optimal revenue:

The easiest way to calculate the revenue of the SPA is to note that for each uniform distribution $\Ud_{0,z}$, the SPA gets $\frac{1}{3}z$.  Compare this to the optimal auction per uniform distribution, which got $\frac{5}{12}z$.  From this we see that the SPA simply gets $\sfrac{4}{5}$ths of the revenue of the optimal mechanisms for each distribution on the Uniforms side.  Because $\text{opt}_{2,2}=\sfrac{5}{4}$, it follows that the SPA gets 1.  We confirm that this non-optimal revenue is less than the optimal revenue for the Shifted-Exponentials side which was $\text{opt}_{2,1} \approx 1.1036$.}
\begin{equation}
\label{eqn:s2exampleblendsbound}
\frac{\text{opt}_{2,2}}{\text{opt}_{2,1}}=\frac{\sfrac{5}{4}}{\sfrac{3}{e}} = \frac{5e}{12} \approx 1.1326...
\end{equation}
\noindent and reflects the following intuition.  The adversary commits to an action up front, specifically $\blendi[2]$ the Uniforms distribution.  Interpreting this adversary-moves-first choice through Yao's Minimax Principle and our Benchmark \Cref{lem:mixedbenchmark}, this choice sets the benchmark (in the numerator) to $\text{opt}_{2,2}$.  The designer knows the adversary's strategy and best responds.  However $\blendi[2]^2 = g = \blendi[1]^2$ which shows that even when the designer knows that $g$ was generated by $\blendi[2]$, the designer can not do better than best responding directly to $g$, and further, can not do better than by making a particular assumption that $g$ was generated by the Shifted-Exponentials blend $\blendi[1]$. In fact, critically by \Cref{fact:localoptisworseglobal}, the performance of any mechanism is upper bounded by $\text{opt}_{2,1}$.  Our intermediate conclusion is that our descriptions of Shifted-Expontials and Uniforms as a dual blend result in a revenue gap of any mechanism of at least $1.1326$.

As previously mentioned, we were in fact able to proceed with analysis to this point without even specifying a class of allowable distributions $\scF$.  %
%
In fact, the lower bound holds for {\em any} class of distributions that includes all uniform distributions $\Ud_{0,b}$.  Let $\scF^{\supseteq\text{UNIFORM}} = \left\{ \scF~|~\text{all}~\Ud_{0,b}\in\scF \right\}$ be the (meta)-set of classes of distributions that contain all uniform distributions (as a subset).

\begin{prop}
\label{prop:blendsEversusUexample}
Given a single-item, 2-agent, truthful auction setting with a revenue objective and with agent values in space $[0,\infty)$.  For every class of distributions $\scF\in\scF^{\supseteq\text{{\em UNIFORM}}}$, the optimal prior independent approximation factor of any (truthful) mechanism is lower bounded as:
\begin{equation*}
    \piratio^{\scF}\geq 1.1326
\end{equation*}
\end{prop}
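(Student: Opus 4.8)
The plan is to apply the Blends Technique (\Cref{thm:blendsbound}) to the Shifted-Exponentials-versus-Uniforms dual blend developed above, using the Uniforms side as the adversary's choice. First I would observe that for any $\scF \in \scF^{\supseteq\text{UNIFORM}}$ the Uniforms blend $\blendi[2]$ — which places weight $\omega_{Uz} = \frac12 z^2 e^{-z}\,dz$ on $\Ud_{0,z}$ for $z\in[0,\infty)$ — is a legitimate element of $\Delta(\scF)$, since by definition of $\scF^{\supseteq\text{UNIFORM}}$ every $\Ud_{0,b}$ belongs to $\scF$; and that the Shifted-Exponentials blend $\blendi[1]$ — a point mass of weight $\frac12$ on $\Sed_{0,1}$ together with weights $\frac12 e^{-z}\,dz$ on $\Sed_{z,1}$ — lies in $\Delta(\scF^{\text{all}})$ automatically. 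Next I would confirm that $\blendi[1]$ and $\blendi[2]$ are dual blends in the sense of \Cref{def:blend}: none of the component distributions carries a point mass, so the only density to check is the doubly-continuous one on the cone $\vali[1]\geq\vali[2]\geq 0$, and the computations \eqref{eqn:expversusunifexline1}--\eqref{eqn:expversusunifexline2} (with integration ranges trimmed to $[0,\vali[2]]$ and $[\vali[1],\infty)$, the other components contributing nothing at $\vals$) both yield the common correlated density $g(\vals) = \frac12 e^{-\vali[1]}$; a quick check that each side integrates to total weight $1$ completes the verification. \Cref{thm:blendsbound} then gives $\piratio^{\scF} \geq \text{opt}_{2,2}/\text{opt}_{2,1}$.

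The remaining work is to evaluate the two expectations, which has essentially already been carried out above. For $\text{opt}_{2,2}$: each $\Ud_{0,z}$ is regular and the Myerson-optimal revenue from two i.i.d.\ draws of $\unif[0,z]$ is $\frac{5}{12}z$, so integrating against $\blendi[2]$ and using $\int_0^\infty z^k e^{-z}\,dz = k!$ gives $\text{opt}_{2,2} = \frac12\cdot\frac{5}{12}\cdot 3! = \frac54$. For $\text{opt}_{2,1}$: the crucial structural fact is \Cref{fact:expoptrevmech} — a \emph{single} mechanism, the Second Price Auction with reserve $1$, is Myerson-optimal for every $\Sed_{z,1}$ appearing in $\blendi[1]$, because the virtual value $\vv^{\Sed_{z,1}}(\val) = \val - 1$ does not depend on $z$. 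Consequently $\text{opt}_{2,1} = \mathbf{E}_{\dist\sim\blendi[1]}[\OPT_\dist(\dist)]$ is just the expected revenue of that one mechanism evaluated against the correlated distribution $g$; splitting by whether one or both values exceed $1$ and integrating against $g(\vals)=\frac12 e^{-\vali[1]}$ (doubled for the two orderings) yields $\text{opt}_{2,1} = \frac3e$. Combining, $\piratio^{\scF} \geq \frac{5/4}{3/e} = \frac{5e}{12} \approx 1.1326$ for the given $\scF$; and since nothing in the argument used any feature of $\scF$ beyond its containing all the uniforms $\Ud_{0,b}$, the bound holds uniformly over $\scF^{\supseteq\text{UNIFORM}}$ (cf.\ \Cref{fact:disthierarchypropfromblends}).

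The main obstacle — already resolved in the preceding appendix material — is the two-part verification: (i) the pointwise density matching of \eqref{eqn:expversusunifexline1}--\eqref{eqn:expversusunifexline2}, where the subtleties are choosing the integration endpoints correctly and handling the point-mass term on $\Sed_{0,1}$; and (ii) the reduction of $\text{opt}_{2,1}$ to the expected revenue of one globally optimal mechanism, which relies on \Cref{fact:expoptrevmech} and on the fact that this lets us integrate mechanism revenue directly against $g$ rather than distribution-by-distribution against $\blendi[1]$. Everything else is routine $\Gamma$-integral bookkeeping, and the passage to the whole family $\scF^{\supseteq\text{UNIFORM}}$ is immediate.
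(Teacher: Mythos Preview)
Your proposal is correct and follows essentially the same approach as the paper: set $\blendi[2]$ to the Uniforms blend (legitimate for any $\scF\in\scF^{\supseteq\text{UNIFORM}}$) and $\blendi[1]$ to the Shifted-Exponentials blend, verify the dual-blends identity via \eqref{eqn:expversusunifexline1}--\eqref{eqn:expversusunifexline2}, compute $\text{opt}_{2,2}=\sfrac{5}{4}$ and $\text{opt}_{2,1}=\sfrac{3}{e}$ (the latter using \Cref{fact:expoptrevmech} to collapse to a single globally optimal mechanism integrated against $g$), and invoke \Cref{thm:blendsbound}. The paper's argument is the same in structure and in every substantive step.
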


\noindent As a corollary, the bound of equation~\eqref{eqn:s2exampleblendsbound} holds for two classes commonly used within mechanism design -- it holds for both the regular $\scF^{\text{reg}}$ (\Cref{def:reg}) and monotone hazard rate $\scF^{\text{mhr}}$ (\Cref{def:haz}) distributions settings,\footnote{\label{foot:s2exampleversusAB} For MHR and regular settings, our bound here is generally only illustrative -- for $\valspace = [0,\infty)$, and with a restriction to {\em scale-invariant mechanisms} which is conjectured to be without loss, \citet{AB-18} show a tight bound for $\scF^{\text{mhr}}$ of $\piratio^{\scF^{\text{mhr}}}\approx 1.398$ and \citet{HJL-20} show a tight bound for $\scF^{\text{reg}}$ of $\piratio^{\scF^{\text{reg}}}\approx 1.907$.} and this {\em depends only on} the adversary's choice of The Uniforms side of the dual blend because all distributions $\Ud_{a,b}$ are MHR and also regular (for the revenue objective).  
I.e., we have both that $\piratio^{\scF^{\text{reg}}}\geq \sfrac{\text{opt}_{2,2}}{\text{opt}_{2,1}} =1.1326$ and -- as a stronger result because the MHR class is smaller (per \Cref{fact:disthierarchypropfromblends})-- that $\piratio^{\scF^{\text{mhr}}}\geq \sfrac{\text{opt}_{2,2}}{\text{opt}_{2,1}} =1.1326$.

There is an important observation here (as alluded within the proof of \Cref{thm:blendsbound} regarding the implicit relaxation to $\blendi[1]\in\Delta(\scF^{
\text{all}})$).\label{page:blendsboundfromnature}  The Shifted-Exponentials comprising $\blendi[1]$ are also both MHR and regular, {\em but this doesn't matter} -- the upper bound on revenue of any mechanism as results from the Shifted-Exponentials blend-description (of the common correlated distribution) is a consequence of nature itself as follows solely from the adversary's choice of the Uniforms blend, and the structure of that ``consequence" (namely, the revenue-upper-bound structure of the Shifted-Exponentials) faces no constraints at all.  The adversary chooses its blend subject to a particular $\scF$ but the other side of the dual blend can be composed of any subset of distributions in $\scF^{\text{all}}$ (and examples exist for which this is the case).


\subsection{Quadratics-versus-Uniforms: Finite Weight Calculations}
\label{a:finiteblendsconfirm}

For convenience, we copy the descriptions of the distributions used and the weights of the blends from the main body of the paper (page~\pageref{page:finiteweightquadsvsunifs}) and then give the density calculations below.

Quadratics have PDF $\topt{\qud}_z^{\maxval'}(x) = \sfrac{z}{x^2}$ on $[1,\maxval)$ and point mass $\topt{\qud}_z^{\maxval'}(\maxval) = \sfrac{1}{\maxval}$, correspondingly CDF $\topt{\Qud}_z^{\maxval'}(x) = 1-\sfrac{z}{x}$ on $[1,\maxval)$ and $\topt{\Qud}_z^{\maxval'}(\maxval) = 1$.

Uniforms without truncation have PDF $\ud_{1,z}(x) = \sfrac{1}{z-1}$ and CDF $\Ud_{1,z}(x) = \sfrac{x-1}{z-1}$ on $[1,z]$.  Uniforms with truncation have PDF $\topt{\ud}_{1,b}^{\maxval'}(x) = \sfrac{1}{b-1}$ on $[1,\maxval)$ and point mass $\topt{\ud}_{1,b}^{\maxval'}(\maxval) = \sfrac{b-\maxval}{b-1}$, correspondingly $\topt{\Ud}_{1,b}^{\maxval'}(x) = \sfrac{x-1}{b-1}$ on $[1,\maxval)$ and $\topt{\Ud}^{\maxval}(\maxval) = 1$.

The weights on the upward-closed Quadratics blend ($\blendi[1]$) are as follows:
\begin{itemize}
\item point mass of weight $o_{\text{pm}} = 1$ on (truncated) distribution $\topt{\Qud}^{\maxval'}_1$;
\item weights $o_{Qz} = \frac{2}{z}dz$ on all upward-closed (truncated) distributions $\topt{\Qud}^{\maxval'}_{z}$ for $z\in\left[1,\maxval\right]$.
\end{itemize}

\noindent The weights on the downward-closed Uniforms blend ($\blendi[2]$) are as follows:
\begin{itemize}
\item point mass of weight $\omega_{\text{pm}} = \frac{(2\maxval-1)^2}{
\maxval^2}$ on (truncated) distribution $\topt{\Ud}_{1,2\maxval}^{\maxval'}$;
\item weights $\omega_{Uz} = \frac{2(z-1)^2}{z^3}dz$ on all downward-closed distributions $\Ud_{1,z}$ for $z\in\left[1,\maxval\right]$.
\end{itemize}
\noindent With the introduction of pure point masses into underlying distributions, recall that dual blends must match up for every dimension count.  For convenience we re-state \Cref{def:dimensionsofdensity2}.  Then we calculate and confirm all (un-normalized) densities from both sides.

    \begin{numbereddefinition}{\ref{def:dimensionsofdensity}}
    \label{def:dimensionsofdensity2}
    Define the {\em count of dimensional density} by the number of (axis-aligned) dimensions $i\in\left\{1,\dots,n\right\}$ in which density is continuous: $d\vali$ or $dz$.
    \end{numbereddefinition}
\begingroup
\allowdisplaybreaks
\begin{align}
g_{2D}(\vals) &=\int_1^{\vali[2]}o_{Qz}\cdot \topt{\qud}_{z}^{\maxval'}(\vali[1])\cdot\topt{\qud}_{z}^{\maxval'}(\vali[2])+o_{\text{pm}}\cdot\topt{\qud}_1^{\maxval'}(\vali[1])\cdot\topt{\qud}_1^{\maxval'}(\vali[2])\\
\nonumber
&= \int_1^{\vali[2]} \frac{2}{z} \cdot \frac{z}{\vali[1]^2}\cdot\frac{z}{\vali[2]^2}~dz +1\cdot\frac{1}{\vali[1]^2}\cdot\frac{1}{\vali[2]^2} = \frac{1}{\vali[1]^2}\\
g_{2D}(\vals) &= \int_{\vali[1]}^{\maxval}\omega_{Uz}\cdot \ud_{1,z}(\vali[1])\cdot\ud_{1,z}(\vali[2])+\omega_{\text{pm}}\cdot \left(\topt{\Ud}_{1,2\maxval}^{\maxval'}(\vali[1])\right)\cdot\left(\topt{\Ud}_{1,2\maxval}^{\maxval'}(\vali[2])\right)\\
\nonumber
&= \int_{\vali[1]}^{\maxval} \frac{2(z-1)^2}{z^3}\cdot \frac{1}{(z-1)^2}~dz + \frac{(2\maxval-1)^2}{\maxval^2}\cdot\left(\frac{1}{(2\maxval-1)}\right)^2= \frac{1}{\vali[1]^2}\\
g_{0D}(\maxval,\maxval) &= \int_1^{\vali[2]=\maxval}o_{Qz}\cdot \topt{\qud}_z^{\maxval'}(\maxval)\cdot \topt{\qud}_z^{\maxval'}(\maxval)+o_{\text{pm}}\cdot \topt{\qud}_1^{\maxval'}(\maxval)\cdot \topt{\qud}_1^{\maxval'}(\maxval)\\
\nonumber
&= \int_1^{\maxval} \frac{2}{z}\cdot \frac{z}{\maxval}\cdot\frac{z}{\maxval}~dz + 1\cdot\frac{1}{\maxval}\cdot\frac{1}{\maxval} = 1\\
g_{0D}(\maxval,\maxval) &= \int_{\vali[1]=\maxval}^{\maxval} \omega_{Uz}\cdot \ud_{1,z}(\maxval)\cdot\ud_{1,z}(\maxval)+\omega_{\text{pm}}\cdot\left(\topt{\Ud}_{1,2\maxval}^{\maxval'}(\maxval) \right)\cdot\left(\topt{\Ud}_{1,2\maxval}^{\maxval'}(\maxval)\right)\\
\nonumber
&= \int_{\maxval}^{\maxval} \frac{2(z-1)^2}{z^3} \cdot \frac{1}{z-1}\cdot \frac{1}{z-1}~dz + \frac{(2\maxval-1)^2}{\maxval^2}\cdot\left(\frac{\maxval}{(2\maxval-1)}\right)^2 = 1\\
g_{1D}(\maxval,\vali[2]) &=\int_1^{\vali[2]} o_{Qz}\cdot \topt{\qud}_z^{\maxval'}(\maxval)\cdot \topt{\qud}_z^{\maxval'}(\vali[2])+o_{\text{pm}}\cdot \topt{\qud}_1^{\maxval'}(\maxval)\cdot \topt{\qud}_1^{\maxval'}(\vali[2])\\
\nonumber
&= \int_1^{\vali[2]} \frac{2}{z} \cdot \frac{z}{\maxval}\cdot\frac{z}{\vali[2]^2}~dz + \frac{1}{\maxval}\cdot\frac{1}{\vali[2]^2} = \frac{1}{\maxval}\\
g_{1D}(\maxval,\vali[2]) &=\int_{\vali[1]=\maxval}^{\maxval} \omega_{Uz}\cdot \ud_{1,z}(\maxval)\cdot \ud_{1,z}(\vali[2])+\omega_{\text{pm}}\cdot\left(\topt{\Ud}_{1,2\maxval}^{\maxval'}(\maxval)\right)\cdot\left(\topt{\Ud}_{1,2\maxval}^{\maxval'}(\vali[2]) \right)\\
\nonumber
&= \int_{\maxval}^{\maxval} \frac{2(z-1)^2}{z^3}\cdot \frac{1}{z-1} \cdot \frac{1}{z-1}~dz + \frac{(2\maxval-1)^2}{\maxval^2}\cdot\left(\frac{\maxval}{2\maxval-1}\right)\cdot\left(\frac{1}{2\maxval-1}\right) = \frac{1}{\maxval}
\end{align}
\endgroup
\noindent As desired, each side of the dual blends yields the same function $g = (g_{2D},~g_{0D},~g_{1D})$.

\paragraph{Observable Structure of Dual Blends} 

Having completed two finite-weight blends solutions (Shifted-Exponentials-versus-Uniforms in \Cref{a:example} and now Quadratics-versus-Uniforms here in \Cref{a:finiteblendsconfirm}), we identify the following structure which is frequenly observed in example dual blends and general methods (\Cref{s:blendsfrominvdist}) of this paper.

Illustrated here on input support $\valspace=(0,\infty)$, the observed structure is:  the input size is $n=2$; distributions composing $\blendi[1]$ are {\em upward-closed} and are parameterized by $z\in(0,\infty)$ with domain $[z,\infty)$; and distribution in $\blendi[2]$ are {\em downward-closed} and are parameterized by $z\in(0,\infty)$ with domain $(0,z]$.  In further detail:
\begin{itemize}
    \item $\blendi[1]$ is a distribution with weights $o_z$ over realized values of a single distributional parameter for a given upward-closed distribution; e.g., $\blendi[1]$ was a distribution over domain-lower-bounds $z$ of the Shifted-Exponentials in \Cref{a:example};
    \item $\blendi[2]$ is a distribution with weights $\omega_z$ over realized values of a single distributional parameter for a given downward-closed distribution; e.g., $\blendi[2]$ was a distribution over domain-upper-bounds $z$ of the Uniforms in \Cref{a:example}.
\end{itemize}

\noindent We conjecture that no dual blends exist for $n>2$ from our Blends Technique.  For the intuition of this conjecture, see our discussion of ``algebraic consequences of the integral end points" in \Cref{a:discussblendsfromseparate}.  We believe that this upward-closed/downward-closed dual structure is an important property that deserves further study.

\section{Application: Mechanism Design Preliminaries and Proofs}
\label{a:mechanismdesignsetting}
\label{s:setting}
\label{a:setting}
\label{a:mechdesign}
\label{page:mechsetting}

This section gives a formal introduction to mechanism design as the highlighted application of our general results in the paper, which otherwise do not depend on algorithm setting.  The ultimate goal of this section is to present the supporting work and proofs for the mechanism design first given results in \Cref{s:examplemdresults}.

\subsection{Mechanism Design Basics}
\label{s:mechintro}

This sufficient -- albeit lengthy -- section is included for completeness.  Readers who are familiar with the basics of mechanism design may skip it.  However we strive to provide full support for references as we prove our technical mechanism design results (\Cref{thm:finitequadsversusunifsrevpiauction} and \Cref{thm:finitequadsversusunifsressurppiauction}).

We consider mechanism design as it relates to {\em auctions}, i.e., an algorithmic setting of requesting bids from strategic agents, and subsequently allocating items to the agents and charging them monetary payments.  The canonical auction consists of maximizing revenue (i.e., agent payments) by selling one item to one of $n$ agents (possibly randomly) who each have a private value for the item drawn i.i.d.\ from a common Bayesian probability distribution, with the distribution known by the auction designer (i.e, \Cref{def:bayesianmechdp} applied to this setting).  The optimal auction to maximize revenue (or other simple objectives, we define common objectives later) in this setting was solved by \citet{mye-81}.
\footnote{\label{foot:myenobel} For this and related work, Roger Myerson was awarded the Nobel Prize in Economics in 2006 for ``Mechanism Design," jointly with Leonid Hurwicz and Eric Maskin.}

Each agent $i\in\{1,\ldots,n\}$ has value
$\vali$ in a range of known support, e.g., $\vali\in\valspace=\left[0,\infty\right)$ or $\vali\in\valspace=\left[1,\maxval\right]$ for which $\valspace$ is one agent's value space.  Values are private to the agent and are not known by the mechanism.  A profile of $n$ agent values is denoted $\vals = (\vali[1],\ldots,\vali[n])$; the profile with agent $i$'s value replaced with $z$ is $(z,\valsmi) = (\vali[1],\ldots,\vali[i-1],z,\vali[i+1],\ldots,\vali[n])$.  The list of agent values {\em in decreasing order} is $\vali[(1)],\ldots,\vali[(n)]$.\footnote{\label{foot:orderedvals} Re-arranging agents to be labeled in order is typically without loss of generality.  For where it is helpful, we further abstractly define $\vali[\cbs{n+1}]=0$ to be a default, ``sentinel" value.}

A {\em mechanism} collects reports from each agent as {\em bids} and maps them to (possibly randomized) allocations and payments.  A {\em truthful mechanism} is a special case which takes values $\vals$ as input rather than arbitrary bids (and must be designed to incentivize agents to report their values truthfully, see Myerson's characterization below in \Cref{thm:myedsicchar}).

Specifically, a {\em stochastic social choice function} $\allocs$ and a {\em payment function} $\prices$ map a profile of
values $\vals$ respectively to a profile of allocation probabilities, and a profile of expected payments.  Thus, a truthful mechanism is denoted $\mecha=(\allocs^{\mecha},\prices^{\mecha})$.  Where the mechanism is clear from context, we will use the simpler notation $\mecha=(\allocs,\prices)$.  We may also overload notation and write a mechanism's expected performance as a function $\mecha:\vals^n\rightarrow\reals$.

For allocation probability $\alloci$ and expected payment $\pricei$, the agent's expected utility is linear as $\vali\,\alloci - \pricei$ and agents maximize utility in expectation.  We give the most common objectives for mechanism design as a formal definition (for convenience of external reference):

\begin{definition}
\label{def:mechdesignobjectives}
The most common objectives for mechanism design are:
\begin{itemize}
    \item {\em Revenue} is the sum-total over agent payments: $\sum\nolimits_i \pricei$.
    \item {\em Residual surplus} is the sum-total over agent utilities: $\sum\nolimits_i \alloci\vali-\pricei$.
    \item {\em Total welfare} is the sum-total over agent expected-value-of-allocation: $\sum\nolimits_i \alloci\vali$; note that this total respects: $[\text{{\em revenue}} + \text{{\em residual surplus}} = \text{{\em total welfare}}]$.
\end{itemize}
\end{definition}

\noindent With definitions to follow, we restrict attention to mechanisms that are {\em feasible}, {\em dominant strategy incentive compatible} (DSIC/ truthful), and {\em individually rational} (IR), properties which become formal {\em constraints} for mechanism design.  The feasibility constraint for single-item mechanisms requires that for all inputs $\vals$, the profile of expected allocations across all agents sums to at most 1.  The following DSIC and IR constraints must hold for all agents $i$, values $\vali$, and other agent values $\valsmi$.  The DSIC constraint requires that it is always optimal for an agent $i$ to ``bid" value true $\vali$.  In this sense, DSIC mechanisms are {\em truthful}.  The IR constraint requires that an agent $i$ always gets non-negative utility by truthfully bidding $\vali$.  The rest of this \Cref{s:mechintro} presents pertinent structures from the mechanism design literature in order to support main results of this paper which appear in later subsections.

\subsubsection{Characterization of Truthful Equilibrium}
\label{s:myersonchar}

The following
theorem of \citet{mye-81} characterizes social choice functions $\allocs$ that can be implemented by truthful (DSIC) mechanisms, in the context of Nash equilibrium.

\begin{theorem}[\citealp{mye-81}]
\label{thm:myedsicchar}
Allocation and payment rules $(\allocs,\prices)$ are induced by a
dominant strategy incentive compatible mechanism if and only if for
each agent $i$,
\begin{enumerate}
\item (monotonicity) 
\label{thmpart:monotone}
allocation rule $\alloci(\vali,\valsmi)$ is monotone non-decreasing in
$\vali$, and
\item 
\label{thmpart:payment}
(payment identity) payment rule $\pricei(\vals)$ satisfies
\begin{align}
\label{eq:payment-identity}
\pricei(\vals) &= \vali\, \alloci(\vals) - \int_0^{\vali}
\alloci(z,\valsmi)\, \dd z + \pricei(0,\valsmi),
\end{align}
\end{enumerate}
where the payment of an agent with value zero is often
zero, i.e., $\pricei(0,\valsmi) = 0$.
\end{theorem}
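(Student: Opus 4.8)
The plan is to reduce, in the standard way, to a single-agent problem: fix an agent $i$ and fix the other agents' reports $\valsmi$, and regard $\alloci(\vali,\valsmi)$ and $\pricei(\vali,\valsmi)$ as functions of $\vali$ alone. Dominant-strategy incentive compatibility for agent $i$ is precisely the family of inequalities $\vali\,\alloci(\vali,\valsmi) - \pricei(\vali,\valsmi) \ge \vali\,\alloci(\vali',\valsmi) - \pricei(\vali',\valsmi)$ for all $\vali,\vali'\in\valspace$. It is convenient to introduce the truth-telling utility $\utili(\vali) = \vali\,\alloci(\vali,\valsmi) - \pricei(\vali,\valsmi)$, so that the deviation inequality becomes $\utili(\vali) \ge \utili(\vali') + (\vali-\vali')\,\alloci(\vali',\valsmi)$, and to observe that \eqref{eq:payment-identity} is equivalent to $\utili(\vali) = \utili(0) + \int_0^{\vali}\alloci(z,\valsmi)\,\dd z$ (using $\utili(0) = -\pricei(0,\valsmi)$).

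For necessity ($\Rightarrow$), assume the mechanism is DSIC. Taking $\vali>\vali'$ and adding the deviation inequality for ``$\vali$ reports $\vali'$'' to the one for ``$\vali'$ reports $\vali$'' cancels the $\utili$ terms and leaves $(\vali-\vali')\bigl(\alloci(\vali,\valsmi)-\alloci(\vali',\valsmi)\bigr)\ge 0$, which is monotonicity. Keeping the same two inequalities separate instead sandwiches the difference quotient: $\alloci(\vali',\valsmi)\le \frac{\utili(\vali)-\utili(\vali')}{\vali-\vali'}\le \alloci(\vali,\valsmi)$ for $\vali'<\vali$. Summing this over a partition $0=t_0<\cdots<t_k=\vali$ exhibits $\utili(\vali)-\utili(0)$ as a quantity squeezed between the lower and upper Darboux sums of the (now known to be) monotone function $\alloci(\cdot,\valsmi)$, both of which converge to $\int_0^{\vali}\alloci(z,\valsmi)\,\dd z$ as the mesh tends to $0$; hence $\utili(\vali)=\utili(0)+\int_0^{\vali}\alloci(z,\valsmi)\,\dd z$, which unwinds to \eqref{eq:payment-identity}.

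For sufficiency ($\Leftarrow$), assume $\alloci(\cdot,\valsmi)$ is monotone non-decreasing and $\prices$ satisfies \eqref{eq:payment-identity}, equivalently $\utili(\vali)=\utili(0)+\int_0^{\vali}\alloci(z,\valsmi)\,\dd z$. For any report $\vali'$, the utility of a value-$\vali$ agent who reports $\vali'$ equals $\utili(\vali') + (\vali-\vali')\,\alloci(\vali',\valsmi)$, so the gain from truthful reporting is $\utili(\vali) - \utili(\vali') - (\vali-\vali')\,\alloci(\vali',\valsmi) = \int_{\vali'}^{\vali}\bigl(\alloci(z,\valsmi)-\alloci(\vali',\valsmi)\bigr)\,\dd z$. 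By monotonicity the integrand has the same sign as $z-\vali'$ throughout the interval of integration, so this integral is nonnegative whether $\vali>\vali'$ or $\vali<\vali'$; hence the mechanism is DSIC. Running both arguments for every $i$ and every $\valsmi$ gives the theorem.

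The one delicate point is the measure-theoretic step inside necessity — recovering the exact integral formula for $\utili$ from the difference-quotient sandwich. I would avoid invoking absolute continuity and instead argue directly through Darboux sums: a monotone (hence Riemann-integrable, and here bounded since allocation probabilities lie in $[0,1]$) function $\alloci(\cdot,\valsmi)$ makes the partition-summed sandwich literally the statement that $\utili(\vali)-\utili(0)$ lies between the lower and upper sums, which share the common limit $\int_0^{\vali}\alloci(z,\valsmi)\,\dd z$. Continuity (indeed $1$-Lipschitzness) of $\utili$ also falls out of the same sandwich for free, since $|\alloci|\le 1$.
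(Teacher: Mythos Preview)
Your proof is correct and is the standard argument for Myerson's characterization. Note, however, that the paper does not actually prove this theorem: it is stated as background in the preliminaries (Section~\ref{s:myersonchar}) and attributed to \citet{mye-81} without proof, so there is no ``paper's own proof'' to compare against. Your argument matches the classical one found in Myerson's original paper and in standard mechanism-design textbooks.
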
 

\noindent Unless stated specifically otherwise in this paper, we do fix $\pricei(0,\valsmi) = 0$.

\subsubsection{Standard Mechanisms}
\label{a:standardmechs}
\label{s:spa}

This section describes a number of common auction structures.  The {\em Second Price Auction} (SPA) is a special case of the VCG Mechanism 
which has a number of nice properties: it is a truthful auction, it naturally optimizes total welfare, and it also optimizes revenue when used in conjunction with a correct ``reserve price" (which is a minimum price that any agent must pay to be allocated).

\begin{definition}
\label{def:reservep}
A {\em reserve price} is a minimum price for allocation regardless of any other considerations, e.g. auction parameters or the realized values of other agents.
\end{definition}

\begin{definition}
\label{def:spaauction}
The single-item {\em Second Price Auction (SPA)} with $n$ agents allocates the item to an agent with largest value $\vali[(1)]$ at a price equal to the second-largest value $\vali[(2)]$.
\end{definition}

\noindent The SPA is in fact an example of a {\em $k$-lookahead auction} (\cite{ron-01}) which defines an important class of auctions restricted to those that only ever allocate to the $k$ largest bidders (after ordering and breaking value-ties uniformly at random).  The SPA is a 1-lookahead auction.

\begin{definition}[\cite{ron-01}]
\label{def:klook}
The class of single-item {\em $k$-lookahead mechanisms ($k$-\klookers)} with $n\geq k$ agents is defined by restriction to mechanisms that only ever give positive allocation to the $k$ agents with largest values $\vali[(1)],\ldots,\vali[(k)]$.

Note, the allocations to large-valued agents may condition on the realized values of the un-allocated, small-valued agents with values $\vali[(k+1)],\ldots\vali[(n)]$.
\end{definition}

\noindent A {\em markup mechanism} is a special case of $1$-lookahead that commits to a markup scalar $\ratio\geq 1$ in advance and offers the price $\ratio\cdot\vali[(2)]$ to the largest-valued agent.  The SPA is the edge-case markup mechanism with $\ratio=1$.

\begin{definition}
  \label{def:markup}
  The {\em $\ratio$-markup mechanism} $\mecha_{\ratio}$ offers the price $\ratio\cdot\vali[(2)]$ to the agent with the largest value $\vali[(1)]$.  A {\em randomized markup mechanism} $\mecha_{\hat{\ratio},\boldsymbol{\xi}}$ draws random $\hat{\ratio}$ from a given distribution $\boldsymbol{\xi}$.  The class of randomized markup mechanisms is $\mechaspace^{\text{\em mark}}$.
\end{definition}

\begin{definition}
\label{def:ppost}
An {\em anonymous price posting} auction -- denoted $\text{{\em AP}}_{\pi}$ -- 
posts a take-it-or-leave-it common price $\pay$ and randomly allocates to the agents who are willing to pay $\pay$ (i.e., any agent $i$ with $\vali\geq \pay$).
\end{definition}

\noindent Lastly, a {\em $k$-lottery} is another special case of $k$-lookahead mechanism.

\begin{definition}
\label{def:klotterymech}
\label{page:klottery}
A {\em $k$-lottery} auction -- denoted $\text{{\em LOT}}_k$ -- is a $k$-lookahead in which a price posting mechanism is used internally: set $\pay = \vali[\cbs{k+1}]$ and allocate randomly to the top $k$ agents.  Most generally, {\em the Lottery} mechanism randomly gives away the item for free: $\text{{\em LOT}}_n = \text{{\em AP}}_0$.
\end{definition}

\subsubsection{Myerson Virtual Values}
\label{a:vv}
The single most important component of Myerson's analysis is the concept of {\em virtual value}.  Myerson illustrates how mechanism design and optimization are greatly simplified by using an amortized analysis to calculate performance, specifically by adding up the ``marginal" gain (or loss) from serving an agent over all possible agent types as the price is monotonically decreased (weighted by the agent's distribution over values), according to the mechanism's allocation rule.  For derivation of virtual value and further discussion of its intuition, see \cite{mye-81} and Chapter 3.3.1 of \cite{hart-20}.

\begin{fact}
\label{fact:vv}
    Given an agent with value $\val$ drawn independently from distribution $\dist$, the agent's {\em virtual value function} $\vv^{\dist}$ (mapping value to virtual value) in an auction fixing each of the following objectives is given by:
    \begin{description}
        \item[Revenue Auction] $\vv^{\dist}(\val) = \val - \frac{1-\dist(v)}{\distp(\val)}$
        \item[Residual Surplus Auction] $\vv^{\dist}(\val) = \frac{1-\dist(\val)}{\distp(\val)}$
        \item[Total Welfare Auction] $\vv^{\dist}(\val) = \val$
    \end{description}
\end{fact}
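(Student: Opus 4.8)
The plan is to obtain all three virtual value functions from Myerson's payment identity (\Cref{thm:myedsicchar}), which already pins down the payment of any truthful mechanism, by rewriting each objective's expected contribution from a single agent as an allocation-weighted ``amortized'' quantity. Fix agent $i$, condition on the other agents' values $\valsmi$, and write $\alloci(\val)=\alloci(\val,\valsmi)$ for the resulting monotone allocation curve in the agent's own value. Under the normalization $\pricei(0,\valsmi)=0$ adopted after \Cref{thm:myedsicchar}, the payment identity reads $\pricei(\val)=\val\,\alloci(\val)-\int_0^{\val}\alloci(z)\,\dd z$.

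First I would handle revenue. Taking the expectation over $\val\sim\dist$ and applying Tonelli to swap the order of integration in the double integral gives $\expect_{\val\sim\dist}\!\big[\int_0^{\val}\alloci(z)\,\dd z\big]=\int_0^{\infty}\alloci(z)\,(1-\dist(z))\,\dd z$. Substituting this back and collecting terms,
\[
\expect_{\val\sim\dist}[\pricei(\val)]
=\int_0^{\infty}\alloci(\val)\,\distp(\val)\left(\val-\frac{1-\dist(\val)}{\distp(\val)}\right)\dd\val
=\expect_{\val\sim\dist}\!\left[\alloci(\val)\left(\val-\frac{1-\dist(\val)}{\distp(\val)}\right)\right].
\]
Hence the expected payment of the agent equals the expected allocation-weighted value of $\val-\frac{1-\dist(\val)}{\distp(\val)}$, which is exactly the claimed revenue virtual value; summing over agents (using independence, so that the conditioning on $\valsmi$ washes out) recovers the standard statement that expected revenue equals expected virtual welfare.

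The total-welfare case needs no computation: by definition the contribution of agent $i$ is $\alloci(\val)\,\val$, so its virtual value is $\vv^{\dist}(\val)=\val$. For residual surplus I would invoke the pointwise accounting identity from \Cref{def:mechdesignobjectives}, residual surplus $=$ total welfare $-$ revenue, which survives taking expectations; subtracting the two virtual value functions termwise gives $\val-\big(\val-\tfrac{1-\dist(\val)}{\distp(\val)}\big)=\tfrac{1-\dist(\val)}{\distp(\val)}$, the claimed residual-surplus virtual value.

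I do not anticipate a real obstacle, since this is the classical derivation of \citet{mye-81}; the only points requiring care are the justification of the interchange of integration order (Tonelli, using $\alloci\ge 0$) and the vanishing of boundary contributions, which uses both the normalization $\pricei(0,\valsmi)=0$ and the implicit assumption that $\dist$ has finite mean so that the relevant integrals converge and $\val\,(1-\dist(\val))\to 0$ as $\val\to\infty$. It is also worth remarking explicitly that, because the allocation rule is the only mechanism-dependent object left after the amortization, these virtual value formulas depend on nothing but the agent's distribution $\dist$ and the objective, which is why they can be quoted uniformly across all truthful mechanisms.
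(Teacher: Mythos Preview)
Your derivation is correct and is the standard Myerson argument. The paper itself does not prove \Cref{fact:vv}; it states it as a fact and refers the reader to \citet{mye-81} and Chapter~3.3.1 of \cite{hart-20} for the derivation, which is precisely the integration-by-parts/Tonelli computation you outline.
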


\noindent As observed, we let the definition for virtual value be overloaded across objectives.  Some results given from the perspective of virtual value are constant across settings, exhibiting the power of virtual values as an analytical tool (e.g. \Cref{thm:myerson}, \Cref{thm:commonbayesopt} below).  We end this section with the following useful observation about virtual value functions, which states that virtual value $\vv^{\dist}(\hat{\val})$ at $\hat{\val}\geq z$ is unchanged when a draw from $\dist$ is conditionally known to be at least $z$.

\begin{fact}
\label{fact:vvconstantunderconditional}
Given a revenue, residual surplus, or total welfare objective, and a distribution $\dist$ with domain $[a,b]$ (or $[a,b=\infty)$).  Let $\bott{\dist}^z$ be the distribution resulting from conditioning one random draw $\val\sim \dist$ by $\val\geq z$ for $a\leq z \leq b$.  Then for $\hat{\val}\geq z$,
\begin{equation}
    \vv^{\dist}(\hat{\val}) = \vv^{\bott{\dist}^z}(\hat{\val})
\end{equation}
because the operation of conditioning $\val\geq z$ applies the same multiplicative factor $1/(1-\dist(z))$ to both the $(1-\dist(\val))$ and $\distp(\val)$ terms appearing in the revenue and residual surplus virtual value functions -- which cancels.  For total welfare it is trivially true.
\end{fact}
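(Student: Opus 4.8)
The plan is to compute the CDF and density of the conditioned distribution $\bott{\dist}^z$ in closed form and then substitute directly into the three virtual value formulas of \Cref{fact:vv}, observing the cancellation that the statement already advertises. First I would record that, for $\val \ge z$, conditioning a draw $\val \sim \dist$ on the event $\val \ge z$ gives the CDF $\bott{\dist}^z(\val) = \frac{\dist(\val) - \dist(z)}{1 - \dist(z)}$, hence $1 - \bott{\dist}^z(\val) = \frac{1 - \dist(\val)}{1 - \dist(z)}$, and differentiating in $\val$ yields the density $\bott{\distp}^z(\val) = \frac{\distp(\val)}{1 - \dist(z)}$. The crucial point is that the multiplicative factor $\frac{1}{1 - \dist(z)}$ depends only on the threshold $z$, not on $\val$, and it appears identically in both the numerator $1 - \dist(\cdot)$ and the denominator $\distp(\cdot)$ of the inverse-hazard-rate term.

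Next, for the revenue objective I would substitute into $\vv^{\bott{\dist}^z}(\hat{\val}) = \hat{\val} - \frac{1 - \bott{\dist}^z(\hat{\val})}{\bott{\distp}^z(\hat{\val})}$: the two copies of $\frac{1}{1 - \dist(z)}$ cancel, leaving $\hat{\val} - \frac{1 - \dist(\hat{\val})}{\distp(\hat{\val})} = \vv^{\dist}(\hat{\val})$, which is exactly the claim. The residual surplus case is verbatim the same computation with the leading $\hat{\val}$ term omitted, since there $\vv^{\dist}(\val) = \frac{1 - \dist(\val)}{\distp(\val)}$; the cancellation is identical. The total welfare case is immediate because $\vv^{\dist}(\val) = \val$ makes no reference to the distribution, so it is trivially invariant under conditioning.

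There is no real obstacle here; the argument is a one-line algebraic cancellation once the conditioned CDF and density are written down. The only points requiring minor care are the hypothesis $\hat{\val} \ge z$ (so that $\hat{\val}$ lies in the support of $\bott{\dist}^z$ and the virtual value formula applies there) and the implicit non-degeneracy $\dist(z) < 1$ (so that the conditioning, and the division by $1 - \dist(z)$, are well-defined); both are supplied by the setup $a \le z \le b$ together with $\hat{\val} \ge z$.
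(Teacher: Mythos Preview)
Your proposal is correct and matches the paper's own justification exactly: the paper embeds the entire proof in the fact statement itself, noting that conditioning multiplies both $1-\dist(\val)$ and $\distp(\val)$ by the common factor $1/(1-\dist(z))$, which cancels in the inverse-hazard-rate term, with the total welfare case being trivial. Your write-up simply spells out this same cancellation in slightly more detail by first recording the closed forms for the conditioned CDF and density.
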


\subsubsection{Monotone Hazard Rate, Regular, and Irregular Distributions}
\label{a:distributionproperties}

This section describes important properties of distributions -- namely \Cref{def:haz} for monotone hazard rate (MHR) and anti-monotone hazard rate (a-MHR); and \Cref{def:reg} for regularity which is related to the definitions of virtual value for various auction objectives.  The properties define settings for canonical analytical settings within mechanism design.  They affect both the strength and complexity of result statements that can be obtained (by the mechanism design literature generally) by acting as natural restrictions on classes of distributions for robust mechanism design.

\begin{definition}
\label{def:haz}
Given a distribution $\dist$, its {\em hazard rate} function $\lambda^{\dist}(\val) = \sfrac{\distp(\val)}{(1-\dist(\val))}$ describes an ``instantaneous rate of failure" of draws from $\dist$.  {\em Monotone hazard rate (MHR)} distributions have $\sfrac{d\lambda^{\dist}(\val)}{d\val}\geq 0$ for all inputs $\val$.  By comparison, {\em anti-monotone hazard rate (a-MHR)} distributions have $\sfrac{d\lambda^{\dist}(\val)}{d\val}\leq 0$.

Let $\scF^{\text{{\em mhr}}}$ be the class of all MHR distributions and $\scF^{\text{{\em a-mhr}}}$ be the class of all a-MHR distributions (each within a context of known input support).
\end{definition}

\noindent We make two observations relating to hazard rate functions.  First, note that the classes $\scF^{\text{mhr}}$ and $\scF^{\text{{\em a-mhr}}}$ 
are disjoint excepting that they share a ``boundary" when $\sfrac{d\hazf^{\dist}}{d\val~(\val)} =0$ for all inputs $\val$.  Second, note that the multiplicative-inverse of hazard rate $(\lambda^{\dist}(\val))^{-1}$ appeared above in the virtual value function for both revenue and residual surplus objectives.

\begin{definition}
\label{def:reg}
For a virtual value function $\vv(\cdot)$ paramterized by a given auction objective, a distribution $\dist$ is {\em  regular} if $\sfrac{d\vv(\val)}{d\val}\geq0$ for all inputs $\val$.  Otherwise it is {\em irregular}.

Let $\scF^{\text{{\em reg}}}$ be the class of all regular distributions (within the context of known input support and a given auction objective).
\end{definition}

\noindent The following explains relationships between the property-based classes of this section for auctions with specific objectives.

\begin{fact}
\label{fact:revregmhr}
Given a revenue objective, the class of MHR distributions is a subset of the class of regular distributions, which is a subset of all distributions: $\scF^{\text{{\em mhr}}}\subset \scF^{\text{{\em reg}}}\subset \scF^{\text{{\em all}}}$.

Given a residual surplus objective, the class of a-MHR distributions and the class of regular distributions are equal: $\scF^{\text{{\em a-mhr}}} = \scF^{\text{{\em reg}}}  \subset \scF^{\text{{\em all}}}$.
\end{fact}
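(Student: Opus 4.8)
The plan is to derive both statements directly from the virtual value formulas of \Cref{fact:vv} and the definition of the hazard rate $\lambda^{\dist}(\val) = \sfrac{\distp(\val)}{(1-\dist(\val))}$ in \Cref{def:haz}. Throughout I assume, as is standard, that $\dist$ has a density that is positive on the interior of its support, so that $\lambda^{\dist}$ is well defined there and the pointwise derivative conditions in \Cref{def:haz,def:reg} make sense; write $\lambda(\val) := \lambda^{\dist}(\val)$ for brevity.

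For the \emph{revenue} objective, \Cref{fact:vv} gives $\vv^{\dist}(\val) = \val - \sfrac{1}{\lambda(\val)}$, so $\ddx[\vv^{\dist}(\val)]{\val} = 1 + \sfrac{\lambda'(\val)}{\lambda(\val)^2}$. If $\dist$ is MHR then $\lambda'(\val) \geq 0$, hence this derivative is at least $1 > 0$ and $\dist$ is regular; this proves $\scF^{\text{mhr}} \subseteq \scF^{\text{reg}}$. To see that the inclusion is proper, I would use the equal-revenue (Quadratic) distribution $\Qud_1$ of \Cref{s:tensorexampleoutline}: a one-line computation gives $\vv^{\Qud_1}(\val) = \val - \val = 0$, which is weakly monotone, so $\Qud_1 \in \scF^{\text{reg}}$, while its hazard rate $\lambda(\val) = \sfrac{1}{\val}$ is strictly decreasing, so $\Qud_1 \notin \scF^{\text{mhr}}$. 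Finally $\scF^{\text{reg}} \subsetneq \scF^{\text{all}}$ is witnessed by any distribution whose revenue virtual value is non-monotone (e.g.\ a suitable bimodal density), which is trivially constructible.

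For the \emph{residual surplus} objective, \Cref{fact:vv} gives $\vv^{\dist}(\val) = \sfrac{1}{\lambda(\val)}$, so $\ddx[\vv^{\dist}(\val)]{\val} = -\sfrac{\lambda'(\val)}{\lambda(\val)^2}$. Since $\lambda(\val)^2 > 0$ on the interior of the support, this is nonnegative for every $\val$ if and only if $\lambda'(\val) \leq 0$ for every $\val$, which is exactly the a-MHR condition of \Cref{def:haz}. Hence $\scF^{\text{a-mhr}} = \scF^{\text{reg}}$ for this objective, and $\scF^{\text{reg}} \subsetneq \scF^{\text{all}}$ follows as in the revenue case (take $\lambda$, equivalently $\vv^{\dist}$, non-monotone).

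There is no substantive obstacle here: the content is a pair of elementary derivative computations together with the two counterexamples above. The only points requiring care are bookkeeping ones — one should fix the convention that the monotonicity inequalities in \Cref{def:haz,def:reg} are read in the weak (almost-everywhere) sense, so that the shared ``boundary'' case $\lambda' \equiv 0$ noted after \Cref{def:haz} is covered and the ``if and only if'' in the residual surplus case is exact, and one should restrict attention to the interior of the support where $\distp > 0$ so that $\lambda$ and its reciprocal are differentiable.
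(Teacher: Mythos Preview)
Your argument is correct and matches the paper's approach: the paper states this as a \emph{Fact} without formal proof, with the only justification appearing later in \Cref{fact:distborder}, where it notes that for residual surplus the virtual value is $1/\lambda^{\dist}$ so that regularity and a-MHR are ``clearly equivalent conditions.'' Your derivative computations $\ddx[\vv^{\dist}]{\val} = 1 + \lambda'/\lambda^2$ (revenue) and $-\lambda'/\lambda^2$ (residual surplus) are exactly the intended content, and your counterexamples for strictness ($\Qud_1$ for $\scF^{\text{mhr}} \subsetneq \scF^{\text{reg}}$) go beyond what the paper spells out.
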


\noindent With respect to modification of distributions with truncation or conditioning (as defined in \Cref{s:exorestricteddists}), we have the following lemma to describe when distribution properties are necessarily preserved.

\begin{lemma}
\label{lem:truncateddistskeepproperties}
Given a distribution $\dist$ with the MHR property and/or the regularity property in a \underline{revenue} auction setting, its properties are preserved under modification to $\bott{\dist}^a$, $\topt{\dist}^{b'}$, and $\botht{\dist}^{a,b'}$.
\end{lemma}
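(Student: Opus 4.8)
The plan is to treat the three modifications in turn, reducing everything to how each operation acts on the pair $\big(1-\dist(\val),\,\distp(\val)\big)$. Recall from \Cref{fact:vv} that in a revenue auction the virtual value is $\vv^{\dist}(\val) = \val - \sfrac{1-\dist(\val)}{\distp(\val)} = \val - \sfrac{1}{\lambda^{\dist}(\val)}$, so that both MHR (monotonicity of the hazard rate $\lambda^{\dist}$) and regularity (monotonicity of $\vv^{\dist}$) are governed by this pair. The ``and/or'' claim then amounts to showing separately: monotonicity of $\lambda$ is inherited, and monotonicity of $\vv$ is inherited.

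First I would handle bottom-conditioning $\bott{\dist}^a$. On its domain we have $1-\bott{\dist}^a(\val) = \sfrac{1-\dist(\val)}{1-\dist(a)}$ and $\bott{\distp}^a(\val) = \sfrac{\distp(\val)}{1-\dist(a)}$, so the common factor $1-\dist(a)$ cancels in the ratio and $\lambda^{\bott{\dist}^a}(\val) = \lambda^{\dist}(\val)$, $\vv^{\bott{\dist}^a}(\val) = \vv^{\dist}(\val)$ for all $\val \ge a$; this is exactly the cancellation already recorded in \Cref{fact:vvconstantunderconditional}. Hence monotonicity of $\lambda^{\dist}$ (resp.\ of $\vv^{\dist}$) on the new domain passes through, proving the claim for this operation.

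Next, top-truncation $\topt{\dist}^{b'}$. Below the truncation point nothing changes: $1-\topt{\dist}^{b'}(\val) = 1-\dist(\val)$ and $\topt{\distp}^{b'}(\val) = \distp(\val)$ for $\val \in [a,b)$, so $\lambda^{\topt{\dist}^{b'}}$ and $\vv^{\topt{\dist}^{b'}}$ coincide there with $\lambda^{\dist}$ and $\vv^{\dist}$ and inherit monotonicity. The substance of this step is the point mass of size $1-\dist(b^-)$ placed at the top of the support. For MHR this is harmless: the survival probability is $0$ at $b$, so the hazard rate there is effectively $+\infty \ge \lim_{\val\uparrow b}\lambda^{\dist}(\val)$, and (equivalently) MHR is a property of the continuous part of the support, which is untouched. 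For regularity I would pass to the revenue curve $R(q) = q\cdot \dist^{-1}(1-q)$, whose concavity is the standard restatement of regularity: truncation replaces $R$ on the quantile interval $[0,\,1-\dist(b^-)]$ by the segment of slope $b$ through the origin and leaves $R$ unchanged elsewhere, and concavity of the result follows from concavity of the original $R$ together with $\lim_{\val\uparrow b}\vv^{\dist}(\val) = \lim_{\val\uparrow b}\big(\val - \sfrac{1-\dist(\val)}{\distp(\val)}\big) \le b$, which holds because $\sfrac{1-\dist(\val)}{\distp(\val)} \ge 0$. (In value-space language: $\vv^{\topt{\dist}^{b'}}$ agrees with $\vv^{\dist}$ on $[a,b)$, equals $b$ at the atom, and satisfies $\vv^{\dist}(\val) < \val < b$ for $\val < b$, hence is nondecreasing across the whole support.)

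Finally, $\botht{\dist}^{a,b'}$ is the composition of the two previous operations — bottom-condition at $a$, then top-truncate at $b$, which commute since $a<b$ — so both properties are preserved by the two cases just established. I expect the main obstacle to be this middle step: making precise that MHR and regularity even ``make sense'' at the top atom and verifying it is not destroyed there. This is why I would lean on the revenue-curve characterization of regularity rather than attempting to differentiate $\vv$ through the point mass, and treat MHR via the ``continuous part is untouched'' observation.
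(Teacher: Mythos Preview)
Your proof is correct and follows essentially the same approach as the paper: bottom-conditioning via the cancellation recorded in \Cref{fact:vvconstantunderconditional}, top-truncation via ``nothing changes below $b$'' together with hazard rate $\to\infty$ and virtual value $=b$ at the atom, and the combined case by composition. The paper's version is terser (it simply asserts these values at the atom are ``automatically sufficient''), whereas you spell out the inequality $\vv^{\dist}(\val)\le \val < b$ and also sketch the equivalent revenue-curve/concavity argument; both are valid justifications of the same step.
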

\begin{proof}
The statement for $\bott{\dist}^a$ follows directly from \Cref{fact:vvconstantunderconditional}.  The statement for $\topt{\dist}^{b'}$ holds because calculations of hazard rate and virtual value (for revenue) for inputs less than $b$ are unaffected by top-truncation to a point mass at input $b$, and at input $b$ the hazard rate becomes $\infty$ and virtual value becomes $b$ which are both automatically sufficient to preserve the respective original properties.  The statement for $\botht{\dist}^{a,b'}$ holds by sequential application of the first two cases.
\end{proof}

\noindent Whether or not properties are preserved under top-conditioning with re-normalization of the density (rather than moving to point mass as in \Cref{lem:truncateddistskeepproperties}) is dependent on the distribution in question.

\subsubsection{Quantiles and ``Revenue" Curves}
\label{s:quantsandrev}

For distribution $\dist$, the \emph{quantile} $\quant$ of an agent with value $\val$ denotes how weak that agent is relative to the distribution $\dist$, i.e., the probability that a random draw from $\dist$ will be at least $\val$.\footnote{\label{foot:percentilefromquantile} For the places we use it, a {\em percentile} of a value $\val$ is $1-\quant$ to reflect how {\em strong} an agent is relative to distribution $\dist$, i.e., the output of the CDF function which is the probability that a random draw from $\dist$ will be at most $\val$.}  Technically, quantiles are defined by the mapping $\quantf_{\dist}(\val) = 1 - \dist(\val) = \text{Pr}\left[\val'\geq \val~|~\val'\sim \dist \right]$.  Denote the function mapping back to value space by $\valf_{\dist}$, i.e., $\valf_{\dist}(\quant)=\dist^{-1}(1-\quant)$ is the value of the agent with quantile $\quant$.  Note that all functions defined for all inputs in {\em quantile space} have domain $[0,1]$, and that a default random quantile $\hat{\quant}$ is a uniform draw from the range $[0,1]$.  
The rest of this section describing ``Revenue Curves" $\rev_{\dist}$ adopts the standard nomenclature of the revenue perspective.  However unless otherwise stated, everything presented in this section {\em for revenue curves in quantile space} holds for alternative objectives if ``revenue" was replaced with the correct ``performance" measurement.

A single agent {\em revenue curve} $\rev_{\dist}:[0,1]\rightarrow \reals$ gives the revenue from posting a price as a function of quantile $\quant$, i.e., of the probability that the agent accepts the price.  For an agent with value distribution $\dist$, price $\valf_{\dist}(\quant)$ is accepted with probability $\quant$, so revenue is $\rev_{\dist}(\quant)=\quant\,\valf_{\dist}(\quant)$.  We overload the function $\rev_{\dist}$ to also take inputs from value space, defined by $\rev_{\dist}(\val)=\rev_{\dist}(\quantf_{\dist}(\val))=\quantf_{\dist}(\val)\cdot\val=(1-\dist(\val))\val$.\footnote{\label{foot:setvaluedrev} In fact, revenue curves for both quantile space and value space domains are potentially set-valued functions.  For quantile space, set-valued outputs result from regions of the domain of $\dist$ where the CDF is a constant smaller than 1, because revenue (or residual surplus) changes while quantile does not.  For value space, set-valued outputs result from regions of quantile space corresponding to point masses, because quantile drops while value is constant.}~\footnote{\label{foot:altrevs} For residual surplus we will have the overloaded definition $\rev_{\dist}(\quant) = q\cdot \left(\expect_{\hat{\quant}}\left[\valf_{\dist}(\quant) ~|~\hat{\quant}\leq \quant \right] - \valf_{\dist}(\quant)\right) =\int_0^{\quant} \left(\valf_{\dist}(\hat{\quant})- \valf_{\dist}(\quant)\right)d\hat{\quant}$.\ifarxiv\else  For total welfare it is $\rev_{\dist}(\quant) = q\cdot \expect_{\hat{\quant}}\left[\valf_{\dist}(\quant) ~|~\hat{\quant}\leq \quant \right]  =\int_0^{\quant} \valf_{\dist}(\hat{\quant})d\hat{\quant}$.\fi}  The slope of the revenue curve $\rev'_{\dist}$ is {\em marginal revenue}.

\begin{fact}[\cite{mye-81}]
\label{fact:vv_rprime}
The slope of the revenue curve $\rev'_{\dist}$ -- i.e., the marginal revenue function -- is equal to (density-weighted) virtual value $\vv^{\dist}$:
\begin{equation*}
    \vv^{\dist}(\quant) = \rev'_{\dist}(\quant),\quad\quad \vv^{\dist}(\val) = \rev'_{\dist}(\val)\cdot (-\distp(\val))
\end{equation*}
and regular distributions (\Cref{def:reg}) are equivalent to (weakly) concave revenue curves in quantile space.
\end{fact}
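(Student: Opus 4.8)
The plan is to prove the two derivative identities by a direct change of variables between value space and quantile space, and then read off the regularity--concavity equivalence from the first identity. Recall from \Cref{s:quantsandrev} that $\quantf_\dist(\val) = 1 - \dist(\val)$ is a weakly decreasing bijection from the support onto $[0,1]$ with inverse $\valf_\dist$, so that $\quant = 1 - \dist(\valf_\dist(\quant))$, and that the quantile-space revenue curve is $\rev_\dist(\quant) = \quant\,\valf_\dist(\quant)$. I would first handle the clean case in which $\dist$ has a positive continuous density $\distp$ on the interior of its support (so $\valf_\dist$ is differentiable there), and then indicate how the general case reduces to it.

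For the quantile-space identity, differentiate $\rev_\dist(\quant) = \quant\,\valf_\dist(\quant)$ by the product rule to get $\rev'_\dist(\quant) = \valf_\dist(\quant) + \quant\,\valf'_\dist(\quant)$. Implicitly differentiating $\quant = 1 - \dist(\valf_\dist(\quant))$ in $\quant$ gives $1 = -\distp(\valf_\dist(\quant))\,\valf'_\dist(\quant)$, hence $\valf'_\dist(\quant) = -1/\distp(\valf_\dist(\quant))$. Substituting and writing $\val = \valf_\dist(\quant)$, so that $1-\dist(\val) = \quant$, yields $\rev'_\dist(\quant) = \val - \tfrac{1-\dist(\val)}{\distp(\val)}$, which is exactly the revenue virtual value $\vv^\dist(\val) = \vv^\dist(\quant)$ of \Cref{fact:vv}. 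For the value-space identity, differentiate the overloaded curve $\rev_\dist(\val) = (1-\dist(\val))\,\val$ directly: $\rev'_\dist(\val) = (1-\dist(\val)) - \val\,\distp(\val) = -\distp(\val)\,\vv^\dist(\val)$, which is the value-space counterpart and follows equally by the chain rule $\tfrac{d}{d\val}\rev_\dist(\val) = \rev'_\dist(\quantf_\dist(\val))\cdot\tfrac{d}{d\val}\quantf_\dist(\val)$ from the first identity. The residual-surplus and total-welfare cases are identical after replacing $\rev_\dist$ with the corresponding performance curve of \Cref{s:quantsandrev} and $\vv^\dist$ with the matching formula of \Cref{fact:vv}.

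For the final clause, note that $\quantf_\dist$ reverses order, so $\vv^\dist$ is nondecreasing in its value argument exactly when it is nonincreasing in its quantile argument; by the first identity the latter is the statement that $\rev'_\dist$ is nonincreasing on $[0,1]$, i.e.\ that $\rev_\dist$ is weakly concave in quantile space. Hence regularity in the sense of \Cref{def:reg} is equivalent to a concave quantile-space revenue curve.

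The genuinely fiddly point is not the smooth computation above but making it rigorous when $\dist$ is not smooth: atoms of $\dist$ make $\valf_\dist$ and the value-space curve set-valued (cf.\ the footnotes of \Cref{s:quantsandrev}), and flat regions of $\dist$ make $\rev_\dist$ only piecewise differentiable in quantile space. The standard remedy is to interpret $\rev_\dist$ as an integral of density-weighted virtual value, read $\rev'_\dist$ as the appropriate one-sided derivative (or subgradient at kinks), and then invoke the usual facts about monotone functions and their integrals; I would confine these measure-theoretic caveats to a remark, since \Cref{fact:vv_rprime} is used in the rest of the paper only in its smooth, quantile-space form.
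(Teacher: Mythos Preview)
The paper does not prove this statement: it is recorded as a \texttt{fact} attributed to \citet{mye-81} and is used elsewhere only via the quantile-space identity $\vv^\dist(\quant)=\rev'_\dist(\quant)$. Your derivation is the standard one and is correct: product rule on $\rev_\dist(\quant)=\quant\,\valf_\dist(\quant)$ together with $\valf'_\dist(\quant)=-1/\distp(\valf_\dist(\quant))$ gives $\rev'_\dist(\quant)=\val-(1-\dist(\val))/\distp(\val)=\vv^\dist(\val)$, and the concavity clause follows immediately since $\quantf_\dist$ reverses order.

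One small point worth flagging explicitly rather than gliding past: what you actually computed in value space is $\rev'_\dist(\val)=-\distp(\val)\,\vv^\dist(\val)$, i.e.\ $\vv^\dist(\val)=\rev'_\dist(\val)/(-\distp(\val))$, whereas the displayed formula in the statement reads $\vv^\dist(\val)=\rev'_\dist(\val)\cdot(-\distp(\val))$. Your computation is the correct one (and is consistent with the chain rule you mention, $\tfrac{d}{d\val}\rev_\dist=\rev'_\dist(\quant)\cdot\tfrac{d\quant}{d\val}$); the statement as printed appears to have the factor on the wrong side. Since the paper never invokes the value-space version, this has no downstream effect, but your write-up would be cleaner if you stated the identity you derived rather than calling it ``the value-space counterpart'' of a formula it does not literally match.
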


\noindent\label{page:ironintro} Towards analyzing irregular distributions, Myerson implements a second amortization technique called {\em ironing}.\label{page:ironingdefinition}  For continuous regions of quantile space where an agent's allocation function has constant output, expected marginal surplus $\rev'_{\dist}$ -- equivalently expected $\vv^{\dist}$ -- can be treated as its average value of the region, at all points in the region.  Technically, given an ironed region $[a,b]\in[0,1]$, the {\em ironed marginal revenue} at all quantiles $\quant\in[a,b]$ is $\rev_{\dist}(a)+\frac{\rev_{\dist}(b)-\rev_{\dist}(a)}{b-a}\cdot(q-a)$.

\label{page:ironedcurvedef} We define a single agent {\em ironed revenue curve} within the context of optimal analysis (rather than allowing arbitrary choice of ironed regions).  A single agent ironed revenue curve $\bar{\rev}_{\dist}:[0,1]\rightarrow\reals$ is defined only for the quantile space domain (and not for value space), and is defined as the {\em concave hull} of the original revenue curve (which is always possible to achieve for a single agent by ironing exactly all of the regions of the domain where the revenue curve and its concave hull are not already equal).  The definition takes advantage of the following.

\begin{fact}
\label{fact:maxironed}
Given a distribution $\dist$ for a single agent, for fixed $\hat{\quant}$ as an {\em a priori} fixed {\em probability of sale}, the maximum revenue achievable given $\hat{\quant}$ is $\bar{\rev}_{\dist}(\hat{\quant})$.
\end{fact}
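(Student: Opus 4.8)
The plan is to reduce the optimization over single-agent truthful mechanisms to an optimization over mixtures of posted-price mechanisms, and then to recognize the constrained optimum as the concave-hull value $\bar{\rev}_\dist(\hat{\quant})$ essentially by definition. By Myerson's characterization (\Cref{thm:myedsicchar}), a truthful single-agent mechanism is, up to the payment identity, a monotone non-decreasing allocation rule $\alloc(\val)$ with values in $[0,1]$, and the payment identity makes the expected payment a \emph{linear} functional of $\alloc$. Passing to quantile space via $\val = \valf_\dist(\quant)$, write the induced allocation as a non-increasing $\balloc:[0,1]\to[0,1]$, where $\balloc(\quant)$ is the service probability of the agent with quantile $\quant$; then both the (prior) probability of sale $\int_0^1 \balloc(\quant)\,d\quant$ and the expected revenue are linear in $\balloc$. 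The key building blocks are the posted-price mechanisms: pricing at $\valf_\dist(\theta)$ (cf.\ \Cref{def:ppost}) corresponds to $\balloc_\theta$ equal to $1$ on $[0,\theta]$ and $0$ on $(\theta,1]$, sells with probability exactly $\theta$, and earns revenue exactly $\rev_\dist(\theta) = \theta\,\valf_\dist(\theta)$ --- i.e.\ it realizes the point $(\theta,\rev_\dist(\theta))$ on the revenue curve (with the never-sell mechanism realizing $(0,0)=(0,\rev_\dist(0))$).

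First I would show every feasible monotone $\balloc$ is a mixture of these posted-price mechanisms. Given non-increasing $\balloc$ with values in $[0,1]$, define a positive measure $\varrho$ on $[0,1]$ by $\varrho([\quant,1]) = \balloc(\quant^{+})$ (well-defined by monotonicity) and add an atom of mass $1-\balloc(0^{+})\ge 0$ at $\theta=0$; the resulting probability measure $\mu$ satisfies $\balloc(\quant)=\int_0^1 \balloc_\theta(\quant)\,d\mu(\theta)$ for almost every $\quant$. Since probability of sale and revenue are linear functionals of the allocation, under $\mu$ the probability of sale is $\int_0^1 \theta\,d\mu(\theta)$ and the revenue is $\int_0^1 \rev_\dist(\theta)\,d\mu(\theta)$. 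Hence the set of achievable pairs $(\text{probability of sale},\ \text{revenue})$ is exactly the convex hull of $\{(\theta,\rev_\dist(\theta)):\theta\in[0,1]\}$, and conversely every point of this hull is achieved --- by Carath\'eodory's theorem it suffices to randomize over at most two posted prices, which is itself a valid monotone allocation.

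It then remains to observe that
\[
  \max\Big\{\textstyle\int_0^1 \rev_\dist(\theta)\,d\mu(\theta)\ :\ \mu\in\Delta([0,1]),\ \int_0^1 \theta\,d\mu(\theta)=\hat{\quant}\Big\}\ =\ \bar{\rev}_\dist(\hat{\quant}),
\]
which is precisely the definition of the concave hull (least concave majorant) of $\rev_\dist$ at $\hat{\quant}$: the supremum of $\sum_k \lambda_k \rev_\dist(\theta_k)$ over convex combinations with $\sum_k \lambda_k\theta_k = \hat{\quant}$. Combined with the previous paragraph, $\bar{\rev}_\dist(\hat{\quant})$ both upper-bounds the revenue of any mechanism with probability of sale $\hat{\quant}$ and is attained (by the two-price lottery on the quantiles straddling $\hat{\quant}$ on the upper boundary of the hull), which is the claim. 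The identical argument proves the corresponding statement for the residual-surplus and welfare ``revenue'' curves of footnote~\ref{foot:altrevs}, since it uses only that posted prices realize the curve and that performance is linear in the allocation.

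The step I expect to be the main obstacle is the mixture decomposition together with its bookkeeping: handling atoms of $\balloc$ and one-sided limits so that $\balloc=\int\balloc_\theta\,d\mu$ holds with total mass exactly $1$, checking that ``probability of sale'' and ``revenue'' are genuinely affine in $\balloc$ (this is exactly where the payment identity of \Cref{thm:myedsicchar}, equivalently $\rev'_\dist = \vv^\dist$ from \Cref{fact:vv_rprime}, enters), and verifying that the optimal two-point mixture is an honest monotone allocation rather than a mere formal combination (and that upper semi-continuity of $\rev_\dist$ makes the supremum a maximum). Everything after that is the elementary fact that the top of a convex hull above a fixed abscissa equals the concave-hull value there.
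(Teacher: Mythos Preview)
Your argument is correct. The paper, however, does not actually prove this statement: it is presented as a background ``Fact'' in the mechanism design preliminaries, with the one-line justification that the ironed revenue curve is the concave hull of $\rev_\dist$ and that this hull ``is always possible to achieve for a single agent by ironing exactly all of the regions of the domain where the revenue curve and its concave hull are not already equal.'' In other words, the paper takes as understood both that any mechanism with sale probability $\hat{\quant}$ is bounded above by the concave hull and that a two-price randomization attains it.

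Your proof supplies exactly the content the paper elides: the mixture decomposition of monotone allocations into posted prices (making precise that the achievable $(\text{sale probability},\text{revenue})$ pairs are the convex hull of the graph of $\rev_\dist$), and the identification of the upper boundary of that hull with $\bar{\rev}_\dist$. This is the standard route and is consistent with the paper's informal description; your version is simply more careful about the measure-theoretic bookkeeping (atoms, one-sided limits, Carath\'eodory) than the paper chooses to be.
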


\noindent The regions where the revenue curve and ironed revenue curve are not equal are described as {\em strictly ironed}.  Paralleling \Cref{fact:vv_rprime}, we have a corresponding definition for {\em ironed virtual value} $\bar{\vv}^{\dist}$, which is again equal to the slope of the ironed revenue curve as marginal ironed revenue.

\begin{fact}[\cite{mye-81}]
\label{fact:ironedvv}
The slope of the ironed revenue curve $\bar{\rev}'_{\dist}$ -- i.e., the marginal ironed revenue function -- is equal to ironed virtual value $\bar{\vv}^{\dist}$:
\begin{equation*}
    \bar{\vv}^{\dist}(\quant) = \bar{\rev}'_{\dist}(\quant), \quad\quad \bar{\vv}^{\dist}(\val) = \bar{\rev}'_{\dist}(\val)\cdot (-\distp(\val))
\end{equation*}
\end{fact}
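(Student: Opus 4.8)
The plan is to derive both displayed identities by exactly the route used for \Cref{fact:vv_rprime}, with the raw revenue curve $\rev_{\dist}$ replaced throughout by its concave hull $\bar{\rev}_{\dist}$. Recall that $\bar{\rev}_{\dist}$ is defined on quantile space $[0,1]$ as the concave hull of $\rev_{\dist}$, and that ironed virtual value $\bar{\vv}^{\dist}$ was defined (in quantile space) as the associated marginal ironed revenue, i.e., the slope of $\bar{\rev}_{\dist}$. So the quantile-space identity $\bar{\vv}^{\dist}(\quant) = \bar{\rev}'_{\dist}(\quant)$ holds essentially by definition; the content to be supplied is (i) reconciling this definition with the explicit affine description of $\bar{\rev}_{\dist}$ on ironed intervals, and (ii) transporting the identity to value space by a change of variables.

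First I would handle (i). On any maximal strictly-ironed interval $[a,b] \subseteq [0,1]$, the concave hull $\bar{\rev}_{\dist}$ restricts to the affine interpolant between $(a,\rev_{\dist}(a))$ and $(b,\rev_{\dist}(b))$ — the formula recorded in the discussion of ironing above — so its derivative there is the constant chord slope $\tfrac{\rev_{\dist}(b)-\rev_{\dist}(a)}{b-a}$, which is the ironed marginal revenue; off the strictly-ironed regions $\bar{\rev}_{\dist} = \rev_{\dist}$, so there $\bar{\rev}'_{\dist} = \rev'_{\dist} = \vv^{\dist} = \bar{\vv}^{\dist}$ by \Cref{fact:vv_rprime}. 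Since $\bar{\rev}_{\dist}$ is concave it is differentiable off a countable set of kink quantiles, where the identity is read with one-sided derivatives; as a consistency check on the definition one also notes that $\bar{\vv}^{\dist}$ is then weakly monotone, which is precisely the effect ironing is designed to produce.

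Next, for (ii), I would write $\quant = \quantf_{\dist}(\val) = 1 - \dist(\val)$, so that $\quantf_{\dist}'(\val) = -\distp(\val)$, and use the convention that a quantity carrying a value-space argument is the quantile-space quantity composed with $\quantf_{\dist}$. Applying the chain rule to $\bar{\rev}_{\dist} \circ \quantf_{\dist}$ relates $\bar{\rev}'_{\dist}(\val)$ to $\bar{\rev}'_{\dist}(\quant)$ by the factor $-\distp(\val)$; combined with (i) this yields $\bar{\vv}^{\dist}(\val) = \bar{\rev}'_{\dist}(\val) \cdot (-\distp(\val))$, which is word for word the value-space half of \Cref{fact:vv_rprime} with $\rev_{\dist}$ replaced by $\bar{\rev}_{\dist}$.

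I do not expect a genuine obstacle here: the statement is a restatement of the definition of ironed virtual value together with a change of variables, mirroring \Cref{fact:vv_rprime}. The only real care is the bookkeeping around the strictly-ironed intervals — confirming that the concave-hull description of $\bar{\rev}_{\dist}$ matches the affine formula recorded for ironing, so that its derivative is the chord slope — and the handling of one-sided derivatives at the countably many kink quantiles where $\bar{\rev}_{\dist}$ fails to be differentiable.
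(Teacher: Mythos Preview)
Your proposal is correct and, in fact, supplies more detail than the paper does: the paper states this as a \emph{Fact} attributed to \citet{mye-81} without proof, treating $\bar{\vv}^{\dist}(\quant) = \bar{\rev}'_{\dist}(\quant)$ as definitional (the preceding sentence reads ``we have a corresponding definition for ironed virtual value $\bar{\vv}^{\dist}$, which is again equal to the slope of the ironed revenue curve''). Your reconciliation with the affine description on ironed intervals and the chain-rule transport to value space are exactly the right justifications, mirroring the treatment of \Cref{fact:vv_rprime}.
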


\noindent 
\ifarxiv\else
\noindent Continuing, note the following are both (weakly) concave functions: (1) revenue curves for regular distributions and (2) ironed revenue curves for irregular distributions.  The remaining observations of this section identify consequences of this geometry.

\begin{fact}
\label{fact:vvinc}
The concavity of (1) revenue curves for regular distributions and (2) ironed revenue curves for irregular distributions implies that both are non-increasing in quantile.
\end{fact}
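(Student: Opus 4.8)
\noindent The plan is to reduce this to the elementary real-analysis fact that a concave function on an interval has non-increasing slope, and then translate through the dictionary between revenue curves and virtual values already established in \Cref{fact:vv_rprime} and \Cref{fact:ironedvv}. I read the claim as the statement that the virtual value $\vv^{\dist}$ is non-increasing in the quantile $\quant\in[0,1]$ whenever $\dist$ is regular, and that the ironed virtual value $\bar{\vv}^{\dist}$ is non-increasing in $\quant$ for every $\dist$ (in particular for irregular $\dist$).

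First I would record the two concavity inputs. By \Cref{fact:vv_rprime}, regularity of $\dist$ is \emph{equivalent} to the revenue curve $\rev_{\dist}(\cdot)$ being weakly concave on $[0,1]$. For the ironed curve no hypothesis is needed: by its definition as the concave hull of $\rev_{\dist}$ (page~\pageref{page:ironedcurvedef}), $\bar{\rev}_{\dist}(\cdot)$ is weakly concave on $[0,1]$ for \emph{every} distribution, and it coincides with $\rev_{\dist}$ when $\dist$ is already regular. So in both cases the relevant curve is concave in quantile.

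Next I would apply the monotonicity-of-slope lemma: if $h$ is weakly concave on an interval, then for $\quant_1<\quant_2$ in that interval every supergradient of $h$ at $\quant_2$ is at most every supergradient at $\quant_1$; in particular the one-sided derivatives are non-increasing, and hence so is $h'$ wherever it exists. Taking $h=\rev_{\dist}$ (for regular $\dist$) and $h=\bar{\rev}_{\dist}$ (in general) and substituting $\vv^{\dist}(\quant)=\rev'_{\dist}(\quant)$ from \Cref{fact:vv_rprime} and $\bar{\vv}^{\dist}(\quant)=\bar{\rev}'_{\dist}(\quant)$ from \Cref{fact:ironedvv} yields the two assertions. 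The only point requiring care — and it is bookkeeping rather than mathematics — is that revenue curves can fail to be differentiable or even single-valued (cf.\ the set-valued revenue-curve footnotes above), so ``slope'' must be read as a one-sided derivative / concave-hull supergradient throughout; once that convention is fixed the statement is immediate from concavity, with no distribution-specific computation needed.
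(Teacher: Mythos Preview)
Your interpretation and argument are correct. The paper states this as a bare \texttt{fact} with no proof, relying on the reader to make exactly the connection you spell out: concavity of $\rev_{\dist}$ (resp.\ $\bar{\rev}_{\dist}$) forces its slope to be non-increasing in $\quant$, and \Cref{fact:vv_rprime} and \Cref{fact:ironedvv} identify those slopes with $\vv^{\dist}$ and $\bar{\vv}^{\dist}$. Your reading of ``both'' as referring to the (ironed) virtual values rather than the curves themselves is the right one --- the label \texttt{fact:vvinc} and the surrounding context confirm this, and the literal reading (that the curves are non-increasing) is plainly false for, e.g., $\rev_{\Ud_{0,1}}(\quant)=\quant(1-\quant)$.
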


\noindent In comparison to value space, the uniformity of the quantile space domain is relatively simpler for analysis, having already encapsulated the density function $\distp(\cdot)$.  Thus, it is more intuitive to illustrate revenue curves using quantile space.  For the revenue objective specifically, values have a geometric interpretation when the revenue curve is drawn in quantile space, which we describe in \Cref{fact:valasrays}.

\begin{fact}
\label{fact:valasrays}
(For the revenue objective specifically, there is a bijection between values $\val\in[0,\infty)$ and the slopes of rays coming out of the origin in the revenue curve graph with quantile space domain.  Explicitly, a point $(\hat{\quant},\hat{\rev}=\rev_{\dist}(\hat{\quant}))$ of a revenue curve $\rev_{\dist}$ necessarily implies $\rev_{\dist}(\valf_{\dist}(\hat{\quant})) = \valf_{\dist}(\hat{\quant})\cdot\hat{\quant}=\hat{\rev}$.  Equivalently, $\rev_{\dist}(\val) = \hat{\rev}$ if and only if $\quantf_{\dist}(\val)=\hat{\quant}$.
\end{fact}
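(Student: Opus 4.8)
The plan is to unwind definitions: the statement is a geometric reinterpretation of the two facts already recorded in this subsection, namely the defining identity $\rev_{\dist}(\quant) = \quant\,\valf_{\dist}(\quant)$ for the revenue curve in quantile space, and the inverse relationship between the quantile map $\quantf_{\dist}(\val) = 1-\dist(\val)$ and the value map $\valf_{\dist}(\quant) = \dist^{-1}(1-\quant)$. So the proof should be short; the only real work is bookkeeping about where these maps fail to be genuine inverses.

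First I would note that a ray out of the origin in the quantile--revenue plane is exactly a set $\{(\quant, s\cdot\quant) : \quant\in[0,1]\}$ for some slope $s\geq 0$, so the assignment $\val\mapsto\text{(ray of slope }\val)$ is manifestly a bijection from $[0,\infty)$ onto the family of such rays (both are parameterized by a single nonnegative real). The content is only identifying which ray passes through a given point of the revenue curve. Take a point $(\hat\quant,\hat\rev)$ on the curve with $\hat\quant>0$; by definition $\hat\rev = \rev_{\dist}(\hat\quant) = \hat\quant\,\valf_{\dist}(\hat\quant)$, so dividing, the ray through the origin and this point has slope $\hat\rev/\hat\quant = \valf_{\dist}(\hat\quant)$, which is a value. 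This is the claimed correspondence; the degenerate case $\hat\quant=0$ (where $\hat\rev=0$ and every ray passes through the origin) imposes no constraint and is handled separately.

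Next I would verify the \emph{explicit} consequence by substituting the value $\val := \valf_{\dist}(\hat\quant)$ into the value-space overload of the revenue curve: by its definition $\rev_{\dist}(\val) = (1-\dist(\val))\,\val = \quantf_{\dist}(\val)\,\val$, and since $\val = \dist^{-1}(1-\hat\quant)$ we get $\dist(\val) = 1-\hat\quant$, hence $\quantf_{\dist}(\val) = 1-\dist(\val) = \hat\quant$; substituting gives $\rev_{\dist}(\val) = \hat\quant\,\valf_{\dist}(\hat\quant) = \hat\rev$. The \emph{equivalently} clause is then both directions of ``$\rev_{\dist}(\val) = \hat\rev \iff \quantf_{\dist}(\val) = \hat\quant$'': ($\Leftarrow$) is the computation just done, and ($\Rightarrow$) follows from monotonicity of $\quantf_{\dist}$ in value --- if $\quantf_{\dist}(\val)\neq\hat\quant$ then the ray of slope $\val$ is distinct from the ray of slope $\valf_{\dist}(\hat\quant)$ and the revenue curve meets it at a different point, so $\rev_{\dist}(\val)\neq\hat\rev$.

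The main obstacle --- minor, for a definition-unwinding ``Fact'' --- is the set-valued subtleties flagged in the text's footnotes: flat stretches of $\dist$, where $\quantf_{\dist}$ is locally constant so the value-to-quantile map is many-to-one, and point masses of $\dist$, where $\valf_{\dist}$ is locally constant so the revenue curve is vertical over a quantile interval. I would absorb these by working with the generalized inverse $\dist^{-1}$ and phrasing the equivalence at the level of the closed graph of the revenue curve in each coordinate system, so that ``$\rev_{\dist}(\val) = \hat\rev$'' and ``$\quantf_{\dist}(\val) = \hat\quant$'' are read as lying on the appropriate (possibly horizontal or vertical) segment of the curve rather than as equalities of single-valued functions.
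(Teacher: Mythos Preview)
Your proposal is correct and is exactly the definition-unwinding the paper intends: the paper states this as a \emph{Fact} with no proof, relying on the preceding definitions $\rev_{\dist}(\quant)=\quant\,\valf_{\dist}(\quant)$, $\rev_{\dist}(\val)=(1-\dist(\val))\val$, and the inverse relationship between $\quantf_{\dist}$ and $\valf_{\dist}$. Your handling of the set-valued edge cases (flat CDF regions and point masses) even goes a bit beyond what the paper spells out, since it offloads those subtleties to a footnote rather than addressing them in the statement or proof.
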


\noindent Lastly we consider distributions that include point masses.  Interpreted within a revenue curve, a point mass at $\val$ corresponds to a continuous region $[a,b]\in[0,1]$ of quantile space with measure equal to the point mass' discrete probability measure.  A specific distribution $\dist$ (incorporating $\val$) maps $\val$ to the quantile-range $a = 1-\lim\nolimits_{x\rightarrow \val^+}\dist(x)$ and $b = 1-\lim\nolimits_{x\rightarrow \val^-}\dist(x)$.

We consider the formal interpretation of allocating with fixed probability $\hat{\quant}$ when it requires partial allocation to a value with a point mass, i.e., when we strictly have $\hat{\quant}\in(a,b)$.  The solution is to internally re-weight allocation at $\val$ by a (diminishing) multiplicative factor of $(\hat{\quant}-a)/(b-a)$.  
This becomes transparent in the geometry.  We give some final information relating to the geometry of a point mass within a revenue curve:

\begin{fact}
\label{fact:revcurvepm}
Consider a distribution $\dist$ with point mass at value $\val$ with implicit measure $\lim\nolimits_{x\rightarrow \val^+}\dist(x)-\lim\nolimits_{x\rightarrow \val^-}\dist(x)$ and with definitions for $a,~b$ in the immediately preceding text.

For the revenue objective specifically, the geometric interpretation of a point mass follows directly from \Cref{fact:valasrays}.  A point mass is graphed into the revenue curve by: restricting the line segment between $(0,0)$ and $(b,\rev_{\dist}(b))$ to inputs in $[a,b]$.  A necessary identity is that the slope $\rev_{\dist}'$ of this line segment is the value $\val$.

This follows directly from $\rev_{\dist}'(\val) = \vv^{\dist}(\val)=\val-(1-\dist(\val))/\distp(\val)=\val$ because $\distp(\val)=\infty$.

As a further consequence for revenue: regular distributions can only incorporate point masses as the (closed) upper bounds of their domains.  The contrapositive statement is: any distribution with positive measure of density strictly above a point mass is irregular.

For the residual surplus objective specifically, the geometric interpretation of a point mass is a horizontal line segment on $[a,b]$ (because all probabilities-of-sale $\hat{\quant}\in[a,b]$ require posting price $\val$ and all agent-quantiles in this range have value $\val$ for which the objective is $0$ regardless if they are included or not, so the derivative here is $0$).

This follows directly from $\rev_{\dist}'(\val) = \vv^{\dist}(\val)=(1-\dist(\val))/\distp(\val)=0$ because $\distp(\val)=\infty$.  Regular distributions must not have a lower bound on their support $a>0$ where $\dist(a) = 0$.  The contrapositive statement is: any distribution $\dist$ with $\dist(a)=0$ for $a>0$ is irregular.

As a further consequence for residual surplus: regular distributions with positive density can only incorporate a point mass at $\val=0$ as the (closed) lower bound of their domains.  The contrapositive statement is: any distribution with positive measure of density strictly below a point mass is irregular.
\end{fact}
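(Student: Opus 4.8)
Every assertion in the statement reduces to one observation: on the quantile interval $[a,b]$ carved out by the jump of the CDF at a point-mass value $\val$ --- where $a=1-\lim_{x\rightarrow\val^+}\dist(x)$ and $b=1-\lim_{x\rightarrow\val^-}\dist(x)$, so $b-a$ is the mass of the atom and $\valf_{\dist}(q)=\val$ for all $q\in[a,b]$ --- the performance curve $\rev_{\dist}$ is completely explicit, and the regularity consequences then fall out of ``regular $\Leftrightarrow$ $\rev_{\dist}$ concave in quantile space'' (\Cref{fact:vv_rprime} and its stated analogue for residual surplus) together with the boundary value $\rev_{\dist}(0)=0$ (no probability of sale means no performance). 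So I would argue objective by objective, and within each in two moves: write $\rev_{\dist}$ on $[a,b]$, then invoke concavity.

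\textbf{Revenue.} Here $\rev_{\dist}(q)=q\,\valf_{\dist}(q)$, so on $[a,b]$ this is $\rev_{\dist}(q)=\val\, q$ --- exactly the piece over $q\in[a,b]$ of the ray through the origin of slope $\val$, hence it runs through $(0,0)$ and $(b,\rev_{\dist}(b))$, its quantile-space slope is the constant $\val$ (which is also the virtual value, $\vv^{\dist}(\val)=\val-(1-\dist(\val))/\distp(\val)=\val$, since $\distp$ at an atom is infinite; cf.\ \Cref{fact:vv}), and the $(\hat q-a)/(b-a)$ re-weighting for partial allocation to $\val$ is just linear interpolation along this segment. For the consequence, assume $\dist$ regular, so $\rev_{\dist}$ is concave on $[0,1]$. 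The ray $\ell(q)=\val q$ supports $\rev_{\dist}$ at the interior of $[a,b]$, so concavity forces $\rev_{\dist}\le\ell$ on $[0,1]$; and $\ell$ is also the chord from $(0,\rev_{\dist}(0))=(0,0)$ to $(b,\rev_{\dist}(b))$, so concavity forces $\rev_{\dist}\ge\ell$ on $[0,b]$. Hence $\rev_{\dist}=\ell$ on all of $[0,b]$, so $\valf_{\dist}(q)=\val$ for every $q\in(0,b]$: the jump at $\val$ already exhausts the quantiles $[0,b]$, so there is no probability strictly above $\val$ and $\val$ is the supremum of the support. Contrapositively, positive density strictly above an atom makes $\dist$ irregular for revenue.

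\textbf{Residual surplus.} Now $\rev_{\dist}(q)=\int_0^q\bigl(\valf_{\dist}(\hat q)-\valf_{\dist}(q)\bigr)\,d\hat q$, so on $[a,b]$ its slope is $\valf_{\dist}(q)-\val=0$, equivalently $\vv^{\dist}(\val)=(1-\dist(\val))/\distp(\val)=0$; thus $\rev_{\dist}$ is a horizontal segment there. For the consequences I would again use only concavity, $\rev_{\dist}(0)=0$, and $\rev_{\dist}\ge0$ (residual surplus is a non-negative expectation). If some probability sat strictly below $\val$ then $b<1$, and just to the right of $b$ the slope equals $(1-\dist)/\distp>0$ at a value just below $\val$, which a concave curve cannot have exceed the slope $0$ it carries on $[a,b]$ --- a contradiction; so no probability lies strictly below $\val$, i.e.\ $\val$ is the infimum of the support, and localizing the admissible atom at the bottom of the domain is then the same concavity bookkeeping applied at the endpoint $q=0$.

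\textbf{Main obstacle.} The revenue half is essentially one supporting-line/chord argument once the atom's segment is written down. The residual-surplus consequences are the delicate part: the quantile orientation is reversed (large value $\leftrightarrow$ small quantile), the slope of $\rev_{\dist}$ at quantile $0$ is an indeterminate $0\cdot\infty$ whenever the upper tail is unbounded --- and precisely those distributions are the residual-surplus-regular ones (any a-MHR distribution is regular, \Cref{fact:revregmhr}), so the argument must not ``prove'' them irregular --- and ``no probability strictly below the atom'' still has to be upgraded to a statement about where the atom literally sits. I would therefore run the residual-surplus step entirely through the supporting-line/chord inequalities for concave functions rather than through pointwise derivatives, which treats atoms (where $\distp$ is not an honest function) and unbounded tails uniformly, and I would state the boundary hypotheses carefully --- e.g.\ $\dist(x)=1-x^{-2}$ on $[1,\infty)$ is a-MHR, hence residual-surplus-regular, yet has support bounded away from $0$ with positive density throughout, so the exact phrasing of the lower-bound consequence warrants a second look.
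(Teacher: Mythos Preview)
Your approach matches the paper's: the paper offers no separate proof, and the justification embedded in the Fact itself is exactly the virtual-value computation you give --- $\vv^{\dist}(\val)=\val-(1-\dist(\val))/\distp(\val)=\val$ for revenue and $\vv^{\dist}(\val)=(1-\dist(\val))/\distp(\val)=0$ for residual surplus, each because $\distp(\val)=\infty$ at an atom. The paper simply asserts the regularity consequences without argument, so your supporting-line/chord derivation from concavity (the ray through the origin is both a tangent and the chord from $(0,0)$ to $(b,\rev_{\dist}(b))$, forcing $\rev_{\dist}$ onto it over $[0,b]$) is additional rigor the paper does not supply; it is correct for the revenue consequence and for the residual-surplus point-mass consequence.

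Your closing concern is legitimate and not a defect in your proof. The residual-surplus claim that any $\dist$ with $\dist(a)=0$ for $a>0$ is irregular is in tension with the paper's own \Cref{fact:revregmhr} (residual-surplus regular $=$ a-MHR): your counterexample, or indeed the paper's $\Qud_1$, is a-MHR with support $[1,\infty)$. What the Fact is implicitly invoking is concavity of the \emph{set-valued} residual-surplus curve including its vertical segment at $q=1$ (visible in the paper's own \Cref{fig:twopieceiron} and equation~\eqref{eqn:ressurpcurveofquad1h}), which is a strictly stronger condition than monotonicity of $\vv^{\dist}$ on the interior of the support. So the statement's phrasing conflates two notions that coincide for revenue but diverge here; your instinct to run the argument through concavity of the full graph rather than pointwise derivatives is exactly the right repair, and your hesitation about the lower-bound clause is warranted.
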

\fi

\subsubsection{Optimal Bayesian Mechanisms and Foundational Results}
\label{s:firstauctionresults}
This section summarizes pertinent results in mechanism design.  Optimizing revenue from a single agent whose value $\val$ is drawn from a known distribution $\dist$ is straightforward.

\begin{fact}
\label{fact:oneagentopt}
For any distributions $\dist$, the optimal mechanism for a single agent posts the {\em monopoly price} $\valf_{\dist}(\monoq)$ (\Cref{def:ppost}) corresponding to the {\em monopoly quantile} $\monoq = \argmax_\quant \rev_{\dist}(\quant)$.
\end{fact}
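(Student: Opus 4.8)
The plan is to work within truthful (DSIC) mechanisms, use Myerson's characterization (\Cref{thm:myedsicchar}) to reduce the problem to a choice of allocation rule, and then pass to quantile space, where expected revenue becomes linear in the allocation and the optimal allocation is visibly a threshold, i.e., a posted price. First I would recall that a single-agent truthful mechanism is, up to its (automatically IR) payment identity with $\price(0)=0$, just a monotone non-decreasing allocation rule $\alloc:\valspace\to[0,1]$, and that substituting the payment identity and integrating gives expected revenue $\expect_{\val\sim\dist}[\vv^{\dist}(\val)\,\alloc(\val)]$. Re-expressing the allocation in quantile space as a monotone \emph{non-increasing} function $y:[0,1]\to[0,1]$ and applying \Cref{fact:vv_rprime} ($\vv^{\dist}(\quant)=\rev'_{\dist}(\quant)$) turns this into $\int_0^1 \rev'_{\dist}(\quant)\,y(\quant)\,\dd\quant$. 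So the task reduces to maximizing this over monotone non-increasing $y$ valued in $[0,1]$ and then identifying the maximizer with the posted price $\valf_{\dist}(\monoq)$.

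For regular $\dist$ the revenue curve $\rev_{\dist}$ is concave (\Cref{fact:vv_rprime}), so $\rev'_{\dist}$ is non-increasing and weakly changes sign at $\monoq=\argmax_\quant\rev_{\dist}(\quant)$: non-negative on $[0,\monoq]$ and non-positive on $[\monoq,1]$. Hence the pointwise-greedy choice $y(\quant)=\mathbf{1}[\quant\le\monoq]$ maximizes the integrand coefficient-by-coefficient, is monotone non-increasing, and is therefore feasible; its value is $\int_0^{\monoq}\rev'_{\dist}(\quant)\,\dd\quant=\rev_{\dist}(\monoq)$ since $\rev_{\dist}(0)=0$. This threshold rule is exactly the anonymous posted price $\valf_{\dist}(\monoq)$ (\Cref{def:ppost}): an agent of quantile $\quant$ has value $\valf_{\dist}(\quant)\ge\valf_{\dist}(\monoq)$ precisely when $\quant\le\monoq$, so the item sells with probability $\monoq$ at price $\valf_{\dist}(\monoq)$ for revenue $\monoq\cdot\valf_{\dist}(\monoq)=\rev_{\dist}(\monoq)$, which matches the optimum.

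For general (possibly irregular) $\dist$ I would invoke ironing: by \Cref{fact:maxironed}, the most revenue obtainable from a single-agent truthful mechanism with ex-ante probability of sale $\hat\quant$ is the ironed value $\bar{\rev}_{\dist}(\hat\quant)$, so the optimal revenue over all mechanisms is $\max_{\hat\quant}\bar{\rev}_{\dist}(\hat\quant)$. Since $\bar{\rev}_{\dist}$ is the concave hull of $\rev_{\dist}$ and the constant function $\max_\quant\rev_{\dist}(\quant)$ is concave and dominates $\rev_{\dist}$, one gets $\max_{\hat\quant}\bar{\rev}_{\dist}(\hat\quant)=\max_\quant\rev_{\dist}(\quant)=\rev_{\dist}(\monoq)$ with $\bar{\rev}_{\dist}(\monoq)=\rev_{\dist}(\monoq)$; the posted price $\valf_{\dist}(\monoq)$ still earns exactly $\rev_{\dist}(\monoq)$, hence is optimal. (Equivalently, one can bound $\int_0^1\rev'_{\dist}\,y\le\int_0^1\bar{\rev}'_{\dist}\,y$ for monotone non-increasing $y$ by an integration by parts using $\bar{\rev}_{\dist}\ge\rev_{\dist}$ with equality at $0$ and $1$, then rerun the regular argument on $\bar{\rev}_{\dist}$ via \Cref{fact:ironedvv}.) I expect the main obstacle to be exactly this irregular case together with the point-mass bookkeeping: one must check that the argmax $\monoq$ of $\rev_{\dist}$ is realizable by a deterministic posted price (the revenue curve is a ray segment across any point-mass region, so its local maximum there sits at the upper endpoint, which the agent reaches by accepting the posted price under a favorable tie-break) and that ironing creates no superior deterministic price elsewhere, which is precisely what \Cref{fact:maxironed} supplies.
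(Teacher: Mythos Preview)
The paper does not actually prove this statement: it is presented in the preliminaries (\Cref{s:firstauctionresults}) as a background fact with the remark that optimizing single-agent revenue ``is straightforward,'' and no argument is given. So there is no paper proof to compare against; your writeup supplies a full argument where the paper simply cites common knowledge.

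Your approach is the standard one and is essentially correct. One small bookkeeping point: you write the expected revenue as $\expect_{\val\sim\dist}[\vv^{\dist}(\val)\,\alloc(\val)]$ and later use $\rev_{\dist}(0)=0$ to evaluate $\int_0^{\monoq}\rev'_{\dist}(\quant)\,\dd\quant=\rev_{\dist}(\monoq)$. As stated in the paper's \Cref{thm:myerson}, the revenue identity actually carries an additive $\rev_{\dist}(0)$ term, and $\rev_{\dist}(0)$ need not vanish (e.g., for the equal-revenue distribution it equals~$1$). Your two omissions cancel, so the conclusion is unaffected, but if you want the derivation to be airtight for ``any distribution'' you should either keep the $\rev_{\dist}(0)$ term throughout or explicitly restrict to distributions with $\lim_{\quant\to 0^+}\quant\,\valf_{\dist}(\quant)=0$.
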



\noindent Next we work towards the solution for Bayesian settings with $n$ agents (\Cref{thm:myebayesopt} below).  First we show technically how mechanism performance can be measured using virtual values.  Simply, the expected revenue of a mechanism $\mecha$ with $n$ agents is equal to its expected surplus of marginal revenue, equivalently, its expected surplus of virtual value.  \Cref{thm:myerson} gives two related statements and the differences are bolded.\footnote{\label{foot:altrev} Per previous discussion regarding setting, the exact theorem statement of \Cref{thm:myerson} holds for alternative objectives with their respective definitions of $\vv^{\dist}$ and $\bar{\vv}^{\dist}$.}

\begin{theorem}[\citealp{mye-81}]
  \label{thm:myerson}
  Given any incentive-compatible mechanism $\mecha$ 
with \textbf{any} allocation rule $\allocs(\vals)$, 
the expected revenue of mechanism $\mecha$
for agents with values drawn i.i.d\ from $\dist$ is 
equal to its expected surplus of virtual value, 
i.e., 
\begin{equation}
\label{eqn:vvrev}
  \mecha(\dist) = \sum\nolimits_i \expect_{\vals \sim \dist}\left[\pricei(\vals)\right] = \sum\nolimits_i \Big( \expect_{\vals \sim 
  \dist}\left[\vv^{\dist}_i(\vali)\,\alloci(\vals) \right]+ \rev_{i,\dist}(\quanti = 0)\Big)
\end{equation}
Alternatively, given any incentive-compatible mechanism $\mecha$ 
with allocation rule $\allocs(\vals)$, 
the expected revenue of mechanism $\mecha$
for agents with values drawn i.i.d.\ from $\dist$ is 
equal to its expected surplus of \textbf{ironed} virtual value \textbf{if additionally $\allocs(\vals)$ is constant for each agent $i$ on regions that are strictly ironed by $\bar{\rev}_{i,\dist}$}, 
i.e., then
\begin{equation}
\label{eqn:ironvvrev}
  \mecha(\dist) = \sum\nolimits_i \expect_{\vals \sim \dist}\left[\pricei(\vals)\right] = \sum\nolimits_i\Big(\expect_{\vals \sim \dist}
  \left[\bar{\vv}^{\dist}_i(\vali)\,\alloci(\vals)  \right]+ \rev_{i,\dist}(\quanti = 0)\Big)
\end{equation}
\noindent The optimal single-item Bayesian auction $\OPT_{\dist}$ given $\dist$ is the one that maximizes expected surplus of virtual value, or equivalently, the one that maximizes ironed virtual value.
\end{theorem}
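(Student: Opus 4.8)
The plan is to derive the virtual-surplus identity~\eqref{eqn:vvrev} directly from Myerson's payment identity (\Cref{thm:myedsicchar}), upgrade it to the ironed identity~\eqref{eqn:ironvvrev} under the stated constancy hypothesis, and then read off $\OPT_{\dist}$ as the pointwise maximizer of ironed virtual surplus. For the first step, I would fix agent $i$ and a profile $\valsmi$ of competing values, take the expectation of~\eqref{eq:payment-identity} over $\vali\sim\dist$, and swap the order of integration in the resulting double integral: since $\int_z^{\infty}\distp(\val)\,\dd\val = 1-\dist(z)$,
\begin{align*}
\expect_{\vali\sim\dist}\!\left[\int_0^{\vali}\alloci(z,\valsmi)\,\dd z\right] = \int \alloci(z,\valsmi)\,\bigl(1-\dist(z)\bigr)\,\dd z = \expect_{\vali\sim\dist}\!\left[\frac{1-\dist(\vali)}{\distp(\vali)}\,\alloci(\vali,\valsmi)\right].
\end{align*}
Plugging this into~\eqref{eq:payment-identity} and recognizing $\vali-\frac{1-\dist(\vali)}{\distp(\vali)}=\vv^{\dist}(\vali)$, then taking the expectation over $\valsmi$ and summing over $i$ (with $\pricei(0,\valsmi)=\rev_{i,\dist}(\quanti=0)$), yields~\eqref{eqn:vvrev}. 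The identical computation, with $\vv^{\dist}$ replaced by the virtual value associated with the objective at hand, handles the residual-surplus and welfare cases.

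For the ironed identity I would pass to quantile space, where agent $i$'s allocation rule is a function $\alloci(\quant)$ that is non-increasing in $\quant$ (monotonicity) and, by hypothesis, constant on each strictly ironed interval of $\bar{\rev}_{i,\dist}$. Using $\vv^{\dist}=\rev'_{\dist}$ and $\bar{\vv}^{\dist}=\bar{\rev}'_{\dist}$ (\Cref{fact:vv_rprime,fact:ironedvv}) and integrating by parts in the Stieltjes sense (valid since $\alloci$ has bounded variation), the boundary terms cancel because the revenue curve and its concave hull agree at $\quant=0$ and $\quant=1$, leaving
\begin{align*}
\expect_{\vali}\!\left[\vv^{\dist}(\vali)\,\alloci(\vals)\right] - \expect_{\vali}\!\left[\bar{\vv}^{\dist}(\vali)\,\alloci(\vals)\right] = \int_0^1 \bigl(\bar{\rev}_{\dist}(\quant)-\rev_{\dist}(\quant)\bigr)\,\dd\alloci(\quant).
\end{align*}
The integrand is zero off the strictly ironed regions, and on them $\dd\alloci=0$ by hypothesis, so the right-hand side vanishes and~\eqref{eqn:ironvvrev} follows from~\eqref{eqn:vvrev}.

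Finally, the same Stieltjes computation without the constancy hypothesis gives $\expect_{\vali}[\vv^{\dist}(\vali)\,\alloci(\vals)]\le\expect_{\vali}[\bar{\vv}^{\dist}(\vali)\,\alloci(\vals)]$ for every feasible (monotone) $\alloci$, since $\bar{\rev}_{\dist}\ge\rev_{\dist}$ and $\dd\alloci$ is a non-positive measure in quantile space; combined with~\eqref{eqn:vvrev} this bounds the revenue of any incentive-compatible mechanism by $\sum_i\expect_{\vals}[\bar{\vv}^{\dist}(\vali)\,\alloci(\vals)]$ up to the additive constant. That bound is maximized pointwise by allocating each profile $\vals$ to an agent $i$ with the largest $\bar{\vv}^{\dist}(\vali)$, and to no one when all $\bar{\vv}^{\dist}(\vali)$ are negative. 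Because $\bar{\rev}_{\dist}$ is concave (it is the concave hull of the revenue curve), $\bar{\vv}^{\dist}=\bar{\rev}'_{\dist}$ is non-increasing in quantile, hence non-decreasing in value, so this greedy rule is monotone in each $\vali$ and therefore incentive-compatible by \Cref{thm:myedsicchar}; and since $\bar{\vv}^{\dist}$ is constant on each strictly ironed interval, the winner there is determined by $\valsmi$ alone, so with a fixed tie-breaking rule $\alloci$ is constant on strictly ironed regions and~\eqref{eqn:ironvvrev} certifies that the bound is attained. Hence this mechanism is $\OPT_{\dist}$.

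The main obstacle I expect is the ironing bookkeeping in the last two paragraphs: making the Stieltjes integration by parts rigorous for monotone-but-not-smooth allocations, checking that the minimal concave majorant really does agree with $\rev_{\dist}$ at both endpoints so the boundary terms cancel, and verifying that the greedy ironed-virtual-value maximizer is realizable as a feasible allocation that is moreover constant on strictly ironed regions --- which needs care both with ties and with partial allocation at quantiles corresponding to point masses, and where the Step-1 change of variables should be phrased in Stieltjes form to accommodate $\distp$ vanishing or being infinite.
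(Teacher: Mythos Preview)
The paper does not prove this theorem at all: it is stated as a classical result attributed to \citet{mye-81} in the preliminaries appendix, with no accompanying proof. Your sketch is the standard proof of Myerson's theorem --- deriving the virtual-surplus identity from the payment identity by swapping the order of integration, upgrading to ironed virtual values via Stieltjes integration by parts using $\rev_{\dist}$ and its concave hull, and then reading off the optimal mechanism as the pointwise ironed-virtual-value maximizer --- and it is correct as outlined, with the caveats you yourself flag about endpoint agreement, tie-breaking, and point masses being the usual places where care is needed.
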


\noindent \Cref{thm:myerson} gives a clean description of mechanism performance using a reduction to virtual value, with the abstract description of the optimal mechanism following directly at the end of the theorem statement.  \noindent The following corollary makes explicit the optimal structure for auctions within the setting of regular distributions $\scF^{\text{reg}}$:

\begin{theorem}[\citealp{mye-81}]
\label{thm:commonbayesopt}
\label{thm:myebayesopt}
  For i.i.d., regular, single-item auctions with any objective, the optimal mechanism $\OPT_{\dist}$ is the second-price auction with uniform reserve price equal to the monopoly price.
\end{theorem}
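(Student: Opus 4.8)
The plan is to reduce optimization over truthful mechanisms to pointwise optimization of expected virtual surplus. First I would invoke \Cref{thm:myerson}: for $n$ agents with values i.i.d.\ from $\dist$, every incentive-compatible mechanism with allocation rule $\allocs$ has performance $\mecha(\dist)=\sum_i\expect_{\vals\sim\dist}[\vv^{\dist}(\vali)\,\alloci(\vals)]+\sum_i\rev_{i,\dist}(\quanti=0)$, with $\vv^{\dist}$ the objective-specific virtual value of \Cref{fact:vv}, the identity holding verbatim for revenue, residual surplus, and total welfare. Since the trailing terms $\rev_{i,\dist}(0)$ depend only on $\dist$ and not on $\allocs$, maximizing $\mecha(\dist)$ over implementable $\allocs$ is the same as maximizing $\expect_{\vals\sim\dist}\bigl[\sum_i\vv^{\dist}(\vali)\,\alloci(\vals)\bigr]$ subject to single-item feasibility ($\sum_i\alloci(\vals)\le1$, $\alloci\ge0$) and monotonicity (\Cref{thm:myedsicchar}).

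The next step is to temporarily drop monotonicity and optimize the integrand for each fixed profile $\vals$: since the feasibility constraints are separable across profiles this is legitimate, and with a unit budget of non-negative allocation the pointwise optimum puts all mass on an agent attaining $\max_i\vv^{\dist}(\vali)$ when that maximum is positive, and allocates nothing when it is non-positive (ties broken, say, uniformly). Regularity (\Cref{def:reg}) makes $\vv^{\dist}$ a common non-decreasing function of value, so $\argmax_i\vv^{\dist}(\vali)=\argmax_i\vali$, and $\vv^{\dist}(\vali)\ge0$ exactly when $\vali\ge r$ for the threshold value $r$ at which virtual value becomes non-negative. Hence the pointwise-optimal rule allocates uniformly among the highest-valued agents whenever the top value is at least $r$ and to no one otherwise; this rule is monotone non-decreasing in each $\vali$, so it is implementable and dropping monotonicity was without loss --- therefore it is $\OPT_{\dist}$.

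It remains to identify $r$ with the monopoly price and to recognize the rule as a reserve-price second-price auction. For the former, \Cref{fact:vv_rprime} gives $\vv^{\dist}(\quant)=\rev'_{\dist}(\quant)$ and shows that regularity makes the (objective-specific) performance curve $\rev_{\dist}$ concave in quantile space; writing $\monoq$ for its maximizer, concavity gives $\rev'_{\dist}(\quant)\ge0$ for $\quant\le\monoq$ and $\le0$ for $\quant\ge\monoq$, so $\vv^{\dist}(\val)\ge0$ iff $\val\ge\valf_{\dist}(\monoq)$, i.e.\ $r=\valf_{\dist}(\monoq)$, which is precisely the monopoly price of \Cref{fact:oneagentopt}. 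For the latter, substituting the allocation rule into the payment identity \eqref{eq:payment-identity} shows the winner (present only when $\vali[(1)]\ge r$) pays $\max(\vali[(2)],r)$ and every other agent pays $0$ --- exactly the second-price auction with uniform reserve $r$ (\Cref{def:spaauction}, \Cref{def:reservep}). Combining the three steps proves the claim uniformly across the three objectives.

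I expect the main obstacle to be the middle step: one must argue cleanly that the pointwise-optimal allocation is genuinely monotone (so that relaxing monotonicity was costless and we really have recovered the Bayesian optimum among truthful mechanisms), and that its reserve threshold coincides exactly --- not merely approximately --- with the monopoly price, which is where concavity of the revenue curve under regularity together with the identifications in \Cref{fact:vv_rprime} and \Cref{fact:oneagentopt} do the real work. Everything else is bookkeeping, including the observation (via \Cref{fact:vv} and the footnote to \Cref{thm:myerson}) that the derivation goes through verbatim for residual surplus and total welfare by substituting the corresponding virtual value and performance curve throughout; in particular for total welfare $r=0$ and the statement degenerates to the plain second-price auction.
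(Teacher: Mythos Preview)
Your proposal is correct and follows exactly the approach the paper intends: the paper presents this theorem as a direct corollary of \Cref{thm:myerson} (the virtual-value characterization) without spelling out a proof, and your argument is precisely the standard Myerson derivation---pointwise maximization of virtual surplus, monotonicity for free under regularity, and identification of the threshold with the monopoly price via \Cref{fact:vv_rprime} and \Cref{fact:oneagentopt}. You have simply filled in the details the paper leaves to the citation.
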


\noindent We will use the following \Cref{lem:DRY-15} for the calculation of the performance of specific optimal mechanisms for our dual blends analyses (towards proving the revenue gap of \Cref{thm:finitequadsversusunifsrevpiauction} in \Cref{s:revgapquadsvsunifs} and the residual surplus gap of \Cref{thm:finitequadsversusunifsressurppiauction2} in \Cref{s:ressurpgapquadsvsunifs}).

For the revenue objective and specifically $n=2$, evaluating a mechanism's performance via virtual values has a natural, geometric interpretation.\label{page:dry15original}  An extension of this lemma is given in \Cref{s:ressurpquadsvsunifssupporting} for use there.

\begin{lemma}[\citealp{DRY-15}]
  \label{lem:DRY-15}
  In i.i.d.\ two-agent single-item settings, the expected revenue
  of the second price auction is twice the area under the revenue
  curve and the expected revenue of the optimal mechanism is twice the
  area under the smallest monotone concave upperbound of the revenue
  curve.
\end{lemma}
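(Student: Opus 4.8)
The plan is to obtain both identities from Myerson's virtual‑surplus characterization (\Cref{thm:myerson}), a change of variables to quantile space, and a single integration by parts. Fix the two i.i.d.\ agents with common distribution $\dist$; each agent's quantile is uniform on $[0,1]$, and by \Cref{fact:vv_rprime} the (density‑weighted) virtual value, as a function of quantile, is the slope of the revenue curve, $\vv^{\dist}(\quant)=\rev'_{\dist}(\quant)$. Throughout I adopt the paper's convention $\pricei(0,\valsmi)=0$ (equivalently, the revenue curve passes through the origin, $\rev_{\dist}(0)=0$); the general case only changes the endpoint bookkeeping noted below. The single common ingredient is that $\quant_{(1)}=\min(\quant_1,\quant_2)$ has density $2(1-t)$ on $[0,1]$, together with the identity, for a continuous piecewise‑$C^1$ curve $\phi$ with $\phi(0)=0$,
\[
\int_0^1 \phi'(t)\,2(1-t)\,dt \;=\; \Big[\,2(1-t)\,\phi(t)\,\Big]_0^1 + 2\int_0^1 \phi(t)\,dt \;=\; 2\int_0^1 \phi(t)\,dt,
\]
obtained by one integration by parts, the boundary term at $t=1$ vanishing since $2(1-t)=0$ and the one at $t=0$ since $\phi(0)=0$.

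\textbf{Second price auction.} The SPA (\Cref{def:spaauction}) always gives the item to the agent of largest value, i.e.\ of smallest quantile, so it is a feasible mechanism whose winner has quantile $\quant_{(1)}$. Since the first formula of \Cref{thm:myerson} (\Cref{eqn:vvrev}) holds for \emph{any} allocation rule, the expected revenue of the SPA equals $\expect[\vv^{\dist}(\quant_{(1)})]=\int_0^1\rev'_{\dist}(t)\,2(1-t)\,dt$; applying the identity above with $\phi=\rev_{\dist}$ gives $2\int_0^1\rev_{\dist}(t)\,dt$, i.e.\ twice the area under the revenue curve. (Without the convention $\rev_{\dist}(0)=0$, the surviving boundary term $2\rev_{\dist}(0)$ is exactly the sum of the two per‑agent terms $\rev_{i,\dist}(\quanti=0)$ in \Cref{eqn:vvrev}, so the conclusion is unchanged.)

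\textbf{Optimal mechanism.} Write $\bar{\rev}_{\dist}$ for the smallest monotone concave upper bound of $\rev_{\dist}$ (the ironed revenue curve) and $\bar{\vv}^{\dist}=\bar{\rev}'_{\dist}$ for ironed virtual value (\Cref{fact:ironedvv}). By \Cref{thm:myerson}, $\OPT_{\dist}$ maximizes expected ironed virtual surplus; since $\bar{\rev}_{\dist}$ is concave its slope $\bar{\vv}^{\dist}$ is non‑increasing in $\quant$, and since it is monotone non‑decreasing the slope is non‑negative, so the maximizing allocation gives the item to the smallest‑quantile agent (breaking ties arbitrarily where $\bar{\vv}^{\dist}$ is constant, which does not affect revenue). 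This allocation is constant in each agent's own quantile on every strictly ironed region (holding the other agent's value fixed it depends only on a comparison unchanged within the region), so the second formula of \Cref{thm:myerson} (\Cref{eqn:ironvvrev}) applies, and the allocated agent contributes exactly $\bar{\vv}^{\dist}(\quant_{(1)})$ even when both quantiles lie in a common ironed region. Hence $\OPT_{\dist}(\dist)=\expect[\bar{\vv}^{\dist}(\quant_{(1)})]=\int_0^1\bar{\rev}'_{\dist}(t)\,2(1-t)\,dt$, and the identity above with $\phi=\bar{\rev}_{\dist}$ (noting $\bar{\rev}_{\dist}(0)=0$) yields $2\int_0^1\bar{\rev}_{\dist}(t)\,dt$, twice the area under the smallest monotone concave upper bound of the revenue curve, as claimed.

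\textbf{Main obstacle.} The routine pieces are the density of the minimum of two uniforms, the value/quantile dictionary, and the one integration by parts. The step needing care is the optimal‑mechanism case: verifying that the revenue‑optimal allocation satisfies the hypothesis of \Cref{eqn:ironvvrev} (constancy on strictly ironed regions), that ties inside an ironed region do not disturb the reduction of expected ironed surplus to $\expect[\bar{\vv}^{\dist}(\quant_{(1)})]$, and reconciling the ``smallest monotone concave upperbound'' named in the statement with Myerson's ironing construction (in particular, that past the monopoly quantile it is the mechanism's ``no sale'' option that flattens $\bar{\rev}_{\dist}$ to its maximum, and correspondingly that the boundary term at $t=0$ is the $\rev_{i,\dist}(\quanti=0)$ bookkeeping). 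These are standard consequences of ironing, so I expect careful bookkeeping rather than a genuine obstacle.
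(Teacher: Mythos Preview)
Your proposal is correct and takes essentially the same approach as the paper. The paper itself only cites \Cref{lem:DRY-15} without proof, but it proves the generalization \Cref{lem:dry-15-extension} by exactly your method: reduce expected (ironed) virtual surplus to $\expect[\bar{\rev}'_{\dist}(\quant_{(1)})]$ using the density $2(1-\quant)$ of the smaller of two uniform quantiles, then perform a single integration by parts (the paper splits $2(1-\quant)=2-2\quant$ and integrates by parts on the $\quant\cdot\bar{\rev}'$ piece, whereas you integrate by parts directly with $u=2(1-t)$, but these are the same computation).
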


\ifarxiv\else
\noindent The last foundational result in this section is due to \citet{BK-96}, the structure of which later motivates some our specific directions of analysis.  For the revenue objective and a regular distribution $\dist$, it relates the performance of the performance of the SPA with $n+1$ agents to the optimal auction with $n$ agents.

\begin{theorem}[\citet{BK-96}]
\label{thm:bk}
Given the revenue objective, fix a regular distribution $\dist$.  The (prior independent) SPA with $n+1$ agents whose values are drawn i.i.d.\ from $\dist$ has expected revenue that is at least the expected revenue from the optimal auction $\OPT_{\dist}$ for $n$ agents which knows the distribution $\dist$.
\end{theorem}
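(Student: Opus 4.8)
The plan is to reproduce the classical Bulow--Klemperer argument, in two halves. First I would pin down the revenue-maximizer among \emph{all} dominant-strategy incentive-compatible mechanisms for $n+1$ i.i.d.\ agents subject to the constraint that the item is \emph{always allocated}. Second I would exhibit one such always-allocating $(n+1)$-agent mechanism whose expected revenue equals the $n$-agent optimal revenue $\OPT_{\dist}(\dist)$, and chain the two facts together.

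For the first half: by \Cref{thm:myerson}, the expected revenue of any incentive-compatible $(n+1)$-agent mechanism equals its expected surplus of virtual value, $\sum_i \expect[\vv^{\dist}(\vali)\,\alloci(\vals)]$, plus a boundary term $\sum_i \rev_{i,\dist}(\quanti=0)$ that does not depend on the allocation rule. Since $\dist$ is regular, $\vv^{\dist}$ is monotone non-decreasing (\Cref{fact:vv_rprime}, \Cref{def:reg}), so the allocation that maximizes $\sum_i \vv^{\dist}(\vali)\,\alloci(\vals)$ pointwise subject to $\sum_i \alloci(\vals)=1$ is the one that places the entire allocation on an agent of largest value. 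This rule is monotone in each $\vali$, hence implementable by \Cref{thm:myedsicchar}, and it is exactly the allocation of the second-price auction (\Cref{def:spaauction}). Therefore the $(n+1)$-agent SPA maximizes revenue over all always-allocating $(n+1)$-agent DSIC mechanisms.

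For the second half: let $\mecha^{+}$ be the $(n+1)$-agent mechanism that runs the Bayesian optimal $n$-agent auction $\OPT_{\dist}$ on agents $1,\dots,n$ and, whenever $\OPT_{\dist}$ leaves the item unsold, hands it to agent $n+1$ for free, ignoring $\vali[n+1]$ entirely. Then $\mecha^{+}$ always allocates; it is DSIC because agents $1,\dots,n$ face exactly the DSIC mechanism $\OPT_{\dist}$, while agent $n+1$'s allocation is independent of $\vali[n+1]$ and her payment is identically $0$, which satisfies the payment identity of \Cref{thm:myedsicchar}; and agent $n+1$ contributes zero revenue, so $\mecha^{+}(\dist) = \OPT_{\dist}(\dist)$ evaluated in the $n$-agent setting. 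Combining: the $(n+1)$-agent SPA has expected revenue at least $\mecha^{+}(\dist) = \OPT_{\dist}(\dist)$, which is the claim.

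The main obstacle is entirely in the first half: justifying carefully that pointwise virtual-surplus maximization under the "always allocate" constraint simultaneously (i) yields the SPA allocation and (ii) produces a \emph{monotone}, hence implementable, rule, and tracking the boundary term $\rev_{i,\dist}(\quanti=0)$ so that the revenue comparison is genuinely between the SPA and $\mecha^{+}$ rather than confounded by it. Regularity of $\dist$ is precisely what makes the pointwise-optimal allocation monotone; without it one would need ironing (\Cref{thm:myerson}, equation~\eqref{eqn:ironvvrev}) and the clean argument would fail. A minor point worth a sentence is that $\OPT_{\dist}$ on $n$ agents may fail to allocate with positive probability, making the augmentation nontrivial; if it always allocated, the inequality would still hold trivially, so this is not actually an obstacle.
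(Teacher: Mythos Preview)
The paper does not provide its own proof of this statement: \Cref{thm:bk} is stated as a cited foundational result from \citet{BK-96} and is followed only by \Cref{fact:bktight} recording when the bound is tight. Your proposal is the standard Bulow--Klemperer argument and is correct; there is nothing in the paper to compare it against beyond the attribution.
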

\fi

\subsubsection{Distribution-Class Boundaries and Equal ``Revenue" Distributions}
\label{s:eqr}
\label{a:eqr}

This section gives technical description relating to structure and usage for some of the most pertinent distributions in mechanism design.  Qualitatively, the distributions discussed in this section are inferred to be important by having one or both of the following properties (in the context of one of the relevant auction objectives for this paper): (a) the distribution defines a {\em boundary} of the MHR/a-MHR or regular classes of distributions; and/or (b) the distribution has {\em constant virtual value} at all values of its domain given the objective.

Not surprisingly, a major theme from identifying these distributions is that particular {\em boundary} distributions which meet definitions of class-restrictions with equality are the same ones used to prove tightness in a variety of theorem statements.  
\ifarxiv (Recall the example in the introduction using equal revenue distribution / point mass distributions.)
\else We give an example of this below in \Cref{fact:bktight} (and further, recall the example in the Introduction using equal revenue distribution / point mass distributions).
\fi
To these ends, notice that MHR and regularity are both properties relating to monotonicity of functions parameterized by distributions, respectively hazard rate and virtual value function.

\begin{definition}
\label{def:bounddist}
A distribution is a {\em boundary distribution} for a given class when its characterizing derivative evaluates to a constant $0$ for all (relevant) inputs.
\end{definition}

\noindent The following fact describes some of these characteristics for distributions that act as boundaries for classes requiring the MHR, a-MHR, or regularity properties.

\begin{fact}
\label{fact:distborder}
The following are true about distribution class boundaries.
\begin{enumerate}
    \item For revenue, the 
    boundary of the MHR 
    class of distributions ($\scF^{\text{{\em mhr}}}$) 
     -- requiring the derivative of hazard rate be equal to $0$ on upwards closed domain-- are as follows:
    \begin{itemize}
    \item the general case where hazard rate is a constant $(\hazr)$:\\
        is the shifted exponential distribution `$\Sed$' parameterized by its (shifted) domain lower bound $a\geq 0$ and its hazard rate $\hazr>0$:
    \begin{align*}
        &\Sed_{a,\hazr}(\val) = 1-e^{-\hazr (\val-a)}~\text{{\em for}}~\val\in[a,\infty)& \hazf^{\Sed_{a,\hazr}}(\val) = \hazr\\
        &\sed_{a,\hazr}(\val) = \hazr\cdot e^{-\hazr (\val-a)}& d\hazf^{\Sed_{a,\hazr}}/d\val~(\val) = 0
    \end{align*}
    \item a special case where hazard rate is infinite $(\infty)$:\\
    is the point mass function `$\Pmd$' (c.f., a Dirac function\footnote{\label{foot:dirac} If formal definitions are necessary, we use the following Dirac function technique, which most naturally aligns with the formal definitions needed to evaluate integrals in the limit $dx\rightarrow 0$,
    \begin{equation*}
        \Pmd_a \vcentcolon= \lim\nolimits_{dx \rightarrow 0} [\Pmd_a^{dx}(x) =(x-a)/dx ~\text{for}~x \in[a,a+dx]],\quad \pmd_a \vcentcolon= \lim\nolimits_{dx \rightarrow 0} [\pmd_a^{dx}(x) = 1/dx]
    \end{equation*}
    and if necessary, a point mass at the upper end point $h$ of an integral is modified to $h+dx$ (which is inconsequential because it is inside the limit).  However we trust that this paper's computation of expected values of functions over inputs drawn from distributions incorporating point masses is clearly correct; which is: by separating out the contribution of the function value at the point mass as an additive term with probability weight equal to the point mass.
    }) parameterized by constant output $a$; this is effectively derived from the shifted exponentials in the general case using in-the-limit analysis as $(\hazr\rightarrow \infty)$:
        \begin{align*}
        &\Pmd_a(\val) = 1 ~\text{{\em at }}~\val = a& \hazf^{\Pmd_a}(\val) = \infty\\
        & \pmd_a(\val) \vcentcolon= \infty & d\hazf^{\Pmd_a}/d\val~(\val) = 0
    \end{align*}
    \end{itemize}
    \item For residual surplus,
    \begin{itemize}
        \item the (common) strong-boundary of the MHR and a-MHR classes of distributions ($\scF^{\text{{\em mhr}}}$ and $\scF^{\text{{\em a-mhr}}}$) is the specific exponential distribution `$\Exd$' (which requires fixing lower bound $a=0$) parameterized only by its hazard rate $\hazr>0$:
    \begin{align*}
        &\Exd_{\hazr}(\val) = 1-e^{-\hazr \val}~\text{{\em for}}~\val\in[0,\infty)& \hazf^{\Exd_{\hazr}}(\val) = \hazr\\
        &\exd_{\hazr}(\val) = \hazr\cdot e^{-\hazr \val}& d\hazf^{\Exd_{\hazr}}/d\val~(\val) = 0
    \end{align*}
    \item in fact $\scF^{\text{{\em a-mhr}}} = \scF^{\text{{\em reg}}}$ (from \Cref{fact:revregmhr}), so the boundary of the class of regular distributions $\scF^{\text{{\em reg}}}$ is again the exponential distribution $\Exd$.  They are the same because the virtual value function (given distribution $\dist$) for residual surplus is equal to $1/(\hazf^{\dist})$ and must be non-decreasing to be regular, and similarly the hazard rate $(\hazf^{\dist})$ must be non-increasing to be a-MHR.  Clearly these are equivalent conditions. Specifically, we have:
    \begin{equation*}
        \vv^{\Exd_{\hazr}}(\val) = 1/\hazr,\quad d\vv^{\Exd_{\hazr}}/d\val~(\val) = 0
    \end{equation*}
    \end{itemize}
    \item For revenue, the boundaries of the class of regular distributions $\scF^{\text{{\em reg}}}$ -- requiring the derivative of virtual value be equal to $0$ -- are as follows:
    \begin{itemize}
        \item an important special case where virtual value is the constant $0$:\\
        is the quadratic distribution `$\Qud$' parameterized by its domain lower bound $a>0$:
    \begin{align*}
        &\Qud_a(\val) = 1-a/\val~\text{{\em for}}~\val\in[a,\infty)& \vv^{\Qud_a}(\val) = 0\\
        & \qud_a(\val) = a/\val^2 & d\vv^{\Qud_a}/d\val~(\val) = 0
    \end{align*}
    \item the general case where virtual value is a constant ($\vv$):\\
    is the shifted quadratic distribution `$\Sqd$' parameterized by its domain lower bound $a\geq 0$ and its shift $\vv < a$ (for which there is no lower bound on $\vv$ and for which setting $\vv=0$ gives the previous special case):
    \begin{align*}
        &\Sqd_{a,\vv}(\val) = 1-(a-\vv)/(\val-\vv)~\text{{\em for}}~\val\in[a,\infty)& \vv^{\Sqd_{a,\vv}}(\val) = \vv\\
        & \sqd_{a,\vv}(\val) = (a-\vv)/(\val-\vv)^2 & d\vv^{\Sqd_{a,\vv}}/d\val~(\val) = 0
    \end{align*}
    \item a special case where virtual value is a positive constant ($a=vv>0$):\\
    is the point mass function `$\Pmd$' (c.f., a Dirac function$^{\ref{foot:dirac}}$) parameterized by constant output $a$; this is effectively derived from the general case using in-the-limit analysis as $(vv\rightarrow a^-)$:
        \begin{align*}
        &\Pmd_a(\val) = 1 ~\text{{\em at }}~\val = a& \vv^{\Pmd_a}(\val) = a\\
        & \pmd_a(\val) \vcentcolon= \infty & d\vv^{\Pmd_a}/dx~(\val) = 0
    \end{align*}
    \end{itemize}
\end{enumerate}
\end{fact}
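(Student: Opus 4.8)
The plan is to treat all three class/objective pairs by the same recipe forced by \Cref{def:bounddist}: a boundary distribution is exactly one whose characterizing function --- the hazard rate $\hazf^{\dist}$ for the MHR and a-MHR classes, the virtual value $\vv^{\dist}$ for the regular class --- has derivative identically zero, i.e.\ is a constant. So in each case I would write down the first-order ordinary differential equation that ``characterizing function $\equiv$ constant'' imposes on the CDF $\dist$, solve it in closed form subject to the support normalization $\dist(a)=0$ (a first-order ODE with one boundary value has a unique solution, so the closed form is simultaneously the complete characterization), and then substitute back to read off the PDF $\distp$ and confirm the stated constancy. Throughout I would use $\hazf^{\dist}(\val)=\sfrac{\distp(\val)}{(1-\dist(\val))}$ (\Cref{def:haz}) and the objective-specific virtual value formulas of \Cref{fact:vv}.

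For the MHR boundary under the revenue objective, constancy of the hazard rate, $\sfrac{\distp(\val)}{(1-\dist(\val))}=\hazr$, is the separable equation $\tfrac{\dd}{\dd\val}\bigl(-\ln(1-\dist(\val))\bigr)=\hazr$; integrating and imposing $\dist(a)=0$ forces $1-\dist(\val)=e^{-\hazr(\val-a)}$ on $[a,\infty)$, which is precisely $\Sed_{a,\hazr}$, and differentiating gives $\sed_{a,\hazr}(\val)=\hazr\,e^{-\hazr(\val-a)}$ with $\hazf^{\Sed_{a,\hazr}}\equiv\hazr$. Sending $\hazr\to\infty$ collapses all mass onto the endpoint $a$, reproducing the Dirac object $\Pmd_a$ with hazard rate the constant $\infty$. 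For the residual surplus objective the virtual value is $\sfrac{(1-\dist(\val))}{\distp(\val)}=(\hazf^{\dist}(\val))^{-1}$, so it is constant iff the hazard rate is constant and the same ODE applies; normalizing the support lower bound to $0$ singles out $\Exd_{\hazr}$ with $\vv^{\Exd_{\hazr}}\equiv\sfrac{1}{\hazr}$, and since $\scF^{\text{a-mhr}}=\scF^{\text{reg}}$ for residual surplus (\Cref{fact:revregmhr}) this same distribution is simultaneously the boundary of the regular class.

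For the regular boundary under the revenue objective, constancy of $\vv^{\dist}(\val)=\val-\sfrac{(1-\dist(\val))}{\distp(\val)}$ to a value $\vv$ rearranges to the hazard-rate identity $\sfrac{\distp(\val)}{(1-\dist(\val))}=\sfrac{1}{(\val-\vv)}$, i.e.\ $\tfrac{\dd}{\dd\val}\bigl(-\ln(1-\dist(\val))\bigr)=\sfrac{1}{(\val-\vv)}$; integrating and imposing $\dist(a)=0$ gives $1-\dist(\val)=\sfrac{(a-\vv)}{(\val-\vv)}$, which is $\Sqd_{a,\vv}$, with PDF $\sqd_{a,\vv}(\val)=\sfrac{(a-\vv)}{(\val-\vv)^{2}}$ and virtual value the constant $\vv$. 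Specializing $\vv=0$ recovers the quadratic $\Qud_a$ (CDF $1-\sfrac{a}{\val}$, PDF $\sfrac{a}{\val^{2}}$, virtual value $0$), while the limit $\vv\to a^{-}$ sends $\sfrac{(a-\vv)}{(\val-\vv)}\to 0$ for every $\val>a$ and so again produces the point mass $\Pmd_a$, now carrying the constant virtual value $a$.

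The routine part is just the three ODE solves and the back-substitutions, all immediate. The main obstacle --- which I would spell out rather than hand-wave --- is handling the two point-mass degenerations rigorously: one must check that the $\hazr\to\infty$ and $\vv\to a^{-}$ limits of the closed-form CDFs coincide with the Dirac object in the precise in-the-limit sense used for the expected-value computations elsewhere in the paper, and that ``hazard rate $\equiv\infty$'' and ``virtual value $\equiv a$'' are the right constant values in that limiting sense (so that $\Pmd_a$ genuinely sits on the relevant class frontier). A secondary point is to state clearly why the canonical residual-surplus boundary is recorded with support lower bound $0$, giving $\Exd_{\hazr}$ as the shared frontier representative of $\scF^{\text{mhr}}$ and $\scF^{\text{a-mhr}}$ invoked elsewhere.
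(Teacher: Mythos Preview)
Your proposal is correct. The paper states this as a \emph{Fact} with no separate proof: the verification is embedded directly in the statement itself, where each distribution's CDF, PDF, and characterizing function (hazard rate or virtual value) are displayed side by side so the reader can confirm the constancy by inspection. Your ODE approach --- writing the constancy condition as a first-order equation for $\dist$ and solving under the normalization $\dist(a)=0$ --- is the natural derivation behind those displayed formulas and supplies what the paper leaves implicit; in particular, it makes explicit why these are the \emph{only} boundary distributions, not merely examples.
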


\ifarxiv\else
\noindent In cases (3b), the general class of shifted quadratics summarizes a very simple geometric interpretation for revenue curves in quantile space: for inputs in $(0,1]$, it includes all non-negative line segments as outputs.  Specifically, the revenue curve for $\Sqd_{a,\vv}$ is a line segment with slope $\vv$ connecting an (open) point $(0,a-\vv)$ and a (closed) point $(1,a)$.

As previously mentioned, class-boundary distributions are frequently used to prove that theorem statements are tight and we give an example here.

\begin{fact}[\citet{BK-96}]
\label{fact:bktight}
 \Cref{thm:bk} is tight for regular class-boundary distributions $\Sqd_{0,\vv}$ for all $\vv<0$, i.e., both the SPA with $n+1$ agents drawn i.i.d.\ from $\Sqd_{0,\vv}$ and $\OPT_{\Sqd_{0,\vv}}$ with $n$ i.i.d\ agents have expected revenue equal to $(-\vv)\cdot n > 0$.\
\end{fact}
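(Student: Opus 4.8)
The plan is to evaluate both sides of the \citet{BK-96} inequality of \Cref{thm:bk} in closed form for $\dist=\Sqd_{0,\vv}$ and verify they coincide. First I would record the structure of $\Sqd_{0,\vv}$: by \Cref{fact:distborder} (and the remark following it) its revenue virtual value is the constant $\vv<0$, so the distribution is regular and its revenue curve in quantile space is the affine segment from the open point $(0,-\vv)$ to the closed point $(1,0)$; equivalently $\rev_\dist(\quant)=(-\vv)(1-\quant)$ and $\valf_\dist(\quant)=(-\vv)(1-\quant)/\quant$, and no ironing is needed. Since $\rev_\dist$ is strictly decreasing, the monopoly quantile is $\monoq=0$, i.e.\ the monopoly price $\valf_\dist(\monoq)$ is $+\infty$ (\Cref{fact:oneagentopt}).

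Next I would handle the $(n+1)$-agent side. The prior-independent SPA (\Cref{def:spaauction}) always sells, at the second-highest value, which is $\valf_\dist(\quant_{(2)})$ where $\quant_{(2)}$ is the second-smallest of $n+1$ i.i.d.\ $\unif[0,1]$ quantiles, with density $(n+1)n\,\quant(1-\quant)^{n-1}$. Substituting $\valf_\dist(\quant)=(-\vv)(1-\quant)/\quant$ and using $\int_0^1(1-\quant)^n\,d\quant=\tfrac1{n+1}$ collapses the integral to $(-\vv)(n+1)n\int_0^1(1-\quant)^n\,d\quant=n\cdot(-\vv)$. (More generally the same computation gives SPA revenue $(m-1)(-\vv)$ for $m$ i.i.d.\ agents.)

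For the optimal $n$-agent mechanism: since $\Sqd_{0,\vv}$ is regular, \Cref{thm:commonbayesopt} tells us $\OPT_\dist$ is the second-price auction with a uniform value reserve equal to the monopoly price, so the optimum over all incentive-compatible mechanisms is attained within the one-parameter family of SPAs-with-value-reserve. I would parametrize that family by the reserve's quantile $\quant_r\in[0,1]$ and compute its $n$-agent expected revenue directly, summing the contribution of the event that exactly one agent clears the reserve (price $\valf_\dist(\quant_r)$, probability $n\quant_r(1-\quant_r)^{n-1}$) and the event that at least two clear it (price $\valf_\dist(\quant_{(2)})$, integrated against the density $n(n-1)\,\quant(1-\quant)^{n-2}$ of $\quant_{(2)}$ over $[0,\quant_r]$). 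The same Beta-type integrals give total expected revenue $(-\vv)\big((n-1)+(1-\quant_r)^n\big)$, which is decreasing in $\quant_r$ and hence maximized at $\quant_r=0$ --- exactly the infinite monopoly price from the first step --- so $\OPT_\dist(\dist)=(-\vv)\big((n-1)+1\big)=n\cdot(-\vv)$, attained in the limit $\quant_r\to0$.

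Putting the two computations together, the SPA with $n+1$ agents and $\OPT_\dist$ with $n$ agents both have expected revenue $n\cdot(-\vv)$, which is positive since $\vv<0$, so \Cref{thm:bk} holds with equality for $\Sqd_{0,\vv}$. I expect the one delicate point to be the $\OPT$ computation: the monopoly price is infinite, so $\OPT_\dist$ is not a finite-reserve mechanism and must be understood as the supremum over finite reserves; one must check this supremum is $n\cdot(-\vv)$ rather than the naive ``never allocate'' value $0$ --- that is, that the revenue $-\vv$ sitting at quantile $0$ of the revenue curve really is collectible in the limit, and is collected once per agent (each of the $n$ agents can be offered the vanishing-probability infinite price). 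The remaining integrals are routine; for $n=2$ one could instead read both sides straight off \Cref{lem:DRY-15}, using that the smallest monotone concave upper bound of this strictly decreasing, linear revenue curve is the constant $-\vv$.
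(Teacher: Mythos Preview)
The paper does not provide its own proof of this fact; it is stated as a result attributed to \citet{BK-96} with no argument given. Your plan is correct and fills in all the details: the order-statistic computation for the SPA with $n+1$ agents and the explicit optimization over reserve quantiles for $\OPT_\dist$ with $n$ agents both check out (yielding $(-\vv)\big[(n-1)+(1-\quant_r)^n\big]$ for the latter, maximized as $\quant_r\to0$), and you correctly identify the only real subtlety---that the optimal revenue $n(-\vv)$ is a supremum approached in the limit of an infinite reserve rather than attained by any finite-reserve mechanism.

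One small correction to your closing parenthetical: \Cref{lem:DRY-15} is specific to two-agent mechanisms, so for $n=2$ it gives you the $\OPT$ side directly (twice the area under the constant upper bound $-\vv$, hence $2(-\vv)$), but not the $(n{+}1)=3$-agent SPA side. It does, however, handle both sides cleanly in the case $n=1$.
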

\fi

\noindent As alluded in \Cref{fact:distborder}, quadratic distributions $\Qud_a$ are the $\vv=0$ special case of the shifted quadratic distributions $\Sqd_{a,0}$.  Quadratic distributions play an important role in auction design for the revenue objective, where they are examples of {\em equal revenue distributions} (EQRs).  Equal revenue distributions have the following definition and properties (\Cref{fact:eqrrev}).

\begin{definition}
\label{def:eqr}
A distribution is an {\em equal revenue distribution (EQR)} if {\em all} 1-agent price posting auctions have the same expected revenue.
\end{definition}

\noindent Generally, the Quadratics $\Qud_a$ describe exactly the class of {\em regular} equal revenue distributions (and they maintain both the regularity and equal-revenue properties under top-truncation).

\begin{fact}
\label{fact:eqrrev}
The following are true about equal revenue distributions.
\begin{itemize}
    \item  A sufficient condition for a distribution with domain $[a,\infty)$ to be an {\em equal revenue distribution (EQR)} is that its virtual value function evaluates to $0$ at all quantiles $\quant\in(0,1]$ corresponding to values at least $a$.  All quadratic distributions $\Qud_a$ meet this condition.
    \item Consider a Bayesian revenue auction with 1 agent whose value is drawn from a quadratic distribution $\Qud_a$.  The expected revenue of any price-posting auction 
    with price $\pay\in[a, \infty)$ is $a$, i.e., posting any price $\pay\geq a$ gets {\em equal revenue}.
\end{itemize}
\end{fact}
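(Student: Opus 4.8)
The plan is to prove the two bullets in sequence, both by reducing to the revenue-curve geometry of \Cref{s:quantsandrev}. For the first bullet, I would invoke \Cref{fact:vv_rprime}, which identifies marginal revenue in quantile space, $\rev'_{\dist}(\quant)$, with the virtual value $\vv^{\dist}(\quant)$. For a distribution with domain $[a,\infty)$ the support lower bound maps to quantile $\quantf_{\dist}(a) = 1-\dist(a) = 1$ and larger values map to smaller quantiles, so the hypothesis ``$\vv^{\dist}=0$ at all quantiles corresponding to values $\ge a$'' is exactly $\vv^{\dist}(\quant)=0$ for all $\quant\in(0,1]$. Integrating $\rev'_{\dist}=\vv^{\dist}\equiv 0$, the revenue curve $\rev_{\dist}$ is constant on $(0,1]$. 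Since posting a price $\pay$ in the support yields expected revenue $\rev_{\dist}(\quantf_{\dist}(\pay))$ and every support price corresponds to some quantile in $(0,1]$, all $1$-agent price postings yield the same expected revenue, which is the definition of an EQR (\Cref{def:eqr}).

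To verify that the quadratics $\Qud_a$ satisfy this hypothesis, I would substitute $\Qud_a(\val)=1-\sfrac{a}{\val}$ and $\qud_a(\val)=\sfrac{a}{\val^2}$ into the revenue virtual value formula of \Cref{fact:vv}: $\vv^{\Qud_a}(\val) = \val - \frac{1-\Qud_a(\val)}{\qud_a(\val)} = \val - \frac{a/\val}{a/\val^2} = \val - \val = 0$ for all $\val\ge a$ (this identity also appears in \Cref{fact:distborder}). Hence $\Qud_a$ meets the sufficient condition of the first bullet.

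For the second bullet, the quickest route is a direct computation: for $\pay\ge a$, an agent with value $\val\sim\Qud_a$ accepts price $\pay$ with probability $1-\Qud_a(\pay)=\sfrac{a}{\pay}$, so the expected revenue of the price-posting auction is $\pay\cdot\sfrac{a}{\pay}=a$, independent of $\pay$. Equivalently, inverting $\Qud_a$ gives $\valf_{\Qud_a}(\quant)=\sfrac{a}{\quant}$, so the revenue curve is $\rev_{\Qud_a}(\quant)=\quant\,\valf_{\Qud_a}(\quant)=a$ on $(0,1]$, which recovers the first bullet's conclusion concretely. The restriction to $\pay\in[a,\infty)$ is genuine: a price $\pay<a$ lies below the support and sells with probability $1$ for revenue $\pay<a$.

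Neither step is a real obstacle; the only care needed is boundary bookkeeping — that the support lower bound $a$ corresponds to quantile $1$ rather than $0$, that the stated price range is $[a,\infty)$ because lower prices are degenerate, and that one interprets the constancy on the open-below interval $(0,1]$ (posting ``price $\infty$'' is not a price). Everything else is an application of \Cref{fact:vv_rprime} together with a one-line substitution into the revenue virtual value formula.
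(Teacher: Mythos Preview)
Your proposal is correct. The paper states this as a \emph{Fact} without supplying a proof, so there is no argument in the paper to compare against; your approach is exactly the natural one given the surrounding machinery (\Cref{fact:vv_rprime} for the first bullet, plus the direct $\pay\cdot(1-\Qud_a(\pay))=\pay\cdot\sfrac{a}{\pay}=a$ computation for the second), and the identity $\vv^{\Qud_a}\equiv 0$ you verify is the same one recorded in \Cref{fact:distborder}.
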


\noindent There exists an indirect extension of ``EQR" to the residual surplus setting which we name EQRS.  An indirect extension of the EQR-concept is necessary because virtual values for residual surplus are strictly positive everywhere except at $\quant = 0$ where they are 0 (see \Cref{fact:vv}) and on point mass regions of quantile space, so there is no analogous distribution (in an auction with 1 agent) that achieves equal residual surplus for all posted prices.  

For residual surplus, the key adaptation towards establishing an EQRS is to require equal performance of price posting {\em critically within the context of knowing that the price will be accepted by at least 1 agent} (i.e., an agent with unknown larger value).  For residual surplus, assuming a price will trade is fairly natural because heuristically this condition holds at price 0 where all virtual values are positive. Intuitively, only an irrational auction would increase a posted price to a level at which it might not trade.  Further, exponential distributions are the natural class of EQRSs under this condition, which makes sense because exponentials are the unique boundary of regularity for residual surplus.  We reinforce these intuitions by \textbf{bolding} in \Cref{fact:eqpintro} the key differences of EQRS, in comparison to revenue and EQR.

\begin{fact}
\label{fact:eqpintro}
Consider a Bayesian residual surplus auction \textbf{with $\mathbf{n\geq1}$ agents} whose values are drawn i.i.d.\ from an exponential distribution $\Exd_{\hazr}$.  The expected residual surplus of any rational-price\footnote{\label{foot:hackrationalpricedef} For the purposes of \Cref{fact:eqpintro}, a {\em rational-price} for residual surplus is a price that can not be lowered by $d\pay$ without changing the set of agents who would accept it.  The set of rational prices is necessarily $\{\vali[(2)],\ldots,\vali[(n)],0\}$ (note, residual surplus auctions default to rejecting an agent with value equal to price).  All other prices are considered ``irrational" because a priori the residual surplus objective is strictly increasing with a $d\pay$ drop in price.} posting auction with price $\pay$ \textbf{that is guaranteed to trade (without conditioning on the value of any winning agent)} is $1/\hazr$.  I.e., for values $\vali[(1)]\geq\vali[(2)]\ldots\geq\vali[(n)]$, posting any price $\pay \in \{\vali[(2)],\ldots,\vali[(n)],0\}$ gets equal residual surplus.

Sufficient conditions for a distribution to be an {\em equal residual surplus distribution (EQRS)} \textbf{are that its virtual value function evaluates to a constant $\hazr$ at all quantiles $\quant\in (0,1]$, and the distribution has domain lower bound at 0.}  The class of exponential distributions $\Exd_{\hazr}$ meet this condition.
\end{fact}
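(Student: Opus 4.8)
The plan is to reduce the statement to one clean computation about order statistics of i.i.d.\ exponentials, after first observing that the stated ``sufficient conditions'' pin down exactly the exponential family. First I would use \Cref{fact:vv} to record that the residual-surplus virtual value of $\Exd_{\hazr}$ is $\vv^{\Exd_{\hazr}}(\val) = (1-\Exd_{\hazr}(\val))/\exd_{\hazr}(\val) = e^{-\hazr\val}/(\hazr e^{-\hazr\val}) = 1/\hazr$, a constant on the whole support; conversely, if $\dist$ has residual-surplus virtual value $(1-\dist(\val))/\distp(\val)$ equal to a constant $c$ on its support and has domain lower bound $0$ (so $\dist(0)=0$ with no atom there), then $(1-\dist)'=-(1-\dist)/c$ forces $\dist(\val)=1-e^{-\val/c}$, i.e.\ $\dist=\Exd_{1/c}$. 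The lower-bound condition is essential here: a shifted exponential $\Sed_{a,\hazr}$ with $a>0$ also has constant residual-surplus virtual value $1/\hazr$, but posting the rational price $0$ to agents drawn from it yields expected residual surplus $a+1/\hazr$, not $1/\hazr$. So it suffices to prove the first half of the claim for $\Exd_{\hazr}$. For a realized profile with order statistics $\vali[(1)]\geq\cdots\geq\vali[(n)]$ and sentinel $\vali[(n+1)]=0$, the rational prices guaranteed to trade are exactly $\vali[(k+1)]$ for $k=1,\dots,n$; since the exponential is atomless, posting $\vali[(k+1)]$ makes precisely the top $k$ agents accept, so by \Cref{def:ppost,def:klotterymech} this is the $k$-lottery, with residual surplus $\tfrac{1}{k}\sum_{j=1}^{k}\bigl(\vali[(j)]-\vali[(k+1)]\bigr)$ on this profile.

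The heart of the argument is to show $\expect\bigl[\tfrac{1}{k}\sum_{j=1}^{k}(\vali[(j)]-\vali[(k+1)])\bigr]=1/\hazr$ for every $k\in\{1,\dots,n\}$, for $\vali[1],\dots,\vali[n]$ i.i.d.\ from $\Exd_{\hazr}$. I would condition on the threshold value $\vali[(k+1)]=p$ together with the identity of the set $S$ of agents whose values strictly exceed $p$ (so $|S|=k$ almost surely). By independence of the $\vali[i]$ and memorylessness of the exponential, conditioned on this data the values $\{\vali[i]:i\in S\}$ are i.i.d.\ distributed as $p+\Exd_{\hazr}$; equivalently $\{\vali[i]-p:i\in S\}$ is a family of $k$ i.i.d.\ $\Exd_{\hazr}$ variables $w_1,\dots,w_k$. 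Because the sum over the top $k$ values is symmetric in those values, $\sum_{j=1}^{k}(\vali[(j)]-p)=\sum_{i\in S}(\vali[i]-p)=\sum_{\ell=1}^{k}w_\ell$, whose conditional expectation is $k/\hazr$, independently of $(p,S)$; taking the outer expectation and dividing by $k$ gives $1/\hazr$. For $k=n$ the conditioning is vacuous and this is just $\tfrac1n\sum_i\expect[\vali[i]]=1/\hazr$. Hence every rational-price-guaranteed-to-trade posting has expected residual surplus $1/\hazr$, which is the first half of the statement, and with the characterization above this gives the EQRS conclusion.

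The step I expect to be the main obstacle is the memorylessness claim used in the conditioning --- that conditioned on $\vali[(k+1)]=p$ and on which agents lie above $p$, the above-threshold values are i.i.d.\ $p+\Exd_{\hazr}$. Since $\{\vali[(k+1)]=p\}$ has probability zero, this should be made precise either by conditioning on $\vali[(k+1)]\in[p,p+dp]$ and passing to the limit, or by writing the joint density of $(\vali[(k+1)],(\vali[i])_{i\in S})$ and checking it factors as $(\text{density of the $(k{+}1)$-th largest at }p)\cdot\prod_{i\in S}\exd_{\hazr}(\vali[i]-p)$ on $\{\vali[i]>p\text{ for all }i\in S\}$; the only care needed is that constraining the below-threshold agents to have maximum exactly $p$ does not touch the above-threshold coordinates. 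Everything else is a one-line ODE or a linearity/symmetry computation. As a self-contained alternative to the conditioning argument, one can invoke the R\'enyi representation of exponential spacings --- $\vali[(i)]-\vali[(i+1)]\sim\Exd_{i\hazr}$ independently for $i=1,\dots,n$ with $\vali[(n+1)]=0$ --- which rewrites $\sum_{j=1}^{k}(\vali[(j)]-\vali[(k+1)])=\sum_{i=1}^{k}i\,(\vali[(i)]-\vali[(i+1)])$ as a sum of $k$ terms each of mean $1/\hazr$, giving $\expect[\sum_{j=1}^k(\vali[(j)]-\vali[(k+1)])]=k/\hazr$ directly.
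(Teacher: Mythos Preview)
Your proof is correct. The paper states this as a Fact without an explicit proof; the justification it implicitly relies on is the Myerson virtual-value identity (\Cref{thm:myerson}) applied to residual surplus: each $k$-lottery is DSIC and always allocates, so its expected residual surplus is $\sum_i \expect\bigl[\vv^{\Exd_\hazr}(\vali)\,\alloci(\vals)\bigr] = (1/\hazr)\,\expect\bigl[\sum_i \alloci(\vals)\bigr] = 1/\hazr$, using only that $\vv^{\Exd_\hazr}\equiv 1/\hazr$ and that the item always sells. This is exactly what the second paragraph of the Fact packages as ``sufficient conditions'': constant virtual value yields the $1/\hazr$ factor for any guaranteed-to-trade mechanism, and the domain-lower-bound-at-$0$ condition ensures the rational price $0$ does not pick up an additive shift---your shifted-exponential remark is precisely the failure mode this rules out.

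Your route is genuinely different and more hands-on: you bypass \Cref{thm:myerson} and compute $\expect\bigl[\tfrac{1}{k}\sum_{j\le k}(\vali[(j)]-\vali[(k+1)])\bigr]$ directly from memorylessness (or, equivalently, the R\'enyi spacing representation). The virtual-value argument is a one-liner once \Cref{thm:myerson} is available, works uniformly for any distribution meeting the stated conditions, and makes transparent why those conditions are the right hypotheses; your order-statistic computation is self-contained and exposes the exponential-specific mechanism, at the cost of handling the ODE characterization separately to connect back to the sufficient-conditions half. The conditioning-on-measure-zero subtlety you flag is real but standard, and your R\'enyi alternative dispatches it cleanly.
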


\noindent We conclude with ``canonical" definitions for equal revenue / residual surplus distributions, because these unique, simple forms are frequently sufficient for result statements.

\begin{definition}
\label{def:eqreqp}
The {\em canonical equal revenue distribution -- i.e., \underline{the} equal revenue distribution --} is $\Eqrd = \Qud_1$.

\noindent The {\em canonical equal residual surplus distribution -- i.e., \underline{the} equal residual surplus distribution} -- is $\Eqrsd= \Exd_1$.
\end{definition}

\subsubsection{Motivating Results in Prior Independent Mechanism Design}
\label{a:ab18andhjl20results}

This section states two recent results in prior independent mechanism design which led up to this work.  Both results are for single item, 2-agent, truthful auctions with a revenue objective.  Each result identifies the mechanism that is optimal for its distribution class, respectively MHR ($\scF^{\text{mhr}}$) and regular ($\scF^{\text{reg}}$).  They include further specifications for setting which differ from the main thrust of this paper: first, the results given here require unbounded values which each have support $\valspace = (0,\infty)$ (which technically will not fit into our bounded value support $\valspace = [1,\maxval]$ for mechanism design results in \Cref{s:examplemdresults}).  Second, the results here require an additional restriction to {\em scale-invariant mechanisms}, i.e., mechanisms whose performance necessarily scales linearly with input vectors.

\begin{theorem}[\citealp{AB-18}]
\label{thm:pioptn2revmhr}
Given a single item, 2-agent revenue auction, the optimal truthful mechanism against MHR distributions $\scF^{\text{\em mhr}}$ for the prior independent design program $(\piratio^{\scF^{\text{\em mhr}}})$ is the SPA.  The worst-case MHR distribution for this mechanism is the truncated-exponential $\topt{\Exd}_1^{.852'}$ with its monopoly quantile $\monoq^* \approx \qsimpmhr$ and its approximation ratio is $\piratio^{\scF^{\text{\em mhr}}} \approx \apxsimpmhr$.
\end{theorem}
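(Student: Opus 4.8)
The claim splits into two halves, which I would handle separately: (i) the second-price auction guarantees a prior independent approximation of $\approx\apxsimpmhr$ against $\scF^{\text{mhr}}$, with the bound approached by the truncated exponential $\topt{\Exd}_1^{.852'}$; and (ii) among scale-invariant truthful mechanisms none does strictly better, so the SPA is optimal and $\piratio^{\scF^{\text{mhr}}}\approx\apxsimpmhr$.

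For (i) the plan is to reduce everything to revenue curves in quantile space. Since $\scF^{\text{mhr}}\subset\scF^{\text{reg}}$ (\Cref{fact:revregmhr}), each $\dist\in\scF^{\text{mhr}}$ has a concave revenue curve $\rev_\dist$ (\Cref{fact:vv_rprime}), so by \Cref{thm:myebayesopt} the Bayesian optimum $\OPT_\dist$ is the SPA with reserve at the monopoly price, and by \Cref{lem:DRY-15} we get $\OPT_\dist(\dist)=2\bigl(\monoq\,\rev_\dist(\monoq)+\int_\monoq^1\rev_\dist(q)\,dq\bigr)$ and $\SPA(\dist)=2\int_0^1\rev_\dist(q)\,dq$, where $\monoq=\argmax_q\rev_\dist(q)$. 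Hence
\[
\frac{\OPT_\dist(\dist)}{\SPA(\dist)}=\frac{\monoq\,\rev_\dist(\monoq)+\int_\monoq^1\rev_\dist(q)\,dq}{\int_0^1\rev_\dist(q)\,dq},
\]
which just measures how much flattening $\rev_\dist$ to its maximum value on $[0,\monoq]$ inflates the area under it; scale invariance lets me normalize $\rev_\dist(\monoq)=1$. I would then prove extremality: concavity forces $\rev_\dist$ above the origin chord on $[0,\monoq]$, so the minimizing shape there is a point mass, while on $[\monoq,1]$ a perturbation within the MHR class drives $\rev_\dist$ to the MHR boundary, which by \Cref{fact:distborder} is the exponential shape; gluing these two pieces continuously at $\monoq$ is precisely an exponential top-truncated at its monopoly price (with the point mass there), i.e.\ $\topt{\Exd}_1^{h'}$ up to rescaling, with $\monoq=e^{-h}$. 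Plugging $\rev(q)=-q\ln q$ on the tail and $\rev(q)=q h$ on $[0,e^{-h}]$ makes the ratio a smooth one-parameter function of $h$; its unique maximizer is $h^*\approx .852$, giving $\monoq^*=e^{-h^*}\approx\qsimpmhr$ and value $\approx\apxsimpmhr$.

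For (ii) I must lower bound $\piratio^{\scF^{\text{mhr}}}$ by $\approx\apxsimpmhr$ over every scale-invariant truthful mechanism. By \Cref{thm:myedsicchar}, such a mechanism in the two-agent i.i.d.\ setting is a monotone allocation rule, and scale invariance collapses it to an anonymous rule depending only on the ratio of the two values (a single function in quantile space). The plan is to set up the variational problem $\min_{\mecha}\max_{\dist\in\scF^{\text{mhr}}}\OPT_\dist(\dist)/\mecha(\dist)$ and verify complementary-slackness-type optimality conditions: show that for any candidate $\mecha$ the worst-case MHR distribution is again a truncated exponential from the family of part (i), and that the SPA is the unique mechanism whose resulting worst-case ratio admits no improving feasible perturbation of the allocation. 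Concretely this means exhibiting a small prior over the truncated-exponential family on which, by an averaging or LP-duality certificate, no scale-invariant mechanism beats $\approx\apxsimpmhr$, matching part (i).

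The hard part is (ii): controlling the infinite-dimensional space of scale-invariant mechanisms and producing a lower-bound certificate whose value is exactly the constant $\approx\apxsimpmhr$ delivered by the upper bound---tightness is not automatic, and the equalization argument must genuinely rule out every randomized or asymmetric scale-invariant alternative to the SPA. One cannot instead hope for a matching \emph{unconditional} bound from dual blends inside $\scF^{\text{mhr}}$ via \Cref{thm:blendsbound} without new ideas: the blend of \Cref{a:example} only reaches $\approx 1.13$, so the scale-invariance restriction is doing real work here.
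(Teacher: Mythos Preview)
The paper does not prove this theorem. It is stated in \Cref{a:ab18andhjl20results} purely as a cited background result from \citet{AB-18}, with no proof or proof sketch; the surrounding text only remarks that the result requires the additional restriction to scale-invariant mechanisms and then moves on to \Cref{thm:pioptn2revenue}. So there is no ``paper's own proof'' to compare your proposal against.

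Your plan is a reasonable outline of how one would establish the result from scratch, and you correctly flag that the hard half is the lower bound over all scale-invariant mechanisms. One point worth noting: the paper's framing (and the original \citealp{AB-18} result) explicitly restricts to scale-invariant mechanisms, so your part~(ii) is the entire lower-bound content; you should not expect to certify optimality over \emph{all} truthful mechanisms, only the scale-invariant ones. Your closing remark about \Cref{thm:blendsbound} is also apt---the paper itself notes that the blends bound of \Cref{a:example} reaches only about $1.13$ for $\scF^{\text{mhr}}$, well below $\apxsimpmhr$, so the dual-blends machinery of this paper is not what drives \Cref{thm:pioptn2revmhr}.
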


\noindent \citet{AB-18} additionally proved a bounded range for the regular setting of the optimal prior independent approximation factor: $\piratio^{\scF^{\text{reg}}}\in[1.80,1.95]$.  Their lower bound was the first ever non-trivial lower bound for 2-agent prior independent mechanism design.  Previously, \citet{DRY-15} had shown an upper bound of 2 and \citet{FILS-15} had exhibited that the upper bound of 2 was not tight.  The next theorem solves this case and gives the optimal mechanism.

\begin{theorem}[\citealp{HJL-20}]
\label{thm:pioptn2revenue}
  Given a single item, 2-agent revenue auction, the
  optimal truthful, scale-invariant mechanism (from the class $\mechaspace^{\text{\em si}}$) against regular distributions $\scF^{\text{\em reg}}$
  for the prior independent design program $(\piratio^{\scF^{\text{\em reg}}})$ is $\mecha_{\hat{r},\xi}$
  which randomizes according to $\boldsymbol{\xi}$ over the second-price auction $\mecha_1$ with
  probability $\xi_1$ and $\optratio$-markup mechanism
  $\mecha_{\optratio}$ with probability $\xi_{\optratio}=1-\xi_1$, where
  $\xi_1 \approx \asimp$ and $\optratio\approx \rsimp$. The
  worst-case regular distribution for this mechanism is the truncated-shifted-quadratic
  $\topt{\Sqd}_1^{9.7405'}$ with its monopoly quantile $\monoq^* \approx \qsimp$ and its approximation
  ratio is $\piratio^{\scF^{\text{\em reg}}} \approx \apxsimp$.
\end{theorem}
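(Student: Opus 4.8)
\noindent\textbf{Proof proposal (for the characterization of the optimal scale-invariant revenue mechanism against $\scF^{\text{reg}}$).}
The plan is to run the standard ``guess-and-verify saddle point'' argument for the min-max program $\piratio^{\scF^{\text{reg}}}=\min_{\mecha}\max_{\dist\in\scF^{\text{reg}}}\OPT_{\dist}(\dist)/\mecha(\dist)$, after two structural reductions. First I would reduce the mechanism side: by Myerson's characterization (\Cref{thm:myedsicchar}) a truthful mechanism is a monotone allocation rule plus the payment identity, and scale invariance forces the two-agent allocation to depend only on the ratio $t=\vali[(1)]/\vali[(2)]$. For the revenue objective this ratio-dependent structure means the optimal designer response to any scale-invariant price schedule is a threshold rule on $t$, and mixing over thresholds is exactly mixing over markups; hence it is without loss to restrict the designer to the randomized-markup class $\mechaspace^{\text{mark}}$ of \Cref{def:markup} (this reduction mirrors the one used by \citealp{AB-18} for the MHR case of \Cref{thm:pioptn2revmhr}). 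So the designer chooses a distribution $\boldsymbol{\xi}$ over markup ratios $\ratio\in[1,\infty)$, with $\ratio=1$ the second-price auction.

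\noindent Next I would reduce the adversary side using revenue-curve geometry. By \Cref{thm:myerson} and \Cref{fact:vv_rprime}, the performance of any mechanism is an expected surplus of (density-weighted) virtual value, and for two agents \Cref{lem:DRY-15} gives $\OPT_{\dist}(\dist)=2\cdot(\text{area under }\rev_{\dist})$ for regular $\dist$ (the curve is already concave), while an $\ratio$-markup mechanism's expected revenue $\mecha_{\ratio}(\dist)$ is a second explicit linear functional of $\rev_{\dist}$ (the top agent faces price $\ratio\cdot\vali[(2)]$, which is a fixed quantile of $\rev_{\dist}$, and trades iff its value exceeds it). Thus for fixed $\boldsymbol{\xi}$ the objective $\OPT_{\dist}(\dist)/\mecha_{\boldsymbol{\xi}}(\dist)$ is a ratio of two linear functionals of the revenue curve, so its worst case over the convex set of concave revenue curves is attained at an extreme point; extreme points are piecewise-linear curves, i.e.\ finite mixtures/truncations of the regularity-boundary distributions (the shifted quadratics $\Sqd_{a,\vv}$ of \Cref{fact:distborder}). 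I would then argue that a single linear piece plus the point mass induced by top-truncation suffices, so the worst case has the form $\topt{\Sqd}_1^{H'}$.

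\noindent With both reductions in place, the game becomes a two-player zero-sum game on compact convex strategy sets (markup distributions $\boldsymbol{\xi}$ versus truncated-shifted-quadratics parametrized by monopoly quantile $\monoq$ and truncation height $H$), so by a minimax theorem in the spirit of \Cref{thm:yaominmax} a saddle point exists. At the saddle point the designer's mixture must equalize $\OPT_{\dist}(\dist)/\mecha_{\boldsymbol{\xi}}(\dist)$ over the supported worst-case distributions and each supported markup ratio must be a best response; writing out these complementary-slackness / first-order stationarity conditions gives a small system of transcendental equations. Solving it yields a two-atom optimal mechanism supported on $\{1,\optratio\}$ with $\xi_1\approx\asimp$, $\optratio\approx\rsimp$, a worst-case $\topt{\Sqd}_1^{9.7405'}$ with $\monoq\approx\qsimp$, and common value $\piratio^{\scF^{\text{reg}}}\approx\apxsimp$; a final verification that these satisfy both saddle inequalities (optimality of the mechanism against all regular $\dist$, and optimality of that $\dist$ against all scale-invariant mechanisms) closes the proof.

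\noindent The main obstacle is establishing the two collapses \emph{simultaneously}: that the optimal scale-invariant mechanism needs only two markup atoms and that the worst-case regular distribution needs only one linear revenue-curve piece (modulo truncation). Either I prove that the dual of the infinite-dimensional program has a finitely-supported optimal solution (a moment/duality argument on the family of linear functionals above), or I adopt the explicit guess-and-verify route and must show that adding a third markup atom cannot help the designer and adding a second linear piece cannot help the adversary — this mutual ``no-improvement'' verification is the delicate part. A secondary technical point is the bounded-versus-unbounded support subtlety: the theorem is stated on $\valspace=(0,\infty)$, so the truncation height $H\approx 9.7405$ is finite but interior, and one must confirm the optimization over $H$ is well-posed and attained rather than escaping to $H\to\infty$.
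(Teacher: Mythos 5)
You should first note the scope: this paper does not prove \Cref{thm:pioptn2revenue} at all --- it is imported verbatim from \citet{HJL-20} as background --- so there is no internal proof to compare against, and your sketch would have to reproduce the published argument. Its overall architecture (collapse the designer to randomized markups, collapse the adversary to truncated shifted quadratics, then solve the resulting zero-sum game by equalization and first-order conditions) does match the shape of the known proofs of \citet{AB-18} and \citet{HJL-20}. But the step you use to justify the adversary-side collapse is wrong as stated. You claim that for fixed $\boldsymbol{\xi}$ the objective is a ratio of two linear functionals of the revenue curve, so the worst case sits at an extreme point of the set of concave curves. The numerator ($2\times$ area under $\rev_{\dist}$, via \Cref{lem:DRY-15}) and the second-price part of the denominator are indeed linear in $\rev_{\dist}$, but the expected revenue of an $\ratio$-markup with $\ratio>1$ is $\ratio\cdot\expect\bigl[\vali[(2)]\cdot\mathbf{1}(\vali[(1)]\geq\ratio\,\vali[(2)])\bigr]$: the indicator couples the two i.i.d.\ draws through the value function, so this is a bilinear functional of the distribution, not a linear functional of the revenue curve. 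The quasi-linearity/extreme-point argument therefore does not deliver ``one linear piece plus top-truncation''; that reduction to triangle-shaped revenue curves is a genuine structural theorem and is essentially the hard content of the cited works, proved there by bespoke arguments rather than by convexity of the feasible set.

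The designer-side collapse is also asserted rather than proved: scale invariance gives dependence on the ratio $\vali[(1)]/\vali[(2)]$, and a monotone allocation to the top agent alone is equivalent (via the payment identity) to a randomization over markups, but a scale-invariant mechanism may also allocate to the lower-valued agent, and ruling this out for revenue against regular distributions --- while preserving DSIC monotonicity for both agents --- is a lemma, not an observation (your appeal to \Cref{thm:pioptn2revmhr} inherits the same gap). Two smaller issues: the strategy sets you call compact are not (markups range over $[1,\infty)$ and the truncation height is unbounded), so saddle-point existence needs an argument; and, as you yourself concede, the mutual no-improvement verification (no third markup atom helps the designer, no second kink helps the adversary) is the delicate part --- in \citet{HJL-20} this verification, together with the two structural reductions, \emph{is} the proof. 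So your proposal is a reasonable roadmap, but the linearity claim is a broken step and the two collapses it leans on are left unestablished.
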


\subsection{Revenue Gap from Quadratics-versus-Uniforms}
\label{s:revgapquadsvsunifs}

The goal of this section is to use the Blends Technique (\Cref{def:thetheblendstechnique}) to prove a revenue gap for the Quadratics-versus-Uniforms dual blend, resulting in a prior independent lower bound (summarized in \Cref{eqn:finitequadsversusunifsrevpiauction} in \Cref{thm:finitequadsversusunifsrevpiauction} and copied at the end of this section).

Recall value space is $\valspace^2=[1,\maxval]^2$ with an assumption that $\maxval>2$.\footnote{\label{foot:revgapmaxvalgeq2expl} The assumption of $\maxval>2$ is necessary to make the result interesting.  Because of the assumption that value space has domain lower bound at 1, uniform distributions with domain upper bound at most 2 are trivially optimized by the SPA -- which is the same as for the Quadratics in these dual blends -- and thus do not induce an approximation gap (the ratio is 1).}  We still use symmetry to assume $\vali[1]\geq\vali[2]\geq0$ in calculations and will then count permutations where necessary.  For use in this section and the next, we re-state the finite-weight blends solution of Quadratics-versus-Uniforms (copied from page~\pageref{page:finiteweightquadsvsunifs}).

The weights of the upward-closed Quadratics blend ($\blendi[1]$) are as follows:\label{page:appendixfiniteblendsdef}
\begin{itemize}
\item point mass of weight $o_{\text{pm}} = 1$ on (truncated) distribution $\topt{\Qud}^{\maxval'}_1$;
\item weights $o_{Qz} = \frac{2}{z}dz$ on all upward-closed (truncated) distributions $\topt{\Qud}^{\maxval'}_{z}$ for $z\in\left[1,\maxval\right]$.
\end{itemize}

\noindent The weights of the downward-closed Uniforms blend ($\blendi[2]$) are as follows:
\begin{itemize}
\item point mass of weight $\omega_{\text{pm}} = \frac{(2\maxval-1)^2}{
\maxval^2}$ on (truncated) distribution $\topt{\Ud}_{1,2\maxval}^{\maxval'}$;
\item weights $\omega_{Uz} = \frac{2(z-1)^2}{z^3}dz$ on all downward-closed distributions $\Ud_{1,z}$ for $z\in\left[1,\maxval\right]$.
\end{itemize}
\noindent The total weight is finite ($1 + 2\ln \maxval$), which implies that it is sufficient to use these weights directly in calculating a revenue gap (per \Cref{fact:finiteweightok}).\footnote{\label{foot:recallfiniteweightok} Recall explicitly, we are interested in the ratio $\sfrac{\text{opt}_{2,2}}{\text{opt}_{2,1}}$.  Using the incorrect total weight simply scales both numerator and denominator equally, which cancel as long as total weight is finite.  Following from this, we may slightly abuse language and say that $g$ is a ``distribution" even when referring to its un-normalized weight assignments.}  We also summarize the function $g$:
\begin{align}
    \label{eqn:gapcalcssummaryofg}
    g_{2D}(\vals) &= \text{result of $\topt{\Qud}_z^{\maxval'}$ blend}\\
    \nonumber
    &= \text{result of $\Ud_{0,z}$ blend and $\topt{\Ud}_{1,2\maxval}^{\maxval'}$} =  \frac{1}{\vali[1]^2}\\
    \nonumber
    g_{0D}(\vals) &= \text{result of truncation in $\topt{\Qud}_z^{\maxval'}$ blend}\\
    \nonumber
    &= \text{no point mass density from $\Ud_{0,z}$ blend, result of $\topt{\Ud}_{1,2\maxval}^{\maxval'}$}  = 1\\
    \nonumber
    g_{1D}(\vals) &= \text{result of \textbf{exactly one} truncation in $\topt{\Qud}_z^{\maxval'}$ blend}\\
    \nonumber
    &= \text{no point mass density from $\Ud_{0,z}$ blend, result of $\topt{\Ud}_{1,2\maxval}^{\maxval'}$}= \frac{1}{\maxval}
\end{align}

\noindent Also for use in this section, we state the following fact regarding uniform order statistics.
\begin{fact}
\label{fact:orderstatexpectations}
Given (unordered) $\vals = (\vali[1],\ldots,\vali[n])$ which are $n$ i.i.d. draws from the uniform distribution $\Ud_{0,1}$.  Let $k=1$ be the first, largest order statistic, and count order statistics down to $k=n$ the last, smallest order statistic.  The expected value of an order statistic $\vali[(k)]$ is given by $\mathbf{E}_{\vals\sim\Ud_{0,1}}\left[\vali[(k)]\right] = \frac{n+1-k}{n+1}$.
\end{fact}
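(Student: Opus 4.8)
The plan is to compute the density of the $k$-th largest order statistic $\vali[(k)]$ from $n$ i.i.d.\ draws of $\Ud_{0,1}$ and then integrate to get the expectation. First I would set up the density: for $\vali[(k)]=x$ to hold, exactly $k-1$ of the $n$ draws must exceed $x$, exactly one must equal $x$, and the remaining $n-k$ must be below $x$. Since the marginal density of each draw is $1$ on $[0,1]$ and the CDF is $x$, the density of $\vali[(k)]$ is
\begin{equation*}
f_{(k)}(x) = \frac{n!}{(k-1)!\,(n-k)!}\,(1-x)^{k-1}\,x^{n-k}, \qquad x\in[0,1].
\end{equation*}
(One can also obtain this by noting $\vali[(k)]$ is the $(n+1-k)$-th smallest draw, whose distribution is the standard $\mathrm{Beta}(n+1-k,\,k)$; either route gives the same answer.)

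Next I would evaluate $\mathbf{E}[\vali[(k)]] = \int_0^1 x\, f_{(k)}(x)\,dx$. This integral is a Beta integral: $\int_0^1 x^{a}(1-x)^{b}\,dx = \frac{a!\,b!}{(a+b+1)!}$ for non-negative integers $a,b$. Applying it with $a=n-k+1$ and $b=k-1$ yields
\begin{equation*}
\mathbf{E}[\vali[(k)]] = \frac{n!}{(k-1)!\,(n-k)!}\cdot\frac{(n-k+1)!\,(k-1)!}{(n+1)!} = \frac{n!\,(n-k+1)!}{(n-k)!\,(n+1)!} = \frac{n+1-k}{n+1},
\end{equation*}
which is exactly the claimed formula. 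A quick sanity check confirms the endpoints: $k=1$ gives $\frac{n}{n+1}$ (the expected maximum) and $k=n$ gives $\frac{1}{n+1}$ (the expected minimum), both of which are standard.

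There is essentially no main obstacle here: the statement is a classical identity about uniform order statistics, and the only mildly delicate point is bookkeeping the relationship between ``$k$-th largest'' (as used in the paper, with $k=1$ the maximum) and the usual convention of ``$j$-th smallest'' used in the Beta-distribution formulation; one must be careful to translate $k \mapsto n+1-k$ consistently so that the exponents of $x$ and $1-x$ are not swapped. Alternatively, the whole fact can simply be cited from any standard reference on order statistics, but the self-contained Beta-integral computation above is short enough to include directly.
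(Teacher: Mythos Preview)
Your proof is correct. The paper itself does not prove this statement: it is stated as a \texttt{fact} (i.e., a known classical result about uniform order statistics) and used without justification, so your self-contained Beta-integral computation actually goes beyond what the paper provides.
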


\subsubsection{Expected Optimal Revenue from Quadratics}
\label{s:expectedrevenuequads}

We calculate the expected optimal revenue from the Quadratics side $\text{opt}_{2,1}$ using the $\boldsymbol{o}$ weights above.  The revenue of the Quadratics blend is easy to calculate {\em because every distribution that is a component of the blend is an equal revenue distribution} for which offering every price in $[\vali[(2)],\maxval]$ to the largest-valued agent gets the same revenue and is optimal (see \Cref{def:eqr} and its surrounding discussion).

The immediate consequence is that there exists a single mechanism that is optimal against every distribution in the $\blendi[1]$ Quadratics blend: the anonymous price-posting mechanism $\text{AP}_{\maxval}$ with constant price $\maxval$ is globally optimal. Therefore, $\text{opt}_{2,1}=\text{AP}_{\maxval}(g)$ which is the revenue of posting price $\maxval$ against the entire correlated distribution $g$.

The revenue conditioned on selling is obviously $\maxval$.  The probability of selling can be obtained from the equations of line~\eqref{eqn:gapcalcssummaryofg} to determine total density where at least one agent has value $\maxval$, which is exactly the total of 0D and 1D density over all of value space: $g_{0D}\cdot1+ 2 \cdot g_{1D}\cdot(\maxval-1)=\sfrac{(3\maxval-2)}{\maxval}$. Revenue from the Quadratics blend $\blendi[1]$ is given by
\begin{equation}
\text{opt}_{2,1} = \text{AP}_{\maxval}(g) = \maxval \cdot \frac{(3\maxval-2)}{\maxval} =3\maxval-2
 \end{equation}

\subsubsection{Expected Optimal Revenue from Uniforms}
\label{s:expectedrevenueunifs}

Because of the lower bound of value space at 1, optimal revenue analyses for the Uniforms break down by both distribution and type of optimal reserve price (which is either the $\Ud_{0,z}$ monopoly reserve price, or the lower bound 1).  The way we implement all Uniforms distributions here is equivalent to conditioning a random draw from $\Ud_{0,z}$ to be at least 1.  This structure makes \Cref{fact:vvconstantunderconditional} applicable to our distributions, thus we can use virtual values as if the values were drawn from $\Ud_{0,z}$ rather than its respective $\Ud_{1,z}$.  By observation, all Uniforms with positive weight in $\blendi[2]$ are regular.  It is a well-known corollary to \Cref{thm:commonbayesopt} that the optimal mechanism given $n$ agents drawn i.i.d.\ from a uniform distribution $\Ud_{0,z}$ is the SPA with reserve price at the monopoly price $\sfrac{z}{2}$.  We have the following summary of monopoly prices over the distributions in $\blendi[1]$:
\begin{itemize}
\item  monopoly price $\maxval$ for $\topt{\Ud}_{1,2\maxval}^{\maxval'}$ (the truncation at $\maxval$ observably does not affect the monopoly price);
\item monopoly price 1 for $\Ud_{1,z}$ for $z\in[1,2]$ (effectively the SPA);
\item monopoly price $\sfrac{z}{2}$ for $\Ud_{1,z}$ for $z\in[2,\maxval]$ (which is the same as the monopoly price for $\Ud_{0,z}$).
\end{itemize}

\noindent We treat these cases in sequence to compute $\text{opt}_{2,2}$, incorporating the weights $\boldsymbol{\omega}$.  The contribution of the $\topt{\Ud}_{1,2\maxval}^{\maxval'}$ distribution is actually the same as the entire revenue $\text{opt}_{2,2}= 3\maxval-2$ of the Quadratics.  This follows from: its optimal mechanism -- post price $\maxval$ -- is the same as the globally optimal mechanism for Quadratics; and, the distribution $\topt{\Ud}_{1,2\maxval}^{\maxval'}$ accounts for every possible realization in which an agent shows up with value $\maxval$ (when considering any distribution in $\blendi[2]$).

The second and third cases represent revenue from the Uniform blend's random draw according to weights $\omega_{Uz} = \frac{2(z-1)^2}{z^3}dz$.  For the case $z\leq2$, the optimal mechanism is the SPA, therefore the optimal mechanism always sells and its expected revenue is exactly the expected value of $\vali[2]$.  Using \Cref{fact:orderstatexpectations}, the expected value of $\vali[2]$ is [1 plus $\sfrac{1}{3}$ of the width of the range $[1,z]$].  The total contribution from this case $z\leq2$ is:

\begin{align}
\label{eqn:uniformrevenuecasein1comma2}
&\OPT_{\Ud_{1,z}}(\Ud_{1,z})\\
\nonumber
=&\int_1^2 \frac{2(z-1)^2}{z^3}\cdot\left(1+\frac{z-1}{3}\right)~dz = \int_1^2 \frac{2z^2-4z+2+\frac{2}{3}z^3-2z^2+2z-\frac{2}{3}}{z^3}~dz\\
\nonumber
=&\int_1^2 \frac{2}{3} -\frac{2}{z^2} + \frac{4}{3z^3}~dz = \frac{2}{3} + \left[\frac{2}{z}-\frac{2}{3z^2}\right]_1^2 = \frac{2}{3}+ \left(1 - \frac{1}{6}\right) - \left(2- \frac{2}{3}\right)= \frac{1}{6}
\end{align}

\noindent For the last case $z \geq2$, the expected revenue conditioned on selling depends on distribution-specific reserve prices (there is no reduction to the SPA).  We separate uniform draws of the values $\vals$ into three natural sub-cases and calculate the optimal revenue afterwards (given a $\Ud_{1,z}$):~\footnote{\label{foot:explainunifrangeundermonopoly} Note, the quantity $\sfrac{z}{2} - 1 = \sfrac{(z-2)}{2}$ is the length of the (uniform) region below the monopoly price at $\sfrac{z}{2}$, versus $\sfrac{z}{2}$ is the length above it.}
\begin{itemize}
\item both values are smaller than the reserve; we fail to sell, with probability $\frac{(\frac{z-2}{2})^2}{(z-1)^2}$;
\item (2 permutations of) one value is larger and one smaller than the reserve; we sell at the reserve price of $\sfrac{z}{2}$, each permutation with probability $\frac{(\frac{z-2}{2})(\frac{z}{2})}{(z-1)^2}$;
\item both values are larger than the reserve; we sell at the conditional expected value of $\vali[2]$ which is $\mathbf{E}_{\vals\sim\Ud_{\sfrac{z}{2},z}}\left[\vali[2] \right]= \sfrac{2z}{3}$, with probability $\frac{(\frac{z}{2})^2}{(z-1)^2}$.
\end{itemize}
\noindent The optimal revenue from a distribution $\Ud_{1,z}$ for $z\geq2$ is
\begin{equation}
\OPT_{\Ud_{1,z}}(\Ud_{1,z})=
2\cdot\frac{(\frac{z-2}{2})(\frac{z}{2})}{(z-1)^2}\cdot \frac{z}{2} + \frac{(\frac{z}{2})^2}{(z-1)^2}\cdot\frac{2z}{3} = \frac{1}{(z-1)^2}\left(\frac{5}{12}z^3-\frac{1}{2}z^2\right)
\end{equation}
\noindent Analogous to equation~\eqref{eqn:uniformrevenuecasein1comma2}, the total contribution from this last case $z\geq2$ results from a weighted integral and is:
\begin{equation}
\int_2^{\maxval} \frac{2(z-1)^2}{z^3}\cdot \OPT_{\Ud_{1,z}}(\Ud_{1,z})~dz= \int_2^{\maxval} \frac{5}{6} -\frac{1}{z}~dz =\frac{5}{6}(\maxval-2) -\ln \sfrac{\maxval}{2}
\end{equation}
\noindent Total revenue from the Uniforms blend adds up as
\begin{equation}
\text{opt}_{2,2} = (3\maxval-2)+\frac{1}{6}+\frac{5}{6}(\maxval-2) -\ln \sfrac{\maxval}{2} = \frac{23}{6}\cdot \maxval - \frac{7}{2}  -\ln \sfrac{\maxval}{2}
\end{equation}

\subsubsection{The Revenue Gap as Lower Bound}
\label{s:revenuegapfromratio}
For 2-agent, 1-item prior independent auctions with a revenue objective, we have now established a necessary revenue gap via blends:
\begin{equation}
\label{eqn:twentythreeeighteenths}
    \frac{\text{opt}_{2,2}}{\text{opt}_{2,1}} = \frac{\frac{23\maxval}{6}-\frac{7}{2}-\ln(\sfrac{\maxval}{2})}{3\maxval-2},\quad\quad \lim_{\maxval\rightarrow\infty} \frac{\text{opt}_{2,2}}{\text{opt}_{2,1}} = \frac{23}{18} \approx 1.27777
\end{equation}
\noindent where the limit calculation is trivial from application of l'Hopital's rule.  The applicability of the revenue gap as a prior independent lower bound is subject to the design problem's parameter $\scF$ to describe the adversary's allowable choice set of distributions.  By inspection, all of the Uniforms composing $\blendi[2]$ are included in small-class $\scF^{\text{unif}}[1,\maxval]=\{\topt{\Ud}_{1,b}^{\maxval'}~:~ 1 \leq b \}\equiv \text{uniforms on}~[1,b]~\text{truncated at}~\maxval$.  Putting this together with equation~\eqref{eqn:twentythreeeighteenths} and \Cref{thm:blendsbound} gives \Cref{thm:finitequadsversusunifsrevpiauction}:

\begin{numberedtheorem}{\ref{thm:finitequadsversusunifsrevpiauction}}
\label{thm:finitequadsversusunifsrevpiauction2}
Given a single-item, 2-agent, truthful auction setting with a revenue objective and with agent values restricted to the space $[1,\maxval]$ for $\maxval>2$.  For the class of uniform distributions $\scF^{\text{{\em unif}}}$, the optimal prior independent approximation factor of any (truthful) mechanism is lower bounded as:
\begin{equation*}
    \piratio^{\scF^{\text{\em unif}}}_{\maxval} \geq 
    \frac{\text{{\em opt}}_{2,2}}{\text{{\em opt}}_{2,1}}=
    \frac{\frac{23\maxval}{6}-\frac{7}{2}-\ln(\sfrac{\maxval}{2})}{3\maxval-2}=\lowb_{\maxval}^{\scF^{\text{\em unif}}}
\end{equation*}
\noindent The lower bound $\lowb^{\scF^{\text{{\em unif}}}}_{\maxval}\rightarrow 
\sfrac{23}{18}\approx\EXBREV$ as $\maxval\rightarrow\infty$ and this is the supremum of $\lowb^{\scF^{\text{{\em unif}}}}_{\maxval}$ over $\maxval\geq 1$.
\end{numberedtheorem}

\noindent Further,  all Uniforms are regular (for revenue, for which $\scF^{\text{reg}}$ is the standard comparison class of distributions for prior independent design), thus as a corollary, our bound here holds for $\scF^{\text{reg}}$.

Previously in \Cref{thm:pioptn2revenue} in \Cref{a:ab18andhjl20results}, the {\em optimal} prior independent approximation factor was given for the version of this setting which allowed unbounded values in $[0,\infty)$.  The optimal factor was $\sim 1.907$.  Note, we should expect the approximation factor of the restricted value space to be smaller than the unbounded value space, because the mechanism can specifically take advantage of information relating to scale in the latter case.  The optimal mechanism for unbounded value space was an a priori mixture over $\vali[2]$-markup prices of 1 (the SPA) and $\sim 2.44$.  Certainly, we would not expect the optimal mechanism to have the same form for the value space $[1,\maxval]$, because it will not commit a priori to posting a marked-up price of $2\cdot\vali[2]$ when it could be that $\vali[2]\in[\sfrac{\maxval}{2},\maxval]$.  This makes prior independent design in a finite value space setting a distinct problem in terms of analysis, in comparison to the unbounded value space setting.

\subsection{Residual Surplus Gap from Quadratics-versus-Uniforms}
\label{s:ressurpgapquadsvsunifs}

Paralleling the previous section, the goal of this section is to use the Blends Technique to prove a residual surplus gap for the Quadratics-versus-Uniforms dual blend, resulting in a prior independent lower bound (summarized in equation~\eqref{eqn:finitequadsversusunifsressurppiauction} in \Cref{thm:finitequadsversusunifsressurppiauction} and copied at the end of this section).

The description of the dual blend was given in the introduction to the previous section (page~\pageref{page:appendixfiniteblendsdef}).  To re-summarize other points regarding the analysis here: we again use value space $\valspace^2 = [1,\maxval]^2$, 
un-normalized densities (which will cancel at the end), and \Cref{fact:orderstatexpectations} (which states that the expected value of $n$ order statistics from a distribution $\Ud_{0,z}$ divide the range into $n+1$ equal parts).  We include here a similar assumption to the one we had for the revenue gap -- we assume $\maxval\geq \RSMINH$ (otherwise our relaxed analysis does not show a gap).

The residual surplus gap presented here uses the same description of finite-weight Quadratics-versus-Uniforms dual blends, but the adversary will now in fact {\em choose the distribution over the Quadratics and use them to set the benchmark}, whereas the revenue-adversary set the benchmark via the Uniforms.  (Thus, we reassign $\blendi[1]$ to describe the Uniforms side and $\blendi[2]$ to describe the Quadratics side.)  As a consequence, we now have calculation of residual surplus that is ``easy" for the Uniforms rather than for the Quadratics (which is reversed in comparison to revenue calculations). This follows because, for residual surplus, it is now the Uniforms side for which a single dominant mechanism exists (the Lottery of \Cref{def:klotterymech}).  We will calculate $\text{opt}_{2,1}$ for the easy Uniforms side first.

Afterwards, the calculation of $\text{opt}_{2,2}$ for expected optimal residual surplus of the Quadratics faces some technical complexities.  For simplicity, we will calculate instead a lower bound on $\text{opt}_{2,2}$.  This is sufficient because we are designing a residual surplus gap (via the Blends Technique) between the adversary's benchmark set by the Quadratics and an upper bound on expected performance of any algorithm as set by the Uniforms.  By using a lower bound on the ratio's numerator, we will exhibit a weaker -- but legitimate -- non-trivial lower bound on prior independent approximation.

\subsubsection{Expected Optimal Residual Surplus from Uniforms}
\label{s:expectedressurpunifs}

We calculate the expected optimal residual surplus from the Uniforms side ($\text{opt}_{2,1}$) using the $\boldsymbol{\omega}$ weights above.  The residual surplus of the Uniforms blend is easy to calculate because {\em every distribution has everywhere \underline{decreasing} virtual value and therefore it is optimal to iron the entire region of value space}.  The immediate consequence is that there exists a single mechanism that is optimal against every distribution in the $\blendi[1]$ Uniforms blend: the $2$-lottery mechanism $\text{LOT}_2 = \text{AP}_0$ is optimal (see definitions from page~\pageref{page:klottery}).  I.e., $\text{opt}_{2,1}$ can be calculated directly from the expectation of one draw $\val\sim g$ (and with price 0).

The probability of selling is obviously 1.  In this case, the expected residual surplus given  any distribution can be obtained as the mean of the distribution: $\OPT_{\Ud_{1,z}}(\Ud_{1,z}) = 1+\sfrac{(z-1)}{2}=\sfrac{(z+1)}{2}$.  Given the distribution $\topt{\Ud}_{1,2\maxval}^{\maxval'}$, expected residual surplus is calculated to be:

\begin{equation}
\nonumber
\OPT_{\topt{\Ud}_{1,2\maxval}^{\maxval'}}(\topt{\Ud}_{1,2\maxval}^{\maxval'}) = \maxval\cdot\left(\frac{\maxval}{(2\maxval-1)}\right) + \left(1+\frac{(\maxval-1)}{2}\right)\cdot\left(\frac{(\maxval-1)}{(2\maxval-1)}\right)=\frac{(3\maxval^2-1)}{(2(2\maxval-1))}
\end{equation}

\noindent Residual surplus from the Uniforms blend $\blendi[1]$ (using un-normalized weights) gives:
\begin{align}
\text{opt}_{2,1} 
    \nonumber
    &= \omega_{\text{pm}}\cdot \frac{(3\maxval^2-1)}{(2(2\maxval-1))}+\int_1^{\maxval} \omega_{Uz}\cdot\left(\frac{(z+1)}{2}\right)\\
    \nonumber
    &= \frac{(2\maxval-1)^2}{\maxval^2}\cdot \frac{(3\maxval^2-1)}{(2(2\maxval-1))}+\int_1^{\maxval} \frac{2(z-1)^2}{z^3}\cdot \left(\frac{(z+1)}{2}\right)~dz\\
    \nonumber
    &= \frac{(2\maxval-1)\cdot(3\maxval^2-1)}{2\maxval^2}+\left[z-\ln z +\frac{1}{z}-\frac{1}{2z^2}\right]_1^{\maxval}\\
    \nonumber
    &= \frac{6\maxval^3-3\maxval^2-2\maxval+1}{2\maxval^2}+\left[(\maxval-1) -\ln\maxval +\left(\frac{1}{\maxval}-1\right)-\left(\frac{1}{2\maxval^2}-\frac{1}{2}\right) \right]\\
    \label{eqn:foot:ressurpuniformunnorm}
    &= \frac{4\maxval^3-3\maxval^2-\maxval^2\cdot\ln\maxval}{\maxval^2} = 4\maxval-3-\ln\maxval
 \end{align}

\subsubsection{Lower Bound on Expected Optimal Residual Surplus from Quadratics}
\label{s:expectedressurpquads}

The goal of the calculations in this section is to quantify a lower bound on the expected optimal residual surplus from the Quadratics blend $\blendi[2]$.  Thus, we want: $\text{lb}_{2,2}< \text{opt}_{2,2}$.

We do this in place of calculating $\text{opt}_{2,2}$ which is more complicated technically.  Further, the lower bound $\text{lb}_{2,2}$ must be strictly larger than $\text{opt}_{2,1}$ (equation~\eqref{eqn:foot:ressurpuniformunnorm} just above), in which case we can exhibit a prior independent approximation lower bound from the ratio $\sfrac{\text{lb}_{2,2}}{\text{opt}_{2,1}}$ (see the proof of \Cref{thm:finitequadsversusunifsressurppiauction} below).

This section only includes high-level introduction of the structures that are necessary to calculate $\text{lb}_{2,2}$ and state that it is a lower bound.  Therefore we only give here: the residual surplus curve for the Quadratics (recall -- as a function of quantile); the definition for the quantity $\text{lb}_{2,2}$; and \Cref{lem:ressurpofmechopm} which shows that $\text{lb}_{2,2}$ is an appropriate lower bound for our purposes.  Supporting material for this section -- including explanations, sub-calculations, and proofs -- is provided in \Cref{s:ressurpquadsvsunifssupporting}.\label{page:ressurphardboundsupporting}

 First we give the un-ironed residual surplus curve for the Quadratics disributions with CDF of the specific distribution $\topt{\Qud}_{z=1}^{\maxval'}$ given by $\topt{\Qud}_1^{\maxval'}(x) = 1 - \sfrac{1}{x}$ on $x\in[1,\maxval]$, and $\topt{\Qud}_1^{\maxval'}(x) = 1$ for $x\geq \maxval$.  Explanation are given in \Cref{s:ressurpcurvequad1} (page~\pageref{page:quadraticsrscurve}).  The residual surplus curve is
\begin{equation}
\label{eqn:ressurpcurveofquad1h}
\rev_{\topt{\Qud}_1^{\maxval'}}(\quant) =
\begin{cases}
0 & \text{for}~\quant\in[0,\sfrac{1}{\maxval}]\\
\ln(\quant\cdot\maxval) & \text{for}~\quant\in[\sfrac{1}{\maxval},1)\\
[\ln \maxval,1+\ln\maxval] & \text{for}~\quant = 1
\end{cases}
\end{equation}

\noindent Next, recall, we have the definition $\text{opt}_{2,2}= \mathbf{E}_{\dist\sim\blendi[2]}\left[\OPT_{\dist}(\dist) \right]$, which embeds the weights $\boldsymbol{o}$ (from $\blendi[2]$).  At a high level, the quantity $\text{lb}_{2,2}$ is similarly a calculation of weighted residual surplus, according to weights $\boldsymbol{o}$.  With explanation to follow, we formally define:

\begin{align}
    \label{eqn:ressurpquadscalcdefoflb22}
    \text{lb}_{2,2} &\vcentcolon=& \hspace{-2.8cm}
    o_{\text{pm}}\cdot \mechap_{o_{\text{pm}}}(\topt{\Qud}_{1}^{\maxval'}) &\quad+ \left[\int_1^{\maxval} o_z\cdot \text{LOT}_2(\topt{\Qud}_{z}^{\maxval'}) \right]\\
    \nonumber
    & =& \hspace{-2.8cm}
    1\cdot \mechap_{o_{\text{pm}}}(\topt{\Qud}_{1}^{\maxval'}) &\quad+ \left[\int_1^{\maxval} \frac{2}{z}\cdot \text{LOT}_2(\topt{\Qud}_{z}^{\maxval'})~dz \right]
\end{align}

\noindent Specifically, $\text{lb}_{2,2}$ is calculated using the residual surplus of the 2-lottery (on the corresponding distributions) for all weights $o_z = \sfrac{2}{z}\cdot dz$ making up the integral part of the $\blendi[2]$ blend.  (We do this for simplicity even though the 2-lottery is sub-optimal for a range of $z$-parameters of distributions within this component of the blend.)

The only element of $\blendi[2]$ for which it does not use the 2-lottery is $\topt{\Qud}_1^{\maxval'}$ (with weight $o_{\text{pm}}=1$) where it uses the residual surplus $\mechap_{o_{\text{pm}}}(\topt{\Qud}_1^{\maxval'})$ for a specially constructed mechanism $\mechap_{o_{\text{pm}}}$ (for which we defer presentation to \Cref{def:ressurpspecialmech} in \Cref{s:twopieceironmechandperformance}).\label{page:ressurpspecialmech}

The point is that while the lottery is not necessarily optimal where we use its performance, this relaxed lower bound simplifies our calculation generally to only require calculating expected residual surplus for a single Quadratic distribution, in particular the performance of $\mechap_{o_{\text{pm}}}$ on distribution $\topt{\Qud}_{1}^{\maxval'}$.  Note, the total quantity $\text{lb}_{2,2}$ is for comparison only -- there is no prior independent mechanism that can commit to this behavior (which varies by distribution) and achieve this precise performance.

The expected residual surplus $\mechap_{o_{\text{pm}}}(\topt{\Qud}_1^{\maxval'})$ is stated in \Cref{lem:ressurpofmechopm}, though its proof is also deferred to \Cref{s:ressurpspecialmech}.

\begin{lemma}
\label{lem:ressurpofmechopm}
The residual surplus of mechanisms $\mecha_{o_{\text{pm}}}$ and $\text{LOT}_2$ given 2 agents with values drawn i.i.d.\ from $\topt{\Qud}_1^{\maxval'}$ are calculated as \begin{align*}
\mechap_{o_{\text{pm}}}(\topt{\Qud}_1^{\maxval'}) &= \frac{((2+\ln\maxval)\maxval-(1+\ln\maxval)e)}{\maxval}\\
\text{{\em LOT}}_2(\topt{\Qud}_1^{\maxval'}) &= 1 + \ln\maxval
\end{align*}
\end{lemma}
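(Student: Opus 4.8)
The plan is to compute both expected residual surpluses directly from the residual-surplus revenue curve of $\topt{\Qud}_1^{\maxval'}$ given in equation~\eqref{eqn:ressurpcurveofquad1h}, together with the two-agent version of \Cref{lem:DRY-15} (the extension promised in \Cref{s:ressurpquadsvsunifssupporting}): in an i.i.d.\ two-agent single-item setting, the expected performance of the $2$-lottery ($=\text{AP}_0$) equals twice the area under the (un-ironed) residual surplus curve, while the expected performance of a mechanism whose allocation is constant on a region of quantile space equals twice the area under the corresponding partially-ironed curve. For $\text{LOT}_2$, the allocation is identically $1$ (the item always sells, given away for free), so I would integrate $2\int_0^1 \rev_{\topt{\Qud}_1^{\maxval'}}(\quant)\,d\quant$; but the factor-of-two bookkeeping must be reconciled with the residual-surplus definition of the curve in footnote~\ref{foot:altrevs}, and in fact the clean way is to observe that $\text{LOT}_2$ simply gives the item to a uniformly random agent, so its residual surplus is exactly $\mathbf{E}_{\val\sim \topt{\Qud}_1^{\maxval'}}[\val]$, the mean of the truncated equal-revenue distribution. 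That mean is $\int_1^{\maxval} \tfrac{1}{x}\,dx + \maxval\cdot\tfrac{1}{\maxval} = \ln\maxval + 1$, which is the claimed value.

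For $\mecha_{o_{\text{pm}}}$, the approach is to plug in its definition (deferred to \Cref{def:ressurpspecialmech}) and recognize that it is the two-piece-ironed mechanism tailored to the residual surplus curve~\eqref{eqn:ressurpcurveofquad1h}: since that curve has a flat piece $[0,\sfrac1\maxval]$, a concave logarithmic piece $\ln(q\maxval)$ on $[\sfrac1\maxval,1)$, and a jump discontinuity at $q=1$ (the point mass at $\maxval$), the optimal mechanism for a single agent irons across exactly one or two regions, and $\mecha_{o_{\text{pm}}}$ is designed to realize the concave hull by committing to constant allocation on a strictly-ironed stretch. I would then apply the two-agent extension of \Cref{lem:DRY-15}: the residual surplus of $\mecha_{o_{\text{pm}}}$ is twice the area under the smallest monotone concave upper bound of $\rev_{\topt{\Qud}_1^{\maxval'}}$ restricted to the quantile range where $\mecha_{o_{\text{pm}}}$ allocates. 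Concretely, the concave hull is obtained by drawing the tangent line from the flat piece to the logarithmic piece; the tangency point $q^*$ satisfies $\ln(q^*\maxval) = \tfrac{d}{dq}\ln(q\maxval)\big|_{q^*}\cdot q^* = 1$, i.e.\ $q^* = e/\maxval$ (which requires $\maxval \geq e$, consistent with the standing assumption $\maxval\geq \RSMINH$). Then the area computation splits into the flat-then-linear part on $[0,e/\maxval]$ and the logarithmic part on $[e/\maxval,1]$; integrating $\ln(q\maxval)$ over the latter gives $[\,q\ln(q\maxval) - q\,]_{e/\maxval}^{1} = \ln\maxval$, and the linear ramp contributes $\tfrac12\cdot\tfrac{e}{\maxval}\cdot 1$, and one adds the point-mass contribution at $q=1$ if $\mecha_{o_{\text{pm}}}$ does not serve it. Assembling these, multiplying by the appropriate two-agent factor, and simplifying should collapse to $\frac{(2+\ln\maxval)\maxval - (1+\ln\maxval)e}{\maxval}$.

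The main obstacle is getting the $\mecha_{o_{\text{pm}}}$ calculation exactly right: it requires pinning down which region of quantile space $\mecha_{o_{\text{pm}}}$ actually allocates on (equivalently, the precise reserve/ironing threshold built into \Cref{def:ressurpspecialmech}), correctly handling the point mass at $\maxval$ (the $g_{0D}$/$g_{1D}$ bookkeeping and footnote~\ref{foot:dirac}-style limiting arguments), and ensuring that the ``twice the area'' identity is applied with the residual-surplus curve convention of footnote~\ref{foot:altrevs} rather than the revenue convention. I would therefore structure the proof as: (i) state the two-agent residual-surplus analogue of \Cref{lem:DRY-15}; (ii) recall equation~\eqref{eqn:ressurpcurveofquad1h} and compute its mean, yielding $\text{LOT}_2(\topt{\Qud}_1^{\maxval'}) = 1+\ln\maxval$; (iii) recall \Cref{def:ressurpspecialmech}, identify the induced (partially) ironed curve and its allocation support; (iv) integrate to get the area, handle the point mass separately, and conclude. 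Steps (i), (iii) are where the real content sits; steps (ii), (iv) are routine once the conventions are fixed.
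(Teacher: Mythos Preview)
Your computation of $\text{LOT}_2(\topt{\Qud}_1^{\maxval'})=1+\ln\maxval$ as the mean of the truncated equal-revenue distribution is correct and in fact slightly more direct than the paper's route, which instead applies the ``twice the area'' extension of \Cref{lem:DRY-15} to the fully ironed curve (a single segment from $(0,0)$ to $(1,1+\ln\maxval)$).

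The gap is in your treatment of $\mecha_{o_{\text{pm}}}$. You have guessed that it is (an approximation to) the optimal mechanism, so that the relevant curve is the concave hull: linear on $[0,e/\maxval]$ and then following the logarithmic arc $\ln(q\maxval)$ on $[e/\maxval,1]$. This is not what $\mecha_{o_{\text{pm}}}$ is. Per \Cref{def:ressurpspecialmech}, the two-piece-iron mechanism irons \emph{both} regions $\mathcal{Q}^+=\{[0,e/\maxval],\,[e/\maxval,1]\}$, so the induced ironed curve $\bar{\rev}_{\topt{\Qud}_1^{\maxval'},\mathcal{Q}^+}$ is piecewise \emph{linear} on both pieces: a segment from $(0,0)$ to $(e/\maxval,1)$, and then a second segment from $(e/\maxval,1)$ to $(1,\,1+\ln\maxval)$, the top of the point-mass jump. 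In particular $\mecha_{o_{\text{pm}}}$ is strictly suboptimal (this is exactly the content of \Cref{lem:Moflb22islbversusLOTofopt22}), and your proposed integral of $\ln(q\maxval)$ on $[e/\maxval,1]$ would compute a different, larger quantity that does not collapse to the stated formula.

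Once you use the correct two-piece \emph{linear} ironed curve, the area is just triangle plus trapezoid:
\[
\tfrac12\cdot\tfrac{e}{\maxval}\cdot 1 \;+\; \tfrac12\bigl(1+(1+\ln\maxval)\bigr)\Bigl(1-\tfrac{e}{\maxval}\Bigr)
\;=\;\tfrac12\cdot\frac{(2+\ln\maxval)\maxval-(1+\ln\maxval)e}{\maxval},
\]
and doubling gives the claimed residual surplus. No separate point-mass bookkeeping is needed: the point mass is already absorbed into the right endpoint $1+\ln\maxval$ of the second ironed segment.
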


\noindent The following lemma states that the quantity $\text{lb}_{2,2}$ is strictly upper bounded by $\text{opt}_{2,2}$ and strictly lower bounded by $\text{opt}_{2,1}$.  Its proof is deferred to \Cref{s:twopieceironstrictbounds}.

\begin{lemma}
\label{lem:lb22islbversusopt22}
Given $\text{{\em opt}}_{2,1}$ and $\text{{\em opt}}_{2,2}$ resulting from the finite-weight Quadratics-versus-Uniforms dual blends (along with the rest of the local assumptions of this section), and $\text{{\em lb}}_{2,2}$ as defined in equation~\eqref{eqn:ressurpquadscalcdefoflb22}.  Then we have
\begin{equation*}
    \text{{\em opt}}_{2,2} > \text{{\em lb}}_{2,2} > \text{{\em opt}}_{2,1}
\end{equation*}
\end{lemma}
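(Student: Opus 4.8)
The plan is to establish the two strict inequalities in Lemma~\ref{lem:lb22islbversusopt22} separately. For the right-hand inequality $\text{lb}_{2,2} > \text{opt}_{2,1}$, I would simply substitute the explicit formulas: $\text{opt}_{2,1} = 4\maxval - 3 - \ln\maxval$ from equation~\eqref{eqn:foot:ressurpuniformunnorm}, and $\text{lb}_{2,2}$ from equation~\eqref{eqn:ressurpquadscalcdefoflb22} using Lemma~\ref{lem:ressurpofmechopm} for the point-mass term $\mechap_{o_{\text{pm}}}(\topt{\Qud}_1^{\maxval'}) = \frac{(2+\ln\maxval)\maxval - (1+\ln\maxval)e}{\maxval}$, together with a computation of the integral term $\int_1^{\maxval} \frac{2}{z}\cdot \text{LOT}_2(\topt{\Qud}_z^{\maxval'})~dz$. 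The residual surplus of $\text{LOT}_2$ on a scaled quadratic $\topt{\Qud}_z^{\maxval'}$ is obtained from the mean of that distribution (the $2$-lottery gives the item away free, so its residual surplus is the expected value of a single draw), and the integral then reduces to a closed form in $\maxval$. Adding the two contributions yields $\text{lb}_{2,2}$ as an explicit function of $\maxval$, and I would verify the difference $\text{lb}_{2,2} - \text{opt}_{2,1}$ is positive for $\maxval \geq \RSMINH$; this is where the assumption $\maxval \geq \RSMINH \approx 8.56$ gets used — for smaller $\maxval$ the lottery is good enough on the quadratics that no gap appears. I expect this difference, after simplification, to match (up to the denominator normalization) the gap implicit in $\lowb_{\maxval}^{\scF^{\text{quad}}}$ from equation~\eqref{eqn:finitequadsversusunifsressurppiauction}.

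For the left-hand inequality $\text{opt}_{2,2} > \text{lb}_{2,2}$, the key observation is term-by-term domination. Recall $\text{opt}_{2,2} = o_{\text{pm}}\cdot \OPT_{\topt{\Qud}_1^{\maxval'}}(\topt{\Qud}_1^{\maxval'}) + \int_1^{\maxval} o_z\cdot \OPT_{\topt{\Qud}_z^{\maxval'}}(\topt{\Qud}_z^{\maxval'})~dz$, while $\text{lb}_{2,2}$ replaces each $\OPT_{\topt{\Qud}_z^{\maxval'}}$ inside the integral by the (generally suboptimal) mechanism $\text{LOT}_2$, and replaces $\OPT_{\topt{\Qud}_1^{\maxval'}}$ by the specially constructed mechanism $\mechap_{o_{\text{pm}}}$. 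Since for every distribution $\dist$ and every feasible mechanism $\mecha$ we have $\mecha(\dist) \leq \OPT_{\dist}(\dist)$ by definition of $\OPT_{\dist}$ (Definition~\ref{def:bayesianmechdp}, and Fact~\ref{fact:localoptisworseglobal}), each replaced term can only decrease. To get the inequality \emph{strictly}, it suffices that at least one replacement is strict — e.g., for $\maxval \geq \RSMINH$ the $2$-lottery is strictly suboptimal for $\topt{\Qud}_z^{\maxval'}$ on a positive-measure set of scales $z$ (the quadratics have increasing virtual value away from the point mass, so ironing the whole region is not optimal — one can do better by partially separating the top), which forces $\int_1^{\maxval} o_z(\OPT_{\topt{\Qud}_z^{\maxval'}} - \text{LOT}_2)(\topt{\Qud}_z^{\maxval'})~dz > 0$.

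The main obstacle, and the part requiring genuine care rather than bookkeeping, is the \emph{strictness} argument for $\text{opt}_{2,2} > \text{lb}_{2,2}$: I must confirm that $\text{LOT}_2$ is not already the optimal (residual-surplus) mechanism for the truncated quadratics $\topt{\Qud}_z^{\maxval'}$ when $\maxval$ is large. This rests on analyzing the residual-surplus revenue curve of $\topt{\Qud}_z^{\maxval'}$ in quantile space (equation~\eqref{eqn:ressurpcurveofquad1h} gives the unscaled case): the curve is $0$ on $[0,\sfrac{z}{\maxval}]$, then $\ln(\quant\maxval/z)$ on the continuous part, then jumps at $\quant=1$ because of the point mass at $\maxval$. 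Ironing the entire $[0,1]$ interval (which is what $\text{LOT}_2$ / $\text{AP}_0$ does) gives a line through the origin; but because of the kink at the point mass and the concavity structure, the concave hull of this curve is \emph{not} a single line through the origin once $\maxval$ is large enough relative to $e$ — there is a nontrivial concave-hull region that a reserve-price or two-piece-ironed mechanism exploits. This is precisely the role of the threshold $\RSMINH$. I would either quote the supporting computations promised in Section~\ref{s:ressurpquadsvsunifssupporting} (the two-piece ironed mechanism and its performance, Definition~\ref{def:ressurpspecialmech}), or give a short direct argument that $\OPT_{\topt{\Qud}_z^{\maxval'}}(\topt{\Qud}_z^{\maxval'}) - \text{LOT}_2(\topt{\Qud}_z^{\maxval'})$ is bounded below by a positive quantity on a fixed sub-interval of $z\in[1,\maxval]$ whenever $\maxval\geq\RSMINH$, which suffices. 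Everything else is routine integration of elementary functions.
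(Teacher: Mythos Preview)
Your approach is correct and close in spirit to the paper's, but you miss a simplification that makes the paper's proof essentially a one-liner. The paper observes that because $\text{LOT}_2$ is a fixed mechanism and both blends induce the same correlated distribution $g$, one may rewrite $\text{opt}_{2,1}$ using the \emph{Quadratics} weights $\boldsymbol{o}$: namely $\text{opt}_{2,1} = o_{\text{pm}}\cdot\text{LOT}_2(\topt{\Qud}_1^{\maxval'}) + \int_1^{\maxval} o_z\cdot\text{LOT}_2(\topt{\Qud}_z^{\maxval'})\,dz$. With this rewriting, the integral terms of $\text{lb}_{2,2}$ and $\text{opt}_{2,1}$ are \emph{identical}, so the right-hand inequality reduces entirely to the point-mass comparison $\mechap_{o_{\text{pm}}}(\topt{\Qud}_1^{\maxval'}) > \text{LOT}_2(\topt{\Qud}_1^{\maxval'})$, which is Lemma~\ref{lem:ressurpofmechopm} plus the check that $\maxval - (1+\ln\maxval)e > 0$ for $\maxval\geq\RSMINH$. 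No explicit integration over $z$ is needed. Similarly, for the left-hand inequality the paper sources strictness from the point-mass term as well, via $\OPT_{\topt{\Qud}_1^{\maxval'}}(\topt{\Qud}_1^{\maxval'}) > \mechap_{o_{\text{pm}}}(\topt{\Qud}_1^{\maxval'})$ (the two-piece iron is strictly suboptimal on the region $[\sfrac{e}{\maxval},1]$), while the integral term uses only the weak inequality $\OPT\geq\text{LOT}_2$. Your strictness argument via the integral (that $\text{LOT}_2$ is strictly suboptimal for $\topt{\Qud}_z^{\maxval'}$ on a positive-measure set of $z$) is valid too, but the paper's route concentrates all the work into a single supporting lemma about the one distribution $\topt{\Qud}_1^{\maxval'}$, avoiding any analysis of the family indexed by $z$.
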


\noindent We now have the outline and justification to calculate the quantity $\text{lb}_{2,2}$ as a meaningful lower bound for $\text{opt}_{2,2}$ towards using the Blends Technique to prove a non-trivial residual surplus gap.  The calculation of $\text{lb}_{2,2}$ is a simple adjustment from $\text{opt}_{2,1}$ which runs the 2-lottery everywhere, versus, the quantity $\text{lb}_{2,2}$ is calculated from running the 2-lottery everywhere except with weight $o_{\text{pm}}=1$ it measures performance $\mechap_{o_{\text{pm}}}(\topt{\Qud}_{1}^{\maxval'})$ rather than $\text{LOT}_2(\topt{\Qud}_{1}^{\maxval'})$.  Therefore we have:
\begin{align}
    \nonumber \text{lb}_{2,2} &= \left[\text{opt}_{2,1}\right] + o_{\text{pm}}\left( \mechap_{o_{\text{pm}}}(\topt{\Qud}_{1}^{\maxval'}) -\text{LOT}_2(\topt{\Qud}_{1}^{\maxval'})\right)\\
    \nonumber
    &=  \left[4\maxval-3-\ln\maxval\right]+1\cdot\left(\frac{((2+\ln\maxval)\maxval-(1+\ln\maxval)e)}{\maxval} - (1 + \ln\maxval) \right)\\
    &=  \frac{4\maxval^2-2\maxval-\maxval\ln\maxval-e\ln\maxval-e}{\maxval}
\end{align}


\subsubsection{The Residual Surplus Gap as Lower Bound}
\label{s:ressurpgapfromratio}

For 2-agent, 1-item prior independent revenue auctions, we have now established a necessary residual surplus gap via blends:
\begin{equation}
\label{eqn:fourthousandthsmax}
    \frac{\text{opt}_{2,2}}{\text{opt}_{2,1}} >\frac{\text{lb}_{2,2}}{\text{opt}_{2,1}}= \frac{4\maxval^2-2\maxval-\maxval\ln\maxval-e\ln\maxval-e}{4\maxval^2-3\maxval-\maxval\ln\maxval} ,\quad\quad \lim_{\maxval\rightarrow\infty} \frac{\text{lb}_{2,2}}{\text{opt}_{2,1}} = 1
\end{equation}
\noindent where the limit calculation is obvious from observing highest-order terms (equivalently, from repeated application of l'Hopital's rule).  Evaluation in the limit makes clear that ratio-gaps {\em from our loose calculations} for any finite $\maxval$ are the result only of differences in lower order terms.

The applicability of the residual surplus gap as a prior independent lower bound is subject to the design problem's parameter $\scF$ to describe the adversary's allowable choice set of distributions.  However for residual surplus problems, there is precedent to allow the full set of distributions $\scF^{\text{all}}$.\footnote{\label{foot:ressurpeasierthanrev} In contrast to revenue, there exist prior independent mechanism design results for residual surplus that are constant approximations when allowing all distributions \citep{HR-14}.  Intuitively, residual surplus auctions are ``easier" than revenue because residual surplus virtual values are non-negative for all values.}  
Putting this together with equation~\eqref{eqn:fourthousandthsmax} and \Cref{thm:blendsbound} gives the following theorem to exhibit an approximation lower bound, which parallels \Cref{thm:finitequadsversusunifsrevpiauction} for revenue.


\begin{numberedtheorem}{\ref{thm:finitequadsversusunifsressurppiauction}}
\label{thm:finitequadsversusunifsressurppiauction2}
Given a single-item, 2-agent, truthful auction setting with a residual surplus objective and with agent values restricted to the space $[1,\maxval]$ for $\maxval\geq \RSMINH$.  For the class of quadratic distributions $\scF^{\text{{\em quad}}}$, the optimal prior independent approximation factor of any (truthful) mechanism is lower bounded as:
\begin{equation*}
    \piratio^{\scF^{\text{\em quad}}}_{\maxval} \geq
    \frac{\text{{\em opt}}_{2,2}}{\text{{\em opt}}_{2,1}}> \frac{4\maxval^2-2\maxval-\maxval\ln\maxval-e\ln\maxval-e}{4\maxval^2-3\maxval-\maxval\ln\maxval} = \lowb_{\maxval}^{\scF^{\text{\em quad}}}
\end{equation*}
\noindent The lower bound $\lowb^{\scF^{\text{{\em quad}}}}_{\maxval}\rightarrow1$ as $\maxval\rightarrow\infty$.  As an example bound: for $\maxval \in \mathbb{N}$, the maximum of $\lowb^{\scF^{\text{{\em quad}}}}_{\maxval}$ is achieved at $\maxval=\EXBRSWN$ with $\lowb^{\scF^{\text{\em quad}}}_{\EXBRSWN} \approx \EXBRS$.
\end{numberedtheorem}

\noindent As a corollary, our bound here holds for $\scF^{\text{all}}$.

\subsection{Supporting Work for Quadratics-versus-Uniforms Residual Surplus Gap}
\label{s:ressurpquadsvsunifssupporting}

This section provides material to support \Cref{s:expectedressurpquads}.  The presentation generally assumes its terms, assumptions, and context while only restating the most important definitions here. 

\begin{itemize}
    \item \Cref{s:ressurpcurvequad1} justifies the residual surplus curve of $\topt{\Qud}_1^{\maxval'}$, stated previously in equation~\eqref{eqn:ressurpcurveofquad1h}.
    \item \Cref{s:twopieceironmechandperformance} defines $\mecha_{o_{\text{pm}}}$ and calculates its residual surplus on $\topt{\Qud}_1^{\maxval'}$.  
\item \Cref{s:twopieceironstrictbounds} calculates the residual surplus of the 2-lottery on $\topt{\Qud}_1^{\maxval'}$ to show that $\mecha_{o_{\text{pm}}}$ has better performance, and concludes that we have strictly $\text{opt}_{2,2}>\text{lb}_{2,2}>\text{opt}_{2,1}$.
\item \Cref{s:ressurphardboundwhyuselowerbound} gives further supporting analysis for completeness, for example it explains the choice to use $\text{lb}_{2,2}$ rather than calculating $\text{opt}_{2,2}$ and describes the 
un-ironed residual surplus curve for $\topt{\Qud}_1^{\maxval'}$ and the design of $\mecha_{o_{\text{pm}}}$.
\end{itemize}


\noindent One critical assumption that we do repeat is $\maxval\geq \RSMINH$.  Copying equation~\eqref{eqn:ressurpquadscalcdefoflb22} for local reference, we have 
\begin{equation*}
    \label{eqn:ressurpquadscalcdefoflb22v2}
    \text{lb}_{2,2} =1\cdot \mechap_{o_{\text{pm}}}(\topt{\Qud}_{1}^{\maxval'}) \quad+\quad \left[\int_1^{\maxval} \frac{2}{z}\cdot \text{LOT}_2(\topt{\Qud}_{z}^{\maxval'})~dz \right]
\end{equation*}

\subsubsection{The Residual Surplus Curve for the Quadratic on $[1,\maxval]$}
\label{s:ressurpcurvequad1}

\label{page:quadraticsrscurve} \noindent This section explains the un-ironed residual surplus curve for $\topt{\Qud}_{z=1}^{\maxval'}$.  Recall, the CDF of the specific distribution $\topt{\Qud}_{z=1}^{\maxval'}$ is given by $\topt{\Qud}_1^{\maxval'}(x) = 1 - \sfrac{1}{x}$ on $x\in[1,\maxval]$, and $\topt{\Qud}_1^{\maxval'}(x) = 1$ for $x\geq \maxval$.  I.e., the CDF has a vertical line segment at $x =\maxval$ where it maps to the set-value $[1-\sfrac{1}{\maxval},1]$, because the distribution is top-truncated with a point mass at $\maxval$.  With explanation to follow, we restate the residual surplus curve:
\begin{equation*}
\rev_{\topt{\Qud}_1^{\maxval'}}(\quant) =
\begin{cases}
0 & \text{for}~\quant\in[0,\sfrac{1}{\maxval}]\\
\ln(\quant\cdot\maxval) & \text{for}~\quant\in[\sfrac{1}{\maxval},1)\\
[\ln \maxval,1+\ln\maxval] & \text{for}~\quant = 1
\end{cases}
\end{equation*}

\noindent This residual surplus curve $\rev_{\topt{\Qud}_1^{\maxval'}}$ 
is illustrated in \Cref{fig:twopieceiron}.  We proceed to justify this equation.  

The residual surplus curve $\rev_{\topt{\Qud}_1^{\maxval'}}(\cdot)$ is defined piece-wise including (a) a piece that is identically 0 for quantiles $\quant \in [0, \sfrac{1}{\maxval}]$ (from top-truncation); and (b) a piece that is a vertical line segment at $\quant = 1$ of length $z=1$ (from consideration of price-posting in $[0,1]$).  The lower end point of this vertical line segment is identified by the residual surplus curve at quantile $\quant = 1$ corresponding to the lower bound on the distribution's domain in value space, in this case value $\val = 1$.   We will show next that the height of the residual surplus curve corresponding to this point is in fact $\ln \maxval$.  The expected residual surplus from one agent value drawn from $\topt{\Qud}_1^{\maxval'}$ with a posted price of 1 is:
\begin{equation*}
    \text{AP}_1(\topt{\Qud}_1^{\maxval'})=\expecta_{\val\sim\topt{\Qud}_1^{\maxval'}}\left[ \val-1\right]=\left[\frac{1}{\maxval}\cdot\maxval + \int_1^{\maxval} \frac{1}{z^2}\cdot z~dz\right]-1 = \ln\maxval
\end{equation*}
\noindent Thus, the exact description of the vertical line segment is the set-valued output range of $[\ln\maxval,1+\ln\maxval]$ at input $\quant = 1$.  We now show that generally, the residual surplus curve for quantiles $\quant \in [\sfrac{1}{\maxval},1]$ is described by $\rev_{\topt{\Qud}_1^{\maxval'}}(\quant) = \ln (\quant\cdot \maxval)$.  (Combined with line-segment-pieces (a) and (b) which have already been explained, this completes the description of the residual surplus curve of $\topt{\Qud}_1^{\maxval'}$.)

Given the distribution $\topt{\Qud}_1^{\maxval'}$ and the residual surplus objective, as functions of value inputs $\val\in[1,\maxval]$, and then as functions of quantiles $\quant\in[\sfrac{1}{\maxval},1]$, the virtual value function and quantile/value functions are
\begin{align*}
     \vv^{\topt{\Qud}_1^{\maxval'}}(\val) &= \frac{1-\topt{\Qud}_1^{\maxval'}(\val)}{\topt{\qud}_1^{\maxval'}(\val)} = \frac{ 1 - (1-\sfrac{1}{\val})}{\sfrac{1}{\val^2}} = \val\\
    \quantf_{\topt{\Qud}_1^{\maxval'}}(\val) &= 1 - \topt{\Qud}_1^{\maxval'}(\val) = 1-(1-\sfrac{1}{\val}) = \sfrac{1}{\val}\\
    \vv^{\topt{\Qud}_1^{\maxval'}}(\quant) &= \sfrac{1}{\quant},\quad\quad \valf_{\topt{\Qud}_1^{\maxval'}}(\quant) = \sfrac{1}{\quant}
\end{align*}
\noindent Using the identity $\vv^{\dist}(\quant) = \rev'_{\dist}(\quant)$ from \Cref{fact:vv_rprime} and then integrating the function $\vv^{\topt{\Qud}_1^{\maxval'}}$ just given ($\int_{\sfrac{1}{\maxval}}^q\sfrac{1}{z}~dz$), we confirm the case for $\quant\in[\sfrac{1}{\maxval},1]$ of equation~\eqref{eqn:ressurpcurveofquad1h} that states: the residual surplus curve is described by $\rev_{\topt{\Qud}_1^{\maxval'}}(\quant) = \ln (\quant\cdot \maxval)$ on this sub-domain of quantile space.

\subsubsection{Definition and Residual Surplus of the ``Two-piece-iron" Mechanism}
\label{s:twopieceironmechandperformance}
\label{s:ressurpspecialmech}

The goal of this section is to define the special mechanism $\mecha_{o_{\text{pm}}}$ as needed by \Cref{s:expectedressurpquads} for the calcuation of $\text{lb}_{2,2}$, and calculate its expected residual surplus given 2 agents with values drawn i.i.d.\ from $\topt{\Qud}_1^{\maxval'}$ within the deferred proof of \Cref{lem:ressurpofmechopm}.  Note, the presentation is within the context of proving that the residual surplus $\mechap_{o_{\text{pm}}}(\topt{\Qud}_1^{\maxval'})$ is strictly more than the residual surplus $\text{LOT}_2(\topt{\Qud}_1^{\maxval'})$ of the 2-lottery.

For use in this section, we need to extend the definition of an {\em ironed residual surplus curve} to allow arbitrary ironing.\footnote{\label{foot:extendironeddef} The original definition for an ironed residual surplus curve was derived from the definition of an ironed revenue curve, specifically in the context of {\em optimal ironing}.  See page~\pageref{page:ironedcurvedef}.}\label{page:definearbitraryiron}  Let $\mathcal{Q}$ be any {\em possibly-non-optimal} set of non-overlapping ranges to be ironed (thus each element of $\mathcal{Q}$ is a subset of $[0,1]\in\reals$).  Given a distribution $\dist$, define $\bar{\rev}_{\dist,\mathcal{Q}}$ to be the residual surplus curve given base-distribution $\dist$ and its residual surplus curve $\rev_{\dist}$, and then accounting for $\mathcal{Q}$ as the given list of ironed regions.

\begin{figure}[t]
\begin{flushleft}
\begin{minipage}[t]{0.48\textwidth}
\centering
\begin{tikzpicture}[scale = 0.7,pile/.style={->}]

\draw (-0.2,0) -- (9, 0);
\draw [pile] (0, -0.2) -- (0, 4.5);

\draw [very thick] (9, 2.995732) -- (9, 3.995732);
\draw [very thick] (0, 0) -- (0.45, 0);
\draw (9,-0.2) -- (9,0.2);
\draw [dotted] (1.223227, 0) -- (1.223227, 1);
\draw [dotted] (0,1) -- (1.223227, 1);
\draw [dotted] (0,2.995732) -- (9, 2.995732);
\draw [dotted] (0,3.995732) -- (9, 3.995732);

\draw [dashed] (0,0) -- (5.504521,4.5);

\fill[black] (0,0) circle (0.1cm);
\fill[black] (0.45,0) circle (0.1cm);
\fill[black] (9,2.995732) circle (0.1cm);
\fill[black] (9,3.995732) circle (0.1cm);
\fill[black] (1.223227,1) circle (0.1cm);

\begin{scope}[very thick]

\draw plot [smooth, tension=0.8] coordinates {
(0.450,0.000000)
(0.495,0.095310)
(0.540,0.182322)
(0.585,0.262364)
(0.630,0.336472)
(0.675,0.405465)
(0.720,0.470004)
(0.765,0.530628)
(0.810,0.587787)
(0.855,0.641854)
(0.900,0.693147)
(0.945,0.741937)
(0.990,0.788457)
(1.035,0.832909)
(1.080,0.875469)
(1.125,0.916291)
(1.170,0.955511)
(1.215,0.993252)
(1.260,1.029619)
(1.305,1.064711)
(1.350,1.098612)
(1.395,1.131402)
(1.440,1.163151)
(1.485,1.193922)
(1.530,1.223775)
(1.575,1.252763)
(1.620,1.280934)
(1.665,1.308333)
(1.710,1.335001)
(1.755,1.360977)
(1.800,1.386294)
(1.845,1.410987)
(1.890,1.435085)
(1.935,1.458615)
(1.980,1.481605)
(2.025,1.504077)
(2.070,1.526056)
(2.115,1.547563)
(2.160,1.568616)
(2.205,1.589235)
(2.250,1.609438)
(2.295,1.629241)
(2.340,1.648659)
(2.385,1.667707)
(2.430,1.686399)
(2.475,1.704748)
(2.520,1.722767)
(2.565,1.740466)
(2.610,1.757858)
(2.655,1.774952)
(2.700,1.791759)
(2.745,1.808289)
(2.790,1.824549)
(2.835,1.840550)
(2.880,1.856298)
(2.925,1.871802)
(2.970,1.887070)
(3.015,1.902108)
(3.060,1.916923)
(3.105,1.931521)
(3.150,1.945910)
(3.195,1.960095)
(3.240,1.974081)
(3.285,1.987874)
(3.330,2.001480)
(3.375,2.014903)
(3.420,2.028148)
(3.465,2.041220)
(3.510,2.054124)
(3.555,2.066863)
(3.600,2.079442)
(3.645,2.091864)
(3.690,2.104134)
(3.735,2.116256)
(3.780,2.128232)
(3.825,2.140066)
(3.870,2.151762)
(3.915,2.163323)
(3.960,2.174752)
(4.005,2.186051)
(4.050,2.197225)
(4.095,2.208274)
(4.140,2.219203)
(4.185,2.230014)
(4.230,2.240710)
(4.275,2.251292)
(4.320,2.261763)
(4.365,2.272126)
(4.410,2.282382)
(4.455,2.292535)
(4.500,2.302585)
(4.545,2.312535)
(4.590,2.322388)
(4.635,2.332144)
(4.680,2.341806)
(4.725,2.351375)
(4.770,2.360854)
(4.815,2.370244)
(4.860,2.379546)
(4.905,2.388763)
(4.950,2.397895)
(4.995,2.406945)
(5.040,2.415914)
(5.085,2.424803)
(5.130,2.433613)
(5.175,2.442347)
(5.220,2.451005)
(5.265,2.459589)
(5.310,2.468100)
(5.355,2.476538)
(5.400,2.484907)
(5.445,2.493205)
(5.490,2.501436)
(5.535,2.509599)
(5.580,2.517696)
(5.625,2.525729)
(5.670,2.533697)
(5.715,2.541602)
(5.760,2.549445)
(5.805,2.557227)
(5.850,2.564949)
(5.895,2.572612)
(5.940,2.580217)
(5.985,2.587764)
(6.030,2.595255)
(6.075,2.602690)
(6.120,2.610070)
(6.165,2.617396)
(6.210,2.624669)
(6.255,2.631889)
(6.300,2.639057)
(6.345,2.646175)
(6.390,2.653242)
(6.435,2.660260)
(6.480,2.667228)
(6.525,2.674149)
(6.570,2.681022)
(6.615,2.687847)
(6.660,2.694627)
(6.705,2.701361)
(6.750,2.708050)
(6.795,2.714695)
(6.840,2.721295)
(6.885,2.727853)
(6.930,2.734368)
(6.975,2.740840)
(7.020,2.747271)
(7.065,2.753661)
(7.110,2.760010)
(7.155,2.766319)
(7.200,2.772589)
(7.245,2.778819)
(7.290,2.785011)
(7.335,2.791165)
(7.380,2.797281)
(7.425,2.803360)
(7.470,2.809403)
(7.515,2.815409)
(7.560,2.821379)
(7.605,2.827314)
(7.650,2.833213)
(7.695,2.839078)
(7.740,2.844909)
(7.785,2.850707)
(7.830,2.856470)
(7.875,2.862201)
(7.920,2.867899)
(7.965,2.873565)
(8.010,2.879198)
(8.055,2.884801)
(8.100,2.890372)
(8.145,2.895912)
(8.190,2.901422)
(8.235,2.906901)
(8.280,2.912351)
(8.325,2.917771)
(8.370,2.923162)
(8.415,2.928524)
(8.460,2.933857)
(8.505,2.939162)
(8.550,2.944439)
(8.595,2.949688)
(8.640,2.954910)
(8.685,2.960105)
(8.730,2.965273)
(8.775,2.970414)
(8.820,2.975530)
(8.865,2.980619)
(8.910,2.985682)
(8.955,2.990720)
(9.000,2.995732)
};

\end{scope}

\draw (9, -0.6) node {$1$};
\draw (.45, -0.6) node {$\frac{1}{\maxval}$};
\draw  (1.223227, -0.6) node {$\quant^*$};

\draw (-0.55, 0) node {$0$};
\draw (-0.55, 2.995732) node {$\ln\maxval$};
\draw (-0.55, 3.995732) node {$1\hspace{-0.1cm}+\hspace{-0.1cm}\ln\maxval$};
\draw (-0.55, 1) node {$1$};

\draw (6, 1) node {$\ln(\quant\maxval)$ for $\quant\in[\sfrac{1}{\maxval},1]$};
\draw (5,-1.4) node { Residual surplus curve $\rev_{\topt{\Qud}_1^{\maxval'}}$};

\end{tikzpicture}
\end{minipage}
\begin{minipage}[t]{0.48\textwidth}
\centering
\begin{tikzpicture}[scale = 0.7,pile/.style={->}]

\draw (-0.2,0) -- (9, 0);
\draw [pile] (0, -0.2) -- (0, 4.5);

\draw [dashed] (9, 2.995732) -- (9, 3.995732);
\draw [dashed] (0, 0) -- (0.45, 0);
\draw (9,-0.2) -- (9,0.2);
\draw [dotted] (1.223227, 0) -- (1.223227, 1);
\draw [dotted] (0,1) -- (1.223227, 1);
\draw [dotted] (0,2.995732) -- (9, 2.995732);
\draw [dotted] (0,3.995732) -- (9, 3.995732);

\fill[black] (0,0) circle (0.1cm);
\fill[black] (9,3.995732) circle (0.1cm);
\fill[black] (1.223227,1) circle (0.1cm);

\draw [dotted] (0,0) -- (9, 3.995732);

\draw [very thick] (0,0) -- (1.223227,1);
\draw [very thick] (1.223227,1) -- (9, 3.995732);

\begin{scope}[dashed]

\draw plot [smooth, tension=0.8] coordinates {
(0.450,0.000000)
(0.495,0.095310)
(0.540,0.182322)
(0.585,0.262364)
(0.630,0.336472)
(0.675,0.405465)
(0.720,0.470004)
(0.765,0.530628)
(0.810,0.587787)
(0.855,0.641854)
(0.900,0.693147)
(0.945,0.741937)
(0.990,0.788457)
(1.035,0.832909)
(1.080,0.875469)
(1.125,0.916291)
(1.170,0.955511)
(1.215,0.993252)
(1.260,1.029619)
(1.305,1.064711)
(1.350,1.098612)
(1.395,1.131402)
(1.440,1.163151)
(1.485,1.193922)
(1.530,1.223775)
(1.575,1.252763)
(1.620,1.280934)
(1.665,1.308333)
(1.710,1.335001)
(1.755,1.360977)
(1.800,1.386294)
(1.845,1.410987)
(1.890,1.435085)
(1.935,1.458615)
(1.980,1.481605)
(2.025,1.504077)
(2.070,1.526056)
(2.115,1.547563)
(2.160,1.568616)
(2.205,1.589235)
(2.250,1.609438)
(2.295,1.629241)
(2.340,1.648659)
(2.385,1.667707)
(2.430,1.686399)
(2.475,1.704748)
(2.520,1.722767)
(2.565,1.740466)
(2.610,1.757858)
(2.655,1.774952)
(2.700,1.791759)
(2.745,1.808289)
(2.790,1.824549)
(2.835,1.840550)
(2.880,1.856298)
(2.925,1.871802)
(2.970,1.887070)
(3.015,1.902108)
(3.060,1.916923)
(3.105,1.931521)
(3.150,1.945910)
(3.195,1.960095)
(3.240,1.974081)
(3.285,1.987874)
(3.330,2.001480)
(3.375,2.014903)
(3.420,2.028148)
(3.465,2.041220)
(3.510,2.054124)
(3.555,2.066863)
(3.600,2.079442)
(3.645,2.091864)
(3.690,2.104134)
(3.735,2.116256)
(3.780,2.128232)
(3.825,2.140066)
(3.870,2.151762)
(3.915,2.163323)
(3.960,2.174752)
(4.005,2.186051)
(4.050,2.197225)
(4.095,2.208274)
(4.140,2.219203)
(4.185,2.230014)
(4.230,2.240710)
(4.275,2.251292)
(4.320,2.261763)
(4.365,2.272126)
(4.410,2.282382)
(4.455,2.292535)
(4.500,2.302585)
(4.545,2.312535)
(4.590,2.322388)
(4.635,2.332144)
(4.680,2.341806)
(4.725,2.351375)
(4.770,2.360854)
(4.815,2.370244)
(4.860,2.379546)
(4.905,2.388763)
(4.950,2.397895)
(4.995,2.406945)
(5.040,2.415914)
(5.085,2.424803)
(5.130,2.433613)
(5.175,2.442347)
(5.220,2.451005)
(5.265,2.459589)
(5.310,2.468100)
(5.355,2.476538)
(5.400,2.484907)
(5.445,2.493205)
(5.490,2.501436)
(5.535,2.509599)
(5.580,2.517696)
(5.625,2.525729)
(5.670,2.533697)
(5.715,2.541602)
(5.760,2.549445)
(5.805,2.557227)
(5.850,2.564949)
(5.895,2.572612)
(5.940,2.580217)
(5.985,2.587764)
(6.030,2.595255)
(6.075,2.602690)
(6.120,2.610070)
(6.165,2.617396)
(6.210,2.624669)
(6.255,2.631889)
(6.300,2.639057)
(6.345,2.646175)
(6.390,2.653242)
(6.435,2.660260)
(6.480,2.667228)
(6.525,2.674149)
(6.570,2.681022)
(6.615,2.687847)
(6.660,2.694627)
(6.705,2.701361)
(6.750,2.708050)
(6.795,2.714695)
(6.840,2.721295)
(6.885,2.727853)
(6.930,2.734368)
(6.975,2.740840)
(7.020,2.747271)
(7.065,2.753661)
(7.110,2.760010)
(7.155,2.766319)
(7.200,2.772589)
(7.245,2.778819)
(7.290,2.785011)
(7.335,2.791165)
(7.380,2.797281)
(7.425,2.803360)
(7.470,2.809403)
(7.515,2.815409)
(7.560,2.821379)
(7.605,2.827314)
(7.650,2.833213)
(7.695,2.839078)
(7.740,2.844909)
(7.785,2.850707)
(7.830,2.856470)
(7.875,2.862201)
(7.920,2.867899)
(7.965,2.873565)
(8.010,2.879198)
(8.055,2.884801)
(8.100,2.890372)
(8.145,2.895912)
(8.190,2.901422)
(8.235,2.906901)
(8.280,2.912351)
(8.325,2.917771)
(8.370,2.923162)
(8.415,2.928524)
(8.460,2.933857)
(8.505,2.939162)
(8.550,2.944439)
(8.595,2.949688)
(8.640,2.954910)
(8.685,2.960105)
(8.730,2.965273)
(8.775,2.970414)
(8.820,2.975530)
(8.865,2.980619)
(8.910,2.985682)
(8.955,2.990720)
(9.000,2.995732)
};

\end{scope}

\draw (9, -0.6) node {$1$};
\draw (.45, -0.6) node {$\frac{1}{\maxval}$};
\draw  (1.223227, -0.6) node {$\frac{e}{\maxval}$};

\draw (-0.55, 0) node {$0$};
\draw (-0.55, 2.995732) node {$\ln\maxval$};
\draw (-0.55, 3.995732) node {$1\hspace{-0.1cm}+\hspace{-0.1cm}\ln\maxval$};
\draw (-0.55, 1) node {$1$};

\draw (5,-1.4) node { Ironed curve $\bar{\rev}_{\topt{\Qud}_1^{\maxval'},\mathcal{Q}^+}$};

\end{tikzpicture}
\end{minipage}
\end{flushleft}
\caption{
\label{fig:twopieceiron}
\iffinalprint
We illustrate the effect of ironing on our truncated quadratic distribution (defined by equation~\eqref{eqn:ressurpcurveofquad1h} below).  The left-hand side shows the residual surplus curve $\rev_{\topt{\Qud}_1^{\maxval'}}$.  The right-hand side shows the ironing of $\rev_{\topt{\Qud}_1^{\maxval'}}$ according to the description of mechanism $\mecha_{o_{\text{pm}}}$, resulting in $\bar{\rev}_{\topt{\Qud}_1^{\maxval'} ,\mathcal{Q}^+}$.  The example is graphed for $\maxval=20$.  Note: $\quant^*=\sfrac{e}{\maxval}$; the left figure indicates that ironing the interval $\quant\in[0,\sfrac{e}{h}]$ is tangent to the original curve and therefore optimal on this region; the right figure makes clear that the point $(\sfrac{e}{\maxval},1)$ is above the line that represents ironing everywhere (for sufficiently large $\maxval$).
\else
We illustrate the effect of ironing on our truncated quadratic distribution (defined by equation~\eqref{eqn:ressurpcurveofquad1h} below), according to $\mecha_{o_{\text{pm}}}$, with un-ironed on the left-hand side and ironed on the right-hand side.  The example is graphed for scale $\maxval=20$.  Note: $\quant^*=\sfrac{e}{\maxval}$; the left figure indicates that ironing the interval $\quant\in[0,\sfrac{e}{h}]$ is tangent to the original curve and therefore optimal on this region; the right figure makes clear that the point $(\sfrac{e}{\maxval},1)$ is above the line that represents ironing everywhere (for sufficiently large $\maxval$), but is otherwise suboptimal.
\fi
}
\end{figure}

Call the mechanism $\mecha_{o_{\text{pm}}}$ the {\em two-piece-iron} mechanism.  Its definition depends on a critical quantile $\quant^*=\sfrac{e}{\maxval}$.  Specifically motivated by $\topt{\Qud}_1^{\maxval'}$, the mechanism irons two regions: (1) large values below and (2) small values above this quantile.  Ultimately, the mechanism $\mecha_{o_{\text{pm}}}$ runs an {\em ironed} second price auction on the two inferred types (one common type for each ironed region).  For illustration of $\mecha_{o_{\text{pm}}}$ applied to $\rev_{\topt{\Qud}_1^{\maxval'}}$, see \Cref{fig:twopieceiron}.
Formally, define the following sets of ironed ranges which will be used by our subsequent analysis:
\begin{itemize}
    \item  $\mathcal{Q}^* = \{[0,\quant^*=\sfrac{e}{\maxval}]\}$; corresponding to value-range $[\sfrac{\maxval}{e},\maxval]$;
    \item $\mathcal{Q}^+ = \{[0,\quant^*],~[\quant^*,1] \}$; corresponding to value-ranges $[\sfrac{\maxval}{e},\maxval]$ and $[0,\sfrac{\maxval}{e}]$ (where identifying the lower bound of the second value range to be 0 is a necessary distinction because $\rev_{\topt{\Qud}_1^{\maxval'}}$ is set-valued at $\quant = 1$);
    \item $\mathcal{Q}^1 = \{[0,1]\}$ corresponding to value-range $[0,\maxval]$ (which is effectively the lottery).
\end{itemize}

\begin{definition}
\label{def:ressurpspecialmech}
Define the {\em two-piece-iron} mechanism $\mecha_{o_{\text{pm}}}$ for $n=2$ agents to be the {\em ironed second price auction} which respects the ironed ranges $\mathcal{Q}^+ = \{[0,\quant^*],~[\quant^*,1] \}$.

Equivalently, $\mecha_{o_{\text{pm}}}$ irons the regions of value space $[\sfrac{maxval}{e},\maxval]$ and $[0,\sfrac{\maxval}{e}]$ and runs the second price auction on these two inferred types.
\end{definition}

\noindent Thus, when agent values are drawn i.i.d.\ from $\topt{\Qud}_1^{\maxval'}$, the residual surplus $\mecha_{o_{\text{pm}}}(\topt{\Qud}_1^{\maxval'})$ may be calculated using the ironed residual surplus curve $\bar{\rev}_{\topt{\Qud}_1^{\maxval'},\mathcal{Q}^+}(\cdot)$.  
(See \Cref{fig:twopieceiron}.)

In fact for $\maxval\geq \RSMINH$, ironing the region $[0,\quant^*=\sfrac{e}{\maxval}]$ is optimal given the underlying distribution $\topt{\Qud}_1^{\maxval'}$; and ironing the region $[\sfrac{e}{\maxval},1]$ is strictly suboptimal (for intuition for this, see \Cref{fig:twopieceiron-z-vs-h}).

We conclude this section with the deferred proof of \Cref{lem:ressurpofmechopm}, which depends on \Cref{lem:dry-15-extension} below as an extension of \Cref{lem:DRY-15} \citep{DRY-15}.

\begin{numberedlemma}{\ref{lem:ressurpofmechopm}}
\label{lem:ressurpofmechopm2}
The residual surplus of mechanisms $\mecha_{o_{\text{pm}}}$ and $\text{LOT}_2$ given 2 agents with values drawn i.i.d.\ from $\topt{\Qud}_1^{\maxval'}$ are calculated as \begin{align*}
\mechap_{o_{\text{pm}}}(\topt{\Qud}_1^{\maxval'}) &= \frac{((2+\ln\maxval)\maxval-(1+\ln\maxval)e)}{\maxval}\\
\text{{\em LOT}}_2(\topt{\Qud}_1^{\maxval'}) &= 1 + \ln\maxval
\end{align*}
\end{numberedlemma}
\begin{proof}
Using \Cref{lem:dry-15-extension} below (which extends \cite{DRY-15} to allow ironing and any auction objective, in our case residual surplus) and the definition of $\mecha_{o_{\text{pm}}}$, the residual surplus $\mechap_{o_{\text{pm}}}(\topt{\Qud}_1^{\maxval'})$ is calculated as twice the area under the ironed residual surplus curve $\bar{\rev}_{\topt{\Qud}_1^{\maxval'} ,\mathcal{Q}^+}$.  This area is calculated from
\begin{align*}
    \text{area under}~\bar{\rev}_{\topt{\Qud}_1^{\maxval'},\mathcal{Q}^+} &= \text{area in quantile range}~[0,\sfrac{e}{\maxval}] + \text{area in quantile range}~[\sfrac{e}{\maxval},1]\\
    &= \sfrac{1}{2}\cdot \bar{\rev}_{\topt{\Qud}_1^{\maxval'},\mathcal{Q}^+}(\sfrac{e}{\maxval})\cdot\left(\frac{e}{\maxval}-0\right)\\
    &\quad+ \frac{1}{2}\cdot\left( \bar{\rev}_{\topt{\Qud}_1^{\maxval'},\mathcal{Q}^+}(\sfrac{e}{\maxval})+\bar{\rev}_{\topt{\Qud}_1^{\maxval'},\mathcal{Q}^+}(1)\right)\left(1-\frac{e}{\maxval}\right)\\
    &= \frac{1}{2}\cdot 1\cdot\frac{e}{\maxval}+\frac{1}{2}\cdot\left(1+(1+\ln\maxval) \right) \cdot\frac{\maxval-e}{\maxval}\\
    &= \frac{1}{2}\cdot\frac{(2+\ln\maxval)\maxval-(1+\ln\maxval)e }{\maxval}
\end{align*}
\noindent where the residual surplus at the end points of the ironed ranges -- namely, quantiles $\quant = \sfrac{e}{\maxval}$ and $\quant = 1$ -- are from the definition of $\rev_{\topt{\Qud}_1^{\maxval'}}$ (equation~\eqref{eqn:ressurpcurveofquad1h} earlier).  Therefore the residual surplus of the mechanism is $\mechap_{o_{\text{pm}}}(\topt{\Qud}_1^{\maxval'})=\sfrac{((2+\ln\maxval)\maxval-(1+\ln\maxval)e)}{\maxval}$.

Using \Cref{lem:dry-15-extension} and equation~\eqref{eqn:ressurpcurveofquad1h} again -- this time applied to ironed revenue curve $\bar{\rev}_{\topt{\Qud}_1^{\maxval'},\mathcal{Q}^1}$, i.e., with respect to the lottery's ironing $\mathcal{Q}^1$ -- the residual surplus of the 2-lottery is  $\text{LOT}_2(\topt{\Qud}_1^{\maxval'})= 1+\ln\maxval$.
\end{proof}

\noindent For completeness, we prove the extension of \Cref{lem:DRY-15} \citep{DRY-15} to apply both (a) for an arbitrary auction objective, and (b) to allow arbitrary ironing.  In the case of ironing, the SPA must be interpreted as treating each ironed range as a single value space type -- it allocates all agents in an ironed range uniformly.\footnote{\label{foot:useironfordry} Recall, this type of treatment is a necessary condition to apply the technique of ironing -- see the introduction of ironing in discussion on page~\pageref{page:ironintro} and its conditional use in \Cref{thm:myebayesopt}.}   Define this mechanism as the {\em Ironed Second Price Auction}.

Let $\mathcal{Q}$ be a set of ironed ranges (in quantile space; as defined on page~\pageref{page:definearbitraryiron}), and let $\bar{\vv}^{\dist,\mathcal{Q}}$ be the ironed virtual value function given an underlying distribution $\dist$ that has been ironed on ranges according to $\mathcal{Q}$.
\begin{lemma}
\label{lem:dry-15-extension}
  In i.i.d.\ two-agent single-item settings given distribution $\dist$, {\em for any auction objective} let $\rev_{\dist}$ be the performance curve in quantile space 
  and $\bar{\rev}_{\dist,\mathcal{Q}}$ be an ironed performance curve given a set of ironed ranges $\mathcal{Q}$.
  
  The expected performance of the ISPA subject to $\bar{\rev}_{\dist,\mathcal{Q}}$ -- assuming uniform allocation to agents within each ironed range as if the range was one type -- is twice the area under the curve $\bar{\rev}_{\dist,\mathcal{Q}}$.
\end{lemma}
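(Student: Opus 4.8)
The plan is to run the same amortization argument that underlies Lemma~\ref{lem:DRY-15}, but to carry the bookkeeping through an \emph{arbitrary} collection $\mathcal{Q}$ of ironed ranges (and through the point‑mass regions of $\dist$) rather than only through concave‑hull ironing. By Myerson's characterization in the ironed form of equation~\eqref{eqn:ironvvrev} of Theorem~\ref{thm:myerson} — whose hypothesis that the allocation is constant for each agent on strictly ironed regions holds for the ISPA by construction, since it treats every range in $\mathcal{Q}$ (and every region of constant value) as a single type — the expected performance of the ISPA equals its expected surplus of ironed virtual value. For $n=2$ symmetric i.i.d.\ agents this is $2\int_0^1 \bar{\vv}^{\dist}(q)\,a(q)\,dq$, where $a(q)$ denotes the interim allocation probability of an agent whose quantile is $q$ (the boundary terms $\rev_{i,\dist}(\quanti=0)$ vanish because each performance curve considered is zero at quantile $0$). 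By Fact~\ref{fact:ironedvv}, $\bar{\vv}^{\dist}(q)=\bar{\rev}'_{\dist,\mathcal{Q}}(q)$, so it suffices to prove the identity $\int_0^1 \bar{\rev}'_{\dist,\mathcal{Q}}(q)\,a(q)\,dq=\int_0^1 \bar{\rev}_{\dist,\mathcal{Q}}(q)\,dq$.

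Next I would compute $a(q)$ explicitly. Since a smaller quantile means a larger value, and the ISPA gives the item to the agent in the higher value type (splitting uniformly within a common type), for a quantile $q$ lying in no ironed range and in no point‑mass region we get $a(q)=\Pr[q'>q]=1-q$, while for $q$ inside an ironed (or point‑mass) range $[a,b]$ we get $a(q)=(1-b)+\tfrac12(b-a)=1-\tfrac{a+b}{2}$, which is constant on that range. Thus $a$ is monotone non‑increasing, with slope $-1$ off the ironed/point‑mass ranges, slope $0$ on them, and downward jumps of size $\tfrac{b-a}{2}$ at each endpoint of each such range.

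Then I would establish the integral identity by splitting $[0,1]$ into its maximal ironed/point‑mass intervals and its maximal complementary intervals. On a complementary interval $I$, $a(q)=1-q$ and integration by parts gives $\int_I \bar{\rev}'_{\dist,\mathcal{Q}}(q)(1-q)\,dq=\bigl[\bar{\rev}_{\dist,\mathcal{Q}}(q)(1-q)\bigr]_{\partial I}+\int_I \bar{\rev}_{\dist,\mathcal{Q}}(q)\,dq$. On an ironed/point‑mass interval $[a,b]$, $\bar{\rev}_{\dist,\mathcal{Q}}$ is affine with endpoint values $\bar{\rev}_{\dist,\mathcal{Q}}(a),\bar{\rev}_{\dist,\mathcal{Q}}(b)$ (ironing replaces the curve by its chord; on a point‑mass region the curve is a chord of the line through the origin of slope equal to the value), so $\int_a^b \bar{\rev}'_{\dist,\mathcal{Q}}(q)\,dq=\bar{\rev}_{\dist,\mathcal{Q}}(b)-\bar{\rev}_{\dist,\mathcal{Q}}(a)$, and one checks the elementary identity $\bigl(1-\tfrac{a+b}{2}\bigr)\bigl(\bar{\rev}_{\dist,\mathcal{Q}}(b)-\bar{\rev}_{\dist,\mathcal{Q}}(a)\bigr)=\int_a^b \bar{\rev}_{\dist,\mathcal{Q}}(q)\,dq+\bigl[\bar{\rev}_{\dist,\mathcal{Q}}(b)(1-b)-\bar{\rev}_{\dist,\mathcal{Q}}(a)(1-a)\bigr]$ using the trapezoid area of the affine piece. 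Summing over all intervals of the partition, the $\int \bar{\rev}_{\dist,\mathcal{Q}}$ contributions reassemble into $\int_0^1 \bar{\rev}_{\dist,\mathcal{Q}}(q)\,dq$, and the remaining terms $\bar{\rev}_{\dist,\mathcal{Q}}(\sup J)(1-\sup J)-\bar{\rev}_{\dist,\mathcal{Q}}(\inf J)(1-\inf J)$ telescope along the partition — internal boundaries cancel by continuity of $\bar{\rev}_{\dist,\mathcal{Q}}$ — leaving $\bar{\rev}_{\dist,\mathcal{Q}}(1)\cdot 0-\bar{\rev}_{\dist,\mathcal{Q}}(0)\cdot 1=0$ since $\bar{\rev}_{\dist,\mathcal{Q}}(0)=0$. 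This yields the identity, and multiplying by $2$ gives the claim.

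I expect the main obstacle to be the careful handling of $a(q)$ at the boundaries of ironed ranges: one must recognize that $a$ is \emph{not} globally $1-q$ but has exactly the right downward jumps, so that although each piece is treated separately, the per‑interval integration‑by‑parts boundary terms still telescope cleanly; a secondary point requiring care is folding point‑mass regions of $\dist$ into the same argument and confirming the ISPA meets the constant‑allocation hypothesis of equation~\eqref{eqn:ironvvrev}.
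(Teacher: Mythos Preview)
Your argument is correct, but it takes a different route from the paper. You work agent-by-agent: compute the interim allocation $a(q)$, observe it is $1-q$ off ironed ranges and the constant $1-\tfrac{a+b}{2}$ on each ironed range $[a,b]$, and then integrate by parts piece-by-piece, verifying the trapezoid identity on ironed pieces and telescoping the boundary terms. The paper instead conditions on the \emph{smallest} quantile $q_{(1)}$ (the stronger agent) and makes the single observation that the winner's ironed virtual value is always $\bar{\vv}^{\dist,\mathcal{Q}}(q_{(1)})$: either the strong agent wins outright, or both agents lie in the same ironed range and share the same ironed virtual value. This collapses the expected performance to $\int_0^1 2(1-q)\,\bar{\rev}'_{\dist,\mathcal{Q}}(q)\,dq$ via the order-statistic density $2(1-q)$, and a \emph{single global} integration by parts finishes. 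The paper's trick buys a cleaner computation that never sees the jumps in $a(q)$; your approach is more mechanical but makes the role of the allocation rule more explicit, and your telescoping verification is sound. One small caveat: equation~\eqref{eqn:ironvvrev} in the paper is stated for concave-hull ironing, so when you invoke it for arbitrary $\mathcal{Q}$ you are tacitly using the (standard) fact that the ironing identity holds whenever the allocation is constant on each range in $\mathcal{Q}$; you might say so explicitly.
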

\begin{proof}
We note the following up front.  Without loss of generality, our 2 agents have ordered values $\vali[(1)]\geq\vali[(2)]$, equivalently, ordered quantiles $\quant_{(1)}\leq\quant_{(2)}$.  The ISPA mechanism of the statement is symmetric.

The technique of this proof is to sum up the performance of the ISPA mechanism (for arbitrary ironing) by calculating expected performance over the distribution of the smaller quantile-order-statistic $\quant_{(1)}$.  To outline, we: identify this conditional performance as a function of virtual value; and then insert this quantity into the existing proof of \Cref{lem:DRY-15} \citep{DRY-15}.

For agents labeled according to order statistic $i$, define $\alloci[(i)] =\alloci[(i)]^{\text{ISPA}}(\valf_{\dist}(\quant_{(i)}), \valf_{\dist}(\quant_{(j\neq i)}))$.  Define $\mathcal{R}(\quant_1)$ to be the expectation of the winning agent's virtual value conditioned on the smaller quantile being $\quant_1$.  (Note, the winner is not necessarily the agent $i=1$.)  Thus, we have:
\begin{equation*}
\mathcal{R}(\quant_1) = \mathbf{E}_{\quant_2\sim\Ud_{\quant_1,1}}\left[\alloci[(1)]\cdot \bar{\vv}^{\dist,\mathcal{Q}}(\quant_1)+\alloci[(2)]\cdot\bar{\vv}^{\dist,\mathcal{Q}}(\quant_2) \right] = \bar{\vv}^{\dist,\mathcal{Q}}(\quant_1)=\bar{\rev}'_{\dist,\mathcal{Q}}(\quant_1)
\end{equation*}
\noindent We get the second equality here because the following holds for all inputs $\quant_1$ into function $\mathcal{R}$: {\em pointwise within the expectation}: either $\alloci[(1)]=1$, or otherwise $\alloci[(1)]+\alloci[(2)]=1$ and $\bar{\vv}^{\dist,\mathcal{Q}}(\quant_1)=\bar{\vv}^{\dist,\mathcal{Q}}(\quant_2)$.  In any case, $\mathbf{E}_{\quant_2\sim\Ud_{\quant_1,1}}\left[\alloci[\{1\}]\cdot \bar{\vv}^{\dist,\mathcal{Q}}(\quant_1)+\alloci[\{2\}]\cdot\bar{\vv}^{\dist,\mathcal{Q}}(\quant_2) \right] = \bar{\vv}^{\dist,\mathcal{Q}}(\quant_1)$.  We can now effectively implement the proof of \Cref{lem:DRY-15} which did not accommodate ironing and which was stated for the specific objective of revenue.

Let $\text{osd}(\quant) = 2(1-\quant)$ be the density function of the smallest order-statistic $\quant_{(1)}$ out of 2 agents' quantiles drawn i.i.d.\ from $\Ud_{0,1}$.  (Note, for simplicity, we dropped all parameters from the distribution name `osd.')

We are now prepared to evaluate $\text{ISPA}(\dist)$ using $\bar{\mathcal{R}}_{\dist,\mathcal{Q}}$ and $\text{osd}$:
\begin{align*}
\text{ISPA}(\dist) &= \int_0^1 \text{osd}(\quant)\cdot \bar{\rev}'_{\dist,\mathcal{Q}}(\quant)~d\quant\\
&= \int_0^1 2\cdot 1\cdot (1-\quant)\cdot \bar{\rev}'_{\dist,\mathcal{Q}}(\quant)~d\quant\\
&= 2\cdot\bar{\rev}_{\dist,\mathcal{Q}}(1)-2\cdot\int_0^1\quant\cdot\bar{\rev}'_{\dist,\mathcal{Q}}(\quant)~d\quant\\
&= 2\cdot\bar{\rev}_{\dist,\mathcal{Q}}(1)-2\cdot\left[\quant\cdot\bar{\rev}_{\dist,\mathcal{Q}}(\quant) \right]_0^1+2\cdot \int_0^1 \bar{\rev}_{\dist,\mathcal{Q}}(\quant)~d\quant = 2\cdot \int_0^1 \bar{\rev}_{\dist,\mathcal{Q}}(\quant)~d\quant\quad \qedhere
\end{align*}
\end{proof}

\subsubsection{The ``Two-piece-iron" Mechanism is Sufficient for a Lower Bound}
\label{s:twopieceironstrictbounds}

In this section we show that the residual surplus  $\mechap_{o_{\text{pm}}}(\topt{\Qud}_1^{\maxval'}) = \sfrac{((2+\ln\maxval)\maxval-(1+\ln\maxval)e)}{\maxval}$ of \Cref{lem:ressurpofmechopm2} is strictly worse than optimal and strictly better than the 2-lottery given $\maxval\geq\RSMINH$.  The main goal is to give the deferred proof of \Cref{lem:lb22islbversusopt22v2} which states that $\text{opt}_{2,2}>\text{lb}_{2,2}>\text{opt}_{2,1}$.

\begin{lemma}
\label{lem:Moflb22islbversusLOTofopt22}
Given $\maxval\geq \RSMINH$.  The mechanism $\mecha_{o_{\text{pm}}}$ is strictly sub-optimal: $\OPT_{\topt{\Qud}_1^{\maxval'}}(\topt{\Qud}_1^{\maxval'}) > \mechap_{o_{\text{pm}}}(\topt{\Qud}_1^{\maxval'})$; and the mechanism $\mecha_{o_{\text{pm}}}$ strictly dominates the lottery: $\mechap_{o_{\text{pm}}}(\topt{\Qud}_1^{\maxval'}) > \text{{\em LOT}}_2(\topt{\Qud}_1^{\maxval'})$.
\end{lemma}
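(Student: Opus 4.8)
The plan is to prove the two strict inequalities separately; both ultimately reduce to the elementary estimate $\maxval > e(1+\ln\maxval)$, which is exactly what the hypothesis $\maxval\geq\RSMINH$ buys us.

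For the second inequality, $\mechap_{o_{\text{pm}}}(\topt{\Qud}_1^{\maxval'}) > \text{LOT}_2(\topt{\Qud}_1^{\maxval'})$, I would just invoke the closed forms from \Cref{lem:ressurpofmechopm}. Writing $L=\ln\maxval$, the claim $\frac{(2+L)\maxval-(1+L)e}{\maxval} > 1+L$ rearranges, after multiplying through by $\maxval>0$ and cancelling the $(1+L)\maxval$ terms on both sides, to $\maxval > e(1+\ln\maxval)$. I then note that $\phi(\maxval):=\maxval-e(1+\ln\maxval)$ satisfies $\phi'(\maxval)=1-\sfrac{e}{\maxval}>0$ for $\maxval>e$, so $\phi$ is increasing on $[\RSMINH,\infty)$; since a direct numerical check gives $\phi(\RSMINH)=\RSMINH-e\bigl(1+\ln\RSMINH\bigr)>0$ (this is precisely what pins down the constant $\RSMINH$), we get $\phi(\maxval)>0$ for all $\maxval\geq\RSMINH$, which is the desired strict inequality.

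For the first inequality, $\OPT_{\topt{\Qud}_1^{\maxval'}}(\topt{\Qud}_1^{\maxval'}) > \mechap_{o_{\text{pm}}}(\topt{\Qud}_1^{\maxval'})$, I would argue geometrically via areas under performance curves. By \Cref{lem:dry-15-extension} (the ironed, objective-agnostic extension of \Cref{lem:DRY-15}), the residual surplus of the ironed second-price auction with ironed ranges $\mathcal{Q}$ is twice the area under $\bar{\rev}_{\topt{\Qud}_1^{\maxval'},\mathcal{Q}}$; applying this to $\mathcal{Q}^+$ for $\mechap_{o_{\text{pm}}}$ and to the concave-hull ironing for $\OPT_{\topt{\Qud}_1^{\maxval'}}$, it suffices to show the concave hull of $\rev_{\topt{\Qud}_1^{\maxval'}}$ has strictly larger area than $\bar{\rev}_{\topt{\Qud}_1^{\maxval'},\mathcal{Q}^+}$. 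On $[0,\sfrac{e}{\maxval}]$ the two curves coincide, since ironing $[0,\sfrac{e}{\maxval}]$ is part of the optimal ironing — it is the tangent segment from the origin to $\ln(\quant\maxval)$, as recorded in the discussion of equation~\eqref{eqn:ressurpcurveofquad1h} and \Cref{fig:twopieceiron}. On $[\sfrac{e}{\maxval},1]$ both curves pass through $(\sfrac{e}{\maxval},1)$ and $(1,1+\ln\maxval)$; here $\bar{\rev}_{\topt{\Qud}_1^{\maxval'},\mathcal{Q}^+}$ is exactly the straight chord joining those points, whereas the concave hull is a concave function through the same two endpoints, hence lies weakly above that chord. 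To get strictness I compare slopes at the left endpoint: $\rev_{\topt{\Qud}_1^{\maxval'}}(\quant)=\ln(\quant\maxval)$ has slope $\sfrac{\maxval}{e}$ at $\quant=\sfrac{e}{\maxval}$, while the chord has slope $\frac{\ln\maxval}{1-\sfrac{e}{\maxval}}=\frac{\maxval\ln\maxval}{\maxval-e}$, and $\sfrac{\maxval}{e}>\frac{\maxval\ln\maxval}{\maxval-e}$ is again equivalent to $\maxval>e(1+\ln\maxval)$; so $\ln(\quant\maxval)$, and therefore the concave hull which dominates it, lies strictly above the chord on an interval just to the right of $\sfrac{e}{\maxval}$. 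Integrating the nonnegative difference over $[\sfrac{e}{\maxval},1]$ (strictly positive on a set of positive measure) gives strictly larger area, hence the claim.

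The step I expect to be the main obstacle is the precise justification that on $[0,\sfrac{e}{\maxval}]$ the concave hull of $\rev_{\topt{\Qud}_1^{\maxval'}}$ coincides exactly with the first linear piece of $\bar{\rev}_{\topt{\Qud}_1^{\maxval'},\mathcal{Q}^+}$ — so that the comparison genuinely reduces to $[\sfrac{e}{\maxval},1]$ — together with the careful treatment of the set-valued output of $\rev_{\topt{\Qud}_1^{\maxval'}}$ at $\quant=1$ when forming the hull; both are essentially already contained in the discussion around equation~\eqref{eqn:ressurpcurveofquad1h}, \Cref{def:ressurpspecialmech}, and \Cref{fig:twopieceiron}, but must be cited carefully rather than re-derived.
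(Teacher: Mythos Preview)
Your proposal is correct and follows essentially the same approach as the paper. For the second inequality you reproduce the paper's computation verbatim, reducing to $\maxval>e(1+\ln\maxval)$ and checking monotonicity of $\phi(\maxval)=\maxval-e(1+\ln\maxval)$; for the first inequality the paper likewise argues that ironing $[\sfrac{e}{\maxval},1]$ is strictly suboptimal, though it phrases this by explicitly identifying the optimal upper ironing endpoint $\quant^{\&}\in(0.31,0.32]$ (the tangent from $(1,1+\ln\maxval)$) and noting $\sfrac{e}{\maxval}<\quant^{\&}$, whereas you reach the same conclusion via the local slope comparison at $\sfrac{e}{\maxval}$ --- both routes collapse to the same inequality $\maxval>e(1+\ln\maxval)$.
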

\begin{proof}
First, we prove the lower bound on $\mechap_{o_{\text{pm}}}(\topt{\Qud}_1^{\maxval'})$ in the lemma statement.  We use $\mechap_{o_{\text{pm}}}(\topt{\Qud}_1^{\maxval'}) = \sfrac{((2+\ln\maxval)\maxval-(1+\ln\maxval)e)}{\maxval}$ and $\text{{\em LOT}}_2(\topt{\Qud}_1^{\maxval'}) = 1 + \ln\maxval$ from \Cref{lem:ressurpofmechopm2}.  We reduce the condition that the difference is positive:
\begin{align}
\nonumber \mechap_{o_{\text{pm}}}(\topt{\Qud}_1^{\maxval'}) - \text{LOT}_2(\topt{\Qud}_1^{\maxval'}) &= \frac{(2+\ln\maxval)\maxval-(1+\ln\maxval)e}{\maxval} - (1+\ln\maxval)\\
\nonumber
    &= \frac{\maxval-(1+\ln\maxval)e}{\maxval}>0\\
    \label{eqn:ressurpexampleslopecomparison}
    \Leftrightarrow\quad &\quad~\maxval-(1+\ln\maxval)e > 0
\end{align}
\noindent Treating the left-hand side of the inequality in line~\eqref{eqn:ressurpexampleslopecomparison} as a function of $\maxval$, it is negative and decreasing for $\maxval\in[1,e]$, it is increasing for all $\maxval > e$, and it has a 0 within the range $\maxval\in[8.55,8.56]$ (and then is positive for $\maxval>8.56$ because it is increasing).  Therefore the lower bound $\maxval\geq \RSMINH$ is sufficient for the lowerbound on $\mechap_{o_{\text{pm}}}(\topt{\Qud}_1^{\maxval'})$.

For the upper bound in the lemma statement, the analysis and discussion surrounding equation~\eqref{eqn:ressurpoptslopelargequant} below in \Cref{s:ressurphardboundwhyuselowerbound} are sufficient to show that the ironing by $\mecha_{o_{\text{pm}}}$ of the range $[\sfrac{e}{\maxval},1]$ is strictly suboptimal (given $\maxval\geq\RSMINH$, which infers the lower end point of this range is upper bounded as $\sfrac{e}{\maxval}<0.31$).  It is dominated specifically in comparison to ironing the quantile-space upward-closed range $[\quant^{\&},1]$ for the optimal value of $\quant^{\&}\in[0.31,0.32]$ defined and proved in \Cref{s:ressurphardboundwhyuselowerbound}.
\end{proof}

\begin{numberedlemma}{\ref{lem:lb22islbversusopt22}}
\label{lem:lb22islbversusopt22v2}
Given $\text{{\em opt}}_{2,1}$ and $\text{{\em opt}}_{2,2}$ resulting from the finite-weight Quadratics-versus-Uniforms dual blends (along with the rest of the local assumptions of this section), and $\text{{\em lb}}_{2,2}$ as defined in equation~\eqref{eqn:ressurpquadscalcdefoflb22}. Then we have
\begin{equation*}
    \text{{\em opt}}_{2,2} > \text{{\em lb}}_{2,2} > \text{{\em opt}}_{2,1}
\end{equation*}
\end{numberedlemma}
\begin{proof}
\Cref{lem:Moflb22islbversusLOTofopt22} states that  if $\maxval\geq \RSMINH$, then (a) $\OPT_{\topt{\Qud}_1^{\maxval'}}(\topt{\Qud}_1^{\maxval'}) > \mechap_{o_{\text{pm}}}(\topt{\Qud}_1^{\maxval'})$ and (b) $\mechap_{o_{\text{pm}}}(\topt{\Qud}_1^{\maxval'}) > \text{LOT}_2(\topt{\Qud}_1^{\maxval'})$.  Following directly from these and from definitions we have:
\begin{align*}
    &&\text{opt}_{2,2} &=& o_{\text{pm}}\cdot\OPT_{\topt{\Qud}_1^{\maxval'}}(\topt{\Qud}_1^{\maxval'})&\quad&& + \quad \left[\int_1^{\maxval} o_z\cdot \OPT_{\topt{\Qud}_1^{\maxval'}}(\topt{\Qud}_{z}^{\maxval'}) \right] \\
    >&&\text{lb}_{2,2} &=& o_{\text{pm}}\cdot \mechap_{o_{\text{pm}}}(\topt{\Qud}_{1}^{\maxval'}) &\quad&&+\quad \left[\int_1^{\maxval} o_z\cdot \text{LOT}_2(\topt{\Qud}_{z}^{\maxval'}) \right]\\
    \nonumber
    >&& \text{opt}_{2,1}& =&o_{\text{pm}}\cdot \text{LOT}_2(\topt{\Qud}_{1}^{\maxval'}) &\quad&&+\quad \left[\int_1^{\maxval} o_z\cdot \text{LOT}_2(\topt{\Qud}_{z}^{\maxval'}) \right]
\end{align*}
\noindent where the definition of $\text{opt}_{2,1}$ may use the weights $\boldsymbol{o}$ and the distributions in $\blendi[2]$ rather than its original definition which respectively used $\boldsymbol{\omega}$ and $\blendi[1]$. This last point holds because it runs the constant lottery mechanism on all inputs anyway and $\delta_1^2=g=\delta_2^2$.
\end{proof}

\subsubsection{Complexities and Technicals of the Quadratics Residual Surplus Blend}
\label{s:ressurphardboundwhyuselowerbound}

The definition of $\mecha_{o_{\text{pm}}}$ was given in \Cref{def:ressurpspecialmech}.  For completeness, here we build up the motivation for it -- effectively reverse-engineering it to be sufficient for the residual surplus gap which is our goal (of equation~\eqref{eqn:fourthousandthsmax} of \Cref{s:ressurpgapfromratio}).  
Here is an outline of this section:
\begin{enumerate}
    \item identify technical difficulties of residual surplus curves $\rev_{\topt{\Qud}_z^{\maxval'}}(\cdot)$ for arbitrary $z$ and motivate the relaxation to $\text{lb}_{2,2}$ and the assumption of $\maxval\geq \RSMINH$ for simplicity;
    \item given $\maxval\geq\RSMINH$, determine that $[0,\sfrac{e}{\maxval}]$ is an element of the set of optimal ranges to iron, by analyzing slopes of possible quantile-downward-closed ironed ranges;
    \item from slopes of respective ironed ranges of $\mecha_{o_{\text{pm}}}$ and $\text{LOT}_2$, re-confirm the statement of \Cref{lem:Moflb22islbversusLOTofopt22} with dependence on equation~\eqref{eqn:ressurpexampleslopecomparison} in its proof;
    \item identify the optimal quantile-downward-closed range to iron, which is independent of $\maxval$.
\end{enumerate}

\paragraph{(1) Explanation of the choice to simplify from $\text{opt}_{2,2}$ to $\text{lb}_{2,2}$ and $\maxval\geq\RSMINH$.}  This is a discussion of the difficulties of revenue curves for the class of Quadratics $\topt{\Qud}_z^{\maxval'}$ (with positive weight in $\blendi[2]$).  Generalizing equation~\eqref{eqn:ressurpcurveofquad1h}, residual surplus curves for Quadratics and arbitrary $z$ are described by:

\begin{equation}
\label{eqn:ressurpcurveofquadzh}
\rev_{\topt{\Qud}_z^{\maxval'}}(\quant) =
\begin{cases}
0 & \text{for}~\quant\in[0,\sfrac{z}{\maxval}]\\
z\ln(\quant\cdot\frac{\maxval}{z}) & \text{for}~\quant\in[\sfrac{z}{\maxval},1)\\
[z\ln\frac{ \maxval}{z},z+z\ln\frac{\maxval}{z}] & \text{for}~\quant = 1
\end{cases}
\end{equation}

\noindent The first challenge is that for ``large" $z\rightarrow\maxval$, the optimal mechanism for $\topt{\Qud}_z^{\maxval'}$ is the lottery.  There is a threshold for $z$ above which this becomes true (see point (4) below and also \Cref{fig:twopieceiron-z-vs-h} which illustrates the threshold-change in the ironing structure of the residual surplus curve).

We greatly simplify this complication as follows: with an assumption of $\maxval\geq \RSMINH$, then for $z=1$, the lottery mechanism for $\topt{\Qud}_1^{\maxval'}$ is strictly not optimal.  The calculation of the lower bound quantity $\text{lb}_{2,2}$ uses the performance of the lottery on all other distributions, even though the lottery is sub-optimal for many of these distributions.  Critically however, for ``small" $z$ -- and specifically for $z=1$ where there is a point mass $o_{\text{pm}}$, and relying on our assumption of ``large" $\maxval\geq \RSMINH$ -- the optimal mechanism in response to distribution $\topt{\Qud}_z^{\maxval'}$ will \underline{not} use the lottery to iron the entire region of values, and rather, a distinct mechanism is strictly preferred. 

The calculation of expected performance for the optimal mechanism for $\topt{\Qud}_z^{\maxval'}$ is itself complicated.  To simplify, we relax the optimal mechanism to $\mecha_{o_{\text{opm}}}$ which irons on just two regions, an optimal region over small quantiles and ``all other large" quantiles (see \Cref{fig:twopieceiron} for illustration).

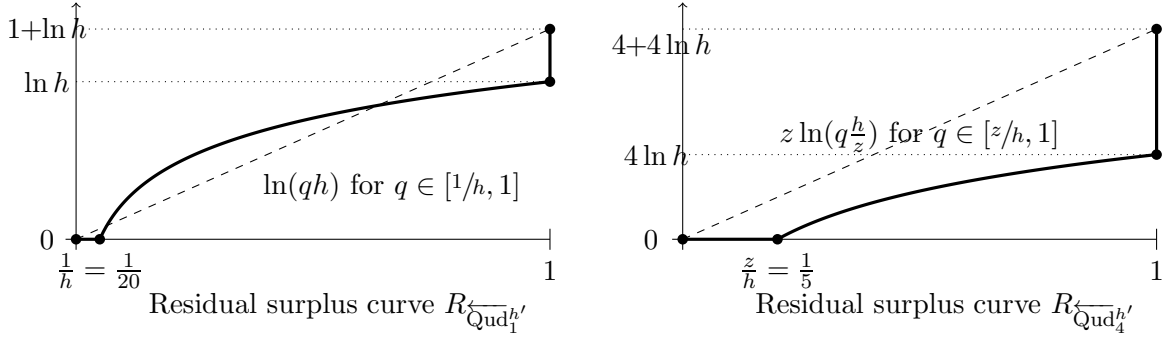
\begin{figure}[t]
\begin{flushleft}
\begin{minipage}[t]{0.48\textwidth}
\centering
\begin{tikzpicture}[scale = 0.7,pile/.style={->}]

\draw (-0.2,0) -- (9, 0);
\draw [pile] (0, -0.2) -- (0, 4.5);

\draw [very thick] (9, 2.995732) -- (9, 3.995732);
\draw [very thick] (0, 0) -- (0.45, 0);
\draw (9,-0.2) -- (9,0.2);
\draw [dotted] (0,2.995732) -- (9, 2.995732);
\draw [dotted] (0,3.995732) -- (9, 3.995732);

\draw [dashed] (0,0) -- (9, 3.995732);

\fill[black] (0,0) circle (0.1cm);
\fill[black] (0.45,0) circle (0.1cm);
\fill[black] (9,2.995732) circle (0.1cm);
\fill[black] (9,3.995732) circle (0.1cm);

\begin{scope}[very thick]

\draw plot [smooth, tension=0.8] coordinates {
(0.450,0.000000)
(0.495,0.095310)
(0.540,0.182322)
(0.585,0.262364)
(0.630,0.336472)
(0.675,0.405465)
(0.720,0.470004)
(0.765,0.530628)
(0.810,0.587787)
(0.855,0.641854)
(0.900,0.693147)
(0.945,0.741937)
(0.990,0.788457)
(1.035,0.832909)
(1.080,0.875469)
(1.125,0.916291)
(1.170,0.955511)
(1.215,0.993252)
(1.260,1.029619)
(1.305,1.064711)
(1.350,1.098612)
(1.395,1.131402)
(1.440,1.163151)
(1.485,1.193922)
(1.530,1.223775)
(1.575,1.252763)
(1.620,1.280934)
(1.665,1.308333)
(1.710,1.335001)
(1.755,1.360977)
(1.800,1.386294)
(1.845,1.410987)
(1.890,1.435085)
(1.935,1.458615)
(1.980,1.481605)
(2.025,1.504077)
(2.070,1.526056)
(2.115,1.547563)
(2.160,1.568616)
(2.205,1.589235)
(2.250,1.609438)
(2.295,1.629241)
(2.340,1.648659)
(2.385,1.667707)
(2.430,1.686399)
(2.475,1.704748)
(2.520,1.722767)
(2.565,1.740466)
(2.610,1.757858)
(2.655,1.774952)
(2.700,1.791759)
(2.745,1.808289)
(2.790,1.824549)
(2.835,1.840550)
(2.880,1.856298)
(2.925,1.871802)
(2.970,1.887070)
(3.015,1.902108)
(3.060,1.916923)
(3.105,1.931521)
(3.150,1.945910)
(3.195,1.960095)
(3.240,1.974081)
(3.285,1.987874)
(3.330,2.001480)
(3.375,2.014903)
(3.420,2.028148)
(3.465,2.041220)
(3.510,2.054124)
(3.555,2.066863)
(3.600,2.079442)
(3.645,2.091864)
(3.690,2.104134)
(3.735,2.116256)
(3.780,2.128232)
(3.825,2.140066)
(3.870,2.151762)
(3.915,2.163323)
(3.960,2.174752)
(4.005,2.186051)
(4.050,2.197225)
(4.095,2.208274)
(4.140,2.219203)
(4.185,2.230014)
(4.230,2.240710)
(4.275,2.251292)
(4.320,2.261763)
(4.365,2.272126)
(4.410,2.282382)
(4.455,2.292535)
(4.500,2.302585)
(4.545,2.312535)
(4.590,2.322388)
(4.635,2.332144)
(4.680,2.341806)
(4.725,2.351375)
(4.770,2.360854)
(4.815,2.370244)
(4.860,2.379546)
(4.905,2.388763)
(4.950,2.397895)
(4.995,2.406945)
(5.040,2.415914)
(5.085,2.424803)
(5.130,2.433613)
(5.175,2.442347)
(5.220,2.451005)
(5.265,2.459589)
(5.310,2.468100)
(5.355,2.476538)
(5.400,2.484907)
(5.445,2.493205)
(5.490,2.501436)
(5.535,2.509599)
(5.580,2.517696)
(5.625,2.525729)
(5.670,2.533697)
(5.715,2.541602)
(5.760,2.549445)
(5.805,2.557227)
(5.850,2.564949)
(5.895,2.572612)
(5.940,2.580217)
(5.985,2.587764)
(6.030,2.595255)
(6.075,2.602690)
(6.120,2.610070)
(6.165,2.617396)
(6.210,2.624669)
(6.255,2.631889)
(6.300,2.639057)
(6.345,2.646175)
(6.390,2.653242)
(6.435,2.660260)
(6.480,2.667228)
(6.525,2.674149)
(6.570,2.681022)
(6.615,2.687847)
(6.660,2.694627)
(6.705,2.701361)
(6.750,2.708050)
(6.795,2.714695)
(6.840,2.721295)
(6.885,2.727853)
(6.930,2.734368)
(6.975,2.740840)
(7.020,2.747271)
(7.065,2.753661)
(7.110,2.760010)
(7.155,2.766319)
(7.200,2.772589)
(7.245,2.778819)
(7.290,2.785011)
(7.335,2.791165)
(7.380,2.797281)
(7.425,2.803360)
(7.470,2.809403)
(7.515,2.815409)
(7.560,2.821379)
(7.605,2.827314)
(7.650,2.833213)
(7.695,2.839078)
(7.740,2.844909)
(7.785,2.850707)
(7.830,2.856470)
(7.875,2.862201)
(7.920,2.867899)
(7.965,2.873565)
(8.010,2.879198)
(8.055,2.884801)
(8.100,2.890372)
(8.145,2.895912)
(8.190,2.901422)
(8.235,2.906901)
(8.280,2.912351)
(8.325,2.917771)
(8.370,2.923162)
(8.415,2.928524)
(8.460,2.933857)
(8.505,2.939162)
(8.550,2.944439)
(8.595,2.949688)
(8.640,2.954910)
(8.685,2.960105)
(8.730,2.965273)
(8.775,2.970414)
(8.820,2.975530)
(8.865,2.980619)
(8.910,2.985682)
(8.955,2.990720)
(9.000,2.995732)
};

\end{scope}

\draw (9, -0.6) node {$1$};
\draw (.45, -0.6) node {$\frac{1}{\maxval}=\frac{1}{20}$};

\draw (-0.55, 0) node {$0$};
\draw (-0.55, 2.995732) node {$\ln\maxval$};
\draw (-0.55, 3.995732) node {$1\hspace{-0.1cm}+\hspace{-0.1cm}\ln\maxval$};

\draw (6, 1) node {$\ln(\quant\maxval)$ for $\quant\in[\sfrac{1}{\maxval},1]$};
\draw (5,-1.4) node { Residual surplus curve $\rev_{\topt{\Qud}_1^{\maxval'}}$};

\end{tikzpicture}
\end{minipage}
\begin{minipage}[t]{0.48\textwidth}
\centering
\begin{tikzpicture}[scale = 0.7,pile/.style={->}]

\draw (-0.2,0) -- (9, 0);
\draw [pile] (0, -0.2) -- (0, 4.5);

\draw [very thick] (9, 1.609438) -- (9, 3.995732);
\draw [very thick] (0, 0) -- (1.8, 0);
\draw (9,-0.2) -- (9,0.2);
\draw [dotted] (0,1.609438) -- (9, 1.609438);
\draw [dotted] (0,3.995732) -- (9, 3.995732);

\draw [dashed] (0,0) -- (9, 3.995732);

\fill[black] (0,0) circle (0.1cm);
\fill[black] (1.8,0) circle (0.1cm);
\fill[black] (9,1.609438) circle (0.1cm);
\fill[black] (9,3.995732) circle (0.1cm);

\begin{scope}[very thick]

\draw plot [smooth, tension=0.8] coordinates {
(1.8,0)
(1.89,0.04879)
(1.98,0.09531)
(2.07,0.139762)
(2.16,0.182322)
(2.25,0.223144)
(2.34,0.262364)
(2.43,0.300105)
(2.52,0.336472)
(2.61,0.371564)
(2.7,0.405465)
(2.79,0.438255)
(2.88,0.470004)
(2.97,0.500775)
(3.06,0.530628)
(3.15,0.559616)
(3.24,0.587787)
(3.33,0.615186)
(3.42,0.641854)
(3.51,0.667829)
(3.6,0.693147)
(3.69,0.71784)
(3.78,0.741937)
(3.87,0.765468)
(3.96,0.788457)
(4.05,0.81093)
(4.14,0.832909)
(4.23,0.854415)
(4.32,0.875469)
(4.41,0.896088)
(4.5,0.916291)
(4.59,0.936093)
(4.68,0.955511)
(4.77,0.97456)
(4.86,0.993252)
(4.95,1.011601)
(5.04,1.029619)
(5.13,1.047319)
(5.22,1.064711)
(5.31,1.081805)
(5.4,1.098612)
(5.49,1.115142)
(5.58,1.131402)
(5.67,1.147402)
(5.76,1.163151)
(5.85,1.178655)
(5.94,1.193922)
(6.03,1.20896)
(6.12,1.223775)
(6.21,1.238374)
(6.3,1.252763)
(6.39,1.266948)
(6.48,1.280934)
(6.57,1.294727)
(6.66,1.308333)
(6.75,1.321756)
(6.84,1.335001)
(6.93,1.348073)
(7.02,1.360977)
(7.11,1.373716)
(7.2,1.386294)
(7.29,1.398717)
(7.38,1.410987)
(7.47,1.423108)
(7.56,1.435085)
(7.65,1.446919)
(7.74,1.458615)
(7.83,1.470176)
(7.92,1.481605)
(8.01,1.492904)
(8.1,1.504077)
(8.19,1.515127)
(8.28,1.526056)
(8.37,1.536867)
(8.46,1.547563)
(8.55,1.558145)
(8.64,1.568616)
(8.73,1.578979)
(8.82,1.589235)
(8.91,1.599388)
(9,1.609438)
};

\end{scope}

\draw (9, -0.6) node {$1$};
\draw (1.8, -0.6) node {$\frac{z}{\maxval}=\frac{1}{5}$};

\draw (-0.6, 0) node {$0$};
\draw (-0.5, 1.609438) node {$4\ln\maxval$};
\draw (-0.4, 3.695732) node {$4\hspace{-0.1cm}+\hspace{-0.1cm}4\ln\maxval$};

\draw (4.5, 2) node {$z\ln(\quant\frac{\maxval}{z})$ for $\quant\in[\sfrac{z}{\maxval},1]$};
\draw (5,-1.4) node { Residual surplus curve $\rev_{\topt{\Qud}_4^{\maxval'}}$};

\end{tikzpicture}
\end{minipage}
\end{flushleft}
\caption{
\label{fig:twopieceiron-z-vs-h}
The dashed lines show the ironing of the lottery mechanism.  As $z\rightarrow \maxval$, there is a threshold beyond which the lottery mechanism becomes optional.  Both graphics depict $\maxval = 20$, but note their vertical scales are not equal.  The left side uses $z=1$ for which the lottery is not optimal.  The right side uses $z=4$ for which the lottery is optimal.  Setting $\maxval\geq \RSMINH$ is sufficient to guarantee that at least for the relevant corner case which has $z=1$, the lottery is not optimal.
}
\end{figure}

\paragraph{(2) The optimal region for partial (downard-closed) ironing.}  Now we find the {\em optimal} value $\val\in[1,\maxval]$ to iron all values above it, equivalently, the {\em optimal downward-closed region of quantile space}.  This is a step towards motivating the definition of $\mecha_{o_{\text{pm}}}$ as chosen in \Cref{def:ressurpspecialmech}.

Given the graph of the residual surplus curve, we find this optimal quantile range by considering a line segment with one endpoint as the origin $(\quant = 0, \rev_{\topt{\Qud}_1^{\maxval'}}(0)=0)$, and the other endpoint on the revenue curve at $\rev_{\topt{\Qud}_1^{\maxval'}}(\quant) = \ln (\quant\cdot \maxval))$ for $\quant\in[\sfrac{1}{\maxval},1]$.  We search for the line segment of this type with largest slope.  Directly from ``change in $y$ over change in $x$," the slope function and its derivative are given by:~\footnote{\label{foot:ressurpcalcslopetangent} In fact, by continuity of the derivative of the residual surplus curve in this region, the line segment will be tangent to the residual surplus curve if the optimal quantile is interior, i.e., in $(\sfrac{1}{\maxval},1)$.}
\begin{align}
    \label{eqn:ressurpcalcslope}
    \zeta(\quant) &= \frac{\ln (\quant \cdot \maxval) } {\quant}\quad \text{on}~[\sfrac{1}{\maxval},1]\\
    \nonumber
    \quant^2\cdot \zeta'(\quant) &= \quant\cdot\frac{1}{\quant} -(\ln(\quant\cdot\maxval)\cdot 1 = 1-\ln(\quant\cdot\maxval)
\intertext{such that the derivative shows that the slope function achieves its maximum at $\quant^* = \sfrac{e}{\maxval}$:}
    \label{eqn:ressurpoptslopesmallquant}
    \zeta'(\quant^*)&=\zeta'(\sfrac{e}{\maxval}) = 0
\end{align}
\noindent The optimal range for ironing of small quantiles is $[0,\sfrac{e}{\maxval}]$.  Letting $\zeta^*$ be the optimal slope of the ironed region and recalling $\vv^{\topt{\Qud}_1^{\maxval'}}(\quant) = \sfrac{1}{\quant}$, we have
\begin{equation*}\zeta^*= \zeta(\quant^*) = \zeta(\sfrac{e}{\maxval}) = \frac{\ln\left( \sfrac{e}{\maxval}\cdot{\maxval}\right)}{\sfrac{e}{\maxval}} = \frac{\maxval}{e}
\end{equation*}
\paragraph{(3) Ironed slopes confirm equation~\eqref{eqn:ressurpexampleslopecomparison} is sufficient for \Cref{lem:Moflb22islbversusLOTofopt22}.}  \Cref{lem:dry-15-extension} in \Cref{s:quadsversusuniformsfromg1ofv1timesg2ofv2} shows that residual surplus is proportional to area under an {\em ironed residual surplus curve.}  From the geometry of the ironed curves used respectively by $\mecha_{o_{\text{pm}}}$ and $\text{LOT}_2$, it is clear that the question of which has larger area under the curve reduces to the question of which has the larger slope on the range $[0,\sfrac{e}{\maxval}]$.

Consider comparing (a) the ironed slope $\zeta^*= \sfrac{\maxval}{e}$ just calculated in (3) as used by $\mecha_{o_{\text{pm}}}$; to (b) the slope $\zeta^1 = 1+\ln\maxval$ of the 2-lottery which irons everywhere.  Per the reduction just mentioned, we have $\mechap_{o_{\text{pm}}}> \text{LOT}_2$ if $\sfrac{\maxval}{e}>1+\ln\maxval$ which is equivalent to equation~\eqref{eqn:ressurpexampleslopecomparison}.

\paragraph{(4) The optimal region for partial (downard-closed) ironing.}  To end this section, we show that the optimal set of ironed ranges for $\topt{\Qud}_1^{\maxval'}$ is $\mathcal{Q}^{\&} = \{[0,\sfrac{e}{\maxval}],~ [\quant^{\&} ,1] \}$, with $\quant^{\&}$ identified below, and the optimality of the set $\mathcal{Q}^\&$ self-evident from inspection of the geometry of the residual surplus curve $\bar{\rev}_{\topt{\Qud}_1^{\maxval'}}$.  We do this by finding the quantile $\quant^{\&}$ at which the tangent line intersects the point $(1,1+\ln\maxval)$ in the residual surplus curve graph (i.e., the top right corner point).  The correct quantile $\quant^{\&}$ is the one -- {\em observably independent of $\maxval$} --  that satisfies the equality:
\begin{align}
    \nonumber 
    \ln\left(\quant^{\&}\cdot\maxval\right)+(1-\quant^{\&})\cdot\frac{1}{\quant^{\&}} &= 1 + \ln\maxval\\
    \label{eqn:ressurpoptslopelargequant}
    \Leftrightarrow \quad\quad \ln\quant^{\&}+\frac{1}{\quant^{\&}} &= 2
\end{align}
\noindent which is a unique $\quant^{\&} \in(\sfrac{e}{\RSMINH},0.32]$ (where we chose the lower end point as motivated by $\maxval\geq\RSMINH$).

We end with some notes.  Naturally for $\maxval^{\&}$ representing its threshold value in $[8.55,8.56]$ at which $\text{lb}_{2,2}=\text{opt}_{2,1}$, this critical $\maxval^{\&}$ sets $\quant^*=\sfrac{e}{\maxval^{\&}} = \quant^{\&}$.  Increasing $\maxval$ above $\maxval^{\&}$, we still have that $\quant^{\&}$ is constant but $\quant^*=\sfrac{e}{\maxval}$ is decreasing.  Therefore the optimal ironing leaves the range $[\sfrac{e}{\maxval},\quant^{\&}]$ un-ironed for $\maxval>\maxval^{\&}$.

\section{Supporting Material for \Cref{s:blendsfrominvdist}}
\label{a:generalblendresults}
\label{page:genblendsapp}

This section presents supporting material \Cref{thm:blendsgeneratorfromseparate}.  \Cref{a:blendsgeneratorfromseparate} gives the proof.  \Cref{a:discussblendsfromseparate} analyzes the structure of the order-statistic-separable class of dual blends solutions as identified by \Cref{thm:blendsgeneratorfromseparate}, in particular for understanding the functions $g_1,~g_2$ and $G_1,~G_2$ of the theorem statement.

\subsection{Proof of \Cref{thm:blendsgeneratorfromseparate}}
\label{s:blendsgeneratorfromseparate}
\label{a:blendsgeneratorfromseparate}

\begin{numberedtheorem}{\ref{thm:blendsgeneratorfromseparate}}
\label{thm:blendsgeneratorfromseparate2}
Consider non-negative functions $g_1(\cdot)$ and $g_2(\cdot)$ each with domain $(0,\infty)$.  For every $z>0$, let $g_{1,z}$ be $g_1$ restricted to the domain $[z,\infty)$ and $g_{2,z}$ be $g_2$ restricted to the domain $(0,z]$.

Each $\blendi$ blend is a distribution over the set $\{g_{i,z}~:~z>0\}$.  Let $o_{g_1}(z)$ and $\omega_{g_2}(z)$ be functions ({\em as free parameters which we may design)} to describe weights corresponding respectively to each $g_{1,z}$ and to each $g_{2,z}$.

First, assume $g_1(\cdot)$ and $g_2(\cdot)$ satisfy the following conditions:
\begin{enumerate}
\item The function $\chi(z) = \frac{g_1(z)}{g_2(z)}$ evaluated in the limit at $\infty$ is $0$, i.e., $\lim_{z\rightarrow \infty}\chi(z) = 0$;
\item the function $\psi(z) = \frac{g_2(z)}{g_1(z)}$ evaluated in the limit at $0$ is $0$, i.e., $\lim_{z\rightarrow 0} \psi(z) = 0$;
\item $\chi(z)$ must be weakly decreasing, equivalently, $\psi(z)$ must be weakly increasing;
\end{enumerate}
Then the weights functions $o_{g_1}(z)=d\psi(z)$ and $\omega_{g_2}(z)=-d\chi(z)$ 
give a dual blends solution with: $$g(\vals) = g_1(v_1)\cdot g_2(v_2)~\text{{\em for}}~\vals= (\vali[1],\vali[2]\leq\vali[1])$$

\noindent If the following condition additionally holds:
\begin{enumerate}
\setcounter{enumi}{3}
\item the integrals $G_1(z)=\int_z^{\infty} g_1(y)~dy$ and $G_2(z)=\int_0^z g_2(y)~dy$ are positive and finite for all $x\in(0,\infty)$;
\end{enumerate}
then for the same function $g$, there exists a dual blends solution (by modification from the original solution) for which all of the $g_{1,z}$ and $g_{2,z}$ functions are distributions.  
\end{numberedtheorem}
\begin{proof}
At a high level, the proof is constructive: it is possible to back out weights functions $o_{g_1}(z)$ and $\omega_{g_2}(z)$.  Per the statement, let $\chi(z) = \frac{g_1(z)}{g_2(z)}$.  Choose
\begin{equation}
\label{eqn:omegag2equalsdchi}
\omega_{g_2}(z) = (-1)\cdot d\chi(z)
\end{equation}
such that the upwards-closed integral over all $g_{2,z}$ (where $g_{2,z}(\vali[1])$ is positive) gives\footnote{\label{foot:g2zreducestog1afterintegralendpoints} Note within the sequence of equation~\eqref{eqn:g2ofztog1timesg2} that function $g_{2,z}$ is used in the starting evaluation, where its domain informs the integral end points; after the end points are fixed however, we have $g_{2,z}=g_2$ everywhere. Thus the first step can simplify to the common function $g_2$ and pull multiplicative constants out of the integral.}
\begin{align}
\nonumber
\int_{\vali[1]}^{\infty} \omega_{g_2}(z)\cdot g_{2,z}(\vali[1])\cdot g_{2,z}(\vali[2]) &= \left(g_2(\vali[1])\cdot g_2(\vali[2])\right)\int_{\vali[1]}^{\infty} (-1)\cdot~d\chi(z)\\
\label{eqn:g2ofztog1timesg2}
&= \left(g_2(\vali[1])\cdot g_2(\vali[2])\right) \left[ (-1)\cdot\frac{g_1(z)}{g_2(z)}\right]_{\vali[1]}^{\infty} = g_1(\vali[1])\cdot g_2(\vali[2])
\end{align}
where Condition (1) in the theorem statement is sufficient for the final equality.  Similarly, let $\psi(z) = \frac{g_2(z)}{g_1(z)}$ and choose
\begin{equation}
\label{eqn:og1equalsdpsi}
o_{g_1}(z) = d\psi(z) 
\end{equation}
such that the downward-closed integral over all $g_{1,z}$ (where $g_{1,z}(\vali[2])$ is positive) gives
\begin{align}
\nonumber
\int_0^{\vali[2]} o_{g_1}(z)\cdot g_{1,z}(\vali[1])\cdot g_{1,z}(\vali[2]) &= \left(g_1(\vali[1])\cdot g_1(\vali[2])\right) \int_0^{\vali[2]}d\psi(z)\\
&= \left(g_1(\vali[1])\cdot g_1(\vali[2])\right) \left[\frac{g_2(z)}{g_1(z)}\right]_0^{\vali[2]} = g_1(\vali[1])\cdot g_2(\vali[2])
\end{align}
where Condition (2) is sufficient for the final equality.

By implicit assumption throughout this paper, the weights $o_{g_1}$ and $\omega_{g_2}$ (and the function $g$) must be non-negative everywhere.  Observing weights definitions in equations~\eqref{eqn:omegag2equalsdchi} and~\eqref{eqn:og1equalsdpsi}, Condition (3) is sufficient to meet these high-level assumptions.\footnote{\label{foot:positivityofg} Without our global assumptions on weights here, Condition (3) could be relaxed.}  This completes the proof of the main theorem statement.  To prove the distributions-special-case using Condition (4), we show how to use definitions in the theorem statement to modify the $\omega_{g_2}$-side calculations above (and leave the $o_{g_1}$-side to follow from symmetry, similar to the symmetry above between the two sides).  For this setting, we have a modified blends solution.  Critically, we have $\tilde{g}_{1,z}(x) = \sfrac{g_{1,z}(x)}{G_1(z)}$ and $\tilde{g}_{2,z}(x)=\sfrac{g_{2,z}(x)}{G_2(z)}$.  Condition (4) is sufficient to guarantee that all of the functions $\tilde{g}_{1,z}$ and $\tilde{g}_{2,z}$ are in fact probability distributions.  Choose
\begin{equation}
\label{eqn:weightisdchitimesG2}
\tilde{\omega}_{g_2}(z) = (-1)\cdot d\chi(z)\cdot \left(G_2(z)\right)^2
\end{equation}
which ``corrects for the normalization" within each $\tilde{g}_{i,z}$ by re-factoring the weights, such that the same effective calculation as before goes through.  I.e., the following upward-closed integral gives
\begin{align}
\label{eqn:separationofgiandGi}
\int_{\vali[1]}^{\infty} \tilde{\omega}_{g_2}(z)\cdot \tilde{g}_{2,z}(\vali[1])\cdot \tilde{g}_{2,z}(\vali[2])&=\\
\nonumber
\int_{\vali[1]}^{\infty} \tilde{\omega}_{g_2}(z)\cdot\left(\frac{g_{2,z}(\vali[1])}{G_2(z)}\right)\left(\frac{g_{2,z}(\vali[2])}{G_2(z)}\right) &= \left(g_2(\vali[1])\cdot g_2(\vali[2])\right)\int_{\vali[1]}^{\infty} (-1)\cdot d\chi(z)\\
\nonumber
&= \left(g_2(\vali[1])\cdot g_2(\vali[2])\right) \left[ (-1)\cdot\frac{g_1(z)}{g_2(z)}\right]_{\vali[1]}^{\infty} = g_1(\vali[1])\cdot g_2(\vali[2])
\end{align}
again relying on Conditions (1) and (3).  Condition (2) is sufficient for the $o_{g_1}$-side to work out symmetrically, which uses the modification $\tilde{o}_{g_1}(z) = d\psi(z)\cdot\left(G_1(z) \right)^2$.
\end{proof}

\label{page:quadsversusuniformsfromg1ofv1timesg2ofv2} \noindent Next we illustrate the math of \Cref{thm:blendsgeneratorfromseparate} for our main example of Quadratics-versus-Uniforms in \Cref{s:quadsversusuniformsfromg1ofv1timesg2ofv2}.  We give a second example in \Cref{s:quadsversuscubicsorderstatblends}, for which $G_2(z)=\int_0^z g_2(x)~dx$ evaluates to $\infty$ and therefore the functions $g_{2,z}$ can not possibly be converted to probability distributions by trying to normalize their total weights.

\subsubsection{Quadratics-versus-Uniforms Dual Blend from Order-statistic Separability}
\label{s:quadsversusuniformsfromg1ofv1timesg2ofv2}

We show how our main example of Quadratics-versus-Uniforms fits into \Cref{thm:blendsgeneratorfromseparate}.  We use the distribution-version of the theorem which includes its Condition (4).  Motivated by the Quadratics, let $g_1(x) = \sfrac{1}{x^2}$ inducing $G_1(x) = \sfrac{1}{x}$.  Motivated by the Uniforms, let $g_2(x)=1$ inducing $G_2(x)=x$.  Recall we assume $\vali[1]\geq\vali[2]>0$.  Therefore on the Uniforms side we have the following.  Note that in fact, these calculations apply for arbitrary upward-finite $g_2$ because we can wait until the end to substitute.
\begin{equation*}
\chi(z) = \frac{g_1(z)}{g_2(z)} = \frac{1}{z^2\cdot g_2(z)},\quad\quad d\chi(z) = d\left(\frac{1}{z^2\cdot g_2(z)}\right)= (-1)\cdot\frac{2g_2(z)+z\cdot g'_2(z)}{z^3\cdot(g_2(z))^2}\cdot dz
\end{equation*}
(where the evaluation of $d\chi(z)$ doesn't matter but we write it for completeness).  We further have
\begin{equation*}
\omega_{g_2}(z) = (-1)\cdot d\left(\frac{1}{z^2\cdot g_2(z)}\right)\cdot \left(G_2(z)\right)^2 = \left( \frac{2 g_2(z) + z\cdot g'_2(z)}{z^3\cdot (g_2(z))^2}\right)\cdot(G_2(z))^2\cdot dz\geq 0
\end{equation*}
with the final inequality included to illustrate that it is non-negative.\footnote{\label{foot:omegag2strictweightcheck} Here we can also already confirm that\\
$\omega_{g_2}=\left( \frac{2 g_2(z) + z\cdot g'_2(z)}{z^3\cdot (g_2(z))^2}\right)\cdot(G_2(z))^2\cdot dz = \left( \frac{2\cdot 1 + z\cdot 0}{z^3\cdot (1)^2}\right)\cdot(z)^2\cdot dz=\sfrac{2}{z\cdot dz}$ as it should be, given the example.}   From here we have 
\begin{align*}
&\int_{\vali[1]}^{\infty}\omega_{g_2}\cdot\left(\frac{g_2(\vali[1])}{G_2(z)}\right)\left(\frac{g_2(\vali[2])}{G_2(z)}\right)\\
=& \int_{\vali[1]}^{\infty}(-1)\cdot d\left(\frac{1}{z^2\cdot g_2(z)}\right)\cdot \left(G_2(z)\right)^2 \cdot\left(\frac{g_2(\vali[1])}{G_2(z)}\right)\left(\frac{g_2(\vali[2])}{G_2(z)}\right)\\
=&\left(g_2(\vali[1])\cdot g_2(\vali[2])\right)\cdot \int_{\vali[1]}^{\infty}(-1)\cdot d\left(\frac{1}{z^2\cdot g_2(z)}\right) = \left(g_2(\vali[1])\cdot g_2(\vali[2])\right)\left[\frac{1}{z^2\cdot g_2(z)}\right]_{\infty}^{\vali[1]}= \frac{1}{\vali[1]^2}\cdot g_2(\vali[2])
\end{align*}
as desired, because $g_2(x) = 1$ and $g(\vals) = \sfrac{1}{\vali[1]^2}$ is correct for infinite-weight Quadratics-versus-Uniforms dual blends of \Cref{s:tensorexampleoutline}.  On the Quadratics side, symmetric to the analysis above, we have 
\begin{equation*}
\psi(z) = \frac{g_2(z)}{g_1(z)} = z^2,\quad\quad d\psi(z) = d\left(z^2\right)= 2z\cdot dz
\end{equation*}
\begin{equation*}
\omega_{g_1}(z) = d\left(z^2\right)\cdot \left(G_1(z)\right)^2 =  2z\cdot\left(\frac{1}{z}\right)^2\cdot dz = \frac{2}{z}\cdot dz\geq 0
\end{equation*}
Finally we have
\begin{align*}
&\int_{0}^{\vali[2]}o_{g_1}\cdot\left(\frac{g_1(\vali[1])}{G_1(z)}\right)\left(\frac{g_1(\vali[2])}{G_1(z)}\right) = \int_{0}^{\vali[2]}d\left(z^2\right)\cdot \left(G_1(z)\right)^2\cdot\left(\frac{g_1(\vali[1])}{G_1(z)}\right)\left(\frac{g_1(\vali[2])}{G_1(z)}\right)\\
= & (g_1(\vali[1])\cdot g_2(\vali[2]))\cdot\int_0^{\vali[2]}d(z^2) =(g_1(\vali[1])\cdot g_2(\vali[2]))\left[ ~z^2~\right]_0^{\vali[2]} =\frac{1}{\vali[1]^2}\cdot \frac{1}{\vali[2]^2}\cdot \vali[2]^2 = \frac{1}{\vali[1]^2}
\end{align*}

\subsubsection{Blends from Order-statistic Independence that are not Distributions}
\label{s:quadsversuscubicsorderstatblends}

We give a simple second example which illustrates \Cref{thm:blendsgeneratorfromseparate}.  The dual blends have one side as Quadratics and the other side as {\em Cubics}.  In this case, the Quadratics have downward-closed domain and can not be normalized to distributions because the function $G_2(x)=\int_0^z \sfrac{1}{y^2}~dy=\infty$.  Without further comment, we write down the evaluations of all necessary elements using the definitions of \Cref{thm:blendsgeneratorfromseparate}:

\begin{align*}
    g_1(x) &= \frac{1}{x^3}~\text{for}~ x\in(0,\infty)&g_2(x)&= \frac{1}{x^2}~\text{for}~ x\in(0,\infty)\\
    g_{1,z}(x) &= \frac{1}{x^3}~\text{for}~ x\in[z,\infty)&g_{2,z}(x)&= \frac{1}{x^2}~\text{for}~ x\in(0,z]\\
    \psi(z) &= z~\text{for}~ z\in(0,\infty) & \chi(z) &= \frac{1}{z}~\text{for}~ z\in(0,\infty)\\
    o_{g_1}(z) &= 1\cdot dz~\text{for}~ z\in(0,\infty) & \omega_{g_2}(z) &= \frac{1}{z^2}\cdot dz~\text{for}~ z\in(0,\infty)
\end{align*}
\begin{align*}
        \int_0^{\infty} o_{g_1}(z)\cdot g_{1,z}(\vali[1])\cdot g_{1,z}(\vali[2])&= \int_0^{\vali[2]} o_{g_1}(z)\cdot g_1(\vali[1])\cdot g_1(\vali[2])= \int_0^{\vali[2]}1\cdot \frac{1}{\vali[1]^3}\cdot\frac{1}{\vali[2]^3}~dz\\
        &= \frac{1}{\vali[1]^3}\cdot \frac{1}{\vali[2]^2}= g_1(\vali[1])\cdot g_2(\vali[2])= g(\vals) \\
        = \int_0^{\infty} \omega_{g_2}(z)\cdot g_{2,z}(\vali[1])\cdot g_{2,z}(\vali[2])&= \int_{\vali[1]}^{\infty} \omega_{g_2}(z)\cdot g_2(\vali[1])\cdot g_2(\vali[2])= \int_{\vali[1]}^{\infty}\frac{1}{z^2}\cdot \frac{1}{\vali[1]^2}\cdot \frac{1}{\vali[2]^2}~dz
\end{align*}
\noindent Note -- this solution concept would fail if we assigned the Quadratics to be upward-closed and the Cubics to be downward-closed because the monotonicity conditions of \Cref{thm:blendsgeneratorfromseparate} would be violated.

\subsection{Discussion of \Cref{thm:blendsgeneratorfromseparate}}
\label{a:discussblendsfromseparate}

Through the rest of this section, we discuss a number of intuitive observations regarding the structure of \Cref{thm:blendsgeneratorfromseparate}.

\paragraph{The $g_i$ functions as ``un-normalized" density functions.}  The proof of \Cref{thm:blendsgeneratorfromseparate} makes clear how a function like $g_2(\vali[2])=1$ is the common function representing {\em un-normalized density} of every downward-closed uniform distribution $\Ud_{0,z}$.  I.e., a process to generate any downward-closed uniform distribution is to start with $g_2(\vali[2])=1$ on $[0,\infty)$, restrict it to the domain $[0,z]$, and then divide by the total area under the curve $\int_0^z g_2(y)~dy =z$.  This gives the PDF $\ud_{0,z}(y) = \sfrac{1}{z}$.

Similarly, the function $g_1(\vali[1])$ gives the un-normalized density of every upward-closed quadratic distribution $\Qud_{z}$.  To normalize $g_1$ to become distribution $\Qud_z$, we divide by the tail area $\int_z^{\infty}g_1(y)~dy=\sfrac{1}{z}$ and the resulting PDF is exactly $\qud_z(y)=\sfrac{z}{y^2}$.

\paragraph{Application to distributions requires finite tails.}  In the statement of \Cref{thm:blendsgeneratorfromseparate}, the special case for which we construct $\tilde{g}_{1,z}$ and $\tilde{g}_{2,z}$ to necessarily be probability distributions required additionally Condition (4) which states, ``the integrals $G_1(z)=\int_z^{\infty} g_1(x)dx$ and $G_2(z)=\int_0^z g_2(x)dx$ are positive and finite for all $z$."  This is necessary because, e.g., $\tilde{g}_{1,z}(x)=\sfrac{g_{1,z}(x)}{G_1(z)}$ would otherwise be not well-defined or 0.  See the previous example in \Cref{s:quadsversuscubicsorderstatblends}.

The interpretation of Condition (4) is that $g_1$ must be everywhere ``upward-finite" and $g_2$ must be everywhere ``downward-finite."
\begin{definition}
\label{def:upwardfinite}
Given a non-negative function $g_i(x)$ with domain $(0,\infty)$.  The function $g_i(\cdot)$ is {\em upward-finite} if $\int_z^{\infty}g_i(x)~dx$ is finite for every $z$, and it is {\em downward-finite} if $\int_0^z g_i(x)~dx$ is finite for every $z$.
\end{definition}

\noindent We identify a couple consequences of this structure.  First, it makes permanent the setting of integral end points when calculating density at a fixed input $(\vali[1],\vali[2]\leq\vali[1])$ from each side of the dual blends (recall \Cref{fig:blendsends} in \Cref{a:example}).  Second, it allows us to write any number of simple corollaries to state existence of classes of dual blends that have distributions as elements of the blends, for example:

\begin{corollary}
\label{cor:posnegparetoblendsfromseparate}
Consider parameterized functions $g^{\eta}(x) = \sfrac{1}{x^{\eta}}$ for any $\eta\in\mathbb{R}$.  Setting $g_1=g^{\eta_+}$ for $\eta_+>1$ and $g_2=g^{\eta_-}$ for $\eta_-<1$ will meet all conditions (1) through (4) of \Cref{thm:blendsgeneratorfromseparate}.  Thus, there is a dual blends solution for which the elements are distributions from any $g^{\eta_+}$ and $g^{\eta_-}$.
\end{corollary}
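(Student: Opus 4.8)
The plan is to verify Conditions (1)–(4) of \Cref{thm:blendsgeneratorfromseparate} directly for the choice $g_1(x)=g^{\eta_+}(x)=x^{-\eta_+}$ with $\eta_+>1$ and $g_2(x)=g^{\eta_-}(x)=x^{-\eta_-}$ with $\eta_-<1$, and then to invoke the final (``distributions'') clause of that theorem. Every check reduces to an elementary computation with power functions, so the proof is short.

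First I would form the two ratio functions appearing in the theorem: $\chi(z)=g_1(z)/g_2(z)=z^{\eta_--\eta_+}$ and $\psi(z)=g_2(z)/g_1(z)=z^{\eta_+-\eta_-}$. Since $\eta_+>1>\eta_-$, the exponent $\eta_+-\eta_->0$ is strictly positive and $\eta_--\eta_+<0$ is strictly negative. Condition (1) then holds because $\lim_{z\to\infty}z^{\eta_--\eta_+}=0$; Condition (2) holds because $\lim_{z\to 0}z^{\eta_+-\eta_-}=0$; and Condition (3) holds because $\chi(z)=z^{\eta_--\eta_+}$ is strictly decreasing on $(0,\infty)$, equivalently $\psi$ is strictly increasing (in particular non-degenerate). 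With Conditions (1)–(3) in hand, the first part of \Cref{thm:blendsgeneratorfromseparate} already yields an infinite-weight dual blends solution with common correlated function $g(\vals)=\vali[1]^{-\eta_+}\,\vali[2]^{-\eta_-}$ on the cone $\vali[1]\ge\vali[2]$, using weights $o_{g_1}(z)=d\psi(z)$ and $\omega_{g_2}(z)=-d\chi(z)$.

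Next I would check Condition (4) by evaluating the cumulative integrals in closed form: the tail integral $G_1(z)=\int_z^{\infty}y^{-\eta_+}\,dy=\frac{z^{1-\eta_+}}{\eta_+-1}$ converges exactly because $\eta_+>1$ and is positive and finite for every $z>0$; the initial-segment integral $G_2(z)=\int_0^z y^{-\eta_-}\,dy=\frac{z^{1-\eta_-}}{1-\eta_-}$ converges at the lower endpoint exactly because $\eta_-<1$ and is likewise positive and finite for every $z>0$. Since all four conditions hold, the concluding clause of \Cref{thm:blendsgeneratorfromseparate} supplies a dual blends solution whose elements are genuine probability distributions, explicitly $\tilde g_{1,z}(x)=(\eta_+-1)\,z^{\eta_+-1}x^{-\eta_+}$ on $[z,\infty)$ and $\tilde g_{2,z}(x)=(1-\eta_-)\,z^{\eta_--1}x^{-\eta_-}$ on $(0,z]$, with corrected weights $\tilde o_{g_1}(z)=d\psi(z)\,(G_1(z))^2$ and $\tilde\omega_{g_2}(z)=-d\chi(z)\,(G_2(z))^2$.

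I expect no real obstacle here: the corollary is a routine specialization of \Cref{thm:blendsgeneratorfromseparate}, and the only point needing care is matching the integrability conditions to the correct side of the blend — the heavy-tailed $\eta_+>1$ factor must be the \emph{upward}-closed $g_1$ so that $G_1$ is finite, and the mildly-singular $\eta_-<1$ factor must be the \emph{downward}-closed $g_2$ so that $G_2$ is finite; interchanging them would violate the monotonicity Condition (3), which is exactly the phenomenon flagged after the Quadratics-versus-Cubics example in \Cref{s:quadsversuscubicsorderstatblends}. As sanity checks worth stating, $(\eta_+,\eta_-)=(2,0)$ recovers the Quadratics-versus-Uniforms blend of \Cref{s:tensorexampleoutline} (with $g(\vals)=\sfrac{1}{\vali[1]^2}$), while the excluded pair $(\eta_+,\eta_-)=(3,2)$ is precisely the Quadratics-versus-Cubics example, whose $g_2$ side indeed fails Condition (4) since $\eta_-=2\not<1$.
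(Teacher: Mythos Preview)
Your proposal is correct and is exactly the natural argument: the paper states this corollary without proof, and your direct verification of Conditions (1)--(4) via the elementary power-function computations is precisely what is intended. The sanity checks against the Quadratics-versus-Uniforms and Quadratics-versus-Cubics examples are apt and match the paper's surrounding discussion.
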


\paragraph{Algebraic consequences of the integral endpoints in the construction.}  The assigned integral end points of the dual blends calculations -- as resulting from \Cref{def:upwardfinite} -- are critical to making the algebra work out.  Specifically, each side observably employs an integral endpoint to ``correct" the $g_i(\vali[j\neq i])$ term which originally appears inside the integral, as a ``constant" given the integration per $dz$.

E.g., the equation in line~\eqref{eqn:separationofgiandGi} at the end of the proof of \Cref{thm:blendsgeneratorfromseparate} makes this clear: both $g_i(\vali)$ and $g_i(\vali[j])$ terms get pulled out.  After this step, the evaluation of the integral {\em given its endpoints} is needed to both construct $g_i(\vali)$ and cancel $g_i(\vali[j])$ -- there are no other algebraic tools available to construct the function $g$.  In fact, we can't change $g_i(\vali)$ and it passes intact as a factor of $g$.  Evaluation of the integral must replace the $g_i(\vali[j])$ term with a $g_j(\vali[j])$ term which is the second factor of $g$.  Then the weights terms are designed to get the overall integrand correct so that the anti-derivative function evaluates the ``extreme" end point (at 0 or $\infty$) to 0 and the other end point at $\vali[j]$ to convert an original $g_i(\vali[j])$ term to $g_i(\vali)$ as needed within the order-statistic-separable function $g$.

With this algebraic set up in mind, it should now be clear why we should not expect a direct extension of The Blends Technique (or general dual blends solutions) for $n\geq 3$.  For example, consider trying to construct dual blends for the function 
\begin{equation*}g(\vali[1],\vali[2],\vali[3])=g_1(\vali[1])\cdot g_2(\vali[2])\cdot g_3(\vali[3])
\end{equation*}
by direct analogy to the $n=2$ case.  The problem for generalization is that the design for $n=2$ gives each side exactly two ``degrees of freedom" to set $g_1$ and $g_2$.  To attempt the same design for $n=3$, let $i,j$ be distinct elements of the set $\{1,2,3\}$.  Each side of the (supposed) dual blend must be symmetric from a functional starting point:
\begin{align*}
    g_i(\vali[1])\cdot g_i(\vali[2])\cdot g_i(\vali[3]) \int_0^b\left( \cdot \right)dz &= g(\vals) = g_1(\vali[1])\cdot g_2(\vali[2])\cdot g_3(\vali[3])\\
    &= g_j(\vali[1])\cdot g_j(\vali[2])\cdot g_j(\vali[3]) \int_a^{\infty}\left( \cdot \right)dz 
\end{align*}
but there is no way to evaluate the integrals -- no matter what their integrands are or what their endpoints are -- to combine with each of $\prod\nolimits_k g_i(\vali[k])$ and $\prod\nolimits_k g_j(\vali[k])$ to get $g_1(\vali[1])\cdot g_2(\vali[2])\cdot g_3(\vali[3])$.  The only solution is $g_i=g_j$.

\paragraph{The $G_i(\cdot)$ functions as continuous scalars.}  By inspection of equation~\eqref{eqn:separationofgiandGi}, the (finite) functions $G_i(x)$ can in fact be set to any function that is strictly positive (or even more generally, non-zero) {\em as long as they are still offset by $G_i(\cdot)$ terms in the weights functions}. Therefore, the $g_i(\cdot)$ functions only need to be subject to the restrictions on $\chi(\cdot)$ and $\psi(\cdot)$ for there to exist a blend $g(\vals) = g_1(\vali[1])\cdot g_2(\vali[2])$.

\begin{corollary}
\label{cor:arbitraryG1andG2asscalars}
Consider non-negative functions $g_1(\cdot)$ and $g_2(\cdot)$ each with domain $(0,\infty)$.  Let $g_{1,z}$ be $g_1$ restricted to the domain $[z,\infty)$ and $g_{2,z}$ be $g_2$ restricted to the domain $(0,z]$.  Assume there exists a dual blends solution $$g(\vals) = g_1(v_1)\cdot g_2(v_2)~\text{for}~\vals= (\vali[1],\vali[2]\leq\vali[1])$$ using weights $o_{g_1}(z)$ and $\omega_{g_2}(z)$.

Then for any finite, positive functions $G_1,~G_2$, the weights $\tilde{o}_{g_1}(z) = o_{g_1}(z)\cdot G_1(z)$ and $\tilde{\omega}_{g_2}(z) = \omega_{g_2}(z)\cdot G_2(z)$ applied to functions $\sfrac{g_{i,z}}{G_i(z)}$ describes the same dual blends solution from $g_1,~g_2$.
\end{corollary}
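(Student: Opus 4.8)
The plan is a direct rescaling argument: I will show that dividing each family member $g_{i,z}$ by $G_i(z)$ while compensating its weight by the matching power of $G_i(z)$ leaves both blend integrals pointwise unchanged, so the rescaled description induces exactly the same correlated function $g$ and is therefore again a dual blends solution.

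First I would unpack what the hypothesis asserts. By \Cref{def:blend} and the setup in the proof of \Cref{thm:blendsgeneratorfromseparate}, the statement that $o_{g_1}$ and $\omega_{g_2}$ (on the restricted families $\{g_{1,z}\}_{z>0}$ and $\{g_{2,z}\}_{z>0}$) give a dual blend inducing $g$ means precisely that for every $\vals=(\vali[1],\vali[2]\le\vali[1])$ the two integrals $\int_{\vali[1]}^{\infty}\omega_{g_2}(z)\,g_{2,z}(\vali[1])\,g_{2,z}(\vali[2])$ and $\int_{0}^{\vali[2]} o_{g_1}(z)\,g_{1,z}(\vali[1])\,g_{1,z}(\vali[2])$ both equal $g(\vals)$, with the endpoints fixed by the supports of the two families (cf.\ \Cref{fig:blendsends}). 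Next I would substitute: on the $g_2$-side, replace $g_{2,z}$ by $g_{2,z}/G_2(z)$ and scale $\omega_{g_2}(z)$ up by the factor that cancels the resulting normalizations; since $g_{2,z}$ enters each product, every copy of $1/G_2(z)$ is undone pointwise in the integration variable $z$, so the integrand is literally the original one. Because $g_{2,z}/G_2(z)$ has the same support as $g_{2,z}$, the integral endpoints are unchanged, and the integral still evaluates to $g(\vals)$. The $g_1$-side is symmetric. I would then check admissibility of the new weights: positivity of $G_1,G_2$ keeps the rescaled weights non-negative (matching the paper's standing convention on weights), and finiteness together with positivity makes each normalized function well defined and not identically zero. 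This establishes that the rescaled description is again a dual blend for the same $g$.

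The argument is essentially bookkeeping, so there is no genuine obstacle; the one point to get right is how many copies of $G_i(z)$ the rescaled weight must carry — one per occurrence of $g_{i,z}$ in each product (so two occurrences for $n=2$), exactly as in the distributions-modification of \Cref{thm:blendsgeneratorfromseparate}. A secondary point worth a sentence is the conceptual content of the corollary: $G_1,G_2$ need not be the upward/downward cumulative integrals of $g_1,g_2$ at all — any finite positive functions work, since they only ever appear as a normalizing factor that is immediately offset by the weight, which is why (as noted in the surrounding discussion) the $G_i$ act merely as continuous scalars.
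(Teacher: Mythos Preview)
Your approach is exactly the paper's: the corollary is not given a separate proof but is stated as following ``by inspection of equation~\eqref{eqn:separationofgiandGi},'' i.e., the $G_i$ factors carried in the weights cancel pointwise against the normalizations dividing the $g_{i,z}$, leaving the original integrand and hence the same $g$. You are also right to flag the exponent: for $n=2$ the rescaled weight must carry one factor of $G_i(z)$ per occurrence of $g_{i,z}$ in the product, hence $(G_i(z))^2$, precisely as in the distributions-modification at the end of \Cref{thm:blendsgeneratorfromseparate} and in equation~\eqref{eqn:separationofgiandGi}; the single power written in the corollary's displayed weights appears to be a slip in the paper rather than an intended claim.
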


\noindent There exists a comparison here to $n$th-order tensors.  
The specific observation here is that this \Cref{cor:arbitraryG1andG2asscalars} is analogous to dividing a symmetric tensor's element-vector $\boldsymbol{a}_z$ by a factor $\kappa_z$ and multiplying its scalar $l_z$ by $\kappa_z^n$.  In the same way that we can multiply-and-divide by the respective $G_i$ with no effect on $g$, these multiplicative factors cancel and have no effect on $n$th-order tensor  $T=l_z\cdot\left(\boldsymbol{a}_z\otimes\ldots\otimes\boldsymbol{a}_z \right)$.

\section{Deferred Full Presentation of \Cref{s:blendsisidd}}
\label{a:info_design}
\label{page:blendsisiddapp}

To restate the main goal of this section from \Cref{s:blendsisidd}: for the problem of information-design-design (from equation~\eqref{eqn:blendsisidd}), we want to set up and give the proof for \Cref{prop:blendsisidd}:

\begin{numberedprop}{\ref{prop:blendsisidd}}
\label{prop:blendsisidd2}
Consider the prior independent design problem (\Cref{def:pidesign}) given a class of distributions $\scF$, a class of algorithms $\algspace$, and $n$ inputs.  Optimization of the Blends Technique approach to prior independent lower bounds is described by:
\begin{equation*}
    \piratio^{\scF} \geq \sup_{g\in\mathcal{G}}~\left[\frac
    { \sup_{\blendi[2]\in\{\blend~|~\blend\in\Delta(\scF)~\text{and}~\blend^n=g\}}\left( \expecta_{\dist\sim\blendi[2]}\left[\OPT_{\dist}(\dist) \right]\right) }
    { \inf_{\blendi[1]\in\{\blend~|~\blend\in\Delta(\scF^{\text{all}})~\text{and}~\blend^n=g\}}\left(\expecta_{\dist\sim\blendi[1]}\left[\OPT_{\dist}(\dist) \right] \right)}
    \right]
\end{equation*}
\noindent Further, its Numerator Game and its Denominator Game can be independently instantiated as problems of constrained information design.
\end{numberedprop}

\noindent An outline for this section is: \Cref{s:intro_info_design} gives an introduction to information design.  \Cref{s:iddreduction} gives an intuitive explanation of the reduction of the Numerator and Denominator Games within equation \eqref{eqn:blendsisidd} to information design; it includes \Cref{lem:iolemma} which shows that the crux of the reduction is a straightforward application of Bayes Law.  \Cref{s:blackwell} introduces Blackwell ordering and observes that our dual blends in the Quadratics-versus-Uniforms example of \Cref{s:tensorexampleoutline} do not have a Blackwell ordering.

\paragraph{Related Work for Information Design}
The canonical model of information design with a single sender and
single receiver was introduced by \citet{RS-09} and \citet{KG-11}.  A
few points of context with this literature are as follows.  In our
setting the allowable posterior distributions are constrained.  The
early work of \citet{GR-04} -- in which the sender can only present certain
kinds of evidence -- can be viewed as a posterior-constrained setting of
information design.  In our mechanism design applications, the receiver
is a seller and faces a number of potential buyers.  \citet{BBM-15}
previously studied information design in such a scenario with only one buyer, with the goal
of characterizing the feasible outcomes that a regulator (the sender)
can obtain in terms of the tradeoff between revenue and residual surplus.  While it is
not directly related to the methods of this paper, there is a
literature starting with \citet{DNPW-19} that shows that some problems
of information design are computationally tractable.  See
\citet{BM-19} for a more complete survey of the breadth of literature
on information design.


\subsection{Introduction to Information Design}
\label{s:intro_info_design}

From economics, information design is a game between two players -- a Sender and a Receiver -- who have unaligned objective functions.  There is an unknown state of the world $\theta$ from a set of states $\Theta$.  Realized state $\hat{\theta}$ is Bayesian and is drawn from a prior $\bar{\pi}$ that is common knowledge.

The Sender observes $\hat{\theta}$ and sends a signal $\mes$ from the signal space $\messpace$.  This is implemented by: up front, the Sender commits to a signalling strategy $\sigma:\Theta\rightarrow \Delta(\messpace)$ that maps states to distributions over signals, with $\sigma\in\Sigma$ the space of (possibly restricted) strategies.\footnote{\label{foot:infostructmapVScorrelated} In the context of a fixed prior, there is a bijection between signalling strategies and information structures as we define them.  The economics literature may use the term ``information structure" for our signalling strategies.}  Strategies $\sigma$ implicitly lead to {\em information structures} because of the existence of the prior $\bar{\pi}$ -- information structures describe the ex ante correlated distribution over paired state-and-signal.  The takeaway is that information structures represent strategic design by the Sender to convert the prior $\bar{\pi}$ into a structured system of posteriors (conditional for each $\mes$) for specific use by the Receiver.
\begin{definition}
\label{def:infostructure}
An {\em information structure} $\info:\Theta\times\messpace\rightarrow[0,1]$ is a correlated probability distribution over state and signal.
\end{definition}
\noindent We make two critical observations: an information structure is induced from a given prior $\bar{\pi}$ over state and a signalling strategy $\sigma$; and in turn, an information structure induces a posterior distribution over states (conditioned on a realized output signal $\hat{\mes}$).

After the Sender commits to $\sigma$, the Sender observes $\hat{\theta}$ and sends a signal $\hat{\mes}$ to the Receiver as randomly drawn from $\sigma(\hat{\theta})$.  The Receiver sees $\hat{\mes}$ and chooses an action $\omega$ from its action space $\Omega$.  Finally, each player has utility functions respectively as $S:\Theta \times \Omega\rightarrow \reals$ and $R: \Theta \times \Omega\rightarrow \reals$.

It is standard to assume that the Receiver plays a best-response action: given the context of knowing $\bar{\pi}$ and $\sigma$, it uses $\hat{\mes}$ to get a posterior distribution over state and then simply optimizes against the posterior.  This leaves the Sender's construction of $\sigma$ as the unique strategic consideration, called {\em information design}.  The utility functions $S$ and $R$ typically embed a degree of objectives being orthogonal -- or adversarial.  If the utility functions are aligned (which will be true for one of our cases), information design is trivial unless the Sender's signal space is restricted to not be able to fully reveal the realized state.

\subsection{Reduction of Blends Technique Sub-problems to Information Design}
\label{s:iddreduction}

This section explains the reduction from the Numerator and Denominator Game sub-problems within the reorganized Blends Technique in equation~\eqref{eqn:blendsisidd}, to constrained information design.  I.e., this section proves \Cref{prop:blendsisidd}.

The key element of the reduction is to carefully constrain the Sender's space of signalling strategies to {\em blends-revelation signalling strategies}, defined as follows.  Effectively, we implement a {\em Revelation Principle for information design} which states that the Sender's signal may as well be a \underline{correct} posterior over state space -- {\em which we further require to be a symmetric product distribution} -- so that the Receiver only needs to best respond to the posterior-signal.

\begin{definition}
\label{def:truthblendssigs}
Within information design, we define a {\em blends-revelation signalling strategy} (BRSS) to be a signalling strategy $\sigma^{\text{{\em br}}}$ in which:
\begin{itemize}
    \item signals are distributions;
    \item the marginal distribution over signal-distributions resulting from $\sigma^{\text{{\em br}}}$ as a blend induces $g$;
    \item the Receiver's posterier given any signal-distribution $\dist$ is in fact $\dist^n$.
\end{itemize}
\end{definition}

\noindent Fixing a prior independent design problem (PIP), the instantiation of its Numerator and Denominator Games (equation~\eqref{eqn:blendsisidd}, \Cref{def:blendsisidd}) as information design problems is now from the following reduction.  The reductions are the same with the exception of the Sender's objective function (described in the last point).

\begin{itemize}
    \item $\Theta = \valspace^n$; state space is the input space of the prior independent algorithm;
    \item $\bar{\pi} = g$; the prior over states is equal to the correlated distribution $g$ (for any $g$ as fixed by the outer program in equation~\eqref{eqn:blendsisidd});
    \item $\messpace = \scF$; signal space is restricted to be the PIP's allowable class of distributions $\scF$;
    \item $\Sigma = \{\sigma^{\text{br}}~|~\sigma^{\text{br}}~\text{is a BRSS}\}$; \textbf{the key element of the reduction:} signalling strategy space $\Sigma$ is the set of blends-revelation signalling strategies of \Cref{def:truthblendssigs}; note that at least one such signalling strategy must exist because $\bar{\pi}=g$ was constructed up front from a blend and can in fact be implemented (see \Cref{lem:iolemma} below);
    \item $\Omega = \algspace$; the Receiver's action space is naturally the algorithm space $\algspace$ from the PIP, and
    \item $R(\vals, \algo) = \algo(\vals)$; the Receiver's utility is equal to the objective of the algorithm designer in the PIP;
    \item the Sender's utility $S$ is either perfectly aligned with the Receiver's objective (Numerator Game) or perfectly adversarial to it (Denominator Game):
    \begin{itemize}
        \item Numerator Game: $S(\vals, \algo) = \algo(\vals)= R(\vals,\algo)$;
        \item Denominator Game: $S(\vals, \algo) = -\algo(\vals)= -R(\vals,\algo)$;
    \end{itemize}
    but note how in both cases, because the Sender's signal is always a {\em correct posterior} (per \Cref{def:truthblendssigs}), it must be that the marginal distribution over posteriors has exactly the structure of a blend.
\end{itemize}

\noindent The challenge for the Sender is how to produce and optimize strategies that meet \Cref{def:truthblendssigs}.  From application of Bayes Law, it turns out that the Sender is able to design a signalling strategy in advance that, ex post observing state, simulates a random latent variable distribution to provide as signal to the Receiver, in a way that the Receiver will use the distribution-signal as if it is correct.  

The Sender chooses a signalling strategy $\sigma$ using the following outline.  Similar to the PIP's adversary, the Sender optimizes over $\blend\in\Delta(\scF)$ such that $\blend^n=g$.  It uses observed state $\vals$ to do Bayesian updating on the distribution $\blend$ (over distributions $\dist\in\scF$) and then randomly draws $\hat{\dist}$ from the posterior to send as the signal.  

As the final key piece which we state next and prove, this Sender's choice of randomized $\sigma$ yields a ``correct" posterior (for every realized $\hat{\dist}$ as signal), thereby satisfying the last requirement of \Cref{def:truthblendssigs}.  The proof makes clear that {lem:iolemma} is an application of Bayes Law.

\begin{lemma}
\label{lem:iolemma}
Given state space $\Theta$ equal to algorithm input space $\valspace^n$ and prior $\bar{\pi}$ over stats as a blend $g = \blend\in\Delta(\scF)$.  Given realized $\vals\sim g$, let the Sender's signalling strategy draw distribution-signal $\hat{\dist}$ from the posterior of $\blend$ (given $\vals$).  Then the Receiver's induced posterior over state is $\hat{\dist}^n$ and the distribution over induced posteriors is $\blend$.
\end{lemma}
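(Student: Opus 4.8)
The plan is to obtain both conclusions from a single application of Bayes' Law, using the fact that the Sender's signalling strategy is \emph{defined} to be the conditional law of the latent distribution given the observed state under the natural joint distribution that underlies a blend.

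First I would fix notation for that joint distribution. Consider the law on pairs $(\dist,\vals)\in\scF\times\valspace^n$ with density $\blend(\dist)\cdot\distp^n(\vals)$, i.e.\ draw $\dist\sim\blend$ and then draw $\vals$ as $n$ i.i.d.\ samples from $\dist$. By \Cref{def:blend} its $\vals$-marginal is $\int_{\scF}\blend(\dist)\,\distp^n(\vals)\,d\dist=\blend^n(\vals)=g(\vals)$, which equals the prior $\bar{\pi}$ over states; so this joint is consistent with the information-design prior. Its conditional of $\dist$ given $\vals$ is, by Bayes, $\sigma(\vals)(\dist)=\blend(\dist)\,\distp^n(\vals)/g(\vals)$, and this is exactly the Sender's rule from the lemma statement (``draw $\hat{\dist}$ from the posterior of $\blend$ given $\vals$'').

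Next I would read off the conclusions. Drawing $\vals\sim\bar{\pi}=g$ and then $\hat{\dist}\sim\sigma(\vals)$ reconstructs precisely the joint density $\blend(\dist)\,\distp^n(\vals)$. Its $\hat{\dist}$-marginal is $\int_{\valspace^n}\sigma(\vals)(\dist)\,g(\vals)\,d\vals=\blend(\dist)\int_{\valspace^n}\distp^n(\vals)\,d\vals=\blend(\dist)$, which is the claim that the distribution over signals is the blend $\blend$. Applying Bayes once more to the same joint, the Receiver's posterior over states given signal $\hat{\dist}=\dist$ is $\sigma(\vals)(\dist)\,g(\vals)/\blend(\dist)=\distp^n(\vals)$, i.e.\ exactly the product distribution $\dist^n$; this is the first claim and simultaneously certifies that $\sigma$ is a blends-revelation signalling strategy per \Cref{def:truthblendssigs}.

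The step I expect to need the most care is the measure-theoretic bookkeeping rather than any real mathematical content: $\sigma(\vals)$ is only well defined where $g(\vals)>0$, and the displayed identities hold for $g$-almost every $\vals$, so I would remark that this is harmless because all relevant quantities (the Sender's and Receiver's expected utilities in the reduced information-design game) are integrals against $g$. Everything else is the symmetric ``prior $\times$ likelihood $=$ joint $=$ marginal $\times$ posterior'' identity, which is precisely the Revelation-Principle intuition invoked in the surrounding text.
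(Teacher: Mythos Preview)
Your proposal is correct and follows essentially the same approach as the paper's proof: both construct the joint distribution on $\scF\times\valspace^n$ with density $\blend(\dist)\cdot\distp^n(\vals)$ and apply Bayes' Law to equate the two factorizations, concluding that the signal marginal is $\blend$ and the state-posterior given signal $\dist$ is $\dist^n$. Your version is more explicit with densities than the paper's more verbal treatment, and your measure-theoretic remark about $g(\vals)>0$ is a sensible extra bit of care, but the mathematical content is the same.
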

\begin{proof}
Given correlated $g$, it is equivalent to assume that inputs were drawn from a two-step procedure: first draw $\dist\sim\blend\in\Delta(\scF)$ and then draw $n$ inputs i.i.d.\ from $\dist$.  Consider from this perspective that density in the original function $g$ is further broken down for each input to reflect density of its latent variable $\dist$, i.e., consider correlated density $g^+$ over $\valspace^n\times\scF$.  (Note, we can recover the function $g$ by fixing each $\vals$ and integrating over $\scF$.)

Given the definition of the Designer's signalling strategy, the correlated distribution over (state, signal) is exactly equal to $g^+$.  The reason is that given $g^+$ the final correlated description over $\valspace^n \times \scF$, for every $(\vals,\dist)$, Bayes Law states that the following quantities are equal: $\text{Pr}[\vals~|~\dist]\cdot \text{Pr}[\dist] = \text{Pr}[\dist~|~\vals]\cdot \text{Pr}[\vals]$.  So we note the following equivalence when applied to our problem.
\begin{itemize}
    \item The left-hand side of our Bayes-Law-equation gives an unfalsifiable description of how inputs were generated 
    (see the first sentence of proof).
    \item The right-hand side describes how (state, signal) pairs are generated within the Information Design game: first there is a randomly revealed state; and second, per the pre-committed signalling strategy, there is a random mapping from state to signals using a posterior distribution (from updating $\blend$ given $\vals$).
\end{itemize}

\noindent The final point is: receiving a specific signal $\hat{\dist}$, the Receiver's posterior over $\valspace^n$ is obtained from conditioning $g^+$ given $\hat{\dist}$ and then the Receiver's posterior is exactly $\hat{\dist}^n$ as desired.  Note for this last point, the Receiver has access to $g^+$ because the Designer's choice of $\blend$ is known.
\end{proof}

\subsection{Assessment of Blends' Blackwell Ordering}
\label{s:blackwellgarbles}
\label{s:horizons}
\label{s:blackwell}

This section considers if dual blends have the property that one side of the dual blend is ``strictly more informative" than the other side (per \citet{bla-53}), as part of understanding what is driving prior independent lower bounds that follow from dual blends.  It introduces Blackwell (partial) ordering and shows that the two sides of a dual blend do not generally have a Blackwell ordering: for our main example of (\Cref{s:tensorexampleoutline}), there is no informational relationship.

\subsubsection{Blackwell Ordering: Global Usefulness and Blackwell ``Garbles"}
\label{s:blackwelldefsintro}

\citet{bla-53} proposed a framework of {\em partial ordering} between the distributions over signals -- and their respective systems of posteriors -- of two information structures $\info_1$ and $\info_2$ to reflect a notion of information-dominance called Blackwell ordering.  It is based on two strong properties being equivalent.  According to one description, there is an ordered relationship between the information structures based on {\em global usefulness}, i.e., if one information structure $\info_1$ is preferred to $\info_2$ {\em for every possible} utility function (by an optimizer using a random signal).  The equivalent descriptive property is called a {\em garbling} and it applies when the signals of $\info_2$ can themselves be interpreted as obfuscating mixtures over the signal probabilities of $\info_1$ while maintaining exactly the same induced prior over state.

The next two definitions are presented within the context of the problem statement of information design: we are given an information design problem with fixed prior $\bar{\pi}$ over state space $\Theta$, and two feasible information structures $\info_1:\Theta\times\messpace_1\rightarrow[0,1]$ and $\info_2:\Theta\times\messpace_2\rightarrow[0,1]$ for respective signal spaces $\messpace_1$ and $\messpace_2$, and a class of allowable Receiver-algorithms $\algspace$.
%
\begin{fact}
\label{fact:signalstrategyfromstateandinfostruct}
Given $\bar{\pi}$ and an information structure $\info_2$, the signalling strategy $\sigma_{\info_2}$ (which induces $\info_2$ starting from $\bar{\pi}$) can necessarily be reverse-engineered.
\end{fact}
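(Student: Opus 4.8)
\textbf{Proof plan for Fact~\ref{fact:signalstrategyfromstateandinfostruct}.}

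The plan is to exhibit the reverse-engineering explicitly via conditional probabilities, i.e., to recover the signalling strategy $\sigma_{\info_2}:\Theta\rightarrow\Delta(\messpace_2)$ directly from the joint object $\info_2$ and the marginal $\bar\pi$. Recall from \Cref{def:infostructure} that $\info_2$ \emph{is} the correlated (joint) distribution over state and signal; recall also from the surrounding discussion (\Cref{def:infostructure} and the text following it) that by construction $\info_2$ is induced from $\bar\pi$ together with \emph{some} strategy, so that the state-marginal of $\info_2$ coincides with $\bar\pi$. First I would fix a state $\theta\in\Theta$ with $\bar\pi(\theta)>0$ and define $\sigma_{\info_2}(\theta)$ to be the conditional distribution of the signal given the state: for each $\mes\in\messpace_2$, set $\sigma_{\info_2}(\theta)(\mes) = \info_2(\theta,\mes)/\bar\pi(\theta)$. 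One then checks the two required properties: (a) this is a valid element of $\Delta(\messpace_2)$, because summing (or integrating) over $\mes$ gives $\bar\pi(\theta)/\bar\pi(\theta)=1$ by the state-marginal identity; and (b) running $\sigma_{\info_2}$ starting from prior $\bar\pi$ re-induces exactly $\info_2$, since the induced joint on $(\theta,\mes)$ has density $\bar\pi(\theta)\cdot\sigma_{\info_2}(\theta)(\mes) = \info_2(\theta,\mes)$. This is nothing more than the elementary factorization of a joint distribution into a marginal times a conditional, which is precisely the content of the claim.

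The second step is to handle the measure-zero/edge cases cleanly. For states $\theta$ with $\bar\pi(\theta)=0$ the conditional is not pinned down, but this is harmless: any choice of $\sigma_{\info_2}(\theta)\in\Delta(\messpace_2)$ on that null set leaves the induced information structure unchanged, so we may pick one arbitrarily (e.g., a point mass on a fixed signal) to make $\sigma_{\info_2}$ a total function $\Theta\to\Delta(\messpace_2)$. In the continuous setting one replaces the pointwise quotient by a regular conditional probability (disintegration of $\info_2$ along the projection to $\Theta$), which exists in the standard regularity conditions assumed throughout the paper; the defining identity becomes $\info_2(\mathrm{d}\theta,\mathrm{d}\mes) = \bar\pi(\mathrm{d}\theta)\,\sigma_{\info_2}(\theta)(\mathrm{d}\mes)$, and uniqueness holds $\bar\pi$-almost everywhere.

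The main (and essentially only) obstacle is a bookkeeping one: making sure the notion of ``induces $\info_2$ starting from $\bar\pi$'' is used consistently with the definition of an information structure given earlier, so that the state-marginal identity we invoke is actually available rather than assumed. Since the fact is explicitly stated in the context ``\emph{which induces $\info_2$ starting from $\bar\pi$}'' — i.e., $\info_2$ is by hypothesis of the induced form — this identity is automatic and there is no real content beyond the factorization above. I would therefore keep the proof to a few lines: state the conditional-probability formula, verify it is a distribution for each state using the marginal identity, verify that feeding it back through the induction map reproduces $\info_2$, and remark that the null-state choices are immaterial (so the reverse-engineered strategy is unique up to $\bar\pi$-a.e.\ equivalence).
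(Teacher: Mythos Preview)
Your proposal is correct. The paper states this as a \emph{Fact} without proof, treating it as an elementary observation; your conditional-probability factorization is exactly the natural justification, and your remark that ``there is no real content beyond the factorization'' matches the paper's implicit stance in omitting a proof entirely.
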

\begin{definition}[\citealp{bla-53}]
\label{def:globaluseful}
Let $\sigma_{\info_1},~\sigma_{\info_2}$ be the signalling strategies induced by $\bar{\pi}$ and the respective information structures (per \Cref{fact:signalstrategyfromstateandinfostruct}).  Let $\algo_1^*,~\algo_2^*$ be optimal algorithms given respective information structures.  Information structure $\info_1$ has greater {\em global usefulness} than $\info_2$ if \underline{for every} (Borel-measurable) Receiver's utility function $R$, expected \underline{optimal} utility is weakly greater when signals are drawn given $\sigma_{\info_1}$ compared to signals drawn given $\sigma_{\info_2}$ (i.e., if $\info_1$ is {\em preferred} to $\info_2$):
\begin{equation}
    \label{eqn:globalusefuldef}
    \mathbf{E}_{\hat{\theta}\sim\bar{\pi},~\hat{\mes}\sim\sigma_{\info_1}(\hat{\theta})}\left[R(\hat{\theta},\algo_1^*(\hat{\mes}))\right]
    \geq \mathbf{E}_{\hat{\theta}\sim\bar{\pi},~\hat{\mes}\sim\sigma_{\info_2}(\hat{\theta})}\left[R(\hat{\theta},\algo_2^*(\hat{\mes}))\right]
\end{equation}
\end{definition}
%
%
\begin{definition}[\citealp{bla-53}]
\label{def:garble}
The information structure $\info_2$ is a {\em garble} of $\info_1$ if there exists a mapping $\eta: \messpace_1\times\messpace_2  \rightarrow [0,1]$ such that $\int_{\messpace_2}\eta(\mes_i,\mes_j)~d\mes_j=1$ for all $\mes_i\in\messpace_1$; 
and for every $\mes_j\in\messpace_2$ and every state $\theta\in\Theta$ we have
\begin{equation}
    \label{eqn:garbledef}
    \info_2(\theta,\mes_j) = \int_{\messpace_1}\eta(\mes_i,\mes_j)\cdot \info_1(\theta,\mes_i)~d\mes_i
\end{equation}
\noindent i.e., we identify $\info_1$ as being {\em more informative (per garbling order)} in comparison to $\info_2$.\footnote{\label{foot:altgarbledef} An intuitive explanation of garbles is: each signal $\mes_j\in\messpace_2$ can be interpreted as a distribution over the signals of $s_i\in\messpace_1$.  In response to each signal $\mes_j\in\messpace_2$, we respond with a single optimal algorithm for the posterior given $\mes_j$, which may not be optimal given each signal $\mes_i\in\messpace_1$ in the implicit distribution; hence, $\info_2$ has ``garbled" $\info_1$.}
\end{definition}
%
%
%
\noindent We now give Blackwell's classic theorem which states equivalence of \Cref{def:globaluseful} and \Cref{def:garble}.
\begin{theorem}[\citealp{bla-53}]
\label{thm:blackwelleqv}
An information structure $\info_1$ has greater global usefulness than $\info_2$ {\em if and only if} $\info_2$ is a garble of $\info_1$.

As contrapositive (in one direction), if there exist two distinct utility functions $R_1$ and $R_2$ such that $\info_1$ is {\em strictly} preferred to $\info_2$ for $R_1$ but $\info_2$ is {\em strictly} preferred to $\info_1$ for $R_2$, then there can not be a garbling order relationship between the information structures.
\end{theorem}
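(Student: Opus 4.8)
The plan is to prove the two directions of the equivalence separately; the contrapositive part then follows with no extra work. The ``if'' direction (garble $\Rightarrow$ greater global usefulness) is a direct simulation argument. The ``only if'' direction (greater global usefulness $\Rightarrow$ garble) is the substantive one, and I would prove it by a convex-separation argument on the set of state-contingent action distributions achievable under each information structure, followed by a disintegration step that extracts the garbling kernel $\eta$.

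First, the easy direction. Suppose $\info_2$ is a garble of $\info_1$ via a kernel $\eta:\messpace_1\times\messpace_2\to[0,1]$ as in \Cref{def:garble}. Fix any Receiver utility $R$ and let $\algo_2^*$ be an optimal algorithm against $\info_2$. Consider the randomized algorithm against $\info_1$ that, on receiving signal $\mes_i$, draws $\mes_j\sim\eta(\mes_i,\cdot)$ and then plays $\algo_2^*(\mes_j)$. Composing $\sigma_{\info_1}$ with $\eta$ induces exactly the joint distribution of (state, $\messpace_2$-signal) given by $\info_2$ — this is precisely equation~\eqref{eqn:garbledef} weighted by $\bar\pi$ — so this simulated algorithm earns the same expected utility as $\algo_2^*$ does under $\info_2$. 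The optimal algorithm against $\info_1$ does at least as well, so inequality~\eqref{eqn:globalusefuldef} holds for this $R$; since $R$ was arbitrary, $\info_1$ has greater global usefulness than $\info_2$.

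For the hard direction, fix $\bar\pi$ and, for each $\info_k$, let $D_k$ be the set of all maps $\theta\mapsto\mu_\theta\in\Delta(\Omega)$ obtained by composing the induced signalling strategy $\sigma_{\info_k}$ (available by \Cref{fact:signalstrategyfromstateandinfostruct}) with a (randomized) Receiver-algorithm. Each $D_k$ is convex, since randomizing between algorithms realizes the corresponding convex combination of maps, and — under a mild regularity hypothesis on $\Omega$, e.g.\ a compact metrizable action space — closed in the weak-$*$ topology on $\Theta$-indexed measures, with pure and randomized rules yielding the same set by a Lyapunov-type argument. Two observations drive the proof: (i) for each utility $R$, the optimal expected utility under $\info_k$ equals $\sup_{d\in D_k}\langle R,d\rangle$ for the natural bilinear pairing, and every continuous linear functional on the ambient space is of this form for some $R$; and (ii) the ``identity rule'' under $\info_2$ (taking $\Omega=\messpace_2$, action $=$ signal) is an element of $D_2$. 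Now assume $\info_1$ has greater global usefulness than $\info_2$. If $D_2\not\subseteq D_1$, choose $d\in D_2\setminus D_1$; by Hahn--Banach separation there is a linear functional, i.e.\ a utility $R$, with $\langle R,d\rangle>\sup_{d'\in D_1}\langle R,d'\rangle$, whence optimal utility under $\info_2\ge\langle R,d\rangle>$ optimal utility under $\info_1$, contradicting global usefulness. Hence $D_2\subseteq D_1$, so the identity-rule element of $D_2$ from (ii) is realized under $\info_1$ by some algorithm $\messpace_1\to\Delta(\messpace_2)$; disintegrating the joint distribution induced by $\sigma_{\info_1}$ and this algorithm yields a measurable kernel $\eta$ satisfying the normalization and equation~\eqref{eqn:garbledef}, so $\info_2$ is a garble of $\info_1$. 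The contrapositive statement is then immediate: if $\info_1$ is strictly preferred under $R_1$ and $\info_2$ strictly preferred under $R_2$, then inequality~\eqref{eqn:globalusefuldef} fails in one direction for $R_2$ and in the other for $R_1$, so neither structure has greater global usefulness than the other, and by the equivalence just proved neither is a garble of the other.

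The main obstacle I expect is entirely in the hard direction: showing that the achievable sets $D_k$ are closed (indeed compact convex) so that Hahn--Banach genuinely applies, and then turning the existence statement ``the $\messpace_2$-revealing element of $D_2$ lies in $D_1$'' into a bona fide \emph{measurable} garbling kernel $\eta$. Both steps rely on the standard regularity assumptions of Blackwell's framework — a Polish (or compact metrizable) action and signal space together with Borel-measurable, bounded utilities — under which the disintegration of measures supplies the kernel; this is exactly why it is cleanest here to cite \citet{bla-53} for the fully general equivalence rather than re-derive it.
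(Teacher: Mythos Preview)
The paper does not prove this theorem; it is stated with attribution to \citet{bla-53} and used as a black box (its only application is the contrapositive, to deduce \Cref{thm:blendsnotblackwell}). So there is no ``paper's own proof'' to compare against.

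Your sketch is a faithful outline of the standard proof of Blackwell's equivalence: the simulation argument for the garble $\Rightarrow$ global-usefulness direction, and the convex-separation argument on achievable state-contingent action distributions for the converse, followed by disintegration to extract the kernel $\eta$. You correctly flag the genuine technical content (closedness/compactness of the achievable sets so that Hahn--Banach applies, and measurability of the resulting kernel) and the regularity hypotheses under which these hold. As you yourself note at the end, for the purposes of this paper the right move is exactly what the paper does: cite Blackwell rather than re-derive the result.
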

\noindent The ``strictly different" preferences of distinct information structures given distinct utility functions is necessary to apply the contrapositive.  Motivated by \Cref{thm:blackwelleqv}, the common ordering from global usefulness and garbles is called {\em Blackwell ordering}.
%

\subsubsection{Dual Blends Do Not Generally Have Blackwell Ordering}
\label{s:blendsnotgarbles}

Dual blends are represented by simple information structures when their common correlated distribution $g$ (over inputs in $\valspace^n$) is interpreted as the prior $\bar{\pi}$ over state and when signalling strategies are designed as in \Cref{s:iddreduction} using \Cref{lem:iolemma}.

When message space $\messpace$ is set equal to distribution class $\scF$ (as support for elements of the blends), it is clear that blends properly define an information structure (of \Cref{def:infostructure}) as a distribution over paired message-and-state.  For example, describe a blend by $\info_2(\dist_z,\vals)=o_z\cdot \prod_k \distp_z(\vali[k])$.  These are ``simple" because the blend already describes posteriors which are independent given a signal $\dist_z$ as: $n$ i.i.d.\ draws from $\dist_z$.

Describing blends as information structures aligns exactly with the calculations of lower bounds in the Blends Technique.  Recall for blends $\delta_1^n,~\delta_2^n$, expected ``optimal performance" within an algorithm setting is given by $\text{opt}_{n,i}= \mathbf{E}_{\dist\sim\delta_i^n,~\vals\sim \dist}\left[\text{OPT}_{\dist}(\vals) \right]$, and then a lower bound is given by $\sfrac{\text{opt}_{n,2}}{\text{opt}_{n,1}}$.  Each blend is a possible information structure to represent the same underlying correlated distribution over states, and for each blend the quantity $\text{opt}_{n,i}$ is the optimal performance in expectation over state, as the algorithm knows the realized distribution-signal.

We are ready to state by counterexample that dual blends do not generally have Blackwell ordering, using the Quadratics-versus-Uniforms example of \Cref{s:tensorexampleoutline} and \Cref{def:infostructure}).  Recall \Cref{thm:finitequadsversusunifsrevpiauction} for revenue auctions used an adversary choosing the benchmark from the Uniforms side of the dual blends but \Cref{thm:finitequadsversusunifsressurppiauction2} for residual surplus used the Quadratics side, and that the settings have distinct objective functions.

 \Cref{thm:finitequadsversusunifsrevpiauction} and \Cref{thm:finitequadsversusunifsressurppiauction2} each show strict performance gaps for their respective settings.  Thus, these results give an immediate example meeting the condition of the contrapositive statement in \Cref{thm:blackwelleqv}, because a ``Receiver" strictly prefers distinct information structures depending on the auction objective.

\begin{numberedcorollary}{\ref{thm:blendsnotblackwell}}
\label{thm:blendsnotblackwell2}
Finite-weight Quadratics-versus-Uniforms dual blends 
are an example for which there is no relationship according to Blackwell ordering.
\end{numberedcorollary}

\iftensapp
\input{zconf_tensor.tex}
\fi

\ifbmapp
\input{zconf_benchmark.tex}
\fi

\ifextsapp
\input{zconf_extensions}
\fi

\end{appendix}

\end{document}
